\newtheorem{theorem}{Theorem}
\newtheorem{lemma}[theorem]{Lemma}
\newtheorem{corollary}[theorem]{Corollary}
\newcommand{\eq}[1]{Eq.~\hyperref[eq:#1]{(\ref*{eq:#1})}}
\renewcommand{\sec}[1]{\hyperref[sec:#1]{Section~\ref*{sec:#1}}}
\DeclareRobustCommand{\app}[1]{\hyperref[app:#1]{Appendix~\ref*{app:#1}}}
\newcommand{\tab}[1]{\hyperref[tab:#1]{Table~\ref*{tab:#1}}}
\newcommand{\fig}[1]{\hyperref[fig:#1]{Figure~\ref*{fig:#1}}}
\newcommand{\figa}[2]{\hyperref[fig:#1]{Figure~\ref*{fig:#1}#2}}
\newcommand{\figx}[2]{\hyperref[fig:#1]{Figure~\ref*{fig:#1}(#2)}}
\newcommand{\thm}[1]{\hyperref[thm:#1]{Theorem~\ref*{thm:#1}}}
\newcommand{\lem}[1]{\hyperref[lem:#1]{Lemma~\ref*{lem:#1}}}
\newcommand{\cor}[1]{\hyperref[cor:#1]{Corollary~\ref*{cor:#1}}}
\newcommand{\defn}[1]{\hyperref[def:#1]{Definition~\ref*{def:#1}}}
\newcommand{\alg}[1]{\hyperref[alg:#1]{Algorithm~\ref*{alg:#1}}}
\def\avg#1{\mathinner{\langle{#1}\rangle}}
\def\bra#1{\mathinner{\langle{#1}|}}
\def\ket#1{\mathinner{|{#1}\rangle}}
\newcommand{\braket}[2]{\langle #1|#2\rangle}
\newcommand{\proj}[1]{\ket{#1}\!\!\bra{#1}}
\newcommand*{\bunderbrace}[1]{%
  \mathop{%
    \mathchoice
    {\underbrace{\displaystyle#1}}%
    {\underbrace{\textstyle#1}}%
    {\underbrace{\scriptstyle#1}}%
    {\underbrace{\scriptscriptstyle#1}}%
  }\limits
}
\newcommand{\qw}[1][-1]{\ar @{-} [0,#1]}
\newcommand{\qwx}[1][-1]{\ar @{-} [#1,0]}
\newcommand{\gate}[1]{*+<.6em>{#1} \POS ="i","i"+UR;"i"+UL **\dir{-};"i"+DL **\dir{-};"i"+DR **\dir{-};"i"+UR **\dir{-},"i" \qw}
\newcommand{\multimeasure}[2]{*+<1em,.9em>{\hphantom{#2}} \qw \POS[0,0].[#1,0];p !C *{#2},p \drop\frm<.9em>{-}}
\newcommand{\control}{*!<0em,.025em>-=-<.2em>{\bullet}}
\newcommand{\ctrl}[1]{\control \qwx[#1] \qw}
\newcommand{\targ}{*+<.02em,.02em>{\xy ="i","i"-<.39em,0em>;"i"+<.39em,0em> **\dir{-}, "i"-<0em,.39em>;"i"+<0em,.39em> **\dir{-},"i"*\xycircle<.4em>{} \endxy} \qw}
\newcommand{\multigate}[2]{*+<1em,.9em>{\hphantom{#2}} \POS [0,0]="i",[0,0].[#1,0]="e",!C *{#2},"e"+UR;"e"+UL **\dir{-};"e"+DL **\dir{-};"e"+DR **\dir{-};"e"+UR **\dir{-},"i" \qw}
\newcommand{\ghost}[1]{*+<1em,.9em>{\hphantom{#1}} \qw}
\newcommand{\lstick}[1]{*!R!<.5em,0em>=<0em>{#1}}
\newcommand{\Qcircuit}{\xymatrix @*=<0em>}
\begin{document}

\title{Low Depth Quantum Simulation of Electronic Structure}

\date{\today}
\author{Ryan Babbush}
\email[Corresponding author: ]{ryanbabbush@gmail.com}
\affiliation{Google Inc., Venice, CA 90291}
\author{Nathan Wiebe}
\affiliation{Microsoft Research, Redmond, WA 98052}
\author{Jarrod McClean}
\affiliation{Computational Research Division, Lawrence Berkeley National Laboratory, Berkeley, CA 94720}
\author{James McClain}
\affiliation{Division of Chemistry and Chemical Engineering, California Institute of Technology, Pasadena, CA 91125}
\author{Hartmut Neven}
\affiliation{Google Inc., Venice, CA 90291}
\author{Garnet Kin-Lic Chan}
\email[Corresponding author: ]{gkc1000@gmail.com}
\affiliation{Division of Chemistry and Chemical Engineering, California Institute of Technology, Pasadena, CA 91125}

\begin{abstract}
Quantum simulation of the electronic structure problem is one of the most researched applications of quantum computing. The majority of quantum algorithms for this problem encode the wavefunction using $N$ Gaussian orbitals, leading to Hamiltonians with ${\cal O}(N^4)$ second-quantized terms. We avoid this overhead and extend methods to the condensed phase by utilizing a dual form of the plane wave basis which diagonalizes the potential operator, leading to a Hamiltonian representation with ${\cal O}(N^2)$ second-quantized terms. Using this representation we can implement single Trotter steps of the Hamiltonians with linear gate depth on a planar lattice. Properties of the basis allow us to deploy Trotter and Taylor series based simulations with respective circuit depths of ${\cal O}(N^{7/2})$ and $\widetilde{\cal O}(N^{8/3})$ for fixed charge densities -- both are large asymptotic improvements over all prior results. Variational algorithms also require significantly fewer measurements to find the mean energy in this basis, ameliorating a primary challenge of that approach. We conclude with a proposal to simulate the uniform electron gas (jellium) using a low depth variational ansatz realizable on near-term quantum devices. From these results we identify simulations of low density jellium as a promising first setting to explore quantum supremacy in electronic structure.
\end{abstract}
\maketitle

\section*{Introduction}

The problem of electronic structure is to simulate the stationary properties of electrons interacting via Coulomb forces in an external potential.  The solution of this problem has wide implications for all areas of chemistry, condensed matter physics, and materials science, and is of industrial relevance in the design and engineering of new pharmaceuticals, catalysts, and materials. Recently, quantum computers have emerged as promising tools for tackling this challenge, offering the potential to access difficult electronic structure with reduced computational complexity. However as the age of ``quantum supremacy'' dawns, so has the realization that many ``efficient'' quantum algorithms still require more resources than will be available in the near-term.

Originally proposed by Feynman~\cite{Feynman1982}, the efficient simulation of quantum systems by other, more controllable quantum systems formed the basis for modern constructions of quantum computation.  This early insight has since been refined to encompass more universal and versatile constructions of simulation~\cite{Lloyd1996,Abrams1997}. By combining quantum phase estimation~\cite{Kitaev1995} with these techniques, Aspuru-Guzik et al. showed the first efficient quantum algorithm for solving quantum chemistry problems~\cite{Aspuru-Guzik2005}. This initial algorithm was based on adiabatic state preparation combined with Trotter-Suzuki decomposition of the unitary time-evolution operator \cite{Trotter1959,Suzuki1993} in second quantization.

Many algorithmic and theoretical advances have followed since the initial work in this area. The quantum simulation of electronic structure has been proposed via an adiabatic algorithm \cite{BabbushAQChem}, via Taylor series time-evolution \cite{BabbushSparse1}, in second quantization, in real space \cite{Kassal2008,Kivlichan2016}, in the configuration interaction representation \cite{Toloui2013,BabbushSparse2}, and using a quantum variational algorithm \cite{Peruzzo2013,McClean2015}. Starting with \cite{Whitfield2010}, researchers have sought to map these algorithms to practical circuits and reduce the overhead required for implementation by both algorithmic enhancements~\cite{Wecker2014,Poulin2014,Hastings2015,Romero2017} as well as physical considerations~\cite{BabbushTrotter,McClean2014}. As a second quantized formulation is generally regarded as most practical for near-term devices, many works have also tried to find more efficient ways of mapping fermionic operators to qubits \cite{Seeley2012,Tranter2015,Whitfield2016,Bravyi2017,Havlicek2017}.

With recent developments in quantum computing hardware~\cite{Corcoles2015,Riste2015,Kelly2015,Barends2016,Roushan2017}, there is an additional drive to identify early practical problems on which these devices might demonstrate an advantage \cite{Mohseni2017,Boixo2016}. The challenge of using such devices in the near-term is that limited coherence requirements necessitates algorithms with extremely low circuit depth. Toy demonstrations of quantum chemistry algorithms have been performed on architectures ranging from quantum photonics and ion traps to superconducting qubits~\cite{Lanyon2010,Li2011,Wang2014,Peruzzo2013,Shen2015,OMalley2016,Kandala2017}. In particular, the variational quantum algorithm \cite{Peruzzo2013,McClean2015} has been shown experimentally to be inherently robust to certain errors~\cite{OMalley2016}, and is considered to be a promising candidate for performing practical quantum computations in the near-term~\cite{Wecker2015a,Mueck2015}.

A major challenge in developing low depth quantum algorithms for quantum chemistry is that electronic structure Hamiltonians often have as many as ${\cal O}(N^4)$ terms, where $N$ is the number of basis functions. This is problematic as many algorithms for time-evolution and energy estimation have costs which scale explicitly with the number of terms. In this paper, we introduce new basis functions which reduce the number of Hamiltonian terms to $\Theta(N^2)$. We exploit this and other properties of the basis to demonstrate algorithms for state preparation and time-evolution which are simultaneously practical at small sizes and asymptotically more efficient than any in the literature. We conclude with a proposal to simulate the uniform electron gas on a near-term device using planar circuits of only linear depth.

\begin{table*}[t]
\label{tab:scalings}
\begin{tabular}{c|c|c|c|c|c||c}
Year & Reference & Representation & Algorithm & Primitive Depth & Repetitions & Total Depth\\
\hline\hline
2005  & Aspuru-Guzik et al. \cite{Aspuru-Guzik2005}  & JW Gaussians & Trotter & ${\cal O}(\textrm{poly}(N))$ & ${\cal O}(\textrm{poly}(N))$ & ${\cal O}(\textrm{poly}(N))$\\
2008  & Kassal et al. \cite{Kassal2008}  & Real Space & Trotter & ${\cal O}(\textrm{poly}(N))$ & ${\cal O}(\textrm{poly}(N))$ & ${\cal O}(\textrm{poly}(N))$\\
2010 & Whitfield et al. \cite{Whitfield2010} & JW Gaussians & Trotter & $\Theta(N^5)$ & ${\cal O}(\textrm{poly}(N))$ & ${\cal O}(\textrm{poly}(N))$ \\
2012 & Seeley et al. \cite{Seeley2012} & BK Gaussians & Trotter & $\widetilde{\Theta}(N^4)$ & ${\cal O}(\textrm{poly}(N))$ & ${\cal O}(\textrm{poly}(N))$ \\
2013 & Perruzzo et al. \cite{Peruzzo2013} & JW Gaussians & UCC & $\Theta(N^5)$ & Variational & $\Omega(N^5)$ \\
2013 & Toloui et al. \cite{Toloui2013} & CI Gaussians & Trotter & $\widetilde{\Theta}(\eta^2 N^2)$ & ${\cal O}(\textrm{poly}(N))$ & ${\cal O}(\textrm{poly}(N))$\\
2013 & Wecker et al. \cite{Wecker2014} & JW Gaussians & Trotter & $\Theta(N^5)$ & ${\cal O}(N^{5})$ & ${\cal O}(N^{10})$ \\
2014 & Hastings et al. \cite{Hastings2015} & JW Gaussians & Trotter & $\Theta(N^4)$ & ${\cal O}(N^4)$ & ${\cal O}(N^8)$\\
2014 & Poulin et al. \cite{Poulin2014} & JW Gaussians & Trotter & $\Theta(N^4)$ & $ {\cal O}(\sim N^2)$ & ${\cal O}(\sim N^6)$\\
2014 & McClean et al. \cite{McClean2014} & BK Gaussians & Trotter & $\widetilde{\cal O}(N^2)$ & ${\cal O}(N^4)$ & $\widetilde{\cal O}(N^6)$\\
2014 & Babbush et al. \cite{BabbushTrotter} & JW Gaussians & Trotter  & $\Theta(N^4)$ & ${\cal O}(\sim N)$ & ${\cal O}(\sim N^5)$ \\
2015 & Babbush et al. \cite{BabbushSparse1} & JW Gaussians & Taylor & $\widetilde{\Theta}(N)$ & $\widetilde{\cal O}(N^4)$ & $\widetilde{\cal O}(N^5)$  \\
2015 & Babbush et al. \cite{BabbushSparse2} & CI Gaussians & Taylor & $\widetilde{\Theta}(N)$ & $\widetilde{\cal O}(\eta^2 N^2)$ & $\widetilde{\cal O}(\eta^2 N^3)$ \\
%
%
2015 & Wecker et al. \cite{Wecker2015a} & JW Gaussians & TASP & $\Theta(N^4)$ & Variational & $\Omega(N^{4})$ \\
2016 & McClean et al. \cite{McClean2015} & BK Gaussians & UCC & $\widetilde{\Theta}(\eta^2 N^2)$ & Variational & $\widetilde{\Omega}(\eta^2 N^2)$ \\
2016  & Kivlichan et al. \cite{Kivlichan2016}  & Real Space & Taylor & ${\cal O}(\textrm{poly}(N))$ & $\widetilde{\cal O}(\eta^2)$ & ${\cal O}(\textrm{poly}(N))$\\
2017 & This paper & JW Plane Waves & Trotter  & $\Theta(N)$ & ${\cal O}(\eta^{1.83} N^{0.67})$ & ${\cal O}(\eta^{1.83} N^{1.67})$ \\
2017 & This paper & JW Plane Waves & Taylor & $\widetilde{\Theta}(1)$ & $\widetilde{\cal O}(N^{2.67})$ & $\widetilde{\cal O}(N^{2.67})$ \\
2017 & This paper & JW Plane Waves & TASP & $\Theta(N)$ & Variational & $\Omega(N)$ \\
\hline
\end{tabular}
\caption{The lowest circuit depth algorithms for the quantum simulation of electronic structure\footnote{Throughout this paper we use the computer science conventions that $f \in \Theta(g)$ for any functions $f$ and $g$ if $f$ is asymptotically upper and lower bounded by a multiple of $g$. $f \in o(g)$ implies that $f / g \rightarrow 0$ in the asymptotic limit. ${\cal O}$ indicates an asymptotic upper bound and $\Omega$ indicates an asymptotic lower bound. A tilde on top of the bound notation, e.g. $\widetilde{\cal O}(N)$, indicates suppression of polylogarithmic factors. In contrast to formally rigorous bounds, a tilde inside of a bound, e.g. ${\cal O}(~N)$ indicates the bound is obtained empirically.}. Reduction in primitive depth is typically the result of improved algorithms whereas reduction in required repetitions is typically the result of tighter bounds. Bounds on the primitive depth indicate the scaling of that particular implementation (which is why $\Theta$ is often used). As variational algorithms are heuristic, the total depth is listed as a lower bound. $N$ is number of orbitals and $\eta < N$ is number of particles. Second-quantized fermionic encodings including JW (Jordan-Wigner) \cite{Jordan1928} and BK (Bravyi-Kitaev) \cite{Bravyi2002} have ${\cal O}(N)$ spatial complexity whereas first-quantized encodings including CI (Configuration Interaction) \cite{Toloui2013} and Real Space \cite{Kassal2008} have ${\cal O}(\eta \log N)$ spatial complexity. Variational quantum algorithms are abbreviated as UCC (Unitary Coupled Cluster) \cite{Peruzzo2013} and TASP (Trotterized Adiabatic State Preparation) \cite{Wecker2015a}. Unlike other approaches, the Trotter and variational algorithms of this paper require no additional overhead when restricting qubit connectivity to a planar lattice. Though asymptotically equivalent to at least second order in perturbation theory, as discussed in \app{basis_errors}, one sometimes requires a constant factor more plane waves than Gaussians orbitals to achieve the same precision for single-molecule calculations.}
\end{table*}

\sec{sec_one} discusses several strategies for quadratically reducing the number of terms in the second quantized electronic structure Hamiltonian. The approach we focus on is to use a plane wave basis and its dual obtained by a unitary rotation, which we call the ``plane wave dual basis''. In \sec{pwd_basis} we show that the dual basis diagonalizes the potential  operators, leading to a Hamiltonian with $\Theta(N^2)$ terms and other desirable properties. The plane wave basis is especially natural when treating periodic systems (e.g. crystalline solids), allowing us to conveniently extend quantum simulation methods to condensed phase systems of interacting electrons. The basis is compact for uniform and near-uniform electron gasses, realized in simple metals as well as electrons in semiconductor wells, and there is well-developed infrastructure (e.g. pseudopotentials) to enable compact representations of other realistic materials. In \sec{fqft} we describe a generalization of the fast Fourier transform to second quantized systems of fermions. We show that this operation can be implemented on a planar lattice of qubits with linear depth and that it maps a quantum state between the plane wave basis (where the kinetic operator is diagonal) and the plane wave dual basis (where the  potential operator is diagonal).

\sec{sec_two} introduces algorithmic improvements for three different quantum approaches to simulating electronic structure. The scaling advantages of the techniques introduced in this paper are compared to prior results in \tab{scalings}. In \sec{trotter_alg} we take advantage of the fermionic fast Fourier transform to show that single Trotter steps of the Hamiltonian can be implemented using circuits of only ${\cal O}(N)$ gate depth on a planar lattice without ancillae. This is a large improvement over the previous best result of ${\cal O}(N^{9/2})$ depth. We bound the gate depth of Trotterization on a planar lattice within this representation at ${\cal O}(N^{7/2} / \epsilon^{1/2})$ where $\epsilon$ is the target precision. This represents more than a quadratic improvement over the best previously proven bounds for Trotterization. In \sec{taylor_alg}, we show that the Taylor series method of time-evolution has gate depth $\widetilde{\cal O}(N^{8/3})$ with logarithmic dependence on $\epsilon$. In \sec{measurement} we discuss how the structure of the plane wave dual basis reduces the measurements required when estimating the energy through Hamiltonian averaging, significant in the context of variational quantum algorithms. We show that ${\cal O}(N^{4} / \epsilon^2)$ circuit repetitions are sufficient to estimate the energy of the Hamiltonian to absolute error $\epsilon$. However, we also argue that when one desires to study the properties of a material in its thermodynamic limit, relative error $\mu$ is a more relevant metric and in that context, only  ${\cal O}(N^{2} / \mu^2)$ repetitions are required. Even for fixed absolute error, our bounds on the required measurements represent a substantial improvement over prior estimates.

\sec{sec_three} proposes an experiment for simulating the uniform electron gas (also known as jellium) on a near-term quantum device based on the techniques of \sec{sec_one} and \sec{sec_two}. Though one of the simplest models of realistic electronic structure, jellium is readily tuned between simple and complex physics through a single parameter, the electron density. Jellium has foundational importance both for practical computational materials science, as well as basic condensed matter physics:
the energy density of jellium is the starting point for all practical density functional approximations used in 
quantum chemistry and materials simulations; the system can be physically realized to good approximation in real materials such as the alkali metals,
and in semiconductor wells; and two dimensional jellium in a magnetic field is the standard setting in which to discuss the fractional quantum Hall effect. In addition, many questions remain about jellium physics in the low  density regime, where
unbiased classical simulations are intractable for system sizes of interest, and biased simulations do not reach the
accuracy to definitively resolve between competing phases.
In \sec{jellium_alg} we describe a quantum variational algorithm for jellium which can be executed on a planar lattice of qubits with ${\cal O}(N)$ circuit depth, based on the results of \sec{sec_one}. We conclude with an outlook on how to extend these simulations to more general quantum chemical problems and the potential for the jellium problem to serve as a setting for early demonstrations of quantum supremacy over a problem of practical interest.

This paper provides a number of supporting technical results in appendices. In \app{finite_diff} we show a finite-difference discretization of the Hamiltonian with ${\cal O}(N^2)$ terms. In \app{plane_waves} we review the well known form of the Hamiltonian in the plane wave basis and in \app{plane_wave_dual} we derive its representation in the plane wave dual basis, including connections
to discrete variable representations. \app{qubit_ham} shows a closed-form representation of the plane wave dual Hamiltonian mapped to qubits under the Jordan-Wigner transformation. In \app{basis_errors} we discuss the discretization errors associated with Gaussian molecular orbitals and plane wave orbitals and argue that both bases have the same asymptotic error scaling. In \app{operators} we provide bounds on components of the plane wave dual Hamiltonian that are relevant to the results of \sec{sec_two}. In \app{TSerror} we bound the Trotter error in the simulations of \sec{trotter_alg}. In \app{qubit_cycle} we show a method for implementing controlled phase operations between all qubits on a planar lattice with gate depth of only ${\cal O}(N)$. In \app{fqft} we prove results about the scaling of the fermionic fast Fourier transform. In \app{alt_trotter} we provide new circuits for evolving under a sum of commuting Pauli strings and use that result to bound the cost of Trotterizing Hamiltonians in the plane wave dual basis without using the fermionic fast Fourier transform. Finally, in \app{onthefly} we show an alternative implementation of the Taylor series algorithm which improves over the simpler scheme explored in \sec{taylor_alg}.

\section{Electronic Structure Hamiltonians with Fewer Terms}
\label{sec:sec_one}

Within the Born-Oppenheimer approximation, the properties of materials, molecules and atoms emerge from the behavior of electrons interacting in the external potential of positively-charged nuclei. In the non-relativistic case, the dynamics of these electrons are governed by the Coulomb Hamiltonian,
\begin{align}
H = \underbrace{-\sum_i \frac{\nabla^2_i}{2}}_{T} \underbrace{- \sum_{i, j}\frac{\zeta_j}{|R_j - r_i|}}_{U} \underbrace{+ \sum_{i < j} \frac{1}{|r_i - r_j|}}_{V} \underbrace{+ \sum_{i < j}\frac{\zeta_i \zeta_j}{|R_i - R_j|}}_{\textrm{constant}}
\end{align}
where we have used atomic units, $r_i$ represent the positions of electrons, $R_i$ represent the positions of nuclei, and $\zeta_i$ are the charges of nuclei. $T$ is referred to as the kinetic term, $U$ the (nuclear) potential term, and $V$ the electron-electron repulsion potential term. The electronic structure problem is to estimate the properties of the eigenfunctions (especially the lowest energy eigenfunction) of the time-independent Schroedinger equation defined by this Hamiltonian.

To convert the differential equation into a practical computational problem, one typically first chooses some form of discretization.  Moreover, the antisymmetry of electrons must be enforced either in the solutions (first quantization) or in the operators (second quantization)\footnote{Several papers have investigated quantum simulation of electronic structure in first quantization \cite{Kassal2008,Toloui2013,BabbushSparse2,Kivlichan2016}. When scaling to the continuum basis limit (as opposed to scaling towards larger systems), these encodings have an asymptotic spatial advantage; first quantization requires ${\cal O}(\eta \log N)$ qubits whereas second quantization requires ${\cal O}(N)$ qubits. In first quantization one must initialize the simulation in an explicitly antisymmetric initial state, which is potentially costly. As discussed in \cite{Kivlichan2016}, bounded-error quantum simulations in a real space basis may (in the worst case) require that one compute the potential to a number of bits that is exponential in $N$ as a consequence of singularities in the Hamiltonian which occur when electrons occupy the same location in space. But the primary reason why these algorithms remain less popular than their second-quantized counterparts is that all proposed implementations \cite{Kassal2008,Toloui2013,BabbushSparse2,Kivlichan2016} require complex on-the-fly logic which preclude near-term implementation and dramatically increase the number of T gates required for error-correction.}. Most quantum computing research focuses on second quantization, in which the Hamiltonian is formulated as
\begin{equation}
\label{eq:n4}
H = \underbrace{\sum_{p, q} h_{pq} \, a^\dagger_p a_q}_{T + U} + \underbrace{\frac{1}{2} \sum_{p, q, r, s} h_{pqrs} \, a^\dagger_p a^\dagger_q a_r a_s}_{V}
\end{equation}
where $a^\dagger_p$ and $a_p$ are fermionic raising and lowering operators satisfying the anticommutation relation $\{a^\dagger_p, a_q\} = \delta_{pq}$, the coefficients $h_{pq}$ and $h_{pqrs}$ are determined by the discretization that has been chosen, and the sums now run over the number of discretization elements for a single particle. Specifically, if electron $j$ is represented in a space of spin-orbitals $\{\phi_p(r_j)\}$ then $a^\dagger_p$ and $a_p$ are related to Slater determinants through the equivalence,
\begin{align}
&  \bra{r_0,\ldots,r_{\eta-1}} a^\dagger_{p_0} \cdots a^\dagger_{p_{\eta-1}} \ket{0} = \sqrt{\frac{1}{\eta!}}
\begin{vmatrix}
\phi_{p_0}\left(r_0\right) & \phi_{p_1}\left( r_0\right) & \cdots & \phi_{p_{\eta-1}} \left( r_0\right) \\
\phi_{p_0}\left(r_1\right) & \phi_{p_1}\left( r_1\right) & \cdots & \phi_{p_{\eta-1}} \left( r_1\right) \\
\vdots & \vdots & \ddots & \vdots\\
\phi_{p_0}\left(r_{\eta-1}\right) & \phi_{p_1}\left(r_{\eta-1}\right) & \cdots & \phi_{p_{\eta-1}} \left(r_{\eta-1}\right) \end{vmatrix}
\end{align}
where $\eta$ is the number of electrons in the system and $\ket{0}$ is the vacuum. From inspection, one sees that the number of terms in \eq{n4} may be as high as ${\cal O}(N^4)$ where $N$ is the size of the discrete representation. This presents a major problem for realizing quantum simulation algorithms on near-term quantum devices as most quantum algorithms have some explicit dependence on the number of terms. For instance, the cost of implementing a Trotter step requires a number of gates that scales at least linearly in the number of terms. Likewise, the number of measurements required for variational quantum algorithms scales at least linearly in the number of terms. 

The most commonly used discretization in classical electronic structure is known as a Galerkin discretization. The Galerkin discretization is derived from the weak formulation of the Schroedinger equation in Hilbert space, given by finding $\ket{\phi}$ (spanned by the basis vectors $\{\ket{\phi_p}\}$) such that $\bra{\phi_p}H\ket{\phi} = E \braket{\phi_p}{\phi}$ for all $p$. This is contrasted with the strong formulation (see \app{finite_diff}) that insists the original differential equation hold at all points in space $r$, as opposed to assessing error on the restricted subspace spanned by $\{\ket{\phi_p}\}$. The Galerkin formulation leads to the following coefficients which define the second quantized Hamiltonian of \eq{n4}:
\begin{align}
\label{eq:one_body_ints}
h_{pq} &= \left\langle \phi_p \middle | \left(-\frac{\nabla^2}{2} + U\right) \middle | \phi_q \right\rangle = \int dr \, \phi_p^* \left(r\right) \left(-\frac{\nabla^2}{2} + U\left(r\right)\right) \phi_q \left(r\right) \\
h_{pqrs} &= \left\langle \phi_p \middle |  \left \langle \phi_q \middle | V \middle | \phi_r\right\rangle \middle | \phi_s \right\rangle = \int dr \, dr' \,  \phi_p^*\left(r\right) \phi_q^* \left(r'\right) V\left(r, r'\right) \phi_r \left(r'\right)  \phi_s \left(r\right)
\label{eq:two_body_ints}
\end{align}
where $U(r)$ is the external potential Coulomb interaction, $V(r,r')$ is the two-electron Coulomb interaction and the $\phi_p(r)=\braket{r}{\phi_p}$ are the single-particle orbitals that define the basis. An important feature of Galerkin discretizations (again, in contrast to e.g. finite-difference discretizations) is that basis set error is variational, meaning that energies from exact diagonalization monotonically approach the continuum basis set limit from above.
 
The basis functions $\phi_p(r)$ are chosen in a number of ways.  Perhaps the most common choice for treating molecular systems is  atom-centered Gaussian basis functions, conventionally termed an atomic orbital basis. These functions resemble the mean-field orbitals of single atoms and provide a computationally convenient formulation for the evaluation of the above integrals. Parameters of the Gaussians are optimized so that modest numbers of such basis functions can compactly represent the low-energy eigenstates of atomic and molecular Hamiltonians with qualitative accuracy. However, a drawback of these functions is that the associated Hamiltonians contain ${\cal O}(N^4)$ terms for modest size systems, despite their relative locality eventually leading to ${\cal O}(N^2)$ terms in an asymptotic limit \cite{McClean2014}. Moreover, to prepare a compact initial state for a molecular simulation, it is common to rotate from the atomic orbital basis to the molecular orbital basis, which minimizes the mean-field molecular energy. This basis is even more delocalized than the atomic orbital basis and contains even more terms at all system sizes.

Gaussian bases were introduced more than half a century ago to reduce the cost of evaluating the integrals in \eq{one_body_ints} and \eq{two_body_ints} for the mean-field quantum chemistry calculations of interest at the time~\cite{boys1950electronic}. However, with advances in classical computing power, the evaluation of such integrals for systems with up to several hundred atoms is no longer a major bottleneck. Further, the requirements of a basis for efficient quantum algorithms are quite different than for classical algorithms. In a quantum algorithm, we primarily desire the computational basis (i) to have a small number of terms, so as to minimize the cost of basic algorithms such as time evolution, or the number of measurements in variational quantum algorithms, and (ii) to allow for a simple preparation of a relevant initial quantum state.
To some extent these are conflicting requirements, as (i) can be obtained by locality of the basis in real space, while (ii) implies locality of states in energy space, or delocalization in real space. For example, the traditional Gaussian basis satisfies (ii) but not (i) in medium sized molecules. We should note that while the number of basis functions is also an important quantum resource (corresponding to the number of logical qubits), in many cost models the circuit size is more important. In a fault-tolerant architecture, the number of physical qubits required is largely a function of the number of non-Clifford gates in the original algorithm and does not strongly depend on the number of logical qubits. While existing quantum hardware is limited to a small number of qubits, the expectation is that manufacturing more qubits will be easier in the near-future than significantly increasing coherence time, suggesting that even in a non-fault-tolerant context, gate depth is a more important resource than number of qubits.

To consider how one might circumvent the ${\cal O}(N^4)$ scaling of terms in the Hamiltonian, consider a set of spatially disjoint functions $\{\phi_p(r)\}$, which are defined such that the intersection of the supports of $\phi_p(r)$ and $\phi_q(r)$ is the empty set for all $p \neq q$.  The consequence of this is that the product $\phi_p(r) \phi_q(r)=0$ for all $r$ and all $p \neq q$.  Taking this definition with \eq{two_body_ints}, it is clear that $h_{pqrs}=0$ unless $p=s$ and $r=q$; thus, there are at most ${\cal O}(N^2)$ elements defining the Hamiltonian. To enable a meaningful kinetic energy operator, one would match derivatives at the boundaries of the functions (e.g., as in finite element methods) or, alternatively, allow for overlapping basis functions. In either case, one achieves the desired scaling of ${\cal O}(N^2)$ terms in the Hamiltonian for all system sizes. Another possibility is to use a non-Galerkin grid-based representation, as embodied in  finite-difference methods. In \app{finite_diff}, we provide explicit forms for the second quantized molecular electronic structure Hamiltonian in such a discretization with ${\cal O}(N^2)$ terms. We focus on a different route to reducing the number of terms in the Hamiltonian, namely to use a pair of basis sets in which the different components in the Hamiltonian (kinetic and potential) are diagonal. This property is offered by the plane wave basis and its dual representation, which we now discuss.

\subsection{The Plane Wave Dual Basis}
\label{sec:pwd_basis}

Like Gaussian orbitals, plane waves have also enjoyed a long history of use in classical approaches to electronic structure. While plane waves have never been studied as a basis for quantum computation of electronic structure, they have many desirable properties as a basis; for instance, their periodicity makes them convenient for crystalline solids. The plane wave basis is defined subject to periodic boundary conditions in a computational cell of volume $\Omega$ and the integrals in \eq{one_body_ints} and \eq{two_body_ints} are defined using the Coulomb potential obtained from solving Poisson's equation subject to periodic boundary conditions (see \app{plane_waves} for review),
\begin{equation}
\label{eq:periodic_coulomb_r}
V\left(r, r'\right) = \frac{4 \pi}{\Omega} \sum_{\nu} \frac{\cos \left[k_\nu \cdot \left(r - r'\right)\right]}{k_\nu^2}
\quad \quad \quad \quad
U\left(r\right) = - \frac{4 \pi}{\Omega} \sum_{j,\nu} \zeta_j \frac{\cos \left[k_\nu \cdot \left(r - R_j\right)\right]}{k_\nu^2}
\end{equation}
where $R_j$ are nuclei coordinates, $\zeta_j$ are nuclei charges and $k_\nu$ is a vector of the plane wave frequencies at the $\nu^{\textrm{th}}$ harmonic of the computational cell in three dimensions, excluding the zero mode. We will assume a cubic cell for simplicity. The zero mode gives a divergent term  but for all charge-neutral systems the divergence from the electron-electron interaction cancels with the divergence from the external potential and contributes only a constant term which depends on the unit cell shape (for a derivation of this term, see Appendix F in \cite{Martin2004}).

Using a plane wave basis enforces a periodic charge distribution, natural for crystalline solids. As discussed in \app{basis_errors}, one can also represent finite systems such as molecules using plane waves by choosing the cell volume $\Omega$ to be sufficiently large so that the periodic images do not interact \cite{Martin2004} or by using a truncated Coulomb operator, which completely
eliminates periodic images\cite{fusti2002accurate}. Because plane waves have no knowledge of the atomic positions, one requires more plane waves than Gaussian orbitals in order to obtain the same level of basis set accuracy for most materials. Pseudopotentials reduce the ratio of the number of plane waves needed to the number of Gaussian orbitals needed for the same energy accuracy to only roughly a factor of ten~\cite{tosoni2007comparison,booth2016plane}. However, within a pseudopotential formulation, the asymptotic rate of convergence in both basis sets is dominated by the resolution of the electron-electron cusp, giving a basis set discretization error that scales as ${\cal O}(1/N)$~\cite{gruneis2013explicitly,hattig2011explicitly}. Thus, the asymptotic scaling of algorithms for simulating electronic structure (including the single-molecule case) can be compared directly whether $N$ represents Gaussian orbitals or plane wave orbitals. We substantiate this notion concretely in \app{basis_errors}. Also note that
in condensed phase systems with delocalized electrons, plane waves are especially competitive with Gaussians, and
in special cases such as jellium (the focus of \sec{sec_three}) are substantially more compact.

Within the plane wave basis, we can see immediately that the two-body Coulomb operator has only ${\cal O}(N^3)$ terms instead of ${\cal O}(N^4)$ terms. This reduction in the number of terms arises due to momentum conservation which constrains the allowable transitions between plane waves as they are eigenstates of the momentum operator. As we review in \app{plane_waves}, the complete Hamiltonian in the plane wave basis takes the well-known form:
\begin{equation}
 H = \underbrace{\frac{1}{2} \sum_{p, \sigma} k_p^2 \, c_{p,\sigma}^\dagger c_{p,\sigma}}_{T} - \underbrace{\frac{4 \pi}{\Omega} \sum_{\substack{p \neq q \\ j,\sigma}} \left(\zeta_j \frac{e^{i \, k_{q-p} \cdot R_j}}{k_{p-q}^2}\right) c^\dagger_{p, \sigma} c_{q, \sigma}}_{U} + \underbrace{\frac{2 \pi}{\Omega} \sum_{\substack{(p, \sigma) \neq (q, \sigma') \\ \nu \neq 0}} \frac{c^\dagger_{p,\sigma} c_{q,\sigma'}^\dagger c_{q + \nu,\sigma'} c_{p - \nu,\sigma}}{k_\nu^2}}_{V}
 \label{eq:periodic_coulomb}
\end{equation}
where $\sigma \in \{\uparrow, \downarrow\}$ is the spin degree of freedom and we have truncated the operators to the support of plane waves with frequencies $k_\nu = 2 \pi  \nu / \Omega^{1/3}$ such that $\nu$ is a three-dimensional vector of integers with elements in $[-N^{1/3}, N^{1/3}]$. In the above summation notation, addition of momenta is carried out modulo the maximum momentum. Aliasing the momenta in this way is equivalent to evaluating the integrals in \eq{one_body_ints} and \eq{two_body_ints} by sampling at $N$ evenly spaced grid points, a common practice in electronic structure codes sometimes called dualling~\cite{remler1990molecular,mcclain2017gaussian}. Dualling causes the plane wave Hamiltonian matrix elements to deviate from a Galerkin discretization, but this discrepancy is similar to basis error and vanishes as the number of plane waves increases. Importantly, the dualling form of the plane wave matrix elements is essential to give the desirable properties of the matrix elements in the dual basis we now discuss.

The Fourier transform of the complete plane wave basis (i.e. in the limit of infinite volume $\Omega$ and infinite momentum cutoff) is a basis of delta functions (a grid). But by applying the discrete Fourier transformation to a basis of $N$ plane waves, one obtains a new set of basis functions resembling a smooth approximation to a grid with lattice sites at the locations $r_p = p\, (\Omega / N)^{1/3}$. We call these functions the ``plane wave dual basis''. In electronic structure, the plane wave dual basis has previously been considered
in the context of reduced scaling density functional calculations~\cite{skylaris2002nonorthogonal,skylaris2005introducing}.
As a basis set where each function is associated with a real-space coordinate value, the plane wave dual basis can 
also be viewed as a discrete variable representation (DVR)~\cite{dvrreview}. In particular, it is
a relative of the sinc DVR basis widely used in quantum dynamics simulations~\cite{lill1982discrete,shizgal1984discrete,colbert1992novel,Jones2016,dvrreview}; although unlike in the standard sinc basis where the kinetic energy operator is approximate when using a finite basis,
here the kinetic energy operator is always treated exactly. 
However, the primary novelty about the plane wave dual basis in this work is its use in quantum computation and the specific properties of the basis that we exploit to enable especially efficient quantum algorithms. We derive the closed-form expressions for the plane wave dual basis functions and associated operators in \app{plane_wave_dual}, and further elucidate connections to DVR there.

While the plane wave dual basis functions are not strictly localized in space, they nevertheless diagonalize the potential operators of \eq{periodic_coulomb} within the dualling approximation, analogous to the conversion between the plane wave
and real space forms of the potential operators
via a continuous Fourier transform in a complete plane wave basis. As we derive in \app{plane_wave_dual}, by applying this Fourier transform, the Hamiltonian in the dual basis becomes
\begin{align}
\label{eq:dualpot}
H & = \underbrace{\frac{1}{2\, N} \!\!\!\sum_{\nu, p, q, \sigma} \!\! k_\nu^2 \cos \left[k_\nu \cdot r_{q - p} \right] a^\dagger_{p, \sigma} a_{q,\sigma}}_{T}
- \underbrace{\frac{4 \pi}{\Omega} \sum_{\substack{p,\sigma \\ j, \nu\neq 0}} \frac{\zeta_j \, \cos\left[ k_{\nu} \cdot \left(R_j - r_{p}\right)\right]}{k_\nu^2} n_{p, \sigma}}_{U} +
\underbrace{\frac{2 \pi}{\Omega } \!\!\!\!\! \sum_{\substack{(p, \sigma) \neq (q, \sigma') \\ \nu \neq 0}}\!\!\!\!\!\!\!\! \frac{\cos \left[k_\nu \cdot r_{p-q}\right]}{k_\nu^2} \, n_{p, \sigma} n_{q, \sigma'}}_{V}
\end{align}
where $n_{p} = a^\dagger_{p} a_{p}$ is the number operator. As one-body operators, $T$ and $U$ never have more than ${\cal O}(N^2)$ terms. We can also see the two-body potential operator $V$ is diagonal with only $\Theta(N^2)$ terms. Due to the unitarity of the discrete Fourier transform, the operators in \eq{periodic_coulomb} and \eq{dualpot} are exactly isospectral; there is no loss of accuracy associated with using one representation instead of the other. Thus, the plane wave dual basis offers all advantages of the plane wave basis with $\Theta(N^2)$ terms. As we show in \app{qubit_ham}, the plane wave dual basis Hamiltonian can be mapped to qubits under the Jordan-Wigner transformation as
\begin{align}
\label{eq:jw_ham}
H & =  \sum_{\substack{p, \sigma \\ \nu \neq 0}}\left(\frac{\pi}{\Omega \, k_\nu^2} - \frac{k_\nu^2}{4 \, N} + \frac{2\pi}{\Omega} \sum_{j}\zeta_j \frac{\cos\left[k_\nu \cdot \left(R_j-r_p\right)\right]}{k_\nu^2}\right) Z_{p,\sigma}
+ \frac{\pi}{2\,\Omega } \sum_{\substack{(p, \sigma) \neq (q, \sigma') \\ \nu \neq 0}} \frac{\cos \left[k_\nu \cdot r_{p-q}\right]}{k_\nu^2} Z_{p,\sigma} Z_{q,\sigma'}\\
& + \frac{1}{4\, N} \sum_{\substack{p \neq q \\ \nu, \sigma}} k_\nu^2 \cos \left[k_\nu \cdot r_{q - p} \right] \left(X_{p,\sigma} Z_{p + 1,\sigma} \cdots Z_{q - 1,\sigma} X_{q,\sigma} + Y_{p,\sigma} Z_{p + 1,\sigma} \cdots Z_{q - 1,\sigma} Y_{q,\sigma} \right)
+ \sum_{\nu \neq 0} \left(\frac{k_\nu^2}{2}- \frac{\pi \, N}{\Omega \, k_\nu^2} \right) I\nonumber.
\end{align}
where $X_p$, $Y_p$ and $Z_p$ are Pauli operators acting on qubit $p$.

\subsection{The Fermionic Fast Fourier Transform}
\label{sec:fqft}

A useful feature of the Hamiltonian representation introduced in \sec{pwd_basis} is that one can rotate the system from the plane wave dual basis (where the potential operator is diagonal) to the plane wave basis (where the kinetic operator is diagonal) using an efficient quantum circuit that is related to the fast Fourier transform. This operation allows one to efficiently prepare the initial state for classes of interesting physical systems whose ground state is well approximated by a mean-field state of delocalized electron orbitals (\sec{sec_three}), as well as to improve the efficiency of quantum measurements (\sec{measurement}). The usual quantum Fourier transform would be appropriate to diagonalize the kinetic energy operator for a binary encoding of the state in real space, as used in \cite{Meyer1996,Meyer1997,Zalka1998,Boghosian1998,Boghosian1998b,Wiesner1996,Lloyd1996,Kassal2008,Kivlichan2016}. However, our second-quantized encoding of the state necessitates a special version of the fast Fourier transform which we refer to as the ``fermionic fast Fourier transform'' (FFFT). Note that the word ``quantum'' does not appear in this name because our implementation of the FFFT does not offer any quantum advantage over its classical analog.

The fast Fourier transformation was first applied to fermionic systems for quantum computing purposes in \cite{Verstraete2009} and improved in the context of tensor network simulations in \cite{Ferris2014}. While \cite{Ferris2014} showed that the FFFT could be realized with ${\cal O}(\log N)$ depth using arbitrary two-qubit gates, in \app{fqft}, we extend the method of \cite{Verstraete2009} to show that the FFFT can be implemented for three spatial dimensions using a planar lattice of qubits with ${\cal O}(N)$ depth. While past work has focused entirely on describing the FFFT under the Jordan-Wigner transformation \cite{Ferris2014,Verstraete2009}, we generalize the approach to arbitrary mappings including Bravyi-Kitaev \cite{Bravyi2002,Seeley2012,Tranter2015} and other modern approaches \cite{Whitfield2016,Bravyi2017,Havlicek2017}.

The essential function of the FFFT is to perform the following single-particle rotation:
\begin{equation}
c^\dagger_\nu = {\rm FFFT}^\dagger a^\dagger_\nu \, {\rm FFFT} =  \sqrt{\frac{1}{N}}  \sum_{p} a^\dagger_{p} e^{- i \,k_\nu \cdot r_p}
\quad \quad \quad \quad
c_\nu ={\rm FFFT}^\dagger a_\nu \, {\rm FFFT} =  \sqrt{\frac{1}{N}}  \sum_{p} a_{p} e^{i \,k_\nu \cdot r_p}.
\end{equation}
As a clarifying example, the two-dimensional {\rm FFFT} that acts on spin orbitals $p$ and $q$ (which are not necessarily adjacent in lexicographical ordering) is 
\begin{equation}
F_0^\dagger = e^{-i(\pi/4)a_q^\dagger a_q} e^{i(\pi/4)a_p^\dagger a_p} e^{i(\pi/4)f_{\rm swap}}e^{-i(\pi/2) a_q^\dagger a_q},\label{eq:f2def}
\end{equation}
where $f_{\rm swap}$ generates the ``fermionic swap operator'', which has been proposed for use in quantum computer simulations
in \cite{Wecker2015a}. We define $f_{\rm swap}$ in a mapping-independent way (i.e. not specific to Jordan-Wigner) as
\begin{equation}
\label{eq:fswap}
f_{\rm swap}= (1+a_p^\dagger a_q +a_q^\dagger a_p -a_p^\dagger a_p -a_q^\dagger a_q ).
\end{equation}
This operator is referred to as a fermionic swap because it has the property that it swaps the spin orbitals $p$ and $q$ (up to a global phase) while maintaining proper anti-symmetrization.  For example, $f_{\rm swap} a_p^\dagger f_{\rm swap} = a_q^\dagger$ and vice versa.  Using these definitions it can be shown (see ~\app{fqft}) that
\begin{equation}
F_0^\dagger a_p^\dagger F_0 = \frac{a_q^\dagger + a_p^\dagger}{\sqrt{2}}
\quad \quad \quad \quad
F_0^\dagger a_q^\dagger F_0 = \frac{a_q^\dagger - a_p^\dagger}{\sqrt{2}}.
\end{equation}
This reveals that $F_0$ acts as a Hadamard transform, or two-dimensional Fourier transform, on the creation operators that define the two dimensional subspace.  Then, by combining these operations together with phase shifts one can follow the same reasoning used in the Cooley-Tukey fast Fourier transform algorithm \cite{Cooley1965} to construct the FFFT out of these operations and phase shifts.  We present this argument formally in~\app{fqft}. We also show that the entire FFFT in three dimensions can be implemented on a planar lattice of qubits with gate depth of ${\cal O}(N)$.

Just as the quantum Fourier transform diagonalizes the kinetic operator in real space simulations, it is shown in \app{plane_waves} and \app{plane_wave_dual} that
\begin{equation}
\label{eq:kindiag0}
T =  \frac{1}{2\, N} \sum_{\nu, p, q, \sigma} k_\nu^2 \cos \left[k_\nu \cdot r_{q - p} \right] a^\dagger_{p, \sigma} a_{q,\sigma} = {\rm FFFT}^\dagger \left(\frac{1}{2} \sum_{\nu, \sigma} k_\nu^2 \, a_{\nu,\sigma}^\dagger a_{\nu,\sigma}\right) {\rm FFFT}.
\end{equation}
Thus, an alternative expression for the molecular electronic structure Hamiltonian in the plane wave dual basis is
\begin{equation}
\label{eq:pwdb_ham_diag}
H = {\rm FFFT}^\dagger \left(\sum_{\nu, \sigma} \frac{k_\nu^2}{2} \, a_{\nu,\sigma}^\dagger a_{\nu,\sigma}\right) {\rm FFFT}
- \frac{4 \pi}{\Omega} \sum_{\substack{j,p,\sigma\\ \nu\neq 0}} \zeta_j\frac{\cos\left[k_{\nu} \cdot \left(R_j - r_{p}\right)\right]}{k_\nu^2} n_{p, \sigma}
+\frac{2 \pi}{\Omega} \!\!\!\!\!\!\!  \sum_{\substack{(p,\sigma) \neq (q,\sigma') \\ \nu \neq 0}} \!\!\!\!\!\!\! \frac{\cos \left[k_\nu \cdot \left(r_p - r_q\right)\right]} {k_\nu^2} n_{p,\sigma} n_{q,\sigma'}.
\end{equation}
We choose to write the kinetic operator using the FFFT relation to emphasize that the Hamiltonian has the special property that all components of it are diagonal in either the plane wave or plane wave dual representations. In addition to the advantages of having only $\Theta(N^2)$ terms, in the subsequent sections we will make frequent use of this diagonal property. In some circumstances, we will also desire to perform simulation in the plane wave dual basis after preparing an initial state that is a product state in the plane wave basis; this can be accomplished by applying the FFFT to a product state. Finally, we note that while prior work has leveraged the diagonality of momentum and potential operators in real space, our use of second quantization allows us to use dramatically fewer qubits and also avoids the challenge of anti-symmetrizing the initial state which complicates first-quantized methods~\cite{Kivlichan2016,Kassal2010}.

\section{Improved Algorithms for Quantum Computation}
\label{sec:sec_two}

We analyze the cost of applying several types of quantum simulation algorithms to the Hamiltonians introduced in \sec{sec_one} in this section. In \sec{trotter_alg} and \sec{taylor_alg}, we focus on Trotter-Suzuki and Taylor series algorithms for time-evolution, which can be used to prepare electronic structure ground states when used in conjunction with the phase estimation algorithm \cite{Aspuru-Guzik2005,Kitaev1995}. Specifically, the phase estimation algorithm will project a simulation register into eigenstate $\ket{j}$ with probability $|\braket{j}{\psi_0} |^2$ where $\ket{\psi_0}$ is the ``reference state'' from which the phase estimation procedure begins. The phase estimation procedure involves taking a short time dynamical simulation $e^{-iH\delta}$ such that the error in the eigenvalues of the effective Hamiltonian for the simulated unitary is at most ${\cal O}(\epsilon)$.  If the cost of this short dynamical simulation is $F(\epsilon)$ then the cost of phase estimation is then in ${\cal O}(F(\epsilon)/\epsilon)$. Thus minimizing the costs of dynamical simulation is vitally important for phase estimation. 

Typically, one is interested in projecting to the ground state and $\ket{\psi_0}$ is chosen to be the Hartree-Fock state, defined as the lowest energy single Slater determinant approximation to the ground state \cite{Helgaker2002}. The Hartree-Fock algorithm is a classical self-consistent mean-field procedure for finding this state in terms of a series of single-particle rotations. To prepare the Hartree-Fock state from any product state one can evolve under the anti-Hermitian operator $\sum_{pq} \theta_{pq} \, a^\dagger_p a_q$ for some amplitudes $\theta_{pq} = -\theta_{qp}^*$ determined by the Hartree-Fock procedure. The cost of performing this evolution depends on the values of $\theta_{pq}$. However, the Hartree-Fock state does not need to be prepared exactly in order to provide a good reference and a variational outer-loop can be employed to take fewer Trotter steps \cite{Wecker2014}. Accordingly, we assume that the gate complexity of state preparation is less than the cost of the algorithms described in \sec{trotter_alg} and \sec{taylor_alg}. For systems of delocalized electrons, such as jellium (the focus of \sec{sec_three}), state preparation can be efficiently 
accomplished with $\log(N)$ gate depth for arbitrary two-qubit gates, or with linear gate depth using a planar architecture, using the FFFT of \sec{fqft}.

Before beginning our analysis we will make a few comments about how the results of this paper should be compared to prior work. Most prior quantum algorithms for electronic structure have focused on the simulation of finite systems consisting of a small number of atoms \cite{Aspuru-Guzik2005}. As discussed in \sec{sec_one} and \app{basis_errors}, one can also use the plane wave dual basis for such simulations by choosing the unit cell volume $\Omega$ to be large, or by truncating the Coulomb operator, which exactly eliminates
the periodic images. However, we expect that the plane wave dual basis will be most useful for simulating systems with
periodicity in at least one of the spatial dimensions, such as crystalline wires, surfaces, and solids, similar to the
main current uses of plane wave bases in classical electronic structure. In either case, one is interested in the cost of simulation when the number of basis functions $N$ grows towards the \emph{continuum limit}. We do not expect the total energy to be extensive in $N$.

One should also bound the cost of simulation as the number of particles $\eta$ grows. While it is reasonable to wonder how the cost of simulation grows with molecule size, molecules do not necessarily grow in a systematic fashion. For instance, molecules can have larger $\eta$ by replacing lighter atoms with heavier onces or by adding atoms to a molecule. When using plane waves to treat materials one is usually interested in the properties of an infinite material that is periodic over some computational cell (a collection of unit cells) of volume $\Omega$. As with molecules, one is sometimes interested in how the complexity of a method scales as one fixes the computational cell size and increases the number of particles (e.g. by replacing lighter atoms with heavier ones). But unlike when simulating molecules, the notion of scaling towards the \emph{thermodynamic limit} is well-defined in the context of periodic solids. The thermodynamic limit is approached as one grows $\eta$ by increasing the number of unit cells in the computational cell while keeping a fixed averaged density $\rho = \eta / \Omega$. Accordingly, for both molecules and materials we report the asymptotic scaling of algorithms in terms $\eta$, $N$ and $\rho$ but are most interested in the fixed density scalings corresponding to molecules growing by addition of atoms and materials growing towards the thermodynamic limit. In \sec{measurement} we report the number of measurements required in terms of both a fixed absolute error $\epsilon$  and a fixed relative error $\mu = \epsilon\, \eta$ as one is interested in fixed relative error while scaling towards the thermodynamic limit but interested in fixed absolute error otherwise. This is because physical total energies are extensive in $\eta$.

\subsection{The Cost of Time-Evolution using Trotter-Suzuki Methods}
\label{sec:trotter_alg}

Trotterization is perhaps the simplest method for simulating electron dynamics in the plane wave dual basis. Trotterization solves the problem of compiling $e^{-iHt}$ into fundamental gates by noting that if $H=\sum_\ell^L H_\ell$, where each $e^{-iH_\ell t}$ can be easily compiled into fundamental gates, then $e^{-iHt}$ can be simulated by a time-dependent Hamiltonian that rapidly switches each term on and then off.  If the frequency of these switches is sufficiently high then, from the perspective of the quantum system, the entire Hamiltonian is active throughout the evolution; e.g. for large $r$,
\begin{equation}
e^{-iHt}= \left[\left(\prod_{\ell=1}^L e^{-iH_\ell t/2r}\right)\left(\prod_{\ell=L}^1 e^{-iH_\ell t/2r}\right)\right]^r +\mathcal{O}(t^3/r^2).\label{eq:trotdef}
\end{equation}
Here we have employed the second-order Trotter formula, which is often more practical for chemistry simulations than higher-order decompositions \cite{Poulin2014}.

The value of $r$ that is needed for this expansion depends subtly on the terms in the Hamiltonian. If the Hamiltonian terms commute then the error in the simulation is zero.  Thus, the error does not depend on the norms of the Hamiltonian terms, but rather it depends on their commutators.  Specifically it was shown in \cite{Poulin2014} that
\begin{equation}
\left|\max_\psi \bra{\psi} \left[\left(\prod_{\ell=1}^L e^{-iH_\ell t/2r}\right)\left(\prod_{\ell=L}^1 e^{-iH_\ell t/2r}\right)\right]^r  \!\!\! - e^{-iHt}\ket{\psi}\right| \in \mathcal{O}\left( \max_{\psi}\sum_{\substack{\beta,\alpha\le \beta,\\ \gamma <\beta}} \left| \bra{\psi}[H_\alpha,[H_{\beta},H_{\gamma}]]\ket{\psi}\right|\frac{t^3}{r^2}\right),\label{eq:commbound}
\end{equation}
where $\ket{\psi}$ is a state restricted to the $\eta$-electron manifold. Thus, once a particular ordering of the terms is chosen then an upper bound on the scaling of $r$ can be found based on the commutator norms of the terms.

The conventional approach to second-quantized simulation would be to Trotterize the Hamiltonian of \eq{jw_ham}. This approach is outlined in detail along with improved methods for simulating evolution under the kinetic terms in the Hamiltonian in \app{alt_trotter}. However, the approach we analyze here is to simulate evolution by switching between the plane wave dual basis and the plane wave basis to diagonalize the potential and kinetic operators. Using $H= T + U + V$, we write
\begin{equation}
e^{-iHt}=e^{-i(U+V)t/2} \, {\rm FFFT}^\dagger \, e^{-i\frac{t}{2} \sum_{\nu, \sigma} k_\nu^2 \, a_{\nu,\sigma}^\dagger a_{\nu,\sigma}}\,{\rm FFFT}\, e^{-i(U+V)t/2} +\mathcal{O}(t^3).\label{eq:trotterdecomp}
\end{equation}
Representing the kinetic terms as diagonal operators has two effects.  Firstly it reduces the number of commutators in~\eq{commbound} which leads to better bounds on the error.  The second advantage is that the kinetic operator only contains $\mathcal{O}(N)$ local terms that all commute.  This allows us to simulate the kinetic operator in depth $\mathcal{O}(1)$ after performing this basis transformation on a quantum computer that has arbitrary single qubit rotations as a fundamental gate. 

To understand the cost of this approach note that each of the $r$ steps in the Trotter algorithm comprises of, from~\eq{trotterdecomp}, two simulations of the potential energy Hamiltonian, the FFFT and its inverse, and a simulation of the kinetic operator in the plane wave basis.  The operator $U+V$ is the sum of $\Theta(N^2)$ number operators. Each number operator can be simulated using $\mathcal{O}(1)$ CNOT gates and single qubit rotations~\cite{Whitfield2010}.  It follows that $e^{-i(U+V)t}$ can be simulated using $\mathcal{O}(N^2)$ gates and in depth $\mathcal{O}(N)$ if the quantum computer has all to all connectivity between qubits, without the use of ancillae.  We show in~\app{qubit_cycle} that it can also be simulated in a planar nearest neighbor architecture in depth $\mathcal{O}(N)$ without ancillae.  Similarly, $e^{-i\frac{1}{2} \sum_{\nu, \sigma} k_\nu^2 \, a_{\nu,\sigma}^\dagger a_{\nu,\sigma}t}$ requires $\mathcal{O}(N)$ gates to implement.  The number of gates required to perform the FFFT scales as $\mathcal{O}(N\log N)$~\cite{Ferris2014,Verstraete2009} with ${\cal O}(N)$ depth for the three-dimensional transform implemented for qubits connected on a planar lattice, which is proven rigorously in \app{fqft}. Consequently, costs are asymptotically dominated by simulation of the potential. Thus, the Trotter simulation can be performed using a circuit of depth $\mathcal{O}(N r)$ on a planar lattice.

In the \app{TSerror} we show that to simulate for time $t$ and achieve error $\epsilon$ it suffices to choose $r$ such that
\begin{equation}
r\in{\Theta}\left(\frac{\eta^2 N^{5/6}t^{3/2}}{\Omega^{5/6}\sqrt{\epsilon}}\sqrt{1+\frac{\eta\,\Omega^{1/3}}{N^{1/3}}} \right),
\end{equation}
implying that the gate depth of our approach to Trotter-Suzuki based simulations is
\begin{equation}
\mathcal{O}(Nr) \subseteq \mathcal{O}\left(\frac{\eta^2 N^{11/6}t^{3/2}}{\Omega^{5/6}\sqrt{\epsilon}}\sqrt{1+\frac{\eta\,\Omega^{1/3}}{N^{1/3}}}\right).
\label{eq:trotscale}
\end{equation}
There are a number of ways that we can understand this scaling depending on how problem size grows. If we assume we grow our simulation size without changing the system density (i.e. $\rho = \eta / \Omega \in {\cal O}(1)$) then the gate depth is in ${\cal O}(N^{5/3} \eta^{11/6} t^{3/2} / \epsilon^{1/2})$. If we only are interested in the scaling with $N$ then we can take $\eta \in \mathcal{O}(N)$ to find that the gate depth is in $\mathcal{O}(N^{7/2})$. This is more than quadratically better than the best known rigorous bounds on the circuit depth for Trotter-based chemistry simulations, $\mathcal{O}(N^8)$~\cite{Wecker2015a}.  However, just as the bounds for $r$ in~\cite{Wecker2015a} proved to be polynomially loose~\cite{Poulin2014,BabbushTrotter}, we expect the empirical performance of our approach to be better than~\eq{trotscale} suggests.

Despite the obvious differences in scaling, a full comparison between prior Trotter-Suzuki work in different bases and this result remains challenging. This is because comparing costs for any specific system will depend on the precise $N$ needed in the given basis, which will be problem specific. Nonetheless, the quadratic difference between the two complexities strongly suggests that for fault-tolerant applications, our simulation method will be competitive because most of the physical qubits required for the simulation arise from executing single qubit rotations fault tolerantly~\cite{Reiher2017,Jones2012}. As a final note, while we use an exact evaluation of the potential here it is possible to leverage the local nature of the plane wave dual basis in order to approximate the potential on-the-fly using the Barnes-Hut algorithm or other fast multipole methods, which require $\widetilde{\mathcal{O}}(N)$ gates. Thus, it is possible, in principle, to achieve a gate complexity that matches the cited circuit depths to within logarithmic equivalence.  However, a naive application of the fast multipole method would require that the quantum computer coherently apply the algorithm for each configuration in the superposition and is likely to be impractical in near-term quantum computers.

\subsection{Bounding Cost of Time-Evolution using Taylor Series Methods}
\label{sec:taylor_alg}

With the exception of \cite{BabbushSparse1,BabbushSparse2,Kivlichan2016,Low2016,BabbushSymmetry}, all prior papers which analyze the time-evolution of electronic structure Hamiltonians use the Trotter-Suzuki decomposition. Even the most elaborate Trotter schemes scale sub-polynomially but not poly-logarithmically with respect to the reciprocal of the simulation error, $1/\epsilon$,~\cite{Berry2007,Wiebe2008}. In \cite{Berry2013,Berry2015}, Berry \emph{et al.}~ combined the results of \cite{Berry2007,Cleve2009,Berry2014} to show a technique for performing time-evolution of arbitrary Hamiltonians with sub-logarithmic dependence on the inverse precision. Since then, several papers have introduced other ``post-Trotter'' methods with improved dependence on $\epsilon$ \cite{Low2017,Novo2016}. The ``Taylor series'' techniques of \cite{Berry2013} were first applied to chemistry in \cite{BabbushSparse1,BabbushSparse2}. The result of \cite{BabbushSparse1} is an algorithm with gate complexity of $\widetilde{\cal O}(N^5)$ and the result of \cite{BabbushSparse2} is a more complicated algorithm which exploits the sparseness of the configuration interaction representation of the Hamiltonian in order to perform simulation with gate complexity $\widetilde{\cal O}(\eta^2 N^3)$, where $\eta$ is the number of electrons.

Using the Taylor series method in the plane wave dual representation we will be able to outperform both of these bounds. In this section, we will show that one can perform time-evolution of the Hamiltonian with gate complexity of $\widetilde{\cal O}(N^4)$ using an approach that is much simpler than the aforementioned Taylor series based algorithms. This scheme is similar to the ``database algorithm'' protocol of \cite{BabbushSparse1}, which scaled at least as $\widetilde{\cal O}(N^6)$ in that work. In \app{onthefly}, we build on the results of this section to show a more complicated algorithm, inspired by the ``on-the-fly'' algorithms from \cite{BabbushSparse1} and \cite{BabbushSparse2}, which in the plane wave dual basis has gate complexity of $\widetilde{\cal O}(N^{11/3})$ and gate depth of $\widetilde{\cal O}(N^{8/3})$, making this the most efficient algorithm for time-evolution of an electronic structure system in the literature.

We will not go into details about how the Taylor series method works and instead refer readers to \cite{Berry2015}. We will describe what is required to implement the techniques and bound the cost of our approach. The Taylor series method begins with the observation that any local Hamiltonian, e.g. \eq{jw_ham}, can be expressed as
\begin{equation}
\label{eq:taylor_start}
H = \sum_{\ell = 0}^{L - 1} W_\ell H_\ell
\quad \quad \quad
\textrm{s.t.} \quad W_\ell \in \mathbb{R} \quad \quad 
H_{\ell}^{2} = I
\end{equation}
where $W_\ell$ are real scalars and $H_\ell$ are self-inverse operators which act on qubits; e.g., the $H_\ell$ are the strings of Pauli operators in \eq{jw_ham}. The Taylor series simulation technique is described in \cite{Berry2015} in terms of queries to two oracle circuits. The first oracle circuit acts on an empty ancilla register of ${\cal O}(\log L)$ qubits and prepares a particular superposition state related to \eq{taylor_start},
\begin{equation}
\textsc{prepare}(W) \ket{0}^{\otimes \log L} \mapsto \sqrt{\frac{1}{\Lambda}} \sum_{\gamma = 0}^{L - 1} \sqrt{W_\ell} \ket{\ell}
\quad \quad \quad \quad \Lambda = \sum_{\ell = 0}^{L - 1} \left | W_\ell \right |
\end{equation}
where $\Lambda$ is a normalization parameter that turns out to have significant ramifications for the overall algorithm complexity. The second oracle circuit we require acts on the ancilla register $\ket{\ell}$ as well as the system register $\ket{\psi}$ and directly applies one of the $H_\ell$ to the system, controlled on the ancilla register. For this reason, we refer to the ancilla register $\ket{\ell}$ as the ``selection register'' and name the oracle accordingly,
\begin{equation}
\label{eq:selecth}
\textsc{select}(H) \ket{\ell} \ket{\psi} = \ket{\ell} H_\ell \ket{\psi}.
\end{equation}
Note that the self-inverse nature of the $H_\ell$ operators implies that they are both Hermitian and unitary, which means they can be applied directly to a quantum state.

Suppose that the circuit $\textsc{prepare}(W)$ can be applied at gate complexity $P$ and the circuit $\textsc{select}(H)$ can be applied at gate complexity $S$. Then, the main result of \cite{Berry2015} is that one can straightforwardly perform a quantum simulation under $H$ for time $t$ to unitary operator precision $\epsilon$ at gate complexity
\begin{equation}
\label{eq:result}
\widetilde{\cal O}\left(\left(S + P\right) \Lambda \, t\right)
\end{equation}
with spatial overheads and precision costs polylogarithmically bounded in $\epsilon$. Since the bound on the Hamiltonian norm from \app{operators} is obtained using the triangle-inequality, it also asymtotically bounds $\Lambda$ at ${\cal O}(N^{7/3} / \Omega^{1/3} + N^{5/3} / \Omega^{2/3})$. We now describe how these two oracles can be implemented so that $(S + P) \in \widetilde{\cal O}(N^2)$.

First, we discuss implementation of $\textsc{select}(H)$. From \eq{jw_ham} it is clear that there are $L = \Theta(N^2)$ terms which can be indexed by only two indices, $p$ and $q$. For the purposes of this section we will further suppose that $p$ indexes both spin and position so that even values of $p$ correspond to spin-up orbitals and odd values of $p$ correspond to spin-down orbitals. We will ignore the identity term and index the local $Z_p$ terms whenever $p = q$. For $p \neq q$ there are three terms in \eq{jw_ham} which we will refer to as the $ZZ$ term, the $XZX$ term and the $YZY$ term. An ancilla qubit, $b$ will be introduced and if $b = 0$ then the pair $(p, q)$ will refer to the $ZZ$ term whereas if $b = 1$, the pair $(p, q)$ will refer to the $XZX$ and $YZY$ terms. If $p > q$ we will refer to the $XZX$ terms whereas if $q > p$, we will refer to the $YZY$ term. Accordingly, our $\textsc{select}(H)$ circuit should have the following actions,
\begin{equation}
\label{eq:select}
\textsc{select}(H) \ket{p} \ket{q} \ket{b} \ket{\psi} \mapsto \begin{cases}
\ket{p} \ket{q} \ket{b} Z_p \ket{\psi} & p = q \\
\ket{p} \ket{q} \ket{b} Z_p Z_q \ket{\psi} & (b = 0) \wedge (p \neq q) \\
\ket{p} \ket{q} \ket{b} (X_q Z_{q + 1} \cdots Z_{p - 1} X_p) \ket{\psi} & (b = 1) \wedge (p > q) \wedge (p + q \, \!\!\mod\!\! \, 2 = 0 )\\
\ket{p} \ket{q} \ket{b} (Y_p Z_{p + 1} \cdots Z_{q - 1} Y_q) \ket{\psi} & (b = 1) \wedge (q > p) \wedge (p + q \, \!\!\mod\!\! \, 2 = 0)\\
\ket{p} \ket{q} \ket{b} \ket{\psi} & (b = 1) \wedge (p + q \, \!\!\mod\!\! \, 2 = 1).
\end{cases}
\end{equation}
Note that the condition involving $(p + q) \, \!\!\mod\!\! \, 2$ is necessary when the model contains a spin degree of freedom in order to conserve spin. This efficient encoding requires only $2 \log N$ ancilla for the selection register. The logic to select a term, shown in \eq{select}, involves only the operations $>$, $\wedge$ and $=$, which can all execute with $\widetilde{\cal O}(1)$ gates. Since the actual $H_\ell$ contain up to $N$ Pauli operators, we see that $\textsc{select}(H)$ can be circuitized with gate complexity $S \in \widetilde{\cal O}(N)$. For a specific implementation of how even more complex Pauli strings can be implemented from a selection oracle with this same gate complexity, see Section III of \cite{BabbushSparse1}.

Using the notation established in \eq{select} the preparation oracle should have the following actions,
\begin{equation}
\label{eq:prepw}
\textsc{prepare}(W) \ket{p}\ket{q}\ket{b} \mapsto \sqrt{\frac{1}{\Lambda}} \sum_{p,q,b} \sqrt{W_{p,q,b}} \ket{p}\ket{q}\ket{b}
\quad \quad \quad \quad \Lambda = \sum_{p,q,b} \left | W_{p,q,b} \right |
\end{equation}
where
\begin{equation}
\label{eq:amps}
W_{p,q,b} = \begin{cases}
\sum_{\nu \neq 0}\left(\frac{\pi}{2 \, \Omega \, k_\nu^2} - \frac{k_\nu^2}{8 \, N} +  \frac{\pi}{\Omega} \sum_{j}\zeta_j\frac{\cos\left[k_\nu \cdot \left(R_j-r_p\right)\right]}{k_\nu^2}  \right) & p = q \\
 \frac{\pi}{4\,\Omega } \sum_{\nu \neq 0} \frac{\cos \left[k_\nu \cdot r_{p-q}\right]}{k_\nu^2} & b = 0 \wedge (p \neq q) \\
\frac{1}{4\, N} \sum_{\nu} k_\nu^2 \cos \left[k_\nu \cdot r_{q - p} \right]  & b = 1 \wedge (p + q) \, \!\!\mod\!\! \, 2 = 0\\
1 & b = 1 \wedge (p + q) \, \!\!\mod\!\! \, 2 = 1.\\
\end{cases}
\end{equation}
We have added a factor of $1 / 2$ to the $Z$ and $ZZ$ coefficients of \eq{jw_ham} as those terms will execute twice; when $p = q$ this happens due to the $b$ degree of freedom and when $b = 0$ this happens because both $p > q$ and $p < q$ will occur. To implement $\textsc{prepare}(W)$ one can use an approach which mirrors the ``database algorithm'' introduced in \cite{BabbushSparse1}. The idea is based on results from \cite{Shende2006} which show that an arbitrary quantum state on $m$ qubits can be prepared using a circuit with no more than ${\cal O}(2^m)$ CNOT gates. Since $\textsc{prepare}(W)$ initializes a state on ${\cal O}(\log L)$ qubits where $L = \Theta(N^2)$, the techniques of \cite{Shende2006} would allow one to implement  $\textsc{prepare}(W)$ at gate complexity $P \in 2^{{\cal O}(\log L)} \in {\cal O}(N^2)$.

We have thus shown a constructive approach to Taylor series simulation of \eq{jw_ham} with total gate complexity
\begin{equation}
\widetilde{\cal O}\left(\left(S + P\right) \Lambda \, t\right) \in \widetilde{\cal O}\left(N^2 \Lambda \, t\right)
\in \widetilde{\cal O}\left(\frac{N^{11/3} t}{\Omega^{2/3}} + \frac{N^{13/3} t}{\Omega^{1/3}}\right).
\end{equation}
If we fix the system phase at $\rho = \eta / \Omega \in {\cal O}(1)$ and assume that $\eta \in \Theta(N)$ then we see that the algorithm scales asymptotically as $\widetilde{\cal O}(N^4 t)$. Though less efficient than the method of \sec{trotter_alg} by a factor of $\sqrt{N}$, this algorithm has logarithmic dependence on $\epsilon$, which is a superpolynomial advantage in $\epsilon$ over all Trotter schemes and an exponential advantage in $\epsilon$ over the method of \sec{trotter_alg}. In \app{onthefly} we extend these ideas to show a more involved implementation of $\textsc{prepare}(W)$ which results in overall gate complexity $\widetilde{\cal O}(N^{11/3})$ and gate depth of $\widetilde{\cal O}(N^{8/3})$. The concept of that approach is to compute the coefficients ``on-the-fly'' similar to the ``on-the-fly'' algorithm in \cite{BabbushSparse1}.

There are several ways in which these results could be improved. First, our bound on $\Lambda$ for the database algorithm is likely loose and should be studied numerically in order to estimate practical scaling. Second, following the construction detailed in \cite{BabbushSparse2}, one could simulate the plane wave dual Hamiltonian in the configuration interaction representation using the Taylor series approach and in doing so, reduce the spatial requirement of this algorithm from $\widetilde{\cal O}(N)$ to $\widetilde{\cal O}(\eta)$. That improvement would be especially meaningful in the dual basis due to the spatial overhead associated with using plane waves instead of Gaussian orbitals.

\subsection{Fewer Measurements for Variational Quantum Algorithms}
\label{sec:measurement}

An alternative to quantum phase estimation and other methods requiring time evolution for the study of electronic systems are quantum variational algorithms such as the variational quantum eigensolver~\cite{Peruzzo2013,McClean2015,McClean2016}.  These methods have garnered significant recent attention due to their simple experimental implementation and robustness to control errors \cite{OMalley2016}. Variational quantum algorithms involve a parameterized procedure (usually a parameterized quantum circuit) for preparing quantum states (the variational ansatz). The variational ansatz is iteratively improved by measuring an objective function and then using a classical optimization routine to suggest new parameters. The bottleneck we focus on here is the measurement step. While variational algorithms do not require long coherent evolutions, they usually require a large number of circuit repetitions for measurement purposes; the abstract of \cite{Wecker2015a} claims the primary challenge of these methods is that ``the required number of measurements is astronomically large for quantum chemistry applications''. Here, we show that use of the plane wave dual basis enables new bounds and strategies that drastically reduce the number of circuit repetitions required.

Usually (but not always, e.g. see \cite{Santagati2016}) the measurement objective is the expectation value of the energy on the current quantum state. The expense of this step typically depends on the norm and form of the Hamiltonian, and the exact method that is used to evaluate it. The simplest and most practical method of expectation value estimation relies on a form of quantum operator averaging that leverages the structure of these Hamiltonians as sums of tensor products of Pauli operators. The expectation value of the energy may be estimated by measuring the individual tensor products of Pauli operators $H_\ell$ on repeated, independent state preparations and summing the resulting estimates $\avg{H_\ell}$ together, weighted by their coefficient $W_\ell$, to get an estimate of the expectation value,
\begin{equation}
\label{eq:var_ham}
H = \sum_{\ell} W_\ell H_\ell \quad \quad \quad \quad \avg{H} = \sum_{\ell} W_\ell \avg{H_\ell}.
\end{equation}
This method has the advantage that negligible coherence time is required beyond state preparation in order to perform the required measurements, making it particularly amenable to implementation on quantum devices without error-correction. If one assumes no additional prior information and allows a variable number of measurements per term, the number of repeated preparations and measurements $M$ to estimate the value of $\langle H \rangle$ to a precision $\epsilon$ is known~\cite{Wecker2015a,Rubin2018} to be bounded by
\begin{align}
M \in {\cal O} \left(\left(\frac{1}{\epsilon}\sum_{\ell} \left |W_\ell\right|\right)^2\right)
\in \mathcal{O}\left( \left(\frac{N^{7/3}}{\epsilon \, \Omega^{1/3}} + \frac{N^{5/3}}{\epsilon \, \Omega^{2/3}}\right)^2\right)
\in {\cal O}\left(\frac{N^{14/3}}{\epsilon^2 \, \Omega^{2/3}}\left( 1 + \frac{1}{N^{4/3}\Omega^{2/3}}\right)\right)
\end{align}
where we have used the triangle inequality upper-bounds to the norm of the plane wave dual Hamiltonian derived in \app{operators}. Already, this bound is significantly lower than the best proven bound on the number of measurements required when using a Gaussian basis, ${\cal O}(N^8 / \epsilon^2)$; however, both bounds are loose.

Hamiltonians in the plane wave dual basis have a few special properties that allow us to make even fewer measurements. In particular, the Coulomb operators $U$ and $V$, are diagonal. A consequence of this is the local $H_\ell$ terms from each of these operators all commute with each other, allowing the use of a separate, unbiased estimator for the mean of $U + V$, without the use of an ancilla qubit~\cite{Santagati2016}. While the kinetic operator is not diagonal in the plane wave dual basis representation, one can perform the FFFT prior to measurement. This would change to the plane wave basis and diagonalize the kinetic operator. Thus, instead of independent wavefunction preparations for each $H_\ell$ within the sum for $T$, $U$ and $V$, either the entire operator $U + V$ or the entire operator $T$ can be measured completely on each circuit repetition. As variances add linearly for independent measurements, if we were to measure $T$, $U$ and $V$ individually by sampling bit strings in their eigenbasis, we would require a number of circuit repetitions scaling as
\begin{align}
\label{eq:diagonal_measure}
M & \in {\cal O}\left(\frac{\text{Var}_{\ket{\Psi}}\left[T\right] + \text{Var}_{\ket{\Psi}}\left[U + V\right]}{\epsilon^2}\right) 
\in {\cal O}\left(\frac{ \avg{T^2} - \avg{T}^2 + \avg{\left(U + V\right)^2} - \avg{U + V}^2}{\epsilon^2}\right)\\
& \in {\cal O}\left(\frac{\avg{V^2} + \avg{T^2}}{\epsilon^2}\right) \subseteq {\cal O}\left(\frac{\eta^4 N^{2/3}}{\epsilon^2 \, \Omega^{2/3}} + \frac{\eta^2 N^{4/3}}{\epsilon^2 \, \Omega^{4/3}}\right)
\in {\cal O}\left(\frac{\eta^{10/3} N^{2/3}}{\epsilon^2} + \frac{\eta^{2/3} N^{4/3}}{\epsilon^2}\right)\nonumber,
\end{align}
where we have used bounds from \app{operators}. In the final bound we have provided the scaling at fixed density, consistent with scalings in other sections of this paper. We see that for either finite molecules or bulk materials, since it must be the case that $\eta \in {\cal O}(N)$, this scaling is no worse than ${\cal O}(N^4/\epsilon^2)$.

As discussed at the beginning of \sec{sec_two}, one is often interested in studying the cost of converging periodic electronic structure calculations to the thermodynamic limit.  However, simulation of the ground state energy within fixed absolute error is unreasonable in the thermodynamic limit as the energy scale of the system grows scales asymptotically as ${\cal O}(\eta)$. Accordingly, when growing towards the thermodynamic limit one would be interested in achieving a fixed relative error $\mu = \epsilon \, \eta$. In terms of $\mu$, we see that scaling towards the thermodynamic limit is
\begin{equation}
{\cal O}\left(\frac{\eta^{4/3} N^{2/3}}{\mu^2} + \frac{N^{4/3}}{\mu^2\, \eta^{4/3}}\right)
\in {\cal O}\left(\frac{N^{2}}{\mu^2}\right),
\end{equation}
where in the final bound we have made the reasonable assumption that $\eta^{4/3}N^{2/3}$ grows faster than $N^{4/3} / \eta^{4/3}$.

A practical difficulty for simple operator averaging on near-term devices with Pauli operators built from Jordan-Wigner strings is the sensitivity to measurement error on each of the individual qubits in a long Pauli string~\cite{Kandala2017}.  In the plane wave dual basis, one has the advantage that the diagonal operators are always two-local in the Jordan-Wigner representation, thus mitigating this problem. The kinetic operator may be treated in this way by applying the FFFT. However, one might seek to avoid the coherent overhead of applying the FFFT in order to diagonalize the kinetic operator. This would be especially advisable when using a near-term device prone to errors during the FFFT execution. If one were to measure $U + V$ at once but measure $T$ by sampling the $H_\ell$ then the total number of measurements would scale as
\begin{align}
\label{eq:practical_measure}
M & \in {\cal O}\left(\frac{\left \| T \right\|^2 + \text{Var}_{\ket{\Psi}}\left[U + V\right]}{\epsilon^2}\right) 
\in {\cal O}\left(\frac{\left \| T \right\|^2 + \avg{V^2}}{\epsilon^2}\right) \in {\cal O}\left(\frac{N^{10/3}}{\epsilon^2 \, \Omega^{4/3}} + \frac{\eta^4 N^{2/3}}{\epsilon^2 \, \Omega^{2/3}}\right) \in {\cal O}\left(\frac{N^4}{\epsilon^2}\right)\nonumber
\end{align}
where $\| T \|$ is the triangle-inequality upper-bound on the norm of $T$ from \app{operators}. At fixed density $\rho \propto \eta / \Omega \in \Theta(1)$ this quantity also scales as ${\cal O}(N^2 / \mu^2)$ for fixed relative error $\mu$ and $\eta \in \Theta(N)$.

Alternative methods for evaluating the objective function using techniques from phase estimation have been studied in some detail~\cite{Santagati2016}. These methods require a number of initial state preparations that scales quadratically better in $\epsilon$ and measures fewer qubits in the process, which mitigates the impact of measurement error. This quadratic scaling improvement comes at the cost of requiring larger circuit depth, which can make these approaches impractical for existing experimental platforms which are often limited by coherence time. Specifically, if one prepares the state of interest with a unitary $\mathcal{U}$, then it is possible to estimate the expectation value of the energy to a precision $\epsilon$ using $\mathcal{O}(\sum_\ell |W_\ell| / \epsilon)$ applications of $\mathcal{U}$ and $\mathcal{U}^\dagger$.  This implies an asymptotic bound on the cost of energy estimation of
\begin{align}
M_{\mathcal{U}} \in \mathcal{O} \left(\frac{N^{7/3}}{\epsilon \, \Omega^{1/3}} + \frac{N^{5/3}}{\epsilon \, \Omega^{2/3}} \right).
\end{align}
For fixed density, $\eta \in \Theta(N)$, and relative error $\mu$, we can bound this scaling at ${\cal O}(N/\mu)$. As was shown in the original work, this scaling is comparable up to logarithmic factors to the application of iterative phase estimation using the best known Hamiltonian simulation algorithms.  This makes it feasible to use variational approaches to improve state preparation for quantum phase estimation applications on fault-tolerant quantum devices.

\section{Proposal to Simulate Jellium on a near-term device}
\label{sec:sec_three}

In this section, we discuss an experimental proposal for near-term devices based on the advances of \sec{sec_one}. In particular, we focus on the quantum simulation of the homogeneous or uniform electron gas, also known as jellium. We believe that jellium is an attractive system to target with early quantum computers due to its simplicity yet foundational
importance for many areas of physics and materials science. Further, it is naturally compatible with the plane wave and dual basis simulation formalism we have described so far. The widespread use of jellium as a benchmark on which to test new classical simulation methods, as well as continuing unresolved physical questions in the system, positions it as an intriguing arena in which to
contrast quantum and classical simulations.

Jellium is defined as a system of interacting electrons with a uniform electron density $\rho$ and a homogeneous compensating positive background charge, such that the overall system is charge neutral~\cite{giuliani2005quantum}. As a finite realization, we consider a system of $\eta$ electrons in a box of volume $\Omega$ with periodic boundary conditions, where the jellium Hamiltonian becomes exactly \eq{jw_ham} with a constant external potential, i.e. all $\zeta_j = 0$. 
Jellium is of interest in different physical dimensions; both 
 two- and three-dimensional jellium are realized to a good approximation in real materials. For example, two-dimensional jellium is approximated well by electrons confined in  semiconductor wells~\cite{spivak2010colloquium}, while three-dimensional jellium is a model for the valence electron density of alkali metals such as sodium~\cite{brack1993physics}. Historically, the physics of jellium has helped elucidate some of the most basic concepts in condensed matter physics. For example, Wigner's observation that electrons in jellium must crystallize as the electron density is decreased~\cite{wigner1934interaction} was the first example of an interaction driven metal-insulator transition. Later, the ground state physics of jellium in two dimensions in a strong magnetic field
became the canonical setting to understand the quantum Hall effect~\cite{stone1992quantum}. Simulations of jellium also play a central role in computational applications. This is because the energy density of jellium is the starting approximation in density functional calculations, the mostly widely performed calculations in quantum chemistry and materials science. In particular, the local density approximation gives the (exchange-correlation) energy $E_{xc}$ of a material with a generic, non-uniform, electronic density $\rho(r)$, as
\begin{equation}
  E_{xc}[\rho] = \int \rho(r) \epsilon_{xc}^\textrm{UEG}(\rho(r)) \ dr
\end{equation}
where $\epsilon_{xc}^\textrm{UEG}(\rho(r))$ is the (exchange-correlation) energy density of jellium at density $\rho(r)$.
For this reason, the history of density functionals has been tied to improvements in approximate simulations of the jellium energy density~\cite{ceperley1980ground,vosko1980accurate,perdew1992accurate}.

For the above reasons, simulating the properties of jellium with classical methods is a standard classical benchmark. This also argues for using it as a benchmark for quantum simulations, and in this context we briefly outline the current limitations of classical techniques, and the setting in which quantum simulations may be most useful. The phase diagram of jellium is usually discussed in terms of the Wigner-Seitz radius $r_s$, which is related to the density by $4\pi r_s^3 /3 = \Omega / \eta = \rho^{-1}$ in three dimensions. While the ground state of jellium at high densities (metallic, $r_s \sim 1$ Bohr radii per particle) and at very low densities (insulating, $r_s \sim 100$ Bohr radii per particle) is well established,
the precise phase diagram in the low to intermediate density regime is uncertain due to competing electronic and spin phases~\cite{ceperley1980ground,tanatar1989ground,zong2002spin,attaccalite2002correlation,spink2013quantum,drummond2009phase}. 
In the high-density regime, the system is dominated by kinetic energy, and expansion techniques based on perturbation theory perform well~\cite{gell1957correlation,freeman1977coupled}. Outside this density regime, the main simulation tool has been quantum Monte Carlo in the continuum formulation~\cite{ceperley1980ground,tanatar1989ground,zong2002spin,attaccalite2002correlation,spink2013quantum,drummond2009phase}, and more recently, in basis set formulations such as full configuration interaction quantum Monte Carlo (FCIQMC)~\cite{shepherd2012full,shepherd2012investigation} and auxiliary field quantum Monte Carlo (AFQMC)~\cite{wilson1995constrained,motta2015imaginary}.
The latter basis set calculations use plane waves and can be directly compared to quantum simulations in the plane wave dual formulation.
Due to the fermion sign problem, it is difficult to obtain data with acceptable stochastic error with exact quantum Monte Carlo methods (e.g. with released nodes~\cite{ceperley1980ground}, FCIQMC without initiators~\cite{shepherd2012full}, or AFQMC without constrained phase bias~\cite{motta2015imaginary}) for systems with $\eta > 50$. Instead, simulations use a bias to control the sign problem, such as the fixed node approximation. Although much useful information can be extracted in the presence of this bias, the systematic error is hard to estimate, and is thought to be as large as half a percent in the energy~\cite{tanatar1989ground,shepherd2012full}. Unfortunately, this error is on a similar scale to the energy difference between competing phases in the intermediate density regime. We expect quantum simulations, even for modest $\eta \approx 50$ and modest $N \approx 100$, to offer bias free results that cannot currently be obtained by classical techniques; beyond their role in understanding the approximations used in classical methods and in demonstrating ``quantum supremacy', such simulations will provide a new way to resolve the complicated jellium phase diagram in the low  density regime.

In the next part of \sec{sec_three}, we consider how to use the advances introduced in \sec{sec_one} within the specific context of a practical quantum algorithm for jellium simulation on near-term devices. While the Trotter and Taylor algorithms described in \sec{trotter_alg} and \sec{taylor_alg} can be used for ground state simulation, either by simulating adiabatic state preparation \cite{BabbushAQChem} or by projecting to a ground state using quantum phase estimation \cite{Kitaev1995,Aspuru-Guzik2005}, such approaches are likely to require error-correction for their implementation. However, in the case of jellium, a good initial state preparation is extremely simple. This makes variational quantum algorithms for jellium particularly interesting, given their additional suitability for near-term devices.~\cite{Peruzzo2013,McClean2015}.

\subsection{Linear Depth Quantum Variational Algorithm for Planar Architectures}
\label{sec:jellium_alg}

As with all variational algorithms, one prepares an ansatz $\ket{\psi(\vec \theta)}$ for the ground state which is described in terms of parameters $\vec \theta$ which are selected in order to minimize the expectation value of the Hamiltonian, $\bra{\psi(\vec \theta)} H \ket{\psi(\vec \theta)}$. Usually, one prepares $\ket{\psi(\vec \theta)}$ by applying a parameterized quantum circuit to a suitable reference state $\ket{\psi_0}$ so that $\ket{\psi(\vec \theta)} = U(\vec \theta) \ket{\psi_0}$. Thus, the power of a variational algorithm depends on the quality of the reference state $\ket{\psi_0}$ and the structure of the parameterized circuit $U(\vec \theta)$. The reference state is often chosen to be the mean-field solution to the problem. Mean-field solutions to jellium are diagonal in the plane wave basis,
and provide useful starting points for quantum Monte Carlo simulations even at quite low densities~\cite{drummond2009phase}. One can begin quantum simulation in a product state associated with the plane wave basis and then apply the FFFT to obtain the mean-field state of jellium in the dual basis. As shown in \app{fqft}, the FFFT can be implemented with ${\cal O}(N)$ gate depth on a planar lattice.

A variational strategy that is particularly practical for the near-term is based on a low-order Trotter approximation of adiabatic state preparation. This ansatz is related to the quantum approximate optimization algorithm \cite{Farhi2014} and has been shown to perform well in the context of electronic structure \cite{Wecker2015a}. Following the scheme of \cite{Wecker2015a}, the idea is to Trotterize the adiabatic algorithm defined by evolution under
\begin{equation}
\label{eq:schedule}
H\left( \tau \right) = T + U + \tau \, V.
\end{equation}
Thus, the schedule is to start in the ground state of the one-body Hamiltonian and slowly turn on the two-body terms. Note that $H(0) = T$ for jellium, which is the Hamiltonian of a free particle. This choice of schedule further justifies use of $\ket{\psi_0} = \textrm{FFFT} \ket{0}$ as the reference since this makes $\ket{\psi_0}$ an eigenstate of $T$ in the plane wave dual basis. One should choose $\ket{0}$ to have the correct particle number and spin-symmetry to describe the target state as an error-free simulation would conserve these quantum numbers. We use the fact that we can write \eq{schedule} for any molecular Hamiltonian in the Jordan-Wigner transformed plane wave dual basis as
\begin{equation}
H\left(\tau \right) = \textrm{FFFT}^\dagger \left(\sum_{p} \theta_{p} \left(\tau\right) Z_p\right) \textrm{FFFT} + \sum_{p} \theta_{pp} \left(\tau\right) Z_p + \sum_{p\neq q} \theta_{pq}\left(\tau\right) Z_p Z_q
\end{equation}
for scalar values of $\vec \theta$ which should be apparent from \eq{jw_ham}. We can Trotterize the adiabatic evolution as
\begin{equation}
U(\vec \theta) = \prod_{m=1}^M \bunderbrace{\textrm{FFFT}^\dagger \left(\prod_{p} \exp\left[i \, \theta_{p}^m Z_p\right]\right) \textrm{FFFT}}_{U_T(\vec \theta^m)} \bunderbrace{\left(\prod_p \exp\left[i \, \theta_{pp}^m Z_p\right]\right) \!\! \left(\prod_{p\neq q} \exp\left[i \, \theta_{pq}^m Z_p Z_q\right]\right)}_{U_V(\vec \theta^m)}
\quad \quad \!
\vec \theta^m = \frac{\vec \theta \left(\frac{m - 1/2}{M}\right)}{M}
\end{equation}
where $M$ is the total number of repetitions of the Trotter step. As discussed in \sec{trotter_alg}, each of these Trotter steps can be implemented with gate depth ${\cal O}(N)$ on a planar lattice of qubits with no ancilla. Thus, the total gate depth of this ansatz would be ${\cal O}(NM)$. Rather than try to variationally determine all parameters to minimize the final Hamiltonian $H(1)$, the suggestion of \cite{Wecker2015a} is to train the ansatz ``in layers''; i.e., to train the first Trotter step to minimize $H(1/M)$, the second to minimize $H(2/M)$ and so on. The results of \cite{Wecker2015a} suggest that this ansatz may perform well for values of $M$ as low as ten or less. Note that while initial states other than a product state of plane waves may be needed in systems other than jellium, the variational ansatz can be used for any molecule.

Variational algorithms were experimentally demonstrated in \cite{Kandala2017} and \cite{OMalley2016} using superconducting qubit platforms from industrial quantum computing groups which are expected to reach the quantum supremacy threshold in the near-future \cite{Boixo2016}. Such platforms would have qubits connected on a planar lattice and could only implement shallow circuits due to limited coherence. For such an early demonstration, we can make further simplifications to the $M = 1$ variational ansatz. To explain this strategy, we notice that  the expectation value of the Hamiltonian after applying the $M = 1$ variational ansatz can be expressed as
\begin{equation}
\bra{\psi_0} U_V(-\vec \theta) \, \widetilde H (\vec \theta) \, U_V (\vec \theta) \ket{\psi_0}
\quad \quad \quad \quad
\widetilde H(\vec \theta) = U_T(-\vec \theta) \, H \, U_T(\vec \theta)
\end{equation}
where we can see that $\widetilde H(\vec \theta)$ amounts to a local basis transformation on the Hamiltonian $H$. Since this transformation can be applied efficiently with classical post-processing, we see that the ansatz preparation can be simplified to
\begin{equation}
\ket{\psi(\vec \theta)} =  U_V (\vec \theta) \ket{\psi_0} = \left(\prod_{p} \exp\left[i\, \theta_{pp} Z_p \right]\right) \! \! \left( \prod_{p \neq q} \exp\left[i \, \theta_{pq} Z_p Z_q\right] \right) \! \textrm{FFFT} \ket{0}.
\label{eq:minimal_vqe}
\end{equation}
In practice, one would probably also take the rotation angles in the FFFT as variational parameters. Thus, our ``minimal resource variational ansatz'' consists of the FFFT, a high entanglement operation known to produce a good reference, followed by entangling gates between all pairs of qubits and then a single layer of phase gates on each qubit. As a final note, the outer-loop of this variational quantum algorithm will only need to optimize over ${\cal O}(N)$ distinct parameters, as opposed to ${\cal O}(N^2)$ distinct parameters, due to the translational invariance of the jellium system.

In order to resolve distinct phases in low-density jellium, a reasonable target is to obtain energies accurate to a fixed relative error of half of one percent. The minimal variational ansatz of \eq{minimal_vqe} may be sufficient to prepare accurate ground states of jellium in certain parts of the phase diagram; in the high density regime even the mean-field state $\ket{\psi_0}$ is a good initial description. But we also expect that this single Trotter step ansatz will fail to resolve the ground state in more complex regimes. Thus, this proposal immediately raises two unresolved questions: ``how many Trotter steps will we be able to implement on a near-term device?'' and ``how many Trotter steps would be required to surpass all classical methods in the low density regime?''. By compiling all aspects of this procedure to a natively realizable gate set, we should be able to estimate how many Trotter steps would be possible within the limitations of expected coherence times and gate fidelities. This analysis will be the subject of a future paper. However, the second question is more difficult to answer without a quantum device, especially because the radix-2 decimation implementation of the FFFT requires that problem sizes are a power of two. Whether or not quantum supremacy is immediately achievable using this approach to jellium simulation, experimentally studying this ansatz will provide important insights into the effectiveness of Trotter-based variational quantum algorithms for problems of correlated electrons.

\section*{Conclusion}

In this work, we have introduced  efficient techniques that use the plane wave basis and its dual for quantum simulations of electronic structure. The kinetic and potential operators are respectively diagonal in these bases,
providing a Hamiltonian representation with only a quadratic number of terms in basis size. We also described an efficient second quantized fermionic fast Fourier transform to map between the two bases which can be implemented with linear gate depth on a planar lattice of qubits. Using the diagonality of the Hamiltonian components in these dual basis sets, we showed that Trotter steps can be implemented with linear gate depth on a planar lattice. We use these properties to implement time-evolution using Trotter and Taylor series methods with lower overhead than all prior approaches and also reduce the number of measurements required for quantum variational algorithms. Finally, we identified jellium as a concrete electronic structure problem to target on near-term quantum devices.
Jellium is attractive due to its fundamental significance in conceptual and numerical electronic structure theory
and because of its tunability into regimes where classical simulations are currently inadequate.
Exploiting its natural expression in the plane wave basis, 
we proposed a simple quantum variational algorithm which can be executed with low circuit depth on near-term quantum hardware.
Understanding the performance of this algorithm for jellium will provide important
insights into the near-term feasibility of quantum supremacy in realistic problems of electronic structure.

Beyond the confines of this work, we expect that the advances we have described will have
ramifications across many different approaches to quantum simulation. For example, the quadratic reduction in the number of Hamiltonian terms, as well as the lower scaling bounds
on the Hamiltonian norm, will translate generally to decreased complexity in the overhead for perturbative gadgets, or in quantum simulations within the configuration interaction representation. The techniques  may further be used in conjunction with error-corrected simulations. Moving beyond jellium as a physical system,  quantum simulations in the plane wave basis may practically be extended to real materials by incorporating a single-particle pseudopotential, without essential modifications of  the results in this proposal. Ultimately, we believe that our work illustrates the potential of exploring fundamental reformulations of the electronic structure problem in order to reduce the complexity of quantum simulations.

\section*{Acknowledgements}

The authors thank Eddie Farhi, Sergio Boixo, John Martinis, Ian Kivlichan, Craig Gidney, Dominic Berry, Murphy Yuezhen Niu, Pierre-Luc Dallaire-Demers, Peter Love and Matthias Troyer for helpful comments about an early draft. We thank Alireza Shabani for helping to initiate the collaboration between Google and Caltech. The authors thank Wei Sun for contributing code to the open source quantum simulation library OpenFermion (\url{www.openfermion.org}) \cite{openfermion} which was used to verify some equations of this paper.\\

\bibliographystyle{apsrev4-1}
\bibliography{Mendeley}

\begin{thebibliography}{125}%
\makeatletter
\providecommand \@ifxundefined [1]{%
 \@ifx{#1\undefined}
}%
\providecommand \@ifnum [1]{%
 \ifnum #1\expandafter \@firstoftwo
 \else \expandafter \@secondoftwo
 \fi
}%
\providecommand \@ifx [1]{%
 \ifx #1\expandafter \@firstoftwo
 \else \expandafter \@secondoftwo
 \fi
}%
\providecommand \natexlab [1]{#1}%
\providecommand \enquote  [1]{``#1''}%
\providecommand \bibnamefont  [1]{#1}%
\providecommand \bibfnamefont [1]{#1}%
\providecommand \citenamefont [1]{#1}%
\providecommand \href@noop [0]{\@secondoftwo}%
\providecommand \href [0]{\begingroup \@sanitize@url \@href}%
\providecommand \@href[1]{\@@startlink{#1}\@@href}%
\providecommand \@@href[1]{\endgroup#1\@@endlink}%
\providecommand \@sanitize@url [0]{\catcode `\\12\catcode `\$12\catcode
  `\&12\catcode `\#12\catcode `\^12\catcode `\_12\catcode `\%12\relax}%
\providecommand \@@startlink[1]{}%
\providecommand \@@endlink[0]{}%
\providecommand \url  [0]{\begingroup\@sanitize@url \@url }%
\providecommand \@url [1]{\endgroup\@href {#1}{\urlprefix }}%
\providecommand \urlprefix  [0]{URL }%
\providecommand \Eprint [0]{\href }%
\providecommand \doibase [0]{http://dx.doi.org/}%
\providecommand \selectlanguage [0]{\@gobble}%
\providecommand \bibinfo  [0]{\@secondoftwo}%
\providecommand \bibfield  [0]{\@secondoftwo}%
\providecommand \translation [1]{[#1]}%
\providecommand \BibitemOpen [0]{}%
\providecommand \bibitemStop [0]{}%
\providecommand \bibitemNoStop [0]{.\EOS\space}%
\providecommand \EOS [0]{\spacefactor3000\relax}%
\providecommand \BibitemShut  [1]{\csname bibitem#1\endcsname}%
\let\auto@bib@innerbib\@empty
\bibitem [{\citenamefont {Feynman}(1982)}]{Feynman1982}%
  \BibitemOpen
  \bibfield  {author} {\bibinfo {author} {\bibfnamefont {R.~P.}\ \bibnamefont
  {Feynman}},\ }\href {\doibase 10.1007/BF02650179} {\bibfield  {journal}
  {\bibinfo  {journal} {International Journal of Theoretical Physics}\ }\textbf
  {\bibinfo {volume} {21}},\ \bibinfo {pages} {467} (\bibinfo {year}
  {1982})}\BibitemShut {NoStop}%
\bibitem [{\citenamefont {Lloyd}(1996)}]{Lloyd1996}%
  \BibitemOpen
  \bibfield  {author} {\bibinfo {author} {\bibfnamefont {S.}~\bibnamefont
  {Lloyd}},\ }\href {\doibase 10.1126/science.273.5278.1073} {\bibfield
  {journal} {\bibinfo  {journal} {Science}\ }\textbf {\bibinfo {volume}
  {273}},\ \bibinfo {pages} {1073} (\bibinfo {year} {1996})}\BibitemShut
  {NoStop}%
\bibitem [{\citenamefont {Abrams}\ and\ \citenamefont
  {Lloyd}(1997)}]{Abrams1997}%
  \BibitemOpen
  \bibfield  {author} {\bibinfo {author} {\bibfnamefont {D.~S.}\ \bibnamefont
  {Abrams}}\ and\ \bibinfo {author} {\bibfnamefont {S.}~\bibnamefont {Lloyd}},\
  }\href {https://journals.aps.org/prl/pdf/10.1103/PhysRevLett.79.2586}
  {\bibfield  {journal} {\bibinfo  {journal} {Physical Review Letters}\
  }\textbf {\bibinfo {volume} {79}},\ \bibinfo {pages} {4} (\bibinfo {year}
  {1997})}\BibitemShut {NoStop}%
\bibitem [{\citenamefont {Kitaev}(1995)}]{Kitaev1995}%
  \BibitemOpen
  \bibfield  {author} {\bibinfo {author} {\bibfnamefont {A.~Y.}\ \bibnamefont
  {Kitaev}},\ }\href {http://arxiv.org/abs/quant-ph/9511026} {\bibfield
  {journal} {\bibinfo  {journal} {e-print arXiv: 9511026}\ } (\bibinfo {year}
  {1995})}\BibitemShut {NoStop}%
\bibitem [{\citenamefont {Aspuru-Guzik}\ \emph {et~al.}(2005)\citenamefont
  {Aspuru-Guzik}, \citenamefont {Dutoi}, \citenamefont {Love},\ and\
  \citenamefont {Head-Gordon}}]{Aspuru-Guzik2005}%
  \BibitemOpen
  \bibfield  {author} {\bibinfo {author} {\bibfnamefont {A.}~\bibnamefont
  {Aspuru-Guzik}}, \bibinfo {author} {\bibfnamefont {A.~D.}\ \bibnamefont
  {Dutoi}}, \bibinfo {author} {\bibfnamefont {P.~J.}\ \bibnamefont {Love}}, \
  and\ \bibinfo {author} {\bibfnamefont {M.}~\bibnamefont {Head-Gordon}},\
  }\href {\doibase 10.1126/science.1113479} {\bibfield  {journal} {\bibinfo
  {journal} {Science}\ }\textbf {\bibinfo {volume} {309}},\ \bibinfo {pages}
  {1704} (\bibinfo {year} {2005})}\BibitemShut {NoStop}%
\bibitem [{\citenamefont {Trotter}(1959)}]{Trotter1959}%
  \BibitemOpen
  \bibfield  {author} {\bibinfo {author} {\bibfnamefont {H.~F.}\ \bibnamefont
  {Trotter}},\ }\href {http://www.jstor.org/stable/2033649} {\bibfield
  {journal} {\bibinfo  {journal} {Proc. Am. Math. Soc.}\ }\textbf {\bibinfo
  {volume} {10}},\ \bibinfo {pages} {545} (\bibinfo {year} {1959})}\BibitemShut
  {NoStop}%
\bibitem [{\citenamefont {Suzuki}(1993)}]{Suzuki1993}%
  \BibitemOpen
  \bibfield  {author} {\bibinfo {author} {\bibfnamefont {M.}~\bibnamefont
  {Suzuki}},\ }\href {\doibase 10.1016/0375-9601(93)90701-Z} {\bibfield
  {journal} {\bibinfo  {journal} {Physics Letters A}\ }\textbf {\bibinfo
  {volume} {180}},\ \bibinfo {pages} {232} (\bibinfo {year}
  {1993})}\BibitemShut {NoStop}%
\bibitem [{\citenamefont {Babbush}\ \emph {et~al.}(2014)\citenamefont
  {Babbush}, \citenamefont {Love},\ and\ \citenamefont
  {Aspuru-Guzik}}]{BabbushAQChem}%
  \BibitemOpen
  \bibfield  {author} {\bibinfo {author} {\bibfnamefont {R.}~\bibnamefont
  {Babbush}}, \bibinfo {author} {\bibfnamefont {P.~J.}\ \bibnamefont {Love}}, \
  and\ \bibinfo {author} {\bibfnamefont {A.}~\bibnamefont {Aspuru-Guzik}},\
  }\href {\doibase 10.1038/srep06603} {\bibfield  {journal} {\bibinfo
  {journal} {Scientific Reports}\ }\textbf {\bibinfo {volume} {4}},\ \bibinfo
  {pages} {1} (\bibinfo {year} {2014})}\BibitemShut {NoStop}%
\bibitem [{\citenamefont {Babbush}\ \emph {et~al.}(2016)\citenamefont
  {Babbush}, \citenamefont {Berry}, \citenamefont {Kivlichan}, \citenamefont
  {Wei}, \citenamefont {Love},\ and\ \citenamefont
  {Aspuru-Guzik}}]{BabbushSparse1}%
  \BibitemOpen
  \bibfield  {author} {\bibinfo {author} {\bibfnamefont {R.}~\bibnamefont
  {Babbush}}, \bibinfo {author} {\bibfnamefont {D.~W.}\ \bibnamefont {Berry}},
  \bibinfo {author} {\bibfnamefont {I.~D.}\ \bibnamefont {Kivlichan}}, \bibinfo
  {author} {\bibfnamefont {A.~Y.}\ \bibnamefont {Wei}}, \bibinfo {author}
  {\bibfnamefont {P.~J.}\ \bibnamefont {Love}}, \ and\ \bibinfo {author}
  {\bibfnamefont {A.}~\bibnamefont {Aspuru-Guzik}},\ }\href {\doibase
  10.1088/1367-2630/18/3/033032} {\bibfield  {journal} {\bibinfo  {journal}
  {New Journal of Physics}\ }\textbf {\bibinfo {volume} {18}},\ \bibinfo
  {pages} {033032} (\bibinfo {year} {2016})}\BibitemShut {NoStop}%
\bibitem [{\citenamefont {Kassal}\ \emph {et~al.}(2008)\citenamefont {Kassal},
  \citenamefont {Jordan}, \citenamefont {Love}, \citenamefont {Mohseni},\ and\
  \citenamefont {Aspuru-Guzik}}]{Kassal2008}%
  \BibitemOpen
  \bibfield  {author} {\bibinfo {author} {\bibfnamefont {I.}~\bibnamefont
  {Kassal}}, \bibinfo {author} {\bibfnamefont {S.~P.}\ \bibnamefont {Jordan}},
  \bibinfo {author} {\bibfnamefont {P.~J.}\ \bibnamefont {Love}}, \bibinfo
  {author} {\bibfnamefont {M.}~\bibnamefont {Mohseni}}, \ and\ \bibinfo
  {author} {\bibfnamefont {A.}~\bibnamefont {Aspuru-Guzik}},\ }\href
  {http://www.pnas.org/content/105/48/18681.abstract} {\bibfield  {journal}
  {\bibinfo  {journal} {Proceedings of the National Academy of Sciences}\
  }\textbf {\bibinfo {volume} {105}},\ \bibinfo {pages} {18681–18686}
  (\bibinfo {year} {2008})}\BibitemShut {NoStop}%
\bibitem [{\citenamefont {Kivlichan}\ \emph {et~al.}(2017)\citenamefont
  {Kivlichan}, \citenamefont {Wiebe}, \citenamefont {Babbush},\ and\
  \citenamefont {Aspuru-Guzik}}]{Kivlichan2016}%
  \BibitemOpen
  \bibfield  {author} {\bibinfo {author} {\bibfnamefont {I.~D.}\ \bibnamefont
  {Kivlichan}}, \bibinfo {author} {\bibfnamefont {N.}~\bibnamefont {Wiebe}},
  \bibinfo {author} {\bibfnamefont {R.}~\bibnamefont {Babbush}}, \ and\
  \bibinfo {author} {\bibfnamefont {A.}~\bibnamefont {Aspuru-Guzik}},\ }\href
  {http://iopscience.iop.org/article/10.1088/1751-8121/aa77b8} {\bibfield
  {journal} {\bibinfo  {journal} {Journal of Physics A: Mathematical and
  Theoretical}\ }\textbf {\bibinfo {volume} {50}},\ \bibinfo {pages} {305301}
  (\bibinfo {year} {2017})}\BibitemShut {NoStop}%
\bibitem [{\citenamefont {Toloui}\ and\ \citenamefont
  {Love}(2013)}]{Toloui2013}%
  \BibitemOpen
  \bibfield  {author} {\bibinfo {author} {\bibfnamefont {B.}~\bibnamefont
  {Toloui}}\ and\ \bibinfo {author} {\bibfnamefont {P.~J.}\ \bibnamefont
  {Love}},\ }\href {http://arxiv.org/abs/1312.2579} {\bibfield  {journal}
  {\bibinfo  {journal} {e-print arXiv: 1312.2579}\ } (\bibinfo {year}
  {2013})}\BibitemShut {NoStop}%
\bibitem [{\citenamefont {Babbush}\ \emph {et~al.}(2018)\citenamefont
  {Babbush}, \citenamefont {Berry}, \citenamefont {Kivlichan}, \citenamefont
  {Wei}, \citenamefont {Love},\ and\ \citenamefont
  {Aspuru-Guzik}}]{BabbushSparse2}%
  \BibitemOpen
  \bibfield  {author} {\bibinfo {author} {\bibfnamefont {R.}~\bibnamefont
  {Babbush}}, \bibinfo {author} {\bibfnamefont {D.~W.}\ \bibnamefont {Berry}},
  \bibinfo {author} {\bibfnamefont {I.~D.}\ \bibnamefont {Kivlichan}}, \bibinfo
  {author} {\bibfnamefont {A.~Y.}\ \bibnamefont {Wei}}, \bibinfo {author}
  {\bibfnamefont {P.~J.}\ \bibnamefont {Love}}, \ and\ \bibinfo {author}
  {\bibfnamefont {A.}~\bibnamefont {Aspuru-Guzik}},\ }\href
  {http://iopscience.iop.org/article/10.1088/2058-9565/aa9463/meta} {\bibfield
  {journal} {\bibinfo  {journal} {Quantum Science and Technology}\ }\textbf
  {\bibinfo {volume} {3}},\ \bibinfo {pages} {015006} (\bibinfo {year}
  {2018})}\BibitemShut {NoStop}%
\bibitem [{\citenamefont {Peruzzo}\ \emph {et~al.}(2014)\citenamefont
  {Peruzzo}, \citenamefont {McClean}, \citenamefont {Shadbolt}, \citenamefont
  {Yung}, \citenamefont {Zhou}, \citenamefont {Love}, \citenamefont
  {Aspuru-Guzik},\ and\ \citenamefont {O’Brien}}]{Peruzzo2013}%
  \BibitemOpen
  \bibfield  {author} {\bibinfo {author} {\bibfnamefont {A.}~\bibnamefont
  {Peruzzo}}, \bibinfo {author} {\bibfnamefont {J.}~\bibnamefont {McClean}},
  \bibinfo {author} {\bibfnamefont {P.}~\bibnamefont {Shadbolt}}, \bibinfo
  {author} {\bibfnamefont {M.-H.}\ \bibnamefont {Yung}}, \bibinfo {author}
  {\bibfnamefont {X.-Q.}\ \bibnamefont {Zhou}}, \bibinfo {author}
  {\bibfnamefont {P.~J.}\ \bibnamefont {Love}}, \bibinfo {author}
  {\bibfnamefont {A.}~\bibnamefont {Aspuru-Guzik}}, \ and\ \bibinfo {author}
  {\bibfnamefont {J.~L.}\ \bibnamefont {O’Brien}},\ }\href {\doibase
  10.1038/ncomms5213} {\bibfield  {journal} {\bibinfo  {journal} {Nature
  Communications}\ }\textbf {\bibinfo {volume} {5}},\ \bibinfo {pages} {1}
  (\bibinfo {year} {2014})}\BibitemShut {NoStop}%
\bibitem [{\citenamefont {McClean}\ \emph {et~al.}(2016)\citenamefont
  {McClean}, \citenamefont {Romero}, \citenamefont {Babbush},\ and\
  \citenamefont {Aspuru-Guzik}}]{McClean2015}%
  \BibitemOpen
  \bibfield  {author} {\bibinfo {author} {\bibfnamefont {J.~R.}\ \bibnamefont
  {McClean}}, \bibinfo {author} {\bibfnamefont {J.}~\bibnamefont {Romero}},
  \bibinfo {author} {\bibfnamefont {R.}~\bibnamefont {Babbush}}, \ and\
  \bibinfo {author} {\bibfnamefont {A.}~\bibnamefont {Aspuru-Guzik}},\ }\href
  {http://iopscience.iop.org/article/10.1088/1367-2630/18/2/023023/} {\bibfield
   {journal} {\bibinfo  {journal} {New Journal of Physics}\ }\textbf {\bibinfo
  {volume} {18}},\ \bibinfo {pages} {23023} (\bibinfo {year}
  {2016})}\BibitemShut {NoStop}%
\bibitem [{\citenamefont {Whitfield}\ \emph {et~al.}(2011)\citenamefont
  {Whitfield}, \citenamefont {Biamonte},\ and\ \citenamefont
  {Aspuru-Guzik}}]{Whitfield2010}%
  \BibitemOpen
  \bibfield  {author} {\bibinfo {author} {\bibfnamefont {J.~D.}\ \bibnamefont
  {Whitfield}}, \bibinfo {author} {\bibfnamefont {J.}~\bibnamefont {Biamonte}},
  \ and\ \bibinfo {author} {\bibfnamefont {A.}~\bibnamefont {Aspuru-Guzik}},\
  }\href {\doibase 10.1080/00268976.2011.552441} {\bibfield  {journal}
  {\bibinfo  {journal} {Mol. Phys.}\ }\textbf {\bibinfo {volume} {109}},\
  \bibinfo {pages} {735} (\bibinfo {year} {2011})}\BibitemShut {NoStop}%
\bibitem [{\citenamefont {Wecker}\ \emph {et~al.}(2014)\citenamefont {Wecker},
  \citenamefont {Bauer}, \citenamefont {Clark}, \citenamefont {Hastings},\ and\
  \citenamefont {Troyer}}]{Wecker2014}%
  \BibitemOpen
  \bibfield  {author} {\bibinfo {author} {\bibfnamefont {D.}~\bibnamefont
  {Wecker}}, \bibinfo {author} {\bibfnamefont {B.}~\bibnamefont {Bauer}},
  \bibinfo {author} {\bibfnamefont {B.~K.}\ \bibnamefont {Clark}}, \bibinfo
  {author} {\bibfnamefont {M.~B.}\ \bibnamefont {Hastings}}, \ and\ \bibinfo
  {author} {\bibfnamefont {M.}~\bibnamefont {Troyer}},\ }\href {\doibase
  10.1103/PhysRevA.90.022305} {\bibfield  {journal} {\bibinfo  {journal}
  {Physical Review A}\ }\textbf {\bibinfo {volume} {90}},\ \bibinfo {pages} {1}
  (\bibinfo {year} {2014})}\BibitemShut {NoStop}%
\bibitem [{\citenamefont {Poulin}\ \emph {et~al.}(2015)\citenamefont {Poulin},
  \citenamefont {Hastings}, \citenamefont {Wecker}, \citenamefont {Wiebe},
  \citenamefont {Doherty},\ and\ \citenamefont {Troyer}}]{Poulin2014}%
  \BibitemOpen
  \bibfield  {author} {\bibinfo {author} {\bibfnamefont {D.}~\bibnamefont
  {Poulin}}, \bibinfo {author} {\bibfnamefont {M.~B.}\ \bibnamefont
  {Hastings}}, \bibinfo {author} {\bibfnamefont {D.}~\bibnamefont {Wecker}},
  \bibinfo {author} {\bibfnamefont {N.}~\bibnamefont {Wiebe}}, \bibinfo
  {author} {\bibfnamefont {A.~C.}\ \bibnamefont {Doherty}}, \ and\ \bibinfo
  {author} {\bibfnamefont {M.}~\bibnamefont {Troyer}},\ }\href
  {http://arxiv.org/abs/1406.4920} {\bibfield  {journal} {\bibinfo  {journal}
  {Quantum Information {\&} Computation}\ }\textbf {\bibinfo {volume} {15}},\
  \bibinfo {pages} {361} (\bibinfo {year} {2015})}\BibitemShut {NoStop}%
\bibitem [{\citenamefont {Hastings}\ \emph {et~al.}(2015)\citenamefont
  {Hastings}, \citenamefont {Wecker}, \citenamefont {Bauer},\ and\
  \citenamefont {Troyer}}]{Hastings2015}%
  \BibitemOpen
  \bibfield  {author} {\bibinfo {author} {\bibfnamefont {M.~B.}\ \bibnamefont
  {Hastings}}, \bibinfo {author} {\bibfnamefont {D.}~\bibnamefont {Wecker}},
  \bibinfo {author} {\bibfnamefont {B.}~\bibnamefont {Bauer}}, \ and\ \bibinfo
  {author} {\bibfnamefont {M.}~\bibnamefont {Troyer}},\ }\href
  {http://arxiv.org/abs/1403.1539} {\bibfield  {journal} {\bibinfo  {journal}
  {Quantum Information {\&} Computation}\ }\textbf {\bibinfo {volume} {15}},\
  \bibinfo {pages} {1} (\bibinfo {year} {2015})}\BibitemShut {NoStop}%
\bibitem [{\citenamefont {Romero}\ \emph {et~al.}(2017)\citenamefont {Romero},
  \citenamefont {Babbush}, \citenamefont {McClean}, \citenamefont {Hempel},
  \citenamefont {Love},\ and\ \citenamefont {Aspuru-Guzik}}]{Romero2017}%
  \BibitemOpen
  \bibfield  {author} {\bibinfo {author} {\bibfnamefont {J.}~\bibnamefont
  {Romero}}, \bibinfo {author} {\bibfnamefont {R.}~\bibnamefont {Babbush}},
  \bibinfo {author} {\bibfnamefont {J.}~\bibnamefont {McClean}}, \bibinfo
  {author} {\bibfnamefont {C.}~\bibnamefont {Hempel}}, \bibinfo {author}
  {\bibfnamefont {P.}~\bibnamefont {Love}}, \ and\ \bibinfo {author}
  {\bibfnamefont {A.}~\bibnamefont {Aspuru-Guzik}},\ }\href
  {http://arxiv.org/abs/1701.02691} {\bibfield  {journal} {\bibinfo  {journal}
  {e-print arXiv: 1701.02691}\ ,\ \bibinfo {pages} {1}} (\bibinfo {year}
  {2017})}\BibitemShut {NoStop}%
\bibitem [{\citenamefont {Babbush}\ \emph {et~al.}(2015)\citenamefont
  {Babbush}, \citenamefont {McClean}, \citenamefont {Wecker}, \citenamefont
  {Aspuru-Guzik},\ and\ \citenamefont {Wiebe}}]{BabbushTrotter}%
  \BibitemOpen
  \bibfield  {author} {\bibinfo {author} {\bibfnamefont {R.}~\bibnamefont
  {Babbush}}, \bibinfo {author} {\bibfnamefont {J.}~\bibnamefont {McClean}},
  \bibinfo {author} {\bibfnamefont {D.}~\bibnamefont {Wecker}}, \bibinfo
  {author} {\bibfnamefont {A.}~\bibnamefont {Aspuru-Guzik}}, \ and\ \bibinfo
  {author} {\bibfnamefont {N.}~\bibnamefont {Wiebe}},\ }\href
  {http://dx.doi.org/10.1103/PhysRevA.91.022311} {\bibfield  {journal}
  {\bibinfo  {journal} {Physical Review A}\ }\textbf {\bibinfo {volume} {91}},\
  \bibinfo {pages} {022311} (\bibinfo {year} {2015})}\BibitemShut {NoStop}%
\bibitem [{\citenamefont {McClean}\ \emph {et~al.}(2014)\citenamefont
  {McClean}, \citenamefont {Babbush}, \citenamefont {Love},\ and\ \citenamefont
  {Aspuru-Guzik}}]{McClean2014}%
  \BibitemOpen
  \bibfield  {author} {\bibinfo {author} {\bibfnamefont {J.~R.}\ \bibnamefont
  {McClean}}, \bibinfo {author} {\bibfnamefont {R.}~\bibnamefont {Babbush}},
  \bibinfo {author} {\bibfnamefont {P.~J.}\ \bibnamefont {Love}}, \ and\
  \bibinfo {author} {\bibfnamefont {A.}~\bibnamefont {Aspuru-Guzik}},\ }\href
  {\doibase 10.1021/jz501649m} {\bibfield  {journal} {\bibinfo  {journal} {The
  Journal of Physical Chemistry Letters}\ }\textbf {\bibinfo {volume} {5}},\
  \bibinfo {pages} {4368} (\bibinfo {year} {2014})}\BibitemShut {NoStop}%
\bibitem [{\citenamefont {Seeley}\ \emph {et~al.}(2012)\citenamefont {Seeley},
  \citenamefont {Richard},\ and\ \citenamefont {Love}}]{Seeley2012}%
  \BibitemOpen
  \bibfield  {author} {\bibinfo {author} {\bibfnamefont {J.~T.}\ \bibnamefont
  {Seeley}}, \bibinfo {author} {\bibfnamefont {M.~J.}\ \bibnamefont {Richard}},
  \ and\ \bibinfo {author} {\bibfnamefont {P.~J.}\ \bibnamefont {Love}},\
  }\href {\doibase 10.1063/1.4768229} {\bibfield  {journal} {\bibinfo
  {journal} {Journal of Chemical Physics}\ }\textbf {\bibinfo {volume} {137}},\
  \bibinfo {pages} {224109} (\bibinfo {year} {2012})}\BibitemShut {NoStop}%
\bibitem [{\citenamefont {Tranter}\ \emph {et~al.}(2015)\citenamefont
  {Tranter}, \citenamefont {Sofia}, \citenamefont {Seeley}, \citenamefont
  {Kaicher}, \citenamefont {McClean}, \citenamefont {Babbush}, \citenamefont
  {Coveney}, \citenamefont {Mintert}, \citenamefont {Wilhelm},\ and\
  \citenamefont {Love}}]{Tranter2015}%
  \BibitemOpen
  \bibfield  {author} {\bibinfo {author} {\bibfnamefont {A.}~\bibnamefont
  {Tranter}}, \bibinfo {author} {\bibfnamefont {S.}~\bibnamefont {Sofia}},
  \bibinfo {author} {\bibfnamefont {J.}~\bibnamefont {Seeley}}, \bibinfo
  {author} {\bibfnamefont {M.}~\bibnamefont {Kaicher}}, \bibinfo {author}
  {\bibfnamefont {J.}~\bibnamefont {McClean}}, \bibinfo {author} {\bibfnamefont
  {R.}~\bibnamefont {Babbush}}, \bibinfo {author} {\bibfnamefont {P.~V.}\
  \bibnamefont {Coveney}}, \bibinfo {author} {\bibfnamefont {F.}~\bibnamefont
  {Mintert}}, \bibinfo {author} {\bibfnamefont {F.}~\bibnamefont {Wilhelm}}, \
  and\ \bibinfo {author} {\bibfnamefont {P.~J.}\ \bibnamefont {Love}},\ }\href
  {\doibase 10.1002/qua.24969} {\bibfield  {journal} {\bibinfo  {journal}
  {International Journal of Quantum Chemistry}\ }\textbf {\bibinfo {volume}
  {115}},\ \bibinfo {pages} {1431} (\bibinfo {year} {2015})}\BibitemShut
  {NoStop}%
\bibitem [{\citenamefont {Whitfield}\ \emph {et~al.}(2016)\citenamefont
  {Whitfield}, \citenamefont {Havl{\'{i}}{\v{c}}ek},\ and\ \citenamefont
  {Troyer}}]{Whitfield2016}%
  \BibitemOpen
  \bibfield  {author} {\bibinfo {author} {\bibfnamefont {J.~D.}\ \bibnamefont
  {Whitfield}}, \bibinfo {author} {\bibfnamefont {V.}~\bibnamefont
  {Havl{\'{i}}{\v{c}}ek}}, \ and\ \bibinfo {author} {\bibfnamefont
  {M.}~\bibnamefont {Troyer}},\ }\href {\doibase
  https://doi.org/10.1103/PhysRevA.94.030301} {\bibfield  {journal} {\bibinfo
  {journal} {Physical Review A}\ }\textbf {\bibinfo {volume} {94}},\ \bibinfo
  {pages} {030301(R)} (\bibinfo {year} {2016})}\BibitemShut {NoStop}%
\bibitem [{\citenamefont {Bravyi}\ \emph {et~al.}(2017)\citenamefont {Bravyi},
  \citenamefont {Gambetta}, \citenamefont {Mezzacapo},\ and\ \citenamefont
  {Temme}}]{Bravyi2017}%
  \BibitemOpen
  \bibfield  {author} {\bibinfo {author} {\bibfnamefont {S.}~\bibnamefont
  {Bravyi}}, \bibinfo {author} {\bibfnamefont {J.~M.}\ \bibnamefont
  {Gambetta}}, \bibinfo {author} {\bibfnamefont {A.}~\bibnamefont {Mezzacapo}},
  \ and\ \bibinfo {author} {\bibfnamefont {K.}~\bibnamefont {Temme}},\ }\href
  {http://arxiv.org/abs/1701.08213} {\bibfield  {journal} {\bibinfo  {journal}
  {e-print arXiv:1701.08213}\ } (\bibinfo {year} {2017})}\BibitemShut {NoStop}%
\bibitem [{\citenamefont {Havl{\'{i}}{\v{c}}ek}\ \emph
  {et~al.}(2017)\citenamefont {Havl{\'{i}}{\v{c}}ek}, \citenamefont {Troyer},\
  and\ \citenamefont {Whitfield}}]{Havlicek2017}%
  \BibitemOpen
  \bibfield  {author} {\bibinfo {author} {\bibfnamefont {V.}~\bibnamefont
  {Havl{\'{i}}{\v{c}}ek}}, \bibinfo {author} {\bibfnamefont {M.}~\bibnamefont
  {Troyer}}, \ and\ \bibinfo {author} {\bibfnamefont {J.~D.}\ \bibnamefont
  {Whitfield}},\ }\href {\doibase 10.1103/PhysRevA.95.032332} {\bibfield
  {journal} {\bibinfo  {journal} {Physical Review A}\ }\textbf {\bibinfo
  {volume} {95}},\ \bibinfo {pages} {032332} (\bibinfo {year}
  {2017})}\BibitemShut {NoStop}%
\bibitem [{\citenamefont {C{\'{o}}rcoles}\ \emph {et~al.}(2015)\citenamefont
  {C{\'{o}}rcoles}, \citenamefont {Magesan}, \citenamefont {Srinivasan},
  \citenamefont {Cross}, \citenamefont {Steffen}, \citenamefont {Gambetta},\
  and\ \citenamefont {Chow}}]{Corcoles2015}%
  \BibitemOpen
  \bibfield  {author} {\bibinfo {author} {\bibfnamefont {A.}~\bibnamefont
  {C{\'{o}}rcoles}}, \bibinfo {author} {\bibfnamefont {E.}~\bibnamefont
  {Magesan}}, \bibinfo {author} {\bibfnamefont {S.~J.}\ \bibnamefont
  {Srinivasan}}, \bibinfo {author} {\bibfnamefont {A.~W.}\ \bibnamefont
  {Cross}}, \bibinfo {author} {\bibfnamefont {M.}~\bibnamefont {Steffen}},
  \bibinfo {author} {\bibfnamefont {J.~M.}\ \bibnamefont {Gambetta}}, \ and\
  \bibinfo {author} {\bibfnamefont {J.~M.}\ \bibnamefont {Chow}},\ }\href
  {\doibase 10.1038/ncomms7979} {\bibfield  {journal} {\bibinfo  {journal}
  {Nature Communications}\ }\textbf {\bibinfo {volume} {6}},\ \bibinfo {pages}
  {6979} (\bibinfo {year} {2015})}\BibitemShut {NoStop}%
\bibitem [{\citenamefont {Rist{\`{e}}}\ \emph {et~al.}(2015)\citenamefont
  {Rist{\`{e}}}, \citenamefont {Poletto}, \citenamefont {Huang}, \citenamefont
  {Bruno}, \citenamefont {Vesterinen}, \citenamefont {Saira},\ and\
  \citenamefont {DiCarlo}}]{Riste2015}%
  \BibitemOpen
  \bibfield  {author} {\bibinfo {author} {\bibfnamefont {D.}~\bibnamefont
  {Rist{\`{e}}}}, \bibinfo {author} {\bibfnamefont {S.}~\bibnamefont
  {Poletto}}, \bibinfo {author} {\bibfnamefont {M.-Z.}\ \bibnamefont {Huang}},
  \bibinfo {author} {\bibfnamefont {A.}~\bibnamefont {Bruno}}, \bibinfo
  {author} {\bibfnamefont {V.}~\bibnamefont {Vesterinen}}, \bibinfo {author}
  {\bibfnamefont {O.-P.}\ \bibnamefont {Saira}}, \ and\ \bibinfo {author}
  {\bibfnamefont {L.}~\bibnamefont {DiCarlo}},\ }\href {\doibase
  10.1038/ncomms7983} {\bibfield  {journal} {\bibinfo  {journal} {Nature
  Communications}\ }\textbf {\bibinfo {volume} {6}},\ \bibinfo {pages} {6983}
  (\bibinfo {year} {2015})}\BibitemShut {NoStop}%
\bibitem [{\citenamefont {Kelly}\ \emph {et~al.}(2015)\citenamefont {Kelly},
  \citenamefont {Barends}, \citenamefont {Fowler}, \citenamefont {Megrant},
  \citenamefont {Jeffrey}, \citenamefont {White}, \citenamefont {Sank},
  \citenamefont {Mutus}, \citenamefont {Campbell}, \citenamefont {Chen},
  \citenamefont {Chen}, \citenamefont {Chiaro}, \citenamefont {Dunsworth},
  \citenamefont {Hoi}, \citenamefont {Neill}, \citenamefont {O’Malley},
  \citenamefont {Quintana}, \citenamefont {Roushan}, \citenamefont
  {Vainsencher}, \citenamefont {Wenner}, \citenamefont {Cleland},\ and\
  \citenamefont {Martinis}}]{Kelly2015}%
  \BibitemOpen
  \bibfield  {author} {\bibinfo {author} {\bibfnamefont {J.}~\bibnamefont
  {Kelly}}, \bibinfo {author} {\bibfnamefont {R.}~\bibnamefont {Barends}},
  \bibinfo {author} {\bibfnamefont {A.~G.}\ \bibnamefont {Fowler}}, \bibinfo
  {author} {\bibfnamefont {A.}~\bibnamefont {Megrant}}, \bibinfo {author}
  {\bibfnamefont {E.}~\bibnamefont {Jeffrey}}, \bibinfo {author} {\bibfnamefont
  {T.~C.}\ \bibnamefont {White}}, \bibinfo {author} {\bibfnamefont
  {D.}~\bibnamefont {Sank}}, \bibinfo {author} {\bibfnamefont {J.~Y.}\
  \bibnamefont {Mutus}}, \bibinfo {author} {\bibfnamefont {B.}~\bibnamefont
  {Campbell}}, \bibinfo {author} {\bibfnamefont {Y.}~\bibnamefont {Chen}},
  \bibinfo {author} {\bibfnamefont {Z.}~\bibnamefont {Chen}}, \bibinfo {author}
  {\bibfnamefont {B.}~\bibnamefont {Chiaro}}, \bibinfo {author} {\bibfnamefont
  {A.}~\bibnamefont {Dunsworth}}, \bibinfo {author} {\bibfnamefont {I.-C.}\
  \bibnamefont {Hoi}}, \bibinfo {author} {\bibfnamefont {C.}~\bibnamefont
  {Neill}}, \bibinfo {author} {\bibfnamefont {P.~J.~J.}\ \bibnamefont
  {O’Malley}}, \bibinfo {author} {\bibfnamefont {C.}~\bibnamefont
  {Quintana}}, \bibinfo {author} {\bibfnamefont {P.}~\bibnamefont {Roushan}},
  \bibinfo {author} {\bibfnamefont {A.}~\bibnamefont {Vainsencher}}, \bibinfo
  {author} {\bibfnamefont {J.}~\bibnamefont {Wenner}}, \bibinfo {author}
  {\bibfnamefont {A.~N.}\ \bibnamefont {Cleland}}, \ and\ \bibinfo {author}
  {\bibfnamefont {J.~M.}\ \bibnamefont {Martinis}},\ }\href {\doibase
  10.1038/nature14270} {\bibfield  {journal} {\bibinfo  {journal} {Nature}\
  }\textbf {\bibinfo {volume} {519}},\ \bibinfo {pages} {66} (\bibinfo {year}
  {2015})}\BibitemShut {NoStop}%
\bibitem [{\citenamefont {Barends}\ \emph {et~al.}(2016)\citenamefont
  {Barends}, \citenamefont {Shabani}, \citenamefont {Lamata}, \citenamefont
  {Kelly}, \citenamefont {Mezzacapo}, \citenamefont {Las~Heras}, \citenamefont
  {Babbush}, \citenamefont {Fowler}, \citenamefont {Campbell}, \citenamefont
  {Chen}, \citenamefont {Chen}, \citenamefont {Chiaro}, \citenamefont
  {Dunsworth}, \citenamefont {Jeffrey}, \citenamefont {Lucero}, \citenamefont
  {Megrant}, \citenamefont {Mutus}, \citenamefont {Neeley}, \citenamefont
  {Neill}, \citenamefont {O'Malley}, \citenamefont {Quintana}, \citenamefont
  {Roushan}, \citenamefont {Sank}, \citenamefont {Vainsencher}, \citenamefont
  {Wenner}, \citenamefont {White}, \citenamefont {Solano}, \citenamefont
  {Neven},\ and\ \citenamefont {Martinis}}]{Barends2016}%
  \BibitemOpen
  \bibfield  {author} {\bibinfo {author} {\bibfnamefont {R.}~\bibnamefont
  {Barends}}, \bibinfo {author} {\bibfnamefont {A.}~\bibnamefont {Shabani}},
  \bibinfo {author} {\bibfnamefont {L.}~\bibnamefont {Lamata}}, \bibinfo
  {author} {\bibfnamefont {J.}~\bibnamefont {Kelly}}, \bibinfo {author}
  {\bibfnamefont {A.}~\bibnamefont {Mezzacapo}}, \bibinfo {author}
  {\bibfnamefont {U.}~\bibnamefont {Las~Heras}}, \bibinfo {author}
  {\bibfnamefont {R.}~\bibnamefont {Babbush}}, \bibinfo {author} {\bibfnamefont
  {A.}~\bibnamefont {Fowler}}, \bibinfo {author} {\bibfnamefont
  {B.}~\bibnamefont {Campbell}}, \bibinfo {author} {\bibfnamefont
  {Y.}~\bibnamefont {Chen}}, \bibinfo {author} {\bibfnamefont {Z.}~\bibnamefont
  {Chen}}, \bibinfo {author} {\bibfnamefont {B.}~\bibnamefont {Chiaro}},
  \bibinfo {author} {\bibfnamefont {A.}~\bibnamefont {Dunsworth}}, \bibinfo
  {author} {\bibfnamefont {E.}~\bibnamefont {Jeffrey}}, \bibinfo {author}
  {\bibfnamefont {E.}~\bibnamefont {Lucero}}, \bibinfo {author} {\bibfnamefont
  {A.}~\bibnamefont {Megrant}}, \bibinfo {author} {\bibfnamefont
  {J.}~\bibnamefont {Mutus}}, \bibinfo {author} {\bibfnamefont
  {M.}~\bibnamefont {Neeley}}, \bibinfo {author} {\bibfnamefont
  {C.}~\bibnamefont {Neill}}, \bibinfo {author} {\bibfnamefont
  {P.}~\bibnamefont {O'Malley}}, \bibinfo {author} {\bibfnamefont
  {C.}~\bibnamefont {Quintana}}, \bibinfo {author} {\bibfnamefont
  {P.}~\bibnamefont {Roushan}}, \bibinfo {author} {\bibfnamefont
  {D.}~\bibnamefont {Sank}}, \bibinfo {author} {\bibfnamefont {A.}~\bibnamefont
  {Vainsencher}}, \bibinfo {author} {\bibfnamefont {J.}~\bibnamefont {Wenner}},
  \bibinfo {author} {\bibfnamefont {T.}~\bibnamefont {White}}, \bibinfo
  {author} {\bibfnamefont {E.}~\bibnamefont {Solano}}, \bibinfo {author}
  {\bibfnamefont {H.}~\bibnamefont {Neven}}, \ and\ \bibinfo {author}
  {\bibfnamefont {J.}~\bibnamefont {Martinis}},\ }\href
  {http://www.nature.com/nature/journal/v534/n7606/full/nature17658.html}
  {\bibfield  {journal} {\bibinfo  {journal} {Nature}\ }\textbf {\bibinfo
  {volume} {534}},\ \bibinfo {pages} {222} (\bibinfo {year}
  {2016})}\BibitemShut {NoStop}%
\bibitem [{\citenamefont {Roushan}\ \emph {et~al.}(2017)\citenamefont
  {Roushan}, \citenamefont {Neill}, \citenamefont {Megrant}, \citenamefont
  {Chen}, \citenamefont {Babbush}, \citenamefont {Barends}, \citenamefont
  {Campbell}, \citenamefont {Chen}, \citenamefont {Chiaro}, \citenamefont
  {Dunsworth}, \citenamefont {Fowler}, \citenamefont {Jeffrey}, \citenamefont
  {Kelly}, \citenamefont {Lucero}, \citenamefont {Mutus}, \citenamefont
  {O'Malley}, \citenamefont {Neeley}, \citenamefont {Quintana}, \citenamefont
  {Sank}, \citenamefont {Vainsencher}, \citenamefont {Wenner}, \citenamefont
  {White}, \citenamefont {Kapit}, \citenamefont {Neven},\ and\ \citenamefont
  {Martinis}}]{Roushan2017}%
  \BibitemOpen
  \bibfield  {author} {\bibinfo {author} {\bibfnamefont {P.}~\bibnamefont
  {Roushan}}, \bibinfo {author} {\bibfnamefont {C.}~\bibnamefont {Neill}},
  \bibinfo {author} {\bibfnamefont {A.}~\bibnamefont {Megrant}}, \bibinfo
  {author} {\bibfnamefont {Y.}~\bibnamefont {Chen}}, \bibinfo {author}
  {\bibfnamefont {R.}~\bibnamefont {Babbush}}, \bibinfo {author} {\bibfnamefont
  {R.}~\bibnamefont {Barends}}, \bibinfo {author} {\bibfnamefont
  {B.}~\bibnamefont {Campbell}}, \bibinfo {author} {\bibfnamefont
  {Z.}~\bibnamefont {Chen}}, \bibinfo {author} {\bibfnamefont {B.}~\bibnamefont
  {Chiaro}}, \bibinfo {author} {\bibfnamefont {A.}~\bibnamefont {Dunsworth}},
  \bibinfo {author} {\bibfnamefont {A.}~\bibnamefont {Fowler}}, \bibinfo
  {author} {\bibfnamefont {E.}~\bibnamefont {Jeffrey}}, \bibinfo {author}
  {\bibfnamefont {J.}~\bibnamefont {Kelly}}, \bibinfo {author} {\bibfnamefont
  {E.}~\bibnamefont {Lucero}}, \bibinfo {author} {\bibfnamefont
  {J.}~\bibnamefont {Mutus}}, \bibinfo {author} {\bibfnamefont {P.~J.~J.}\
  \bibnamefont {O'Malley}}, \bibinfo {author} {\bibfnamefont {M.}~\bibnamefont
  {Neeley}}, \bibinfo {author} {\bibfnamefont {C.}~\bibnamefont {Quintana}},
  \bibinfo {author} {\bibfnamefont {D.}~\bibnamefont {Sank}}, \bibinfo {author}
  {\bibfnamefont {A.}~\bibnamefont {Vainsencher}}, \bibinfo {author}
  {\bibfnamefont {J.}~\bibnamefont {Wenner}}, \bibinfo {author} {\bibfnamefont
  {T.}~\bibnamefont {White}}, \bibinfo {author} {\bibfnamefont
  {E.}~\bibnamefont {Kapit}}, \bibinfo {author} {\bibfnamefont
  {H.}~\bibnamefont {Neven}}, \ and\ \bibinfo {author} {\bibfnamefont
  {J.}~\bibnamefont {Martinis}},\ }\href {http://dx.doi.org/10.1038/nphys3930}
  {\bibfield  {journal} {\bibinfo  {journal} {Nature Physics}\ }\textbf
  {\bibinfo {volume} {13}},\ \bibinfo {pages} {146} (\bibinfo {year}
  {2017})}\BibitemShut {NoStop}%
\bibitem [{\citenamefont {Mohseni}\ \emph {et~al.}(2017)\citenamefont
  {Mohseni}, \citenamefont {Read}, \citenamefont {Neven}, \citenamefont
  {Boixo}, \citenamefont {Denchev}, \citenamefont {Babbush}, \citenamefont
  {Fowler}, \citenamefont {Smelyanskiy},\ and\ \citenamefont
  {Martinis}}]{Mohseni2017}%
  \BibitemOpen
  \bibfield  {author} {\bibinfo {author} {\bibfnamefont {M.}~\bibnamefont
  {Mohseni}}, \bibinfo {author} {\bibfnamefont {P.}~\bibnamefont {Read}},
  \bibinfo {author} {\bibfnamefont {H.}~\bibnamefont {Neven}}, \bibinfo
  {author} {\bibfnamefont {S.}~\bibnamefont {Boixo}}, \bibinfo {author}
  {\bibfnamefont {V.}~\bibnamefont {Denchev}}, \bibinfo {author} {\bibfnamefont
  {R.}~\bibnamefont {Babbush}}, \bibinfo {author} {\bibfnamefont
  {A.}~\bibnamefont {Fowler}}, \bibinfo {author} {\bibfnamefont
  {V.}~\bibnamefont {Smelyanskiy}}, \ and\ \bibinfo {author} {\bibfnamefont
  {J.}~\bibnamefont {Martinis}},\ }\href {\doibase 10.1038/543171a} {\bibfield
  {journal} {\bibinfo  {journal} {Nature}\ }\textbf {\bibinfo {volume} {543}},\
  \bibinfo {pages} {171} (\bibinfo {year} {2017})}\BibitemShut {NoStop}%
\bibitem [{\citenamefont {Boixo}\ \emph {et~al.}(2016)\citenamefont {Boixo},
  \citenamefont {Isakov}, \citenamefont {Smelyanskiy}, \citenamefont {Babbush},
  \citenamefont {Ding}, \citenamefont {Jiang}, \citenamefont {Bremner},
  \citenamefont {Martinis},\ and\ \citenamefont {Neven}}]{Boixo2016}%
  \BibitemOpen
  \bibfield  {author} {\bibinfo {author} {\bibfnamefont {S.}~\bibnamefont
  {Boixo}}, \bibinfo {author} {\bibfnamefont {S.~V.}\ \bibnamefont {Isakov}},
  \bibinfo {author} {\bibfnamefont {V.~N.}\ \bibnamefont {Smelyanskiy}},
  \bibinfo {author} {\bibfnamefont {R.}~\bibnamefont {Babbush}}, \bibinfo
  {author} {\bibfnamefont {N.}~\bibnamefont {Ding}}, \bibinfo {author}
  {\bibfnamefont {Z.}~\bibnamefont {Jiang}}, \bibinfo {author} {\bibfnamefont
  {M.~J.}\ \bibnamefont {Bremner}}, \bibinfo {author} {\bibfnamefont {J.~M.}\
  \bibnamefont {Martinis}}, \ and\ \bibinfo {author} {\bibfnamefont
  {H.}~\bibnamefont {Neven}},\ }\href {http://arxiv.org/abs/1608.00263}
  {\bibfield  {journal} {\bibinfo  {journal} {e-print arXiv: 1608.00263}\ }
  (\bibinfo {year} {2016})}\BibitemShut {NoStop}%
\bibitem [{\citenamefont {Lanyon}\ \emph {et~al.}(2010)\citenamefont {Lanyon},
  \citenamefont {Whitfield}, \citenamefont {Gillett}, \citenamefont {Goggin},
  \citenamefont {Almeida}, \citenamefont {Kassal}, \citenamefont {Biamonte},
  \citenamefont {Mohseni}, \citenamefont {Powell}, \citenamefont {Barbieri},
  \citenamefont {Aspuru-Guzik},\ and\ \citenamefont {White}}]{Lanyon2010}%
  \BibitemOpen
  \bibfield  {author} {\bibinfo {author} {\bibfnamefont {B.~P.}\ \bibnamefont
  {Lanyon}}, \bibinfo {author} {\bibfnamefont {J.~D.}\ \bibnamefont
  {Whitfield}}, \bibinfo {author} {\bibfnamefont {G.~G.}\ \bibnamefont
  {Gillett}}, \bibinfo {author} {\bibfnamefont {M.~E.}\ \bibnamefont {Goggin}},
  \bibinfo {author} {\bibfnamefont {M.~P.}\ \bibnamefont {Almeida}}, \bibinfo
  {author} {\bibfnamefont {I.}~\bibnamefont {Kassal}}, \bibinfo {author}
  {\bibfnamefont {J.~D.}\ \bibnamefont {Biamonte}}, \bibinfo {author}
  {\bibfnamefont {M.}~\bibnamefont {Mohseni}}, \bibinfo {author} {\bibfnamefont
  {B.~J.}\ \bibnamefont {Powell}}, \bibinfo {author} {\bibfnamefont
  {M.}~\bibnamefont {Barbieri}}, \bibinfo {author} {\bibfnamefont
  {A.}~\bibnamefont {Aspuru-Guzik}}, \ and\ \bibinfo {author} {\bibfnamefont
  {a.~G.}\ \bibnamefont {White}},\ }\href {\doibase 10.1038/nchem.483}
  {\bibfield  {journal} {\bibinfo  {journal} {Nature Chemistry}\ }\textbf
  {\bibinfo {volume} {2}},\ \bibinfo {pages} {106} (\bibinfo {year}
  {2010})}\BibitemShut {NoStop}%
\bibitem [{\citenamefont {Li}\ \emph {et~al.}(2011)\citenamefont {Li},
  \citenamefont {Yung}, \citenamefont {Chen}, \citenamefont {Lu}, \citenamefont
  {Whitfield}, \citenamefont {Peng}, \citenamefont {Aspuru-Guzik},\ and\
  \citenamefont {Du}}]{Li2011}%
  \BibitemOpen
  \bibfield  {author} {\bibinfo {author} {\bibfnamefont {Z.}~\bibnamefont
  {Li}}, \bibinfo {author} {\bibfnamefont {M.-H.}\ \bibnamefont {Yung}},
  \bibinfo {author} {\bibfnamefont {H.}~\bibnamefont {Chen}}, \bibinfo {author}
  {\bibfnamefont {D.}~\bibnamefont {Lu}}, \bibinfo {author} {\bibfnamefont
  {J.~D.}\ \bibnamefont {Whitfield}}, \bibinfo {author} {\bibfnamefont
  {X.}~\bibnamefont {Peng}}, \bibinfo {author} {\bibfnamefont {A.}~\bibnamefont
  {Aspuru-Guzik}}, \ and\ \bibinfo {author} {\bibfnamefont {J.}~\bibnamefont
  {Du}},\ }\href {\doibase doi:10.1038/srep00088} {\bibfield  {journal}
  {\bibinfo  {journal} {Scientific Reports}\ }\textbf {\bibinfo {volume} {1}},\
  \bibinfo {pages} {1} (\bibinfo {year} {2011})}\BibitemShut {NoStop}%
\bibitem [{\citenamefont {Wang}\ \emph {et~al.}(2015)\citenamefont {Wang},
  \citenamefont {Dolde}, \citenamefont {Biamonte}, \citenamefont {Babbush},
  \citenamefont {Bergholm}, \citenamefont {Yang}, \citenamefont {Jakobi},
  \citenamefont {Neumann}, \citenamefont {Aspuru-Guzik}, \citenamefont
  {Whitfield},\ and\ \citenamefont {Wrachtrup}}]{Wang2014}%
  \BibitemOpen
  \bibfield  {author} {\bibinfo {author} {\bibfnamefont {Y.}~\bibnamefont
  {Wang}}, \bibinfo {author} {\bibfnamefont {F.}~\bibnamefont {Dolde}},
  \bibinfo {author} {\bibfnamefont {J.}~\bibnamefont {Biamonte}}, \bibinfo
  {author} {\bibfnamefont {R.}~\bibnamefont {Babbush}}, \bibinfo {author}
  {\bibfnamefont {V.}~\bibnamefont {Bergholm}}, \bibinfo {author}
  {\bibfnamefont {S.}~\bibnamefont {Yang}}, \bibinfo {author} {\bibfnamefont
  {I.}~\bibnamefont {Jakobi}}, \bibinfo {author} {\bibfnamefont
  {P.}~\bibnamefont {Neumann}}, \bibinfo {author} {\bibfnamefont
  {A.}~\bibnamefont {Aspuru-Guzik}}, \bibinfo {author} {\bibfnamefont {J.~D.}\
  \bibnamefont {Whitfield}}, \ and\ \bibinfo {author} {\bibfnamefont
  {J.}~\bibnamefont {Wrachtrup}},\ }\href {\doibase 10.1021/acsnano.5b01651}
  {\bibfield  {journal} {\bibinfo  {journal} {ACS Nano}\ }\textbf {\bibinfo
  {volume} {9}},\ \bibinfo {pages} {7769–7774} (\bibinfo {year}
  {2015})}\BibitemShut {NoStop}%
\bibitem [{\citenamefont {Shen}\ \emph {et~al.}(2017)\citenamefont {Shen},
  \citenamefont {Zhang}, \citenamefont {Zhang}, \citenamefont {Zhang},
  \citenamefont {Yung},\ and\ \citenamefont {Kim}}]{Shen2015}%
  \BibitemOpen
  \bibfield  {author} {\bibinfo {author} {\bibfnamefont {Y.}~\bibnamefont
  {Shen}}, \bibinfo {author} {\bibfnamefont {X.}~\bibnamefont {Zhang}},
  \bibinfo {author} {\bibfnamefont {S.}~\bibnamefont {Zhang}}, \bibinfo
  {author} {\bibfnamefont {J.-N.}\ \bibnamefont {Zhang}}, \bibinfo {author}
  {\bibfnamefont {M.-H.}\ \bibnamefont {Yung}}, \ and\ \bibinfo {author}
  {\bibfnamefont {K.}~\bibnamefont {Kim}},\ }\href
  {https://doi.org/10.1103/PhysRevA.95.020501} {\bibfield  {journal} {\bibinfo
  {journal} {Physical Review A}\ }\textbf {\bibinfo {volume} {95}},\ \bibinfo
  {pages} {020501(R)} (\bibinfo {year} {2017})}\BibitemShut {NoStop}%
\bibitem [{\citenamefont {O'Malley}\ \emph {et~al.}(2016)\citenamefont
  {O'Malley}, \citenamefont {Babbush}, \citenamefont {Kivlichan}, \citenamefont
  {Romero}, \citenamefont {McClean}, \citenamefont {Barends}, \citenamefont
  {Kelly}, \citenamefont {Roushan}, \citenamefont {Tranter}, \citenamefont
  {Ding}, \citenamefont {Campbell}, \citenamefont {Chen}, \citenamefont {Chen},
  \citenamefont {Chiaro}, \citenamefont {Dunsworth}, \citenamefont {Fowler},
  \citenamefont {Jeffrey}, \citenamefont {Megrant}, \citenamefont {Mutus},
  \citenamefont {Neill}, \citenamefont {Quintana}, \citenamefont {Sank},
  \citenamefont {Vainsencher}, \citenamefont {Wenner}, \citenamefont {White},
  \citenamefont {Coveney}, \citenamefont {Love}, \citenamefont {Neven},
  \citenamefont {Aspuru-Guzik},\ and\ \citenamefont {Martinis}}]{OMalley2016}%
  \BibitemOpen
  \bibfield  {author} {\bibinfo {author} {\bibfnamefont {P.~J.~J.}\
  \bibnamefont {O'Malley}}, \bibinfo {author} {\bibfnamefont {R.}~\bibnamefont
  {Babbush}}, \bibinfo {author} {\bibfnamefont {I.~D.}\ \bibnamefont
  {Kivlichan}}, \bibinfo {author} {\bibfnamefont {J.}~\bibnamefont {Romero}},
  \bibinfo {author} {\bibfnamefont {J.~R.}\ \bibnamefont {McClean}}, \bibinfo
  {author} {\bibfnamefont {R.}~\bibnamefont {Barends}}, \bibinfo {author}
  {\bibfnamefont {J.}~\bibnamefont {Kelly}}, \bibinfo {author} {\bibfnamefont
  {P.}~\bibnamefont {Roushan}}, \bibinfo {author} {\bibfnamefont
  {A.}~\bibnamefont {Tranter}}, \bibinfo {author} {\bibfnamefont
  {N.}~\bibnamefont {Ding}}, \bibinfo {author} {\bibfnamefont {B.}~\bibnamefont
  {Campbell}}, \bibinfo {author} {\bibfnamefont {Y.}~\bibnamefont {Chen}},
  \bibinfo {author} {\bibfnamefont {Z.}~\bibnamefont {Chen}}, \bibinfo {author}
  {\bibfnamefont {B.}~\bibnamefont {Chiaro}}, \bibinfo {author} {\bibfnamefont
  {A.}~\bibnamefont {Dunsworth}}, \bibinfo {author} {\bibfnamefont {A.~G.}\
  \bibnamefont {Fowler}}, \bibinfo {author} {\bibfnamefont {E.}~\bibnamefont
  {Jeffrey}}, \bibinfo {author} {\bibfnamefont {A.}~\bibnamefont {Megrant}},
  \bibinfo {author} {\bibfnamefont {J.~Y.}\ \bibnamefont {Mutus}}, \bibinfo
  {author} {\bibfnamefont {C.}~\bibnamefont {Neill}}, \bibinfo {author}
  {\bibfnamefont {C.}~\bibnamefont {Quintana}}, \bibinfo {author}
  {\bibfnamefont {D.}~\bibnamefont {Sank}}, \bibinfo {author} {\bibfnamefont
  {A.}~\bibnamefont {Vainsencher}}, \bibinfo {author} {\bibfnamefont
  {J.}~\bibnamefont {Wenner}}, \bibinfo {author} {\bibfnamefont {T.~C.}\
  \bibnamefont {White}}, \bibinfo {author} {\bibfnamefont {P.~V.}\ \bibnamefont
  {Coveney}}, \bibinfo {author} {\bibfnamefont {P.~J.}\ \bibnamefont {Love}},
  \bibinfo {author} {\bibfnamefont {H.}~\bibnamefont {Neven}}, \bibinfo
  {author} {\bibfnamefont {A.}~\bibnamefont {Aspuru-Guzik}}, \ and\ \bibinfo
  {author} {\bibfnamefont {J.~M.}\ \bibnamefont {Martinis}},\ }\href {\doibase
  http://dx.doi.org/10.1103/PhysRevX.6.031007} {\bibfield  {journal} {\bibinfo
  {journal} {Physical Review X}\ }\textbf {\bibinfo {volume} {6}},\ \bibinfo
  {pages} {031007} (\bibinfo {year} {2016})}\BibitemShut {NoStop}%
\bibitem [{\citenamefont {Kandala}\ \emph {et~al.}(2017)\citenamefont
  {Kandala}, \citenamefont {Mezzacapo}, \citenamefont {Temme}, \citenamefont
  {Takita}, \citenamefont {Chow},\ and\ \citenamefont
  {Gambetta}}]{Kandala2017}%
  \BibitemOpen
  \bibfield  {author} {\bibinfo {author} {\bibfnamefont {A.}~\bibnamefont
  {Kandala}}, \bibinfo {author} {\bibfnamefont {A.}~\bibnamefont {Mezzacapo}},
  \bibinfo {author} {\bibfnamefont {K.}~\bibnamefont {Temme}}, \bibinfo
  {author} {\bibfnamefont {M.}~\bibnamefont {Takita}}, \bibinfo {author}
  {\bibfnamefont {J.~M.}\ \bibnamefont {Chow}}, \ and\ \bibinfo {author}
  {\bibfnamefont {J.~M.}\ \bibnamefont {Gambetta}},\ }\href
  {http://www.nature.com/nature/journal/v549/n7671/full/nature23879.html}
  {\bibfield  {journal} {\bibinfo  {journal} {Nature}\ }\textbf {\bibinfo
  {volume} {549}},\ \bibinfo {pages} {242} (\bibinfo {year}
  {2017})}\BibitemShut {NoStop}%
\bibitem [{\citenamefont {Wecker}\ \emph {et~al.}(2015)\citenamefont {Wecker},
  \citenamefont {Hastings},\ and\ \citenamefont {Troyer}}]{Wecker2015a}%
  \BibitemOpen
  \bibfield  {author} {\bibinfo {author} {\bibfnamefont {D.}~\bibnamefont
  {Wecker}}, \bibinfo {author} {\bibfnamefont {M.~B.}\ \bibnamefont
  {Hastings}}, \ and\ \bibinfo {author} {\bibfnamefont {M.}~\bibnamefont
  {Troyer}},\ }\href {\doibase 10.1103/PhysRevA.92.042303} {\bibfield
  {journal} {\bibinfo  {journal} {Physical Review A}\ }\textbf {\bibinfo
  {volume} {92}},\ \bibinfo {pages} {042303} (\bibinfo {year}
  {2015})}\BibitemShut {NoStop}%
\bibitem [{\citenamefont {Mueck}(2015)}]{Mueck2015}%
  \BibitemOpen
  \bibfield  {author} {\bibinfo {author} {\bibfnamefont {L.}~\bibnamefont
  {Mueck}},\ }\href {\doibase 10.1038/nchem.2248} {\bibfield  {journal}
  {\bibinfo  {journal} {Nature Chemistry}\ }\textbf {\bibinfo {volume} {7}},\
  \bibinfo {pages} {361} (\bibinfo {year} {2015})}\BibitemShut {NoStop}%
\bibitem [{\citenamefont {Jordan}\ and\ \citenamefont
  {Wigner}(1928)}]{Jordan1928}%
  \BibitemOpen
  \bibfield  {author} {\bibinfo {author} {\bibfnamefont {P.}~\bibnamefont
  {Jordan}}\ and\ \bibinfo {author} {\bibfnamefont {E.}~\bibnamefont
  {Wigner}},\ }\href {https://link.springer.com/article/10.1007/BF01331938}
  {\bibfield  {journal} {\bibinfo  {journal} {Zeitschrift f{\"{u}}r Physik}\
  }\textbf {\bibinfo {volume} {47}},\ \bibinfo {pages} {631} (\bibinfo {year}
  {1928})}\BibitemShut {NoStop}%
\bibitem [{\citenamefont {Bravyi}\ and\ \citenamefont
  {Kitaev}(2002)}]{Bravyi2002}%
  \BibitemOpen
  \bibfield  {author} {\bibinfo {author} {\bibfnamefont {S.}~\bibnamefont
  {Bravyi}}\ and\ \bibinfo {author} {\bibfnamefont {A.}~\bibnamefont
  {Kitaev}},\ }\href {\doibase 10.1006/aphy.2002.6254} {\bibfield  {journal}
  {\bibinfo  {journal} {Annals of Physics}\ }\textbf {\bibinfo {volume}
  {298}},\ \bibinfo {pages} {210–226} (\bibinfo {year} {2002})}\BibitemShut
  {NoStop}%
\bibitem [{\citenamefont {Boys}(1950)}]{boys1950electronic}%
  \BibitemOpen
  \bibfield  {author} {\bibinfo {author} {\bibfnamefont {S.~F.}\ \bibnamefont
  {Boys}},\ }in\ \href
  {http://rspa.royalsocietypublishing.org/content/200/1063/542} {\emph
  {\bibinfo {booktitle} {Proceedings of the Royal Society of London A:
  Mathematical, Physical and Engineering Sciences}}},\ Vol.\ \bibinfo {volume}
  {200}\ (\bibinfo {organization} {The Royal Society},\ \bibinfo {year}
  {1950})\ pp.\ \bibinfo {pages} {542--554}\BibitemShut {NoStop}%
\bibitem [{\citenamefont {Martin}(2004)}]{Martin2004}%
  \BibitemOpen
  \bibfield  {author} {\bibinfo {author} {\bibfnamefont {R.}~\bibnamefont
  {Martin}},\ }\href@noop {} {\emph {\bibinfo {title} {{Electronic
  Structure}}}}\ (\bibinfo  {publisher} {Cambridge Univesity Press},\ \bibinfo
  {address} {Cambridge, U.K.},\ \bibinfo {year} {2004})\BibitemShut {NoStop}%
\bibitem [{\citenamefont {F{\"{u}}sti-Molnar}\ and\ \citenamefont
  {Pulay}(2002)}]{fusti2002accurate}%
  \BibitemOpen
  \bibfield  {author} {\bibinfo {author} {\bibfnamefont {L.}~\bibnamefont
  {F{\"{u}}sti-Molnar}}\ and\ \bibinfo {author} {\bibfnamefont
  {P.}~\bibnamefont {Pulay}},\ }\href
  {http://aip.scitation.org/doi/abs/10.1063/1.1467901} {\bibfield  {journal}
  {\bibinfo  {journal} {The Journal of Chemical Physics}\ }\textbf {\bibinfo
  {volume} {116}},\ \bibinfo {pages} {7795} (\bibinfo {year}
  {2002})}\BibitemShut {NoStop}%
\bibitem [{\citenamefont {Tosoni}\ \emph {et~al.}(2007)\citenamefont {Tosoni},
  \citenamefont {Tuma}, \citenamefont {Sauer}, \citenamefont {Civalleri},\ and\
  \citenamefont {Ugliengo}}]{tosoni2007comparison}%
  \BibitemOpen
  \bibfield  {author} {\bibinfo {author} {\bibfnamefont {S.}~\bibnamefont
  {Tosoni}}, \bibinfo {author} {\bibfnamefont {C.}~\bibnamefont {Tuma}},
  \bibinfo {author} {\bibfnamefont {J.}~\bibnamefont {Sauer}}, \bibinfo
  {author} {\bibfnamefont {B.}~\bibnamefont {Civalleri}}, \ and\ \bibinfo
  {author} {\bibfnamefont {P.}~\bibnamefont {Ugliengo}},\ }\href
  {http://aip.scitation.org/doi/abs/10.1063/1.2790019} {\bibfield  {journal}
  {\bibinfo  {journal} {The Journal of Chemical Physics}\ }\textbf {\bibinfo
  {volume} {127}},\ \bibinfo {pages} {154102} (\bibinfo {year}
  {2007})}\BibitemShut {NoStop}%
\bibitem [{\citenamefont {Booth}\ \emph {et~al.}(2016)\citenamefont {Booth},
  \citenamefont {Tsatsoulis}, \citenamefont {Chan},\ and\ \citenamefont
  {Gr{\"{u}}neis}}]{booth2016plane}%
  \BibitemOpen
  \bibfield  {author} {\bibinfo {author} {\bibfnamefont {G.~H.}\ \bibnamefont
  {Booth}}, \bibinfo {author} {\bibfnamefont {T.}~\bibnamefont {Tsatsoulis}},
  \bibinfo {author} {\bibfnamefont {G.~K.-L.}\ \bibnamefont {Chan}}, \ and\
  \bibinfo {author} {\bibfnamefont {A.}~\bibnamefont {Gr{\"{u}}neis}},\ }\href
  {http://aip.scitation.org/doi/abs/10.1063/1.4961301} {\bibfield  {journal}
  {\bibinfo  {journal} {The Journal of Chemical Physics}\ }\textbf {\bibinfo
  {volume} {145}},\ \bibinfo {pages} {84111} (\bibinfo {year}
  {2016})}\BibitemShut {NoStop}%
\bibitem [{\citenamefont {Gr{\"{u}}neis}\ \emph {et~al.}(2013)\citenamefont
  {Gr{\"{u}}neis}, \citenamefont {Shepherd}, \citenamefont {Alavi},
  \citenamefont {Tew},\ and\ \citenamefont {Booth}}]{gruneis2013explicitly}%
  \BibitemOpen
  \bibfield  {author} {\bibinfo {author} {\bibfnamefont {A.}~\bibnamefont
  {Gr{\"{u}}neis}}, \bibinfo {author} {\bibfnamefont {J.~J.}\ \bibnamefont
  {Shepherd}}, \bibinfo {author} {\bibfnamefont {A.}~\bibnamefont {Alavi}},
  \bibinfo {author} {\bibfnamefont {D.~P.}\ \bibnamefont {Tew}}, \ and\
  \bibinfo {author} {\bibfnamefont {G.~H.}\ \bibnamefont {Booth}},\ }\href
  {http://aip.scitation.org/doi/abs/10.1063/1.4818753} {\bibfield  {journal}
  {\bibinfo  {journal} {The Journal of Chemical Physics}\ }\textbf {\bibinfo
  {volume} {139}},\ \bibinfo {pages} {84112} (\bibinfo {year}
  {2013})}\BibitemShut {NoStop}%
\bibitem [{\citenamefont {Hättig}\ \emph {et~al.}(2011)\citenamefont
  {Hättig}, \citenamefont {Klopper}, \citenamefont {Köhn},\ and\
  \citenamefont {Tew}}]{hattig2011explicitly}%
  \BibitemOpen
  \bibfield  {author} {\bibinfo {author} {\bibfnamefont {C.}~\bibnamefont
  {Hättig}}, \bibinfo {author} {\bibfnamefont {W.}~\bibnamefont {Klopper}},
  \bibinfo {author} {\bibfnamefont {A.}~\bibnamefont {Köhn}}, \ and\ \bibinfo
  {author} {\bibfnamefont {D.~P.}\ \bibnamefont {Tew}},\ }\href
  {http://pubs.acs.org/doi/abs/10.1021/cr200168z} {\bibfield  {journal}
  {\bibinfo  {journal} {Chemical Reviews}\ }\textbf {\bibinfo {volume} {112}},\
  \bibinfo {pages} {4} (\bibinfo {year} {2011})}\BibitemShut {NoStop}%
\bibitem [{\citenamefont {Remler}\ and\ \citenamefont
  {Madden}(1990)}]{remler1990molecular}%
  \BibitemOpen
  \bibfield  {author} {\bibinfo {author} {\bibfnamefont {D.~K.}\ \bibnamefont
  {Remler}}\ and\ \bibinfo {author} {\bibfnamefont {P.~A.}\ \bibnamefont
  {Madden}},\ }\href
  {http://www.tandfonline.com/doi/abs/10.1080/00268979000101451} {\bibfield
  {journal} {\bibinfo  {journal} {Molecular Physics}\ }\textbf {\bibinfo
  {volume} {70}},\ \bibinfo {pages} {921} (\bibinfo {year} {1990})}\BibitemShut
  {NoStop}%
\bibitem [{\citenamefont {McClain}\ \emph {et~al.}(2017)\citenamefont
  {McClain}, \citenamefont {Sun}, \citenamefont {Chan},\ and\ \citenamefont
  {Berkelbach}}]{mcclain2017gaussian}%
  \BibitemOpen
  \bibfield  {author} {\bibinfo {author} {\bibfnamefont {J.}~\bibnamefont
  {McClain}}, \bibinfo {author} {\bibfnamefont {Q.}~\bibnamefont {Sun}},
  \bibinfo {author} {\bibfnamefont {G.~K.-L.}\ \bibnamefont {Chan}}, \ and\
  \bibinfo {author} {\bibfnamefont {T.~C.}\ \bibnamefont {Berkelbach}},\ }\href
  {http://pubs.acs.org/doi/abs/10.1021/acs.jctc.7b00049} {\bibfield  {journal}
  {\bibinfo  {journal} {Journal of Chemical Theory and Computation}\ }\textbf
  {\bibinfo {volume} {13}},\ \bibinfo {pages} {1209} (\bibinfo {year}
  {2017})}\BibitemShut {NoStop}%
\bibitem [{\citenamefont {Skylaris}\ \emph {et~al.}(2002)\citenamefont
  {Skylaris}, \citenamefont {Mostofi}, \citenamefont {Haynes}, \citenamefont
  {Di{\'{e}}guez},\ and\ \citenamefont {Payne}}]{skylaris2002nonorthogonal}%
  \BibitemOpen
  \bibfield  {author} {\bibinfo {author} {\bibfnamefont {C.-K.}\ \bibnamefont
  {Skylaris}}, \bibinfo {author} {\bibfnamefont {A.~A.}\ \bibnamefont
  {Mostofi}}, \bibinfo {author} {\bibfnamefont {P.~D.}\ \bibnamefont {Haynes}},
  \bibinfo {author} {\bibfnamefont {O.}~\bibnamefont {Di{\'{e}}guez}}, \ and\
  \bibinfo {author} {\bibfnamefont {M.~C.}\ \bibnamefont {Payne}},\ }\href
  {https://journals.aps.org/prb/pdf/10.1103/PhysRevB.66.035119} {\bibfield
  {journal} {\bibinfo  {journal} {Physical Review B}\ }\textbf {\bibinfo
  {volume} {66}},\ \bibinfo {pages} {35119} (\bibinfo {year}
  {2002})}\BibitemShut {NoStop}%
\bibitem [{\citenamefont {Skylaris}\ \emph {et~al.}(2005)\citenamefont
  {Skylaris}, \citenamefont {Haynes}, \citenamefont {Mostofi},\ and\
  \citenamefont {Payne}}]{skylaris2005introducing}%
  \BibitemOpen
  \bibfield  {author} {\bibinfo {author} {\bibfnamefont {C.-K.}\ \bibnamefont
  {Skylaris}}, \bibinfo {author} {\bibfnamefont {P.~D.}\ \bibnamefont
  {Haynes}}, \bibinfo {author} {\bibfnamefont {A.~A.}\ \bibnamefont {Mostofi}},
  \ and\ \bibinfo {author} {\bibfnamefont {M.~C.}\ \bibnamefont {Payne}},\
  }\href {http://aip.scitation.org/doi/abs/10.1063/1.1839852} {\bibfield
  {journal} {\bibinfo  {journal} {The Journal of Chemical Physics}\ }\textbf
  {\bibinfo {volume} {122}},\ \bibinfo {pages} {84119} (\bibinfo {year}
  {2005})}\BibitemShut {NoStop}%
\bibitem [{\citenamefont {Light}\ and\ \citenamefont
  {Carrington~Jr}(2000)}]{dvrreview}%
  \BibitemOpen
  \bibfield  {author} {\bibinfo {author} {\bibfnamefont {J.~C.}\ \bibnamefont
  {Light}}\ and\ \bibinfo {author} {\bibfnamefont {T.}~\bibnamefont
  {Carrington~Jr}},\ }\href
  {http://onlinelibrary.wiley.com/doi/10.1002/9780470141731.ch4/summary}
  {\bibfield  {journal} {\bibinfo  {journal} {Advances in Chemical Physics}\
  }\textbf {\bibinfo {volume} {114}},\ \bibinfo {pages} {263} (\bibinfo {year}
  {2000})}\BibitemShut {NoStop}%
\bibitem [{\citenamefont {Lill}\ \emph {et~al.}(1982)\citenamefont {Lill},
  \citenamefont {Parker},\ and\ \citenamefont {Light}}]{lill1982discrete}%
  \BibitemOpen
  \bibfield  {author} {\bibinfo {author} {\bibfnamefont {J.~V.}\ \bibnamefont
  {Lill}}, \bibinfo {author} {\bibfnamefont {G.~A.}\ \bibnamefont {Parker}}, \
  and\ \bibinfo {author} {\bibfnamefont {J.~C.}\ \bibnamefont {Light}},\
  }\href@noop {} {\bibfield  {journal} {\bibinfo  {journal} {Chemical Physics
  Letters}\ }\textbf {\bibinfo {volume} {89}},\ \bibinfo {pages} {483}
  (\bibinfo {year} {1982})}\BibitemShut {NoStop}%
\bibitem [{\citenamefont {Shizgal}\ and\ \citenamefont
  {Blackmore}(1984)}]{shizgal1984discrete}%
  \BibitemOpen
  \bibfield  {author} {\bibinfo {author} {\bibfnamefont {B.}~\bibnamefont
  {Shizgal}}\ and\ \bibinfo {author} {\bibfnamefont {R.}~\bibnamefont
  {Blackmore}},\ }\href@noop {} {\bibfield  {journal} {\bibinfo  {journal}
  {Journal of Computational Physics}\ }\textbf {\bibinfo {volume} {55}},\
  \bibinfo {pages} {313} (\bibinfo {year} {1984})}\BibitemShut {NoStop}%
\bibitem [{\citenamefont {Colbert}\ and\ \citenamefont
  {Miller}(1992)}]{colbert1992novel}%
  \BibitemOpen
  \bibfield  {author} {\bibinfo {author} {\bibfnamefont {D.~T.}\ \bibnamefont
  {Colbert}}\ and\ \bibinfo {author} {\bibfnamefont {W.~H.}\ \bibnamefont
  {Miller}},\ }\href {http://aip.scitation.org/doi/abs/10.1063/1.462100}
  {\bibfield  {journal} {\bibinfo  {journal} {The Journal of Chemical Physics}\
  }\textbf {\bibinfo {volume} {96}},\ \bibinfo {pages} {1982} (\bibinfo {year}
  {1992})}\BibitemShut {NoStop}%
\bibitem [{\citenamefont {Jones}\ \emph {et~al.}(2016)\citenamefont {Jones},
  \citenamefont {Rouet}, \citenamefont {Lawler}, \citenamefont {Vecharynski},
  \citenamefont {Ibrahim}, \citenamefont {Williams}, \citenamefont {Abeln},
  \citenamefont {Yang}, \citenamefont {McCurdy}, \citenamefont {Haxton},
  \citenamefont {Li},\ and\ \citenamefont {Rescigno}}]{Jones2016}%
  \BibitemOpen
  \bibfield  {author} {\bibinfo {author} {\bibfnamefont {J.~R.}\ \bibnamefont
  {Jones}}, \bibinfo {author} {\bibfnamefont {F.-H.}\ \bibnamefont {Rouet}},
  \bibinfo {author} {\bibfnamefont {K.~V.}\ \bibnamefont {Lawler}}, \bibinfo
  {author} {\bibfnamefont {E.}~\bibnamefont {Vecharynski}}, \bibinfo {author}
  {\bibfnamefont {K.~Z.}\ \bibnamefont {Ibrahim}}, \bibinfo {author}
  {\bibfnamefont {S.}~\bibnamefont {Williams}}, \bibinfo {author}
  {\bibfnamefont {B.}~\bibnamefont {Abeln}}, \bibinfo {author} {\bibfnamefont
  {C.}~\bibnamefont {Yang}}, \bibinfo {author} {\bibfnamefont {W.}~\bibnamefont
  {McCurdy}}, \bibinfo {author} {\bibfnamefont {D.~J.}\ \bibnamefont {Haxton}},
  \bibinfo {author} {\bibfnamefont {X.~S.}\ \bibnamefont {Li}}, \ and\ \bibinfo
  {author} {\bibfnamefont {T.~N.}\ \bibnamefont {Rescigno}},\ }\href {\doibase
  10.1080/00268976.2016.1176262} {\bibfield  {journal} {\bibinfo  {journal}
  {Molecular Physics}\ }\textbf {\bibinfo {volume} {114}},\ \bibinfo {pages}
  {2014} (\bibinfo {year} {2016})}\BibitemShut {NoStop}%
\bibitem [{\citenamefont {Meyer}(1996)}]{Meyer1996}%
  \BibitemOpen
  \bibfield  {author} {\bibinfo {author} {\bibfnamefont {D.~A.}\ \bibnamefont
  {Meyer}},\ }\href {\doibase 10.1007/BF02199356} {\bibfield  {journal}
  {\bibinfo  {journal} {Journal of Statistical Physics}\ }\textbf {\bibinfo
  {volume} {85}},\ \bibinfo {pages} {551} (\bibinfo {year} {1996})}\BibitemShut
  {NoStop}%
\bibitem [{\citenamefont {Meyer}(1997)}]{Meyer1997}%
  \BibitemOpen
  \bibfield  {author} {\bibinfo {author} {\bibfnamefont {D.~A.}\ \bibnamefont
  {Meyer}},\ }\href {\doibase 10.1103/PhysRevE.55.5261} {\bibfield  {journal}
  {\bibinfo  {journal} {Physical Review E}\ }\textbf {\bibinfo {volume} {55}},\
  \bibinfo {pages} {5261} (\bibinfo {year} {1997})}\BibitemShut {NoStop}%
\bibitem [{\citenamefont {Zalka}(1998)}]{Zalka1998}%
  \BibitemOpen
  \bibfield  {author} {\bibinfo {author} {\bibfnamefont {C.}~\bibnamefont
  {Zalka}},\ }\href {\doibase
  10.1002/(SICI)1521-3978(199811)46:6/8<877::AID-PROP877>3.0.CO;2-A} {\bibfield
   {journal} {\bibinfo  {journal} {Fortschritte der Physik}\ }\textbf {\bibinfo
  {volume} {46}},\ \bibinfo {pages} {877} (\bibinfo {year} {1998})}\BibitemShut
  {NoStop}%
\bibitem [{\citenamefont {Boghosian}\ and\ \citenamefont
  {Taylor}(1998{\natexlab{a}})}]{Boghosian1998}%
  \BibitemOpen
  \bibfield  {author} {\bibinfo {author} {\bibfnamefont {B.~M.}\ \bibnamefont
  {Boghosian}}\ and\ \bibinfo {author} {\bibfnamefont {W.}~\bibnamefont
  {Taylor}},\ }\href {\doibase 10.1016/S0167-2789(98)00042-6} {\bibfield
  {journal} {\bibinfo  {journal} {Physica D-Nonlinear Phenomena}\ }\textbf
  {\bibinfo {volume} {120}},\ \bibinfo {pages} {30} (\bibinfo {year}
  {1998}{\natexlab{a}})}\BibitemShut {NoStop}%
\bibitem [{\citenamefont {Boghosian}\ and\ \citenamefont
  {Taylor}(1998{\natexlab{b}})}]{Boghosian1998b}%
  \BibitemOpen
  \bibfield  {author} {\bibinfo {author} {\bibfnamefont {B.~M.}\ \bibnamefont
  {Boghosian}}\ and\ \bibinfo {author} {\bibfnamefont {W.}~\bibnamefont
  {Taylor}},\ }\href {\doibase 10.1103/PhysRevE.57.54} {\bibfield  {journal}
  {\bibinfo  {journal} {Physical Review E}\ }\textbf {\bibinfo {volume} {57}},\
  \bibinfo {pages} {54} (\bibinfo {year} {1998}{\natexlab{b}})}\BibitemShut
  {NoStop}%
\bibitem [{\citenamefont {Wiesner}(1996)}]{Wiesner1996}%
  \BibitemOpen
  \bibfield  {author} {\bibinfo {author} {\bibfnamefont {S.}~\bibnamefont
  {Wiesner}},\ }\href {http://arxiv.org/abs/quant-ph/9603028} {\bibfield
  {journal} {\bibinfo  {journal} {e-print arXiv:quant-ph/9603028}\ } (\bibinfo
  {year} {1996})}\BibitemShut {NoStop}%
\bibitem [{\citenamefont {Verstraete}\ \emph {et~al.}(2009)\citenamefont
  {Verstraete}, \citenamefont {Cirac},\ and\ \citenamefont
  {Latorre}}]{Verstraete2009}%
  \BibitemOpen
  \bibfield  {author} {\bibinfo {author} {\bibfnamefont {F.}~\bibnamefont
  {Verstraete}}, \bibinfo {author} {\bibfnamefont {J.~I.}\ \bibnamefont
  {Cirac}}, \ and\ \bibinfo {author} {\bibfnamefont {J.~I.}\ \bibnamefont
  {Latorre}},\ }\href {\doibase 10.1103/PhysRevA.79.032316} {\bibfield
  {journal} {\bibinfo  {journal} {Physical Review A}\ }\textbf {\bibinfo
  {volume} {79}},\ \bibinfo {pages} {032316} (\bibinfo {year}
  {2009})}\BibitemShut {NoStop}%
\bibitem [{\citenamefont {Ferris}(2014)}]{Ferris2014}%
  \BibitemOpen
  \bibfield  {author} {\bibinfo {author} {\bibfnamefont {A.~J.}\ \bibnamefont
  {Ferris}},\ }\href {\doibase 10.1103/PhysRevLett.113.010401} {\bibfield
  {journal} {\bibinfo  {journal} {Physical Review Letters}\ }\textbf {\bibinfo
  {volume} {113}},\ \bibinfo {pages} {010401} (\bibinfo {year}
  {2014})}\BibitemShut {NoStop}%
\bibitem [{\citenamefont {Cooley}\ and\ \citenamefont
  {Tukey}(1965)}]{Cooley1965}%
  \BibitemOpen
  \bibfield  {author} {\bibinfo {author} {\bibfnamefont {J.~W.}\ \bibnamefont
  {Cooley}}\ and\ \bibinfo {author} {\bibfnamefont {J.~W.}\ \bibnamefont
  {Tukey}},\ }\href {\doibase 10.2307/2003354} {\bibfield  {journal} {\bibinfo
  {journal} {Mathematics of Computation}\ }\textbf {\bibinfo {volume} {19}},\
  \bibinfo {pages} {297} (\bibinfo {year} {1965})}\BibitemShut {NoStop}%
\bibitem [{\citenamefont {Kassal}\ \emph {et~al.}(2010)\citenamefont {Kassal},
  \citenamefont {Whitfield}, \citenamefont {Perdomo-Ortiz}, \citenamefont
  {Yung},\ and\ \citenamefont {Aspuru-Guzik}}]{Kassal2010}%
  \BibitemOpen
  \bibfield  {author} {\bibinfo {author} {\bibfnamefont {I.}~\bibnamefont
  {Kassal}}, \bibinfo {author} {\bibfnamefont {J.}~\bibnamefont {Whitfield}},
  \bibinfo {author} {\bibfnamefont {A.}~\bibnamefont {Perdomo-Ortiz}}, \bibinfo
  {author} {\bibfnamefont {M.-H.}\ \bibnamefont {Yung}}, \ and\ \bibinfo
  {author} {\bibfnamefont {A.}~\bibnamefont {Aspuru-Guzik}},\ }\href {\doibase
  10.1146/annurev-physchem-032210-103512} {\bibfield  {journal} {\bibinfo
  {journal} {Ann. Rev. Phys. Chem.}\ }\textbf {\bibinfo {volume} {62}},\
  \bibinfo {pages} {185} (\bibinfo {year} {2010})}\BibitemShut {NoStop}%
\bibitem [{\citenamefont {Helgaker}\ \emph {et~al.}(2002)\citenamefont
  {Helgaker}, \citenamefont {Jorgensen},\ and\ \citenamefont
  {Olsen}}]{Helgaker2002}%
  \BibitemOpen
  \bibfield  {author} {\bibinfo {author} {\bibfnamefont {T.}~\bibnamefont
  {Helgaker}}, \bibinfo {author} {\bibfnamefont {P.}~\bibnamefont {Jorgensen}},
  \ and\ \bibinfo {author} {\bibfnamefont {J.}~\bibnamefont {Olsen}},\
  }\href@noop {} {\emph {\bibinfo {title} {{Molecular Electronic Structure
  Theory}}}}\ (\bibinfo  {publisher} {Wiley},\ \bibinfo {year}
  {2002})\BibitemShut {NoStop}%
\bibitem [{\citenamefont {Reiher}\ \emph {et~al.}(2017)\citenamefont {Reiher},
  \citenamefont {Wiebe}, \citenamefont {Svore}, \citenamefont {Wecker},\ and\
  \citenamefont {Troyer}}]{Reiher2017}%
  \BibitemOpen
  \bibfield  {author} {\bibinfo {author} {\bibfnamefont {M.}~\bibnamefont
  {Reiher}}, \bibinfo {author} {\bibfnamefont {N.}~\bibnamefont {Wiebe}},
  \bibinfo {author} {\bibfnamefont {K.~M.}\ \bibnamefont {Svore}}, \bibinfo
  {author} {\bibfnamefont {D.}~\bibnamefont {Wecker}}, \ and\ \bibinfo {author}
  {\bibfnamefont {M.}~\bibnamefont {Troyer}},\ }\href
  {http://www.pnas.org/content/114/29/7555.abstract} {\bibfield  {journal}
  {\bibinfo  {journal} {Proceedings of the National Academy of Sciences}\
  }\textbf {\bibinfo {volume} {114}},\ \bibinfo {pages} {7555} (\bibinfo {year}
  {2017})}\BibitemShut {NoStop}%
\bibitem [{\citenamefont {Cody~Jones}\ \emph {et~al.}(2012)\citenamefont
  {Cody~Jones}, \citenamefont {Whitfield}, \citenamefont {McMahon},
  \citenamefont {Yung}, \citenamefont {Meter}, \citenamefont {Aspuru-Guzik},\
  and\ \citenamefont {Yamamoto}}]{Jones2012}%
  \BibitemOpen
  \bibfield  {author} {\bibinfo {author} {\bibfnamefont {N.}~\bibnamefont
  {Cody~Jones}}, \bibinfo {author} {\bibfnamefont {J.~D.}\ \bibnamefont
  {Whitfield}}, \bibinfo {author} {\bibfnamefont {P.~L.}\ \bibnamefont
  {McMahon}}, \bibinfo {author} {\bibfnamefont {M.-H.}\ \bibnamefont {Yung}},
  \bibinfo {author} {\bibfnamefont {R.~V.}\ \bibnamefont {Meter}}, \bibinfo
  {author} {\bibfnamefont {A.}~\bibnamefont {Aspuru-Guzik}}, \ and\ \bibinfo
  {author} {\bibfnamefont {Y.}~\bibnamefont {Yamamoto}},\ }\href {\doibase
  10.1088/1367-2630/14/11/115023} {\bibfield  {journal} {\bibinfo  {journal}
  {New Journal of Physics}\ }\textbf {\bibinfo {volume} {14}},\ \bibinfo
  {pages} {115023} (\bibinfo {year} {2012})}\BibitemShut {NoStop}%
\bibitem [{\citenamefont {Low}\ and\ \citenamefont {Chuang}(2016)}]{Low2016}%
  \BibitemOpen
  \bibfield  {author} {\bibinfo {author} {\bibfnamefont {G.~H.}\ \bibnamefont
  {Low}}\ and\ \bibinfo {author} {\bibfnamefont {I.~L.}\ \bibnamefont
  {Chuang}},\ }\href {http://arxiv.org/abs/1610.06546} {\bibfield  {journal}
  {\bibinfo  {journal} {e-print arXiv: 1610.06546}\ } (\bibinfo {year}
  {2016})}\BibitemShut {NoStop}%
\bibitem [{\citenamefont {Berry}\ \emph {et~al.}(2017)\citenamefont {Berry},
  \citenamefont {Kieferov{\'{a}}}, \citenamefont {Scherer}, \citenamefont
  {Sanders}, \citenamefont {Low}, \citenamefont {Wiebe}, \citenamefont
  {Gidney},\ and\ \citenamefont {Babbush}}]{BabbushSymmetry}%
  \BibitemOpen
  \bibfield  {author} {\bibinfo {author} {\bibfnamefont {D.~W.}\ \bibnamefont
  {Berry}}, \bibinfo {author} {\bibfnamefont {M.}~\bibnamefont
  {Kieferov{\'{a}}}}, \bibinfo {author} {\bibfnamefont {A.}~\bibnamefont
  {Scherer}}, \bibinfo {author} {\bibfnamefont {Y.~R.}\ \bibnamefont
  {Sanders}}, \bibinfo {author} {\bibfnamefont {G.~H.}\ \bibnamefont {Low}},
  \bibinfo {author} {\bibfnamefont {N.}~\bibnamefont {Wiebe}}, \bibinfo
  {author} {\bibfnamefont {C.}~\bibnamefont {Gidney}}, \ and\ \bibinfo {author}
  {\bibfnamefont {R.}~\bibnamefont {Babbush}},\ }\href
  {http://arxiv.org/abs/1711.10460} {\bibfield  {journal} {\bibinfo  {journal}
  {e-print arXiv: 1711.10460}\ } (\bibinfo {year} {2017})}\BibitemShut
  {NoStop}%
\bibitem [{\citenamefont {Berry}\ \emph {et~al.}(2007)\citenamefont {Berry},
  \citenamefont {Ahokas}, \citenamefont {Cleve},\ and\ \citenamefont
  {Sanders}}]{Berry2007}%
  \BibitemOpen
  \bibfield  {author} {\bibinfo {author} {\bibfnamefont {D.~W.}\ \bibnamefont
  {Berry}}, \bibinfo {author} {\bibfnamefont {G.}~\bibnamefont {Ahokas}},
  \bibinfo {author} {\bibfnamefont {R.}~\bibnamefont {Cleve}}, \ and\ \bibinfo
  {author} {\bibfnamefont {B.~C.}\ \bibnamefont {Sanders}},\ }\href
  {http://www.springerlink.com/index/hk7484445j37r228.pdf} {\bibfield
  {journal} {\bibinfo  {journal} {Communications In Mathematical Physics}\
  }\textbf {\bibinfo {volume} {270}},\ \bibinfo {pages} {359} (\bibinfo {year}
  {2007})}\BibitemShut {NoStop}%
\bibitem [{\citenamefont {Wiebe}\ \emph {et~al.}(2010)\citenamefont {Wiebe},
  \citenamefont {Berry}, \citenamefont {Hoyer},\ and\ \citenamefont
  {Sanders}}]{Wiebe2008}%
  \BibitemOpen
  \bibfield  {author} {\bibinfo {author} {\bibfnamefont {N.}~\bibnamefont
  {Wiebe}}, \bibinfo {author} {\bibfnamefont {D.~W.}\ \bibnamefont {Berry}},
  \bibinfo {author} {\bibfnamefont {P.}~\bibnamefont {Hoyer}}, \ and\ \bibinfo
  {author} {\bibfnamefont {B.~C.}\ \bibnamefont {Sanders}},\ }\href {\doibase
  10.1088/1751-8113/43/6/065203} {\bibfield  {journal} {\bibinfo  {journal}
  {Journal of Physics A: Mathematical and Theoretical}\ }\textbf {\bibinfo
  {volume} {43}},\ \bibinfo {pages} {1} (\bibinfo {year} {2010})}\BibitemShut
  {NoStop}%
\bibitem [{\citenamefont {Berry}\ \emph
  {et~al.}(2014{\natexlab{a}})\citenamefont {Berry}, \citenamefont {Childs},
  \citenamefont {Cleve}, \citenamefont {Kothari},\ and\ \citenamefont
  {Somma}}]{Berry2013}%
  \BibitemOpen
  \bibfield  {author} {\bibinfo {author} {\bibfnamefont {D.~W.}\ \bibnamefont
  {Berry}}, \bibinfo {author} {\bibfnamefont {A.~M.}\ \bibnamefont {Childs}},
  \bibinfo {author} {\bibfnamefont {R.}~\bibnamefont {Cleve}}, \bibinfo
  {author} {\bibfnamefont {R.}~\bibnamefont {Kothari}}, \ and\ \bibinfo
  {author} {\bibfnamefont {R.~D.}\ \bibnamefont {Somma}},\ }in\ \href
  {https://doi.org/10.1145/2591796.2591854} {\emph {\bibinfo {booktitle} {STOC
  '14 Proceedings of the 46th Annual ACM Symposium on Theory of Computing}}}\
  (\bibinfo {year} {2014})\ pp.\ \bibinfo {pages} {283--292}\BibitemShut
  {NoStop}%
\bibitem [{\citenamefont {Berry}\ \emph {et~al.}(2015)\citenamefont {Berry},
  \citenamefont {Childs}, \citenamefont {Cleve}, \citenamefont {Kothari},\ and\
  \citenamefont {Somma}}]{Berry2015}%
  \BibitemOpen
  \bibfield  {author} {\bibinfo {author} {\bibfnamefont {D.~W.}\ \bibnamefont
  {Berry}}, \bibinfo {author} {\bibfnamefont {A.~M.}\ \bibnamefont {Childs}},
  \bibinfo {author} {\bibfnamefont {R.}~\bibnamefont {Cleve}}, \bibinfo
  {author} {\bibfnamefont {R.}~\bibnamefont {Kothari}}, \ and\ \bibinfo
  {author} {\bibfnamefont {R.~D.}\ \bibnamefont {Somma}},\ }\href {\doibase
  10.1103/PhysRevLett.114.090502} {\bibfield  {journal} {\bibinfo  {journal}
  {Physical Review Letters}\ }\textbf {\bibinfo {volume} {114}},\ \bibinfo
  {pages} {090502} (\bibinfo {year} {2015})}\BibitemShut {NoStop}%
\bibitem [{\citenamefont {Cleve}\ \emph {et~al.}(2009)\citenamefont {Cleve},
  \citenamefont {Gottesman}, \citenamefont {Mosca}, \citenamefont {Somma},\
  and\ \citenamefont {Yonge-Mallo}}]{Cleve2009}%
  \BibitemOpen
  \bibfield  {author} {\bibinfo {author} {\bibfnamefont {R.}~\bibnamefont
  {Cleve}}, \bibinfo {author} {\bibfnamefont {D.}~\bibnamefont {Gottesman}},
  \bibinfo {author} {\bibfnamefont {M.}~\bibnamefont {Mosca}}, \bibinfo
  {author} {\bibfnamefont {R.~D.}\ \bibnamefont {Somma}}, \ and\ \bibinfo
  {author} {\bibfnamefont {D.}~\bibnamefont {Yonge-Mallo}},\ }in\ \href
  {\doibase 10.1145/1536414.1536471} {\emph {\bibinfo {booktitle} {Proceedings
  of the 41st Annual ACM Symposium on Theory of Computing - STOC '09}}}\
  (\bibinfo  {publisher} {ACM Press},\ \bibinfo {address} {New York, New York,
  USA},\ \bibinfo {year} {2009})\ p.\ \bibinfo {pages} {409}\BibitemShut
  {NoStop}%
\bibitem [{\citenamefont {Berry}\ \emph
  {et~al.}(2014{\natexlab{b}})\citenamefont {Berry}, \citenamefont {Cleve},\
  and\ \citenamefont {Gharibian}}]{Berry2014}%
  \BibitemOpen
  \bibfield  {author} {\bibinfo {author} {\bibfnamefont {D.~W.}\ \bibnamefont
  {Berry}}, \bibinfo {author} {\bibfnamefont {R.}~\bibnamefont {Cleve}}, \ and\
  \bibinfo {author} {\bibfnamefont {S.}~\bibnamefont {Gharibian}},\ }\href
  {http://dl.acm.org/citation.cfm?id=2600498.2600499} {\bibfield  {journal}
  {\bibinfo  {journal} {Quantum Information {\&} Computation}\ }\textbf
  {\bibinfo {volume} {14}},\ \bibinfo {pages} {1} (\bibinfo {year}
  {2014}{\natexlab{b}})}\BibitemShut {NoStop}%
\bibitem [{\citenamefont {Low}\ and\ \citenamefont {Chuang}(2017)}]{Low2017}%
  \BibitemOpen
  \bibfield  {author} {\bibinfo {author} {\bibfnamefont {G.~H.}\ \bibnamefont
  {Low}}\ and\ \bibinfo {author} {\bibfnamefont {I.~L.}\ \bibnamefont
  {Chuang}},\ }\href {\doibase 10.1103/PhysRevLett.118.010501} {\bibfield
  {journal} {\bibinfo  {journal} {Physical Review Letters}\ }\textbf {\bibinfo
  {volume} {118}},\ \bibinfo {pages} {010501} (\bibinfo {year}
  {2017})}\BibitemShut {NoStop}%
\bibitem [{\citenamefont {Novo}\ and\ \citenamefont {Berry}(2016)}]{Novo2016}%
  \BibitemOpen
  \bibfield  {author} {\bibinfo {author} {\bibfnamefont {L.}~\bibnamefont
  {Novo}}\ and\ \bibinfo {author} {\bibfnamefont {D.~W.}\ \bibnamefont
  {Berry}},\ }\href {https://arxiv.org/abs/1611.10033} {\bibfield  {journal}
  {\bibinfo  {journal} {e-print arXiv: 1611.10033}\ } (\bibinfo {year}
  {2016})}\BibitemShut {NoStop}%
\bibitem [{\citenamefont {Shende}\ \emph {et~al.}(2006)\citenamefont {Shende},
  \citenamefont {Bullock},\ and\ \citenamefont {Markov}}]{Shende2006}%
  \BibitemOpen
  \bibfield  {author} {\bibinfo {author} {\bibfnamefont {V.}~\bibnamefont
  {Shende}}, \bibinfo {author} {\bibfnamefont {S.}~\bibnamefont {Bullock}}, \
  and\ \bibinfo {author} {\bibfnamefont {I.}~\bibnamefont {Markov}},\ }\href
  {\doibase 10.1109/TCAD.2005.855930} {\bibfield  {journal} {\bibinfo
  {journal} {IEEE Transactions on Computer-Aided Design of Integrated Circuits
  and Systems}\ }\textbf {\bibinfo {volume} {25}},\ \bibinfo {pages} {1000}
  (\bibinfo {year} {2006})}\BibitemShut {NoStop}%
\bibitem [{\citenamefont {McClean}\ \emph
  {et~al.}(2017{\natexlab{a}})\citenamefont {McClean}, \citenamefont
  {Schwartz}, \citenamefont {Carter},\ and\ \citenamefont
  {de~Jong}}]{McClean2016}%
  \BibitemOpen
  \bibfield  {author} {\bibinfo {author} {\bibfnamefont {J.~R.}\ \bibnamefont
  {McClean}}, \bibinfo {author} {\bibfnamefont {M.~E.}\ \bibnamefont
  {Schwartz}}, \bibinfo {author} {\bibfnamefont {J.}~\bibnamefont {Carter}}, \
  and\ \bibinfo {author} {\bibfnamefont {W.~A.}\ \bibnamefont {de~Jong}},\
  }\href {https://doi.org/10.1103/PhysRevA.95.042308} {\bibfield  {journal}
  {\bibinfo  {journal} {Physical Review A}\ }\textbf {\bibinfo {volume} {95}},\
  \bibinfo {pages} {042308} (\bibinfo {year} {2017}{\natexlab{a}})}\BibitemShut
  {NoStop}%
\bibitem [{\citenamefont {Santagati}\ \emph {et~al.}(2016)\citenamefont
  {Santagati}, \citenamefont {Wang}, \citenamefont {Gentile}, \citenamefont
  {Paesani}, \citenamefont {Wiebe}, \citenamefont {McClean}, \citenamefont
  {Short}, \citenamefont {Shadbolt}, \citenamefont {Bonneau}, \citenamefont
  {Silverstone}, \citenamefont {Tew}, \citenamefont {Zhou}, \citenamefont
  {OBrien},\ and\ \citenamefont {Thompson}}]{Santagati2016}%
  \BibitemOpen
  \bibfield  {author} {\bibinfo {author} {\bibfnamefont {R.}~\bibnamefont
  {Santagati}}, \bibinfo {author} {\bibfnamefont {J.}~\bibnamefont {Wang}},
  \bibinfo {author} {\bibfnamefont {A.}~\bibnamefont {Gentile}}, \bibinfo
  {author} {\bibfnamefont {S.}~\bibnamefont {Paesani}}, \bibinfo {author}
  {\bibfnamefont {N.}~\bibnamefont {Wiebe}}, \bibinfo {author} {\bibfnamefont
  {J.}~\bibnamefont {McClean}}, \bibinfo {author} {\bibfnamefont
  {S.}~\bibnamefont {Short}}, \bibinfo {author} {\bibfnamefont
  {P.}~\bibnamefont {Shadbolt}}, \bibinfo {author} {\bibfnamefont
  {D.}~\bibnamefont {Bonneau}}, \bibinfo {author} {\bibfnamefont
  {J.}~\bibnamefont {Silverstone}}, \bibinfo {author} {\bibfnamefont
  {D.}~\bibnamefont {Tew}}, \bibinfo {author} {\bibfnamefont {X.}~\bibnamefont
  {Zhou}}, \bibinfo {author} {\bibfnamefont {J.}~\bibnamefont {OBrien}}, \ and\
  \bibinfo {author} {\bibfnamefont {M.}~\bibnamefont {Thompson}},\ }\href
  {http://arxiv.org/abs/1611.03511} {\bibfield  {journal} {\bibinfo  {journal}
  {e-print arXiv:1611.03511}\ } (\bibinfo {year} {2016})}\BibitemShut {NoStop}%
\bibitem [{\citenamefont {Rubin}\ \emph {et~al.}(2018)\citenamefont {Rubin},
  \citenamefont {Babbush},\ and\ \citenamefont {McClean}}]{Rubin2018}%
  \BibitemOpen
  \bibfield  {author} {\bibinfo {author} {\bibfnamefont {N.}~\bibnamefont
  {Rubin}}, \bibinfo {author} {\bibfnamefont {R.}~\bibnamefont {Babbush}}, \
  and\ \bibinfo {author} {\bibfnamefont {J.}~\bibnamefont {McClean}},\ }\href
  {https://arxiv.org/abs/1801.03524} {\bibfield  {journal} {\bibinfo  {journal}
  {e-print arXiv:1801.03524}\ } (\bibinfo {year} {2018})}\BibitemShut {NoStop}%
\bibitem [{\citenamefont {Giuliani}\ and\ \citenamefont
  {Vignale}(2005)}]{giuliani2005quantum}%
  \BibitemOpen
  \bibfield  {author} {\bibinfo {author} {\bibfnamefont {G.}~\bibnamefont
  {Giuliani}}\ and\ \bibinfo {author} {\bibfnamefont {G.}~\bibnamefont
  {Vignale}},\ }\href@noop {} {\emph {\bibinfo {title} {{Quantum Theory of the
  Eelectron Liquid}}}}\ (\bibinfo  {publisher} {Cambridge University Press},\
  \bibinfo {year} {2005})\BibitemShut {NoStop}%
\bibitem [{\citenamefont {Spivak}\ \emph {et~al.}(2010)\citenamefont {Spivak},
  \citenamefont {Kravchenko}, \citenamefont {Kivelson},\ and\ \citenamefont
  {Gao}}]{spivak2010colloquium}%
  \BibitemOpen
  \bibfield  {author} {\bibinfo {author} {\bibfnamefont {B.}~\bibnamefont
  {Spivak}}, \bibinfo {author} {\bibfnamefont {S.~V.}\ \bibnamefont
  {Kravchenko}}, \bibinfo {author} {\bibfnamefont {S.~A.}\ \bibnamefont
  {Kivelson}}, \ and\ \bibinfo {author} {\bibfnamefont {X.~P.~A.}\ \bibnamefont
  {Gao}},\ }\href
  {https://journals.aps.org/rmp/abstract/10.1103/RevModPhys.82.1743} {\bibfield
   {journal} {\bibinfo  {journal} {Reviews of Modern Physics}\ }\textbf
  {\bibinfo {volume} {82}},\ \bibinfo {pages} {1743} (\bibinfo {year}
  {2010})}\BibitemShut {NoStop}%
\bibitem [{\citenamefont {Brack}(1993)}]{brack1993physics}%
  \BibitemOpen
  \bibfield  {author} {\bibinfo {author} {\bibfnamefont {M.}~\bibnamefont
  {Brack}},\ }\href
  {https://journals.aps.org/rmp/pdf/10.1103/RevModPhys.65.677} {\bibfield
  {journal} {\bibinfo  {journal} {Reviews of Modern Physics}\ }\textbf
  {\bibinfo {volume} {65}},\ \bibinfo {pages} {677} (\bibinfo {year}
  {1993})}\BibitemShut {NoStop}%
\bibitem [{\citenamefont {Wigner}(1934)}]{wigner1934interaction}%
  \BibitemOpen
  \bibfield  {author} {\bibinfo {author} {\bibfnamefont {E.}~\bibnamefont
  {Wigner}},\ }\href
  {https://journals.aps.org/pr/abstract/10.1103/PhysRev.46.1002} {\bibfield
  {journal} {\bibinfo  {journal} {Physical Review}\ }\textbf {\bibinfo {volume}
  {46}},\ \bibinfo {pages} {1002} (\bibinfo {year} {1934})}\BibitemShut
  {NoStop}%
\bibitem [{\citenamefont {Stone}(1992)}]{stone1992quantum}%
  \BibitemOpen
  \bibfield  {author} {\bibinfo {author} {\bibfnamefont {M.}~\bibnamefont
  {Stone}},\ }\href@noop {} {\emph {\bibinfo {title} {{Quantum Hall Effect}}}}\
  (\bibinfo  {publisher} {World Scientific},\ \bibinfo {year}
  {1992})\BibitemShut {NoStop}%
\bibitem [{\citenamefont {Ceperley}\ and\ \citenamefont
  {Alder}(1980)}]{ceperley1980ground}%
  \BibitemOpen
  \bibfield  {author} {\bibinfo {author} {\bibfnamefont {D.~M.}\ \bibnamefont
  {Ceperley}}\ and\ \bibinfo {author} {\bibfnamefont {B.~J.}\ \bibnamefont
  {Alder}},\ }\href
  {https://journals.aps.org/prl/abstract/10.1103/PhysRevLett.45.566} {\bibfield
   {journal} {\bibinfo  {journal} {Physical Review Letters}\ }\textbf {\bibinfo
  {volume} {45}},\ \bibinfo {pages} {566} (\bibinfo {year} {1980})}\BibitemShut
  {NoStop}%
\bibitem [{\citenamefont {Vosko}\ \emph {et~al.}(1980)\citenamefont {Vosko},
  \citenamefont {Wilk},\ and\ \citenamefont {Nusair}}]{vosko1980accurate}%
  \BibitemOpen
  \bibfield  {author} {\bibinfo {author} {\bibfnamefont {S.~H.}\ \bibnamefont
  {Vosko}}, \bibinfo {author} {\bibfnamefont {L.}~\bibnamefont {Wilk}}, \ and\
  \bibinfo {author} {\bibfnamefont {M.}~\bibnamefont {Nusair}},\ }\href
  {http://www.nrcresearchpress.com/doi/abs/10.1139/p80-159} {\bibfield
  {journal} {\bibinfo  {journal} {Canadian Journal of physics}\ }\textbf
  {\bibinfo {volume} {58}},\ \bibinfo {pages} {1200} (\bibinfo {year}
  {1980})}\BibitemShut {NoStop}%
\bibitem [{\citenamefont {Perdew}\ and\ \citenamefont
  {Wang}(1992)}]{perdew1992accurate}%
  \BibitemOpen
  \bibfield  {author} {\bibinfo {author} {\bibfnamefont {J.~P.}\ \bibnamefont
  {Perdew}}\ and\ \bibinfo {author} {\bibfnamefont {Y.}~\bibnamefont {Wang}},\
  }\href {https://journals.aps.org/prb/abstract/10.1103/PhysRevB.45.13244}
  {\bibfield  {journal} {\bibinfo  {journal} {Physical Review B}\ }\textbf
  {\bibinfo {volume} {45}},\ \bibinfo {pages} {13244} (\bibinfo {year}
  {1992})}\BibitemShut {NoStop}%
\bibitem [{\citenamefont {Tanatar}\ and\ \citenamefont
  {Ceperley}(1989)}]{tanatar1989ground}%
  \BibitemOpen
  \bibfield  {author} {\bibinfo {author} {\bibfnamefont {B.}~\bibnamefont
  {Tanatar}}\ and\ \bibinfo {author} {\bibfnamefont {D.~M.}\ \bibnamefont
  {Ceperley}},\ }\href@noop {} {\bibfield  {journal} {\bibinfo  {journal}
  {Physical Review B}\ }\textbf {\bibinfo {volume} {39}},\ \bibinfo {pages}
  {5005} (\bibinfo {year} {1989})}\BibitemShut {NoStop}%
\bibitem [{\citenamefont {Zong}\ \emph {et~al.}(2002)\citenamefont {Zong},
  \citenamefont {Lin},\ and\ \citenamefont {Ceperley}}]{zong2002spin}%
  \BibitemOpen
  \bibfield  {author} {\bibinfo {author} {\bibfnamefont {F.~H.}\ \bibnamefont
  {Zong}}, \bibinfo {author} {\bibfnamefont {C.}~\bibnamefont {Lin}}, \ and\
  \bibinfo {author} {\bibfnamefont {D.~M.}\ \bibnamefont {Ceperley}},\ }\href
  {https://journals.aps.org/pre/abstract/10.1103/PhysRevE.66.036703} {\bibfield
   {journal} {\bibinfo  {journal} {Physical Review E}\ }\textbf {\bibinfo
  {volume} {66}},\ \bibinfo {pages} {36703} (\bibinfo {year}
  {2002})}\BibitemShut {NoStop}%
\bibitem [{\citenamefont {Attaccalite}\ \emph {et~al.}(2002)\citenamefont
  {Attaccalite}, \citenamefont {Moroni}, \citenamefont {Gori-Giorgi},\ and\
  \citenamefont {Bachelet}}]{attaccalite2002correlation}%
  \BibitemOpen
  \bibfield  {author} {\bibinfo {author} {\bibfnamefont {C.}~\bibnamefont
  {Attaccalite}}, \bibinfo {author} {\bibfnamefont {S.}~\bibnamefont {Moroni}},
  \bibinfo {author} {\bibfnamefont {P.}~\bibnamefont {Gori-Giorgi}}, \ and\
  \bibinfo {author} {\bibfnamefont {G.~B.}\ \bibnamefont {Bachelet}},\ }\href
  {https://journals.aps.org/prl/abstract/10.1103/PhysRevLett.88.256601}
  {\bibfield  {journal} {\bibinfo  {journal} {Physical Review Letters}\
  }\textbf {\bibinfo {volume} {88}},\ \bibinfo {pages} {256601} (\bibinfo
  {year} {2002})}\BibitemShut {NoStop}%
\bibitem [{\citenamefont {Spink}\ \emph {et~al.}(2013)\citenamefont {Spink},
  \citenamefont {Needs},\ and\ \citenamefont {Drummond}}]{spink2013quantum}%
  \BibitemOpen
  \bibfield  {author} {\bibinfo {author} {\bibfnamefont {G.~G.}\ \bibnamefont
  {Spink}}, \bibinfo {author} {\bibfnamefont {R.~J.}\ \bibnamefont {Needs}}, \
  and\ \bibinfo {author} {\bibfnamefont {N.~D.}\ \bibnamefont {Drummond}},\
  }\href {https://journals.aps.org/prb/abstract/10.1103/PhysRevB.88.085121}
  {\bibfield  {journal} {\bibinfo  {journal} {Physical Review B}\ }\textbf
  {\bibinfo {volume} {88}},\ \bibinfo {pages} {85121} (\bibinfo {year}
  {2013})}\BibitemShut {NoStop}%
\bibitem [{\citenamefont {Drummond}\ and\ \citenamefont
  {Needs}(2009)}]{drummond2009phase}%
  \BibitemOpen
  \bibfield  {author} {\bibinfo {author} {\bibfnamefont {N.~D.}\ \bibnamefont
  {Drummond}}\ and\ \bibinfo {author} {\bibfnamefont {R.~J.}\ \bibnamefont
  {Needs}},\ }\href
  {https://journals.aps.org/prl/abstract/10.1103/PhysRevLett.102.126402}
  {\bibfield  {journal} {\bibinfo  {journal} {Physical Review Letters}\
  }\textbf {\bibinfo {volume} {102}},\ \bibinfo {pages} {126402} (\bibinfo
  {year} {2009})}\BibitemShut {NoStop}%
\bibitem [{\citenamefont {Gell-Mann}\ and\ \citenamefont
  {Brueckner}(1957)}]{gell1957correlation}%
  \BibitemOpen
  \bibfield  {author} {\bibinfo {author} {\bibfnamefont {M.}~\bibnamefont
  {Gell-Mann}}\ and\ \bibinfo {author} {\bibfnamefont {K.~A.}\ \bibnamefont
  {Brueckner}},\ }\href
  {https://journals.aps.org/pr/abstract/10.1103/PhysRev.106.364} {\bibfield
  {journal} {\bibinfo  {journal} {Physical Review}\ }\textbf {\bibinfo {volume}
  {106}},\ \bibinfo {pages} {364} (\bibinfo {year} {1957})}\BibitemShut
  {NoStop}%
\bibitem [{\citenamefont {Freeman}(1977)}]{freeman1977coupled}%
  \BibitemOpen
  \bibfield  {author} {\bibinfo {author} {\bibfnamefont {D.~L.}\ \bibnamefont
  {Freeman}},\ }\href
  {https://journals.aps.org/prb/abstract/10.1103/PhysRevB.15.5512} {\bibfield
  {journal} {\bibinfo  {journal} {Physical Review B}\ }\textbf {\bibinfo
  {volume} {15}},\ \bibinfo {pages} {5512} (\bibinfo {year}
  {1977})}\BibitemShut {NoStop}%
\bibitem [{\citenamefont {Shepherd}\ \emph
  {et~al.}(2012{\natexlab{a}})\citenamefont {Shepherd}, \citenamefont {Booth},
  \citenamefont {Gr{\"{u}}neis},\ and\ \citenamefont
  {Alavi}}]{shepherd2012full}%
  \BibitemOpen
  \bibfield  {author} {\bibinfo {author} {\bibfnamefont {J.~J.}\ \bibnamefont
  {Shepherd}}, \bibinfo {author} {\bibfnamefont {G.}~\bibnamefont {Booth}},
  \bibinfo {author} {\bibfnamefont {A.}~\bibnamefont {Gr{\"{u}}neis}}, \ and\
  \bibinfo {author} {\bibfnamefont {A.}~\bibnamefont {Alavi}},\ }\href
  {https://journals.aps.org/prb/pdf/10.1103/PhysRevB.85.081103} {\bibfield
  {journal} {\bibinfo  {journal} {Physical Review B}\ }\textbf {\bibinfo
  {volume} {85}},\ \bibinfo {pages} {81103} (\bibinfo {year}
  {2012}{\natexlab{a}})}\BibitemShut {NoStop}%
\bibitem [{\citenamefont {Shepherd}\ \emph
  {et~al.}(2012{\natexlab{b}})\citenamefont {Shepherd}, \citenamefont {Booth},\
  and\ \citenamefont {Alavi}}]{shepherd2012investigation}%
  \BibitemOpen
  \bibfield  {author} {\bibinfo {author} {\bibfnamefont {J.~J.}\ \bibnamefont
  {Shepherd}}, \bibinfo {author} {\bibfnamefont {G.~H.}\ \bibnamefont {Booth}},
  \ and\ \bibinfo {author} {\bibfnamefont {A.}~\bibnamefont {Alavi}},\ }\href
  {http://aip.scitation.org/doi/abs/10.1063/1.4720076} {\bibfield  {journal}
  {\bibinfo  {journal} {The Journal of Chemical Physics}\ }\textbf {\bibinfo
  {volume} {136}},\ \bibinfo {pages} {244101} (\bibinfo {year}
  {2012}{\natexlab{b}})}\BibitemShut {NoStop}%
\bibitem [{\citenamefont {Wilson}\ and\ \citenamefont
  {Gyorffy}(1995)}]{wilson1995constrained}%
  \BibitemOpen
  \bibfield  {author} {\bibinfo {author} {\bibfnamefont {M.~T.}\ \bibnamefont
  {Wilson}}\ and\ \bibinfo {author} {\bibfnamefont {B.~L.}\ \bibnamefont
  {Gyorffy}},\ }\href
  {http://iopscience.iop.org/article/10.1088/0953-8984/7/28/001} {\bibfield
  {journal} {\bibinfo  {journal} {Journal of Physics: Condensed Matter}\
  }\textbf {\bibinfo {volume} {7}},\ \bibinfo {pages} {L371} (\bibinfo {year}
  {1995})}\BibitemShut {NoStop}%
\bibitem [{\citenamefont {Motta}\ \emph {et~al.}(2015)\citenamefont {Motta},
  \citenamefont {Galli}, \citenamefont {Moroni},\ and\ \citenamefont
  {Vitali}}]{motta2015imaginary}%
  \BibitemOpen
  \bibfield  {author} {\bibinfo {author} {\bibfnamefont {M.}~\bibnamefont
  {Motta}}, \bibinfo {author} {\bibfnamefont {D.~E.}\ \bibnamefont {Galli}},
  \bibinfo {author} {\bibfnamefont {S.}~\bibnamefont {Moroni}}, \ and\ \bibinfo
  {author} {\bibfnamefont {E.}~\bibnamefont {Vitali}},\ }\href
  {http://aip.scitation.org/doi/abs/10.1063/1.4934666} {\bibfield  {journal}
  {\bibinfo  {journal} {The Journal of Chemical Physics}\ }\textbf {\bibinfo
  {volume} {143}},\ \bibinfo {pages} {164108} (\bibinfo {year}
  {2015})}\BibitemShut {NoStop}%
\bibitem [{\citenamefont {Farhi}\ \emph {et~al.}(2014)\citenamefont {Farhi},
  \citenamefont {Goldstone},\ and\ \citenamefont {Gutmann}}]{Farhi2014}%
  \BibitemOpen
  \bibfield  {author} {\bibinfo {author} {\bibfnamefont {E.}~\bibnamefont
  {Farhi}}, \bibinfo {author} {\bibfnamefont {J.}~\bibnamefont {Goldstone}}, \
  and\ \bibinfo {author} {\bibfnamefont {S.}~\bibnamefont {Gutmann}},\ }\href
  {http://arxiv.org/abs/1411.4028} {\bibfield  {journal} {\bibinfo  {journal}
  {e-print arXiv:1411.4028}\ } (\bibinfo {year} {2014})}\BibitemShut {NoStop}%
\bibitem [{\citenamefont {McClean}\ \emph
  {et~al.}(2017{\natexlab{b}})\citenamefont {McClean}, \citenamefont
  {Kivlichan}, \citenamefont {Sung}, \citenamefont {Steiger}, \citenamefont
  {Cao}, \citenamefont {Dai}, \citenamefont {Fried}, \citenamefont {Gidney},
  \citenamefont {H{\"{a}}ner}, \citenamefont {Hardikar}, \citenamefont
  {Havl{\'{i}}{\v{c}}ek}, \citenamefont {Huang}, \citenamefont {Jiang},
  \citenamefont {Neeley}, \citenamefont {Romero}, \citenamefont {Rubin},
  \citenamefont {Sawaya}, \citenamefont {Setia}, \citenamefont {Sim},
  \citenamefont {Sun}, \citenamefont {Zhang},\ and\ \citenamefont
  {Babbush}}]{openfermion}%
  \BibitemOpen
  \bibfield  {author} {\bibinfo {author} {\bibfnamefont {J.~R.}\ \bibnamefont
  {McClean}}, \bibinfo {author} {\bibfnamefont {I.~D.}\ \bibnamefont
  {Kivlichan}}, \bibinfo {author} {\bibfnamefont {K.~J.}\ \bibnamefont {Sung}},
  \bibinfo {author} {\bibfnamefont {D.~S.}\ \bibnamefont {Steiger}}, \bibinfo
  {author} {\bibfnamefont {Y.}~\bibnamefont {Cao}}, \bibinfo {author}
  {\bibfnamefont {C.}~\bibnamefont {Dai}}, \bibinfo {author} {\bibfnamefont
  {E.~S.}\ \bibnamefont {Fried}}, \bibinfo {author} {\bibfnamefont
  {C.}~\bibnamefont {Gidney}}, \bibinfo {author} {\bibfnamefont
  {T.}~\bibnamefont {H{\"{a}}ner}}, \bibinfo {author} {\bibfnamefont
  {T.}~\bibnamefont {Hardikar}}, \bibinfo {author} {\bibfnamefont
  {V.}~\bibnamefont {Havl{\'{i}}{\v{c}}ek}}, \bibinfo {author} {\bibfnamefont
  {C.}~\bibnamefont {Huang}}, \bibinfo {author} {\bibfnamefont
  {Z.}~\bibnamefont {Jiang}}, \bibinfo {author} {\bibfnamefont
  {M.}~\bibnamefont {Neeley}}, \bibinfo {author} {\bibfnamefont
  {J.}~\bibnamefont {Romero}}, \bibinfo {author} {\bibfnamefont
  {N.}~\bibnamefont {Rubin}}, \bibinfo {author} {\bibfnamefont {N.~P.~D.}\
  \bibnamefont {Sawaya}}, \bibinfo {author} {\bibfnamefont {K.}~\bibnamefont
  {Setia}}, \bibinfo {author} {\bibfnamefont {S.}~\bibnamefont {Sim}}, \bibinfo
  {author} {\bibfnamefont {W.}~\bibnamefont {Sun}}, \bibinfo {author}
  {\bibfnamefont {F.}~\bibnamefont {Zhang}}, \ and\ \bibinfo {author}
  {\bibfnamefont {R.}~\bibnamefont {Babbush}},\ }\href
  {http://arxiv.org/abs/1710.07629} {\bibfield  {journal} {\bibinfo  {journal}
  {e-print arXiv:1710.07629}\ } (\bibinfo {year}
  {2017}{\natexlab{b}})}\BibitemShut {NoStop}%
\bibitem [{\citenamefont {Ciftja}(2011)}]{Ciftja2011}%
  \BibitemOpen
  \bibfield  {author} {\bibinfo {author} {\bibfnamefont {O.}~\bibnamefont
  {Ciftja}},\ }\href {\doibase 10.1016/j.physleta.2010.12.029} {\bibfield
  {journal} {\bibinfo  {journal} {Physics Letters A}\ }\textbf {\bibinfo
  {volume} {375}},\ \bibinfo {pages} {766} (\bibinfo {year}
  {2011})}\BibitemShut {NoStop}%
\bibitem [{\citenamefont {Kutzelnigg}\ and\ \citenamefont
  {Morgan}(1992)}]{kutzelnigg1992rates}%
  \BibitemOpen
  \bibfield  {author} {\bibinfo {author} {\bibfnamefont {W.}~\bibnamefont
  {Kutzelnigg}}\ and\ \bibinfo {author} {\bibfnamefont {J.~D.}\ \bibnamefont
  {Morgan}},\ }\href {\doibase 10.1063/1.462811} {\bibfield  {journal}
  {\bibinfo  {journal} {The Journal of Chemical Physics}\ }\textbf {\bibinfo
  {volume} {96}},\ \bibinfo {pages} {4484} (\bibinfo {year}
  {1992})}\BibitemShut {NoStop}%
\bibitem [{\citenamefont {Shepherd}\ \emph
  {et~al.}(2012{\natexlab{c}})\citenamefont {Shepherd}, \citenamefont
  {Gr{\"{u}}neis}, \citenamefont {Booth}, \citenamefont {Kresse},\ and\
  \citenamefont {Alavi}}]{shepherd2012convergence}%
  \BibitemOpen
  \bibfield  {author} {\bibinfo {author} {\bibfnamefont {J.~J.}\ \bibnamefont
  {Shepherd}}, \bibinfo {author} {\bibfnamefont {A.}~\bibnamefont
  {Gr{\"{u}}neis}}, \bibinfo {author} {\bibfnamefont {G.~H.}\ \bibnamefont
  {Booth}}, \bibinfo {author} {\bibfnamefont {G.}~\bibnamefont {Kresse}}, \
  and\ \bibinfo {author} {\bibfnamefont {A.}~\bibnamefont {Alavi}},\ }\href
  {\doibase 10.1103/PhysRevB.86.035111} {\bibfield  {journal} {\bibinfo
  {journal} {Physical Review B}\ }\textbf {\bibinfo {volume} {86}},\ \bibinfo
  {pages} {035111} (\bibinfo {year} {2012}{\natexlab{c}})}\BibitemShut
  {NoStop}%
\bibitem [{\citenamefont {Kato}(1957)}]{Kato1957}%
  \BibitemOpen
  \bibfield  {author} {\bibinfo {author} {\bibfnamefont {T.}~\bibnamefont
  {Kato}},\ }\href {\doibase 10.1002/cpa.3160100201} {\bibfield  {journal}
  {\bibinfo  {journal} {Communications on Pure and Applied Mathematics}\
  }\textbf {\bibinfo {volume} {10}},\ \bibinfo {pages} {151} (\bibinfo {year}
  {1957})}\BibitemShut {NoStop}%
\bibitem [{\citenamefont {Helgaker}\ \emph {et~al.}(1998)\citenamefont
  {Helgaker}, \citenamefont {Klopper}, \citenamefont {Koch},\ and\
  \citenamefont {Noga}}]{helgaker1997basis}%
  \BibitemOpen
  \bibfield  {author} {\bibinfo {author} {\bibfnamefont {T.}~\bibnamefont
  {Helgaker}}, \bibinfo {author} {\bibfnamefont {W.}~\bibnamefont {Klopper}},
  \bibinfo {author} {\bibfnamefont {H.}~\bibnamefont {Koch}}, \ and\ \bibinfo
  {author} {\bibfnamefont {J.}~\bibnamefont {Noga}},\ }\href {\doibase
  10.1063/1.473863} {\bibfield  {journal} {\bibinfo  {journal} {Journal of
  Chemical Physics}\ }\textbf {\bibinfo {volume} {106}},\ \bibinfo {pages}
  {10.1063/1.473863} (\bibinfo {year} {1998})}\BibitemShut {NoStop}%
\bibitem [{\citenamefont {Klopper}(1995)}]{klopper1995ab}%
  \BibitemOpen
  \bibfield  {author} {\bibinfo {author} {\bibfnamefont {W.}~\bibnamefont
  {Klopper}},\ }\href {\doibase 10.1063/1.469351} {\bibfield  {journal}
  {\bibinfo  {journal} {The Journal of Chemical Physics}\ }\textbf {\bibinfo
  {volume} {102}},\ \bibinfo {pages} {6168} (\bibinfo {year}
  {1995})}\BibitemShut {NoStop}%
\bibitem [{\citenamefont {Halkier}\ \emph {et~al.}(1998)\citenamefont
  {Halkier}, \citenamefont {Helgaker}, \citenamefont {J{\o}rgensen},
  \citenamefont {Klopper}, \citenamefont {Koch}, \citenamefont {Olsen},\ and\
  \citenamefont {Wilson}}]{Halkier1998}%
  \BibitemOpen
  \bibfield  {author} {\bibinfo {author} {\bibfnamefont {A.}~\bibnamefont
  {Halkier}}, \bibinfo {author} {\bibfnamefont {T.}~\bibnamefont {Helgaker}},
  \bibinfo {author} {\bibfnamefont {P.}~\bibnamefont {J{\o}rgensen}}, \bibinfo
  {author} {\bibfnamefont {W.}~\bibnamefont {Klopper}}, \bibinfo {author}
  {\bibfnamefont {H.}~\bibnamefont {Koch}}, \bibinfo {author} {\bibfnamefont
  {J.}~\bibnamefont {Olsen}}, \ and\ \bibinfo {author} {\bibfnamefont {A.~K.}\
  \bibnamefont {Wilson}},\ }\href {\doibase 10.1016/S0009-2614(98)00111-0}
  {\bibfield  {journal} {\bibinfo  {journal} {Chemical Physics Letters}\
  }\textbf {\bibinfo {volume} {286}},\ \bibinfo {pages} {243} (\bibinfo {year}
  {1998})}\BibitemShut {NoStop}%
\bibitem [{\citenamefont {Harl}\ and\ \citenamefont {Kresse}(2008)}]{Harl2008}%
  \BibitemOpen
  \bibfield  {author} {\bibinfo {author} {\bibfnamefont {J.}~\bibnamefont
  {Harl}}\ and\ \bibinfo {author} {\bibfnamefont {G.}~\bibnamefont {Kresse}},\
  }\href {\doibase 10.1103/PhysRevB.77.045136} {\bibfield  {journal} {\bibinfo
  {journal} {Physical Review B}\ }\textbf {\bibinfo {volume} {77}},\ \bibinfo
  {pages} {045136} (\bibinfo {year} {2008})}\BibitemShut {NoStop}%
\bibitem [{\citenamefont {Klopper}\ and\ \citenamefont
  {Kutzelnigg}(1986)}]{klopper1986gaussian}%
  \BibitemOpen
  \bibfield  {author} {\bibinfo {author} {\bibfnamefont {W.}~\bibnamefont
  {Klopper}}\ and\ \bibinfo {author} {\bibfnamefont {W.}~\bibnamefont
  {Kutzelnigg}},\ }\href {\doibase 10.1016/0166-1280(86)80068-9} {\bibfield
  {journal} {\bibinfo  {journal} {Journal of Molecular Structure: THEOCHEM}\
  }\textbf {\bibinfo {volume} {135}},\ \bibinfo {pages} {339} (\bibinfo {year}
  {1986})}\BibitemShut {NoStop}%
\bibitem [{\citenamefont {Kutzelnigg}(1994)}]{kutzelnigg1994theory}%
  \BibitemOpen
  \bibfield  {author} {\bibinfo {author} {\bibfnamefont {W.}~\bibnamefont
  {Kutzelnigg}},\ }\href {\doibase 10.1002/qua.560510612} {\bibfield  {journal}
  {\bibinfo  {journal} {International Journal of Quantum Chemistry}\ }\textbf
  {\bibinfo {volume} {51}},\ \bibinfo {pages} {447} (\bibinfo {year}
  {1994})}\BibitemShut {NoStop}%
\bibitem [{\citenamefont {Pahl}\ and\ \citenamefont
  {Handy}(2002)}]{pahl2002plane}%
  \BibitemOpen
  \bibfield  {author} {\bibinfo {author} {\bibfnamefont {F.~A.}\ \bibnamefont
  {Pahl}}\ and\ \bibinfo {author} {\bibfnamefont {N.~C.}\ \bibnamefont
  {Handy}},\ }\href {\doibase 10.1080/00268970210133206} {\bibfield  {journal}
  {\bibinfo  {journal} {Molecular Physics}\ }\textbf {\bibinfo {volume}
  {100}},\ \bibinfo {pages} {3199} (\bibinfo {year} {2002})}\BibitemShut
  {NoStop}%
\bibitem [{\citenamefont {Lippert}\ \emph {et~al.}(1997)\citenamefont
  {Lippert}, \citenamefont {Hutter},\ and\ \citenamefont
  {Parinello}}]{lippert1997hybrid}%
  \BibitemOpen
  \bibfield  {author} {\bibinfo {author} {\bibfnamefont {G.}~\bibnamefont
  {Lippert}}, \bibinfo {author} {\bibfnamefont {J.}~\bibnamefont {Hutter}}, \
  and\ \bibinfo {author} {\bibfnamefont {M.}~\bibnamefont {Parinello}},\ }\href
  {\doibase 10.1080/002689797170220} {\bibfield  {journal} {\bibinfo  {journal}
  {Molecular Physics}\ }\textbf {\bibinfo {volume} {92}},\ \bibinfo {pages}
  {477} (\bibinfo {year} {1997})}\BibitemShut {NoStop}%
\bibitem [{\citenamefont {Sun}\ \emph {et~al.}(2017)\citenamefont {Sun},
  \citenamefont {Berkelbach}, \citenamefont {McClain},\ and\ \citenamefont
  {Chan}}]{sun2017gaussian}%
  \BibitemOpen
  \bibfield  {author} {\bibinfo {author} {\bibfnamefont {Q.}~\bibnamefont
  {Sun}}, \bibinfo {author} {\bibfnamefont {T.~C.}\ \bibnamefont {Berkelbach}},
  \bibinfo {author} {\bibfnamefont {J.~D.}\ \bibnamefont {McClain}}, \ and\
  \bibinfo {author} {\bibfnamefont {G.~K.-L.}\ \bibnamefont {Chan}},\ }\href
  {http://arxiv.org/abs/1707.07114} {\bibfield  {journal} {\bibinfo  {journal}
  {e-print arXiv:1707.07114}\ } (\bibinfo {year} {2017})}\BibitemShut {NoStop}%
\bibitem [{\citenamefont {Marx}\ and\ \citenamefont
  {Hutter}(2009)}]{marx2009ab}%
  \BibitemOpen
  \bibfield  {author} {\bibinfo {author} {\bibfnamefont {D.}~\bibnamefont
  {Marx}}\ and\ \bibinfo {author} {\bibfnamefont {J.}~\bibnamefont {Hutter}},\
  }\href@noop {} {\emph {\bibinfo {title} {{Ab Initio Molecular Dynamics: Basic
  Theory and Advanced Methods}}}}\ (\bibinfo  {publisher} {Cambridge University
  Press},\ \bibinfo {year} {2009})\BibitemShut {NoStop}%
\bibitem [{\citenamefont {Rozzi}\ \emph {et~al.}(2006)\citenamefont {Rozzi},
  \citenamefont {Varsano}, \citenamefont {Marini}, \citenamefont {Gross},\ and\
  \citenamefont {Rubio}}]{rozzi2006exact}%
  \BibitemOpen
  \bibfield  {author} {\bibinfo {author} {\bibfnamefont {C.~A.}\ \bibnamefont
  {Rozzi}}, \bibinfo {author} {\bibfnamefont {D.}~\bibnamefont {Varsano}},
  \bibinfo {author} {\bibfnamefont {A.}~\bibnamefont {Marini}}, \bibinfo
  {author} {\bibfnamefont {E.~K.~U.}\ \bibnamefont {Gross}}, \ and\ \bibinfo
  {author} {\bibfnamefont {A.}~\bibnamefont {Rubio}},\ }\href
  {https://journals.aps.org/prb/pdf/10.1103/PhysRevB.73.205119} {\bibfield
  {journal} {\bibinfo  {journal} {Physical Review B}\ }\textbf {\bibinfo
  {volume} {73}},\ \bibinfo {pages} {205119} (\bibinfo {year}
  {2006})}\BibitemShut {NoStop}%
\bibitem [{\citenamefont {Sundararaman}\ and\ \citenamefont
  {Arias}(2013)}]{sundararaman2013regularization}%
  \BibitemOpen
  \bibfield  {author} {\bibinfo {author} {\bibfnamefont {R.}~\bibnamefont
  {Sundararaman}}\ and\ \bibinfo {author} {\bibfnamefont {T.~A.}\ \bibnamefont
  {Arias}},\ }\href
  {https://journals.aps.org/prb/pdf/10.1103/PhysRevB.87.165122} {\bibfield
  {journal} {\bibinfo  {journal} {Physical Review B}\ }\textbf {\bibinfo
  {volume} {87}},\ \bibinfo {pages} {165122} (\bibinfo {year}
  {2013})}\BibitemShut {NoStop}%
\bibitem [{\citenamefont {Ismail-Beigi}(2006)}]{ismail2006truncation}%
  \BibitemOpen
  \bibfield  {author} {\bibinfo {author} {\bibfnamefont {S.}~\bibnamefont
  {Ismail-Beigi}},\ }\href
  {https://journals.aps.org/prb/pdf/10.1103/PhysRevB.73.233103} {\bibfield
  {journal} {\bibinfo  {journal} {Physical Review B}\ }\textbf {\bibinfo
  {volume} {73}},\ \bibinfo {pages} {233103} (\bibinfo {year}
  {2006})}\BibitemShut {NoStop}%
\end{thebibliography}%

\appendix

\section{Finite Difference Discretization with $N^2$ Terms}
\label{app:finite_diff}

An alternative to the Galerkin discretization derived from the weak form of the Schroedinger equation is a finite-difference formulation, which is associated with the strong formulation of the differential equation. In the past, many works have explored the use of finite-difference discretizations (either implicity or explicitly) \cite{Boghosian1998,Boghosian1998b,Meyer1996,Meyer1997,Wiesner1996} although never before in a second quantized simulation of an electronic structure system. Still, discretizing these systems in this way is straightforward and follows from this past work. Assuming a uniform partitioning of space, the value of position operators are assigned to a set of grid points with values determined by the position of the grid point. Generalizations to non-uniform grid spacings are also possible.

One might consider this approach analogous to choosing basis functions of the form $\phi_i(r) = \delta(r - r_i)$ in the Galerkin formulation, where $\delta$ is the Dirac delta function and $r_i$ is the location of a grid point, but with several important differences.  In this case, the derivative operators are discretized in an entirely different way, using a finite-difference stencil, rather than integration over such basis functions.  This follows from the discussion of functions with disjoint support in the main text. Moreover, while an inner product in the Galerkin formulation between two functions $\ket{\psi} = \sum_i b_i \ket{\psi_i}$ and $\ket{\phi} = \sum_i c_i \ket{\phi_i}$ has a natural definition induced by the definition of the inner product on the space of $\{\ket{\phi_i}\}$ given by $\braket{\psi}{\phi} = \sum_{i, j} b_i^* c_j \braket{\psi_i}{\phi_j}$, the same is not true in the finite difference scheme.  In this case, one must choose a definition that is consistent with some sensible measure on the space.  

To see how these differences are formulated in practice, we will consider an example.  Assume a uniform volume partition for the system that consists of $N = M^3 $ orbitals which are each indexed by four indices, $x \in \mathbb{Z} \in \left[0, M \right)$, $y \in \mathbb{Z} \in \left[0, M \right)$, $z \in \mathbb{Z} \in \left[0, M \right)$. In this case, the kinetic energy operator may be expressed using a finite-difference 7-point stencil for the Laplacian,
\begin{align}
- \frac{\nabla^2}{2} \phi(x,y,z) = \frac{1}{2\,h^2} \sum_{x,y,z} & \left[ 6\, \phi{(x,y,z)} - \phi(x-1,y,z) - \phi(x + 1,y,z) - \phi(x,y - 1,z) \right. \\
& \left. - \phi(x,y + 1,z)
- \phi(x,y,z-1) - \phi(x,y,z+1) \right]\nonumber
\end{align}
where $h$ is the spacing between grid points.  Central difference stencils of this type, utilizing three points along each axis, have errors that scale as ${\cal O}(h^2)$ in their representation of the derivative operator. In this case, we can see that the kinetic energy operator has exactly $ 7\, N $ terms, and note that other size stencils may be used to reduce the discretization error. The most accurate stencil, which extends across the entire length of the simulated system, would still only have ${\cal O}(N^2)$ terms. An important difference to note between this choice and the Galerkin discretization is that error in expressing the finite-difference formulation of the kinetic energy operator can lead to sub-variational energies in principle. However, this is easily managed in practice with reasonably sized stencils and spatial partitions.

With a uniform grid of points positioned as above and spaced by the same distance $h$ along each axis, we may use the rectangular rule to define an inner product on single particle functions.  In this scheme a single particle function $\ket{\phi}$ is defined by its values at the grid points $\phi(x, y, z, \sigma)$. Note that we will now also consider the spin degree of freedom $\sigma = \{\uparrow, \downarrow\}$. We can define the inner product between two single particle functions $\ket{\psi}$ and $\ket{\phi}$ explicitly as
\begin{align}
\braket{\psi}{\phi} = h^3 \sum_{x, y, z, \sigma} \psi(x, y, z, \sigma)^* \phi(x, y, z, \sigma).
\end{align}
and label individual points $\ket{\phi_{x,y,z,\sigma}}$ such that $\braket{\phi_{x,y,z,\sigma}}{\phi_{x',y',z',\sigma'}} = \delta_{x x'} \delta_{y y'} \delta_{z z'} \delta_{\sigma \sigma'}$ and $\braket{x,y,z,\sigma}{\phi_{x',y',z',\sigma'}} = \phi(x, y, z, \sigma)$. With these definitions of the kinetic energy and inner product, we can express the second quantized coefficients for one-body operators in the following way. If we define compound indices $p=(x_p,y_p,z_p,\sigma_p)$ with corresponding Kronecker delta functions $\delta_{pq} = \delta_{x_p x_q} \delta_{y_p y_q} \delta_{z_p z_q} \delta_{\sigma_p \sigma_q}$
\begin{align}
h_{pq} = \frac{h}{2}\left(6 \, \delta_{pq} - \sum_{\alpha \in \{x, y, z\}} \left( \delta_{pq+_\alpha} + \delta_{pq-_\alpha} \right) \right) + h^3 \, U(p) \, \delta_{pq}
\end{align}
where we have used the shorthand notation $q+_\alpha$ to indicate shifting the $\alpha$ axis by 1 lattice point.
We define the standard number operator as $n_{x,y,z,\sigma}= a_{x,y,z,\sigma}^\dagger a_{x,y,z,\sigma}$. Similarly, the coefficients of the two-body potential become
\begin{align}
h_{pqrs} = \delta_{ps} \delta_{qr} \left[\frac{h^3}{|p_{x,y,z} - q_{x,y,z}|}(1 - \delta_{pq+_\sigma} - \delta_{pq-_\sigma}) + \lambda (\delta_{pq+_\sigma} + \delta_{pq-_\sigma}) \right],
\end{align}
where we have separated the same-point repulsion into a second term characterized by $\lambda$.  It follows that the two-body part of the operator may also be written as
\begin{equation}
V = \lambda \sum_{x,y,z} n_{(x,y,z, \uparrow)} n_{(x,y,z, \downarrow)}
+ \frac{h^3}{2}\sum_{\substack{(x, y, z) \neq (x', y', z') \\ \sigma, \sigma'}} \frac{n_{(x,y,z,\sigma)} n_{(x',y',z',\sigma')}}{\sqrt{(x - x')^2 + (y - y')^2 + (z - z')^2 }}
\end{equation}
where $\lambda$ scales the repulsive interaction between electrons of opposite spin when they occupy the same spatial orbital. We can see that there are $N / 2$ terms on the left and $N (N - 1) / 2$ unique terms on the right, for a total of $N^2 / 2$ terms in the two-body potential. While the exact value of $\lambda$ does not matter in the continuum limit, the chosen value determines the convergence of basis set discretization error. The approximation we advocate here is to treat $\lambda$ as the mean repulsion between a uniform charge distribution in the cell, i.e.
\begin{align}
\lambda & =  \frac{1}{2} \int \frac{dx_1 \, dx_2 \, dy_1 \, dy_2 \, dz_1 \, dz_2 }{\sqrt{\left(x_1 - x_2 \right)^2 + \left(y_1 - y_2\right)^2 + \left(z_1 - z_2\right)^2}} \\
& = \frac{1}{h} \left(\frac{1 + \sqrt{2} - 2 \sqrt{3}}{5} - \frac{\pi}{3} + \log \left[\left(1 + \sqrt{2}\right)\left(2 + \sqrt{3}\right)\right]\right)
\approx \frac{0.941156}{h}.\nonumber
\end{align}

Note that the analytical evaluation of this integral is provided as the main result of \cite{Ciftja2011}. Note further that one could also choose to evaluate the long-range Coulomb interaction between orbitals $p$ and $q$ using integrals which assume uniform charge density within the cell. This choice may lead to slightly different convergence behavior but the results will certainly agree in the continuum limit. Putting these results together, we arrive at the second quantized position space Hamiltonian in a finite-difference representation,
\begin{align}
H =  \, &  \, \frac{h}{2} \sum_{x,y,z,\sigma} \left[ 6 \, n_{(x,y,z,\sigma)}
- a_{(x-1 ,y,z,\sigma)}^\dagger a_{(x,y,z,\sigma)} - a_{(x+1,y,z,\sigma)}^\dagger a_{(x,y,z,\sigma)} \right. \\
& \left.- \, a_{(x,y-1,z,\sigma)}^\dagger a_{(x,y,z,\sigma)} - a_{(x,y+1,z,\sigma)}^\dagger a_{(x,y,z,\sigma)}
- a_{(x,y,z-1,\sigma)}^\dagger a_{(x,y,z,\sigma)} - a_{(x,y,z+1 ,\sigma)}^\dagger a_{(x,y,z,\sigma)}\right] \nonumber\\
& + \frac{h^3}{2} \sum_{\substack{(x, y, z) \neq (x', y', z') \\ \sigma, \sigma'}} \frac{n_{(x,y,z,\sigma)} n_{(x',y',z',\sigma')}}{\sqrt{(x - x')^2 + (y - y')^2 + (z - z')^2 }} \notag \\
&+ h^3 \sum_{x,y,z,\sigma} n_{(x,y,z,\sigma)} U(x,y,z,\sigma)
+ \frac{0.941156}{h} \sum_{x,y,z} n_{(x,y,z, \uparrow)} n_{(x,y,z, \downarrow)} .\nonumber
 \end{align}
which implicitly defines both the one-body and two-body coefficients, $h_{pq}$ and $h_{pqrs}$ for the second quantized Hamiltonian, noting that some normal ordering may be required to bring it to its final form.  This Hamiltonian contains strictly ${\cal O}(N^2)$ terms, as desired. While we do not use this result for any of the algorithms of this paper, understanding the finite-difference formulation on a grid is helpful to appreciate differences with the plane wave dual basis. Furthermore, it is possible that this form of the Hamiltonian has advantages that could make it easier to simulate in the context of future quantum algorithms, perhaps based on 1-sparse decompositions of the finite-difference stencil.

\section{Electronic Structure Hamiltonian in Plane Wave Basis}
\label{app:plane_waves}

In this section we will review analytical forms for the electronic structure Hamiltonian in a basis of plane waves of the following form in three dimensions,
\begin{equation}
\varphi_\nu \left(r\right) = \sqrt{\frac{1}{\Omega}} e^{i \, k_\nu \cdot r}
\quad \quad \quad
k_\nu = \frac{2 \pi \nu}{\Omega^{1/3}}
\quad \quad \quad 
\nu \in \left[-N^{1/3}, N^{1/3}\right]^3 \subset \mathbb{Z}^3.
\end{equation}
The length scale of our basis is parameterized by the cell volume $\Omega$.

The kinetic energy operator is a one-body operator. The coefficients of the kinetic energy operator $T$ are
\begin{equation}
\int_\Omega dr^3 \, \varphi_p^*\left(r\right) \left(\frac{-\nabla^2}{2}\right) \varphi_q\left(r\right) =  \frac{p^2}{2}\delta\left(p,q\right).
\end{equation}
Thus,
\begin{equation}
\label{eq:pw_t}
T = \frac{1}{2} \sum_{\nu, \sigma} k_\nu^2 \, c_{\nu,\sigma}^\dagger c_{\nu,\sigma}
\end{equation}
where $c^\dagger_\nu$ and $c_\nu$ are canonical fermionic raising and lowering operators and $\sigma \in \{\uparrow, \downarrow\}$ represents spin. Clearly, this operator is diagonal since plane waves are eigenstates of the momentum operator.

When working with plane waves it is convenient to define the Fourier transform of the Coulomb potential,
\begin{equation}
V_\nu =  \frac{1}{\Omega} \int_\Omega dr^3 \, V(r) \, e^{-i \, k_\nu \cdot r} = \frac{4\pi}{k_\nu^2 \, \Omega}
\end{equation}
and the inverse of this Fourier transform, a solution to Poisson's equation with periodic boundary conditions:
\begin{equation}
V\left(r\right) = \sum_\nu V_\nu \, e^{i \, k_\nu \cdot r}.
\label{eq:period_col}
\end{equation}
Note that there would appear to be a singularity in this periodized representation of the Coulomb operator when $k_\nu = 0$; however, whenever treating a charge-neutral system the singularities from interactions with the positive and negative charges cancel to contribute only a finite constant which depends on the cell shape. This factor can be computed using an Ewald sum, shown explicitly in Appendix F of \cite{Martin2004}.

The external potential arising from interactions with nuclei can be expressed as
\begin{equation}
U \left(r\right) = -\sum_{j} \zeta_j V\left(r - R_j\right) = -\sum_{j,\nu} \zeta_j \, V_\nu \, e^{i \, k_\nu \cdot \left(r - R_j\right)} 
\end{equation} 
where nuclei $j$ has position $R_j$ and atomic number $\zeta_j$. With this we compute the external potential coefficients as
\begin{align}
& \int_\Omega dr^3 \, \varphi_p^*\left(r\right) U\left(r\right) \varphi_q\left(r\right) =
\int_\Omega dr^3 \, \varphi_p^*\left(r\right) \left(- \sum_{j,\nu} \zeta_j \, V_\nu \, e^{i \, k_\nu \cdot \left(r - R_j\right)} \right) \varphi_q\left(r\right)\\
& =- \sum_{j,\nu} \zeta_j \, V_\nu \, e^{-i \, k_\nu \cdot R_j} \int_{\Omega} dr^3 \, \varphi_p^*\left(r\right) e^{i \, k_\nu \cdot r} \varphi_q\left(r\right)
 = - \sum_{j} \zeta_j \, V_{p-q} \, e^{-i \, k_{p-q} \cdot R_j}
=  -\frac{4 \pi}{\Omega} \sum_{j} \zeta_j \frac{e^{i \, k_{q-p} \cdot R_j}}{k_{p-q}^2}.\nonumber
\end{align}
Accordingly, we can write the external potential operator as
\begin{equation}
\label{eq:pw_u}
U = -\frac{4 \pi}{\Omega} \sum_{\substack{p \neq q \\ j,\sigma}} \left(\zeta_j \frac{e^{i \, k_{q-p} \cdot R_j}}{k_{p-q}^2}\right) c^\dagger_{p, \sigma} c_{q, \sigma}
\end{equation}
where the condition $p \neq q$ is equivalent to dropping the zero momenta mode of the external potential which, as explained earlier, cancels with the zero mode of the electron-electron interaction. As explained in the main text, we choose to alias the momenta modes so that, in this case, $k_{p-q}$ is always contained within the original set of plane waves. This introduces a slight deviation from the Galerkin formulation and corresponds to evaluating matrix elements by  $N$ evenly spaced samples on a real space grid.
Doubling the quadrature spacing
would yield an exact evaluation but without the aliasing (dualling) approximation we would not obtain the convenient exactly diagonal form of the potential matrix elements in the the dual basis that we rely upon.

The two-electron interaction coefficients are obtained from the integrals,
\begin{align}
& \int_\Omega dr_1^3 \, dr_2^3 \, \varphi_p^*\left(r_1\right) \varphi_q^*\left(r_2\right) V\left(r_1 - r_2\right) \varphi_r\left(r_2\right) \varphi_s\left(r_1\right) \\
& = \int_\Omega dr_1^3 \, dr_2^3 \, \varphi_p^*\left(r_1\right) \varphi_q^*\left(r_2\right) \left(\sum_\nu V_\nu \, e^{i \, k_\nu \cdot \left(r_1 - r_2\right)} \right) \varphi_r\left(r_2\right) \varphi_s\left(r_1\right)\nonumber\\
& = \sum_\nu V_\nu \left(\int_\Omega dr_1^3 \,  \varphi_p^*\left(r_1\right) e^{i \, k_\nu \cdot r_1} \varphi_s\left(r_1\right) \right)\left(\int_\Omega  dr_2^3 \,  \varphi_q^*\left(r_2\right) e^{-i \, k_\nu \cdot r_2} \varphi_r\left(r_2\right)\right) \nonumber\\
& = \sum_\nu V_\nu \,\delta\left(\nu, p - s\right) \delta\left(\nu, r - q\right)  = \frac{4 \pi}{\Omega} \sum_\nu \frac{ \delta\left(p - s, r - q\right)}{k_\nu^2}\nonumber.
\end{align}
The condition that $\nu = p - s = r - q$ arises from conservation of momentum. From this we arrive at $r = q + \nu$ and $s = p - \nu$, which implies the final form of the two-electron term in momentum space is
\begin{equation}
V = \frac{2 \pi}{\Omega} \sum_{\substack{(p, \sigma) \neq (q, \sigma') \\ \nu \neq 0}} \frac{c^\dagger_{p,\sigma} c_{q,\sigma'}^\dagger c_{q + \nu,\sigma'} c_{p - \nu,\sigma}}{k_\nu^2}
 \end{equation}
 where we can see that this summation satisfies momentum conservation since $\nu = p - (p - \nu) = (q + \nu) -  q$. Thus, the total expression for $H = T + U + V$ (up to a constant shift that depends on the unit cell shape) is given by
 \begin{equation}
 H = \frac{1}{2} \sum_{p, \sigma} k_p^2 \, c_{p,\sigma}^\dagger c_{p,\sigma} - \frac{4 \pi}{\Omega} \sum_{\substack{p \neq q \\ j,\sigma}} \left(\zeta_j \frac{e^{i \, k_{q-p} \cdot R_j}}{k_{p-q}^2}\right) c^\dagger_{p, \sigma} c_{q, \sigma} + \frac{2 \pi}{\Omega} \sum_{\substack{(p, \sigma) \neq (q, \sigma') \\ \nu \neq 0}} \frac{c^\dagger_{p,\sigma} c_{q,\sigma'}^\dagger c_{q + \nu,\sigma'} c_{p - \nu,\sigma}}{k_\nu^2}.
 \label{eq:pw_ham}
 \end{equation}

\section{Electronic Structure Hamiltonian in Plane Wave Dual Basis}
\label{app:plane_wave_dual}

In the prior section we derived a closed form for the molecular electronic structure Hamiltonian in the plane wave basis. We now translate that Hamiltonian into the plane wave dual basis via unitary discrete Fourier transform. The unitary discrete Fourier transform of the plane wave basis is computed in each dimension separately as
\begin{equation}
\phi_{p_x}\left(x\right) = \sqrt{\frac{1}{N^{1/3}}} \sum_{\nu_x} \left(e^{-2\pi i \, p_x / N^{1/3}}\right)^{\nu_x} \varphi_{\nu_x} \left(x\right)
= \frac{1}{(\Omega \, N)^{1/6}} \sum_{\nu_x} \exp\left[2 \pi i \left(\frac{x}{\Omega^{1/3}}-\frac{p_x}{N^{1/3}} \right) \right]^{\nu_x}
\label{eq:udft}
\end{equation}
where $\phi_{p_x}(x)$ is the $x$-component of the plane wave dual basis function $\phi_p(r) = \phi_{p_x}(x)\phi_{p_y}(y)\phi_{p_z}(z)$, $\varphi_{\nu_x}(x)$ is the $x$-component of the plane wave basis function $\varphi_\nu(r) = \varphi_{\nu_x}(x)\varphi_{\nu_y}(y)\varphi_{\nu_z}(z)$, $\nu = (\nu_x, \nu_y, \nu_z)$ and $r = (x, y, z)$. As the expression for $\phi_{p_x} (x)$ in \eq{udft} takes the form of a geometric series, we can find the following closed form,
\begin{equation}
\phi_{p}\left(r\right) = \sqrt{\frac{1}{\Omega \, N}} \left(\frac{\sin \left[\pi \, p_x - \frac{\pi N^{1/3} x}{\Omega^{1/3}} \right]}{\sin\left[\frac{\pi \, p_x}{N^{1/3}} - \frac{\pi \, x}{\Omega^{1/3}}\right]}\right)
\left(\frac{\sin \left[\pi \, p_y - \frac{\pi N^{1/3} y}{\Omega^{1/3}} \right]}{\sin\left[\frac{\pi \, p_y}{N^{1/3}} - \frac{\pi \, y}{\Omega^{1/3}}\right]}\right)
\left(\frac{\sin \left[\pi \,p_z - \frac{\pi N^{1/3} z}{\Omega^{1/3}} \right]}{\sin\left[\frac{\pi \, p_z}{N^{1/3}} - \frac{\pi \, z}{\Omega^{1/3}}\right]}\right)
\label{eq:pwd_functions}
\end{equation}
which is a smooth approximation to a grid with lattice sites at the locations $r_p = p\, (\Omega / N)^{1/3}$.

Basis functions of the above form (which can be conveniently labeled by the real-space coordinates of their centers) are
commonly used in quantum dynamics simulations under the name of discrete variable representations (DVR) ~\cite{lill1982discrete,shizgal1984discrete,dvrreview,colbert1992novel,Jones2016}. The sinc DVR, introduced in \cite{colbert1992novel} is closely related to the plane wave dual basis. As seen from \eq{udft}, the plane wave dual basis is obtained as a sum over unit weighted plane waves with reciprocal lattice momenta up to a maximum cutoff momentum. The sinc DVR is obtained as a {\it continuous} integral over unit weight plane waves up to the maximum cutoff momentum. One of the primary weaknesses of the sinc DVR basis is the need to approximate the kinetic energy operator when using a finite number of sinc functions. This is removed in the plane wave dual basis, as the kinetic energy operator is represented exactly.

Rather than compute the integrals over these basis functions by quadrature, it is more straightforward to Fourier transform \eq{pw_ham} in order to obtain a representation of the electronic structure Hamiltonian in the plane wave dual basis. Accordingly, we define raising and lowering operators in the plane wave basis as the Fourier transform of their plane wave dual counterparts,
\begin{equation}
\label{eq:ladder_def}
c^\dagger_\nu =  \sqrt{\frac{1}{N}}  \sum_{p} a^\dagger_{p} e^{- i \,k_\nu \cdot r_p}
\quad \quad \quad \quad
c_\nu =  \sqrt{\frac{1}{N}}  \sum_{p} a_{p} e^{i \, k_\nu \cdot r_p}.
\end{equation}
Using these relations we can write the kinetic energy operator of the previous section in the dual space as
\begin{align}
\label{eq:pwd_t}
T = \, &  \,  \frac{1}{2} \sum_{\nu, \sigma} k_\nu^2 \,  c_{\nu, \sigma}^\dagger c_{\nu, \sigma}
=  \frac{1}{2} \sum_{\nu, \sigma} k_\nu^2 \,  \left(\sqrt{\frac{1}{N}} \sum_{p} a_{p, \sigma}^\dagger e^{- i \, k_\nu \cdot r_p}\right) \left(\sqrt{\frac{1}{N}} \sum_{q} a_{q,\sigma} e^{i \, k_\nu \cdot r_q}\right)\\
=  \, & \,  \frac{1}{2 \, N} \sum_{p, q}  \left(\sum_{\nu, \sigma} k_\nu^2 \, e^{i \, k_\nu \cdot \left(r_q -r_p \right)} \right)a^\dagger_{p, \sigma} a_{q,\sigma} 
= \frac{1}{2\, N} \sum_{\nu, p, q, \sigma} k_\nu^2 \, \cos \left[k_\nu \cdot r_{q - p} \right] a^\dagger_{p, \sigma} a_{q,\sigma}. \nonumber
\end{align}
We can transform the external potential in a similar fashion
\begin{align}
U & = -\sum_{\substack{p \neq q \\ j,\sigma}}\zeta_j \, V_{p-q} \exp\left[i \, k_{q-p} \cdot R_j\right] c^\dagger_{p, \sigma} c_{q, \sigma}\\
& =  - \sum_{\substack{p \neq q \\ j,\sigma}}\zeta_j \, V_{p-q}  \exp\left[i \, k_{q-p} \cdot R_j\right]
\left(\sqrt{\frac{1}{N}} \sum_{p'} a_{p', \sigma}^\dagger \exp\left[- i \, k_p \cdot r_{p'}\right]\right)
\left(\sqrt{\frac{1}{N}} \sum_{q'} a_{q',\sigma} \exp\left[i \, k_q \cdot r_{q'}\right]\right)\nonumber\\
& =  -\frac{1}{N} \sum_{\substack{p \neq q \\ j,\sigma}} \zeta_j \, V_{p-q}  \exp\left[i \, k_{q-p} \cdot R_j\right]
 \sum_{p', q'} a_{p', \sigma}^\dagger a_{q',\sigma} \exp\left[i \, k_q \cdot r_{q'- p'} \right] \exp\left[- i \,k_{p-q} \cdot r_{p'}\right]\nonumber\\
& =  -\frac{1}{N} \sum_{\substack{p', q'}} \sum_{\substack{p \neq q \\ j, \sigma}} \zeta_j \, V_{p-q} \exp\left[i \, k_{q-p} \cdot \left(R_j - r_{p'}\right)\right]
\left(a_{p', \sigma}^\dagger a_{q',\sigma} \exp\left[i \, k_q \cdot r_{q'-p'}\right] \right) \nonumber.
\end{align}
Recognizing that $p - q$ spans the full set of momentum vectors in our system due to aliasing (dualling), we can replace the sum over $p \neq q$ and the indices $p-q$ and $q$ with a sum over $\nu \neq 0$ and $q \neq 0$. This leads to a DVR-like representation
with diagonal  potential operators.
We find
\begin{align}
\label{eq:pwd_u}
U & =  -\frac{1}{N} \sum_{\substack{p', q'}}  \left(\sum_{\substack{\nu \neq 0 \\ j}} \zeta_j \, V_{\nu} \exp\left[i \, k_{\nu} \cdot \left(R_j - r_{p'}\right)\right]\right)
\left( \sum_{q\neq 0, \sigma} a_{p', \sigma}^\dagger a_{q',\sigma} \exp\left[i \, k_q \cdot r_{q'-p'}\right] \right)\\
& =  - \sum_{p,\sigma} \left(\sum_{\substack{\nu \neq 0 \\ j}} \zeta_j \, V_{\nu} \exp\left[i \, k_{\nu} \cdot \left(R_j - r_{p'}\right)\right]\right) n_{p, \sigma}
= - \frac{4 \pi}{\Omega} \sum_{\substack{p,\sigma \\ j, \nu\neq 0}} \frac{\zeta_j \, \cos\left[k_{\nu} \cdot \left(R_j - r_{p}\right)\right]}{k_\nu^2} n_{p, \sigma} \nonumber
\end{align}
where we have used the fact that the summation grouped on the right side of the first equation is equal to zero unless $p' = q'$. This is because the negative modes of $k_q$ will have exactly the opposite phase as the positive modes of $k_q$. This leads to the
diagonal form of the final expression.

Finally, we turn our attention towards transforming the two-electron operator. The following relations are helpful,
\begin{equation}
\sum_{p} c^\dagger_p c_p = \sum_p a^\dagger_{p} a_p
\quad \quad \quad \quad
\sum_{p} c^\dagger_p c_{p \pm q} = \sum_p a^\dagger_{p} a_{p} \, e^{\mp i \, k_q \cdot r_p}
\end{equation}
where the first relation comes from conservation of particle number and the second relation is the Fourier convolution theorem. We can compute the interaction term in the plane wave dual basis as
\begin{align}
\label{eq:pwd_v}
V &=  \frac{2 \pi}{\Omega} \sum_{\substack{(p, \sigma) \neq (q, \sigma') \\ \nu \neq 0}} \frac{c^\dagger_{p,\sigma} c_{q,\sigma'}^\dagger c_{q + \nu,\sigma'} c_{p - \nu,\sigma}}{k_\nu^2}
 = \frac{2 \pi}{\Omega} \sum_{\nu \neq 0} \frac{1}{k_\nu^2}\left( \sum_{\substack{p,q \\ \sigma, \sigma'}} c^\dagger_{p,\sigma} c_{p - \nu,\sigma} c^\dagger_{q,\sigma'} c_{q + \nu,\sigma'}  - \sum_{p,\sigma} c_{p,\sigma}^\dagger c_{p,\sigma}\right)\\
 & = \frac{2 \pi}{\Omega} \sum_{\nu \neq 0} \frac{1}{k_\nu^2}\left[ \left( \sum_{p,\sigma} c^\dagger_{p,\sigma} c_{p - \nu,\sigma}\right)\left( \sum_{q, \sigma'} c^\dagger_{q,\sigma'} c_{q + \nu,\sigma'} \right)  - \sum_{p,\sigma} c^\dagger_{p,\sigma} c_{p,\sigma} \right] \nonumber\\
 & = \frac{2 \pi}{\Omega} \sum_{\nu \neq 0} \frac{1}{k_\nu^2}\left[ \left( \sum_{p,\sigma} a^\dagger_{p,\sigma} a_{p,\sigma} \, e^{i \, k_\nu \cdot r_p}\right)\left( \sum_{q, \sigma'} a^\dagger_{q,\sigma'} a_{q,\sigma'} \, e^{-i \, k_\nu \cdot r_q} \right)  - \sum_{p,\sigma} a^\dagger_{p,\sigma} a_{p,\sigma} \right] \nonumber\\
 & =  \frac{2 \pi}{\Omega} \sum_{\nu \neq 0} \frac{1}{k_\nu^2} \left( \sum_{\substack{p,q \\ \sigma, \sigma'}} e^{i \, k_\nu \cdot r_{p -q}}  a_{p, \sigma}^\dagger a_{p, \sigma} a_{q, \sigma'}^\dagger a_{q,\sigma'} - \sum_{p, \sigma} a_{p, \sigma}^\dagger a_{p,\sigma} \right) 
 =  \frac{2 \pi}{\Omega } \sum_{\substack{(p, \sigma) \neq (q, \sigma') \\ \nu \neq 0}} \frac{\cos \left[k_\nu \cdot r_{p-q}\right]}{k_\nu^2} \, n_{p, \sigma} n_{q, \sigma'}.\nonumber
 \end{align}
Putting these results together, we find the final expression for the total Hamiltonian in the plane wave dual basis,
\begin{align}
\label{eq:pwd_ham}
H & = \frac{1}{2\, N} \!\!\!\sum_{\nu, p, q, \sigma} \!\! k_\nu^2 \cos \left[k_\nu \cdot r_{q - p} \right] a^\dagger_{p, \sigma} a_{q,\sigma}
- \frac{4 \pi}{\Omega} \sum_{\substack{p,\sigma \\ j, \nu\neq 0}} \frac{\zeta_j \, \cos\left[ k_{\nu} \cdot \left(R_j - r_{p}\right)\right]}{k_\nu^2} n_{p, \sigma} +
\frac{2 \pi}{\Omega } \!\!\!\!\! \sum_{\substack{(p, \sigma) \neq (q, \sigma') \\ \nu \neq 0}}\!\!\!\!\!\!\!\! \frac{\cos \left[k_\nu \cdot r_{p-q}\right]}{k_\nu^2} \, n_{p, \sigma} n_{q, \sigma'}.
\end{align}
As we can see, there are only ${\cal O}(N^2)$ terms.

\section{Plane Wave Dual Basis Hamiltonian Mapped to Qubits}
\label{app:qubit_ham}

Whereas fermions are indistinguishable, anti-symmetric particles, qubits are distinguishable and have no special symmetries. Thus, in order to encode a fermionic system on a quantum computer in second quantization one must map the operator algebra of fermions to the operator algebra of qubits. The algebra of fermions is defined by the canonical fermionic anti-commutation relations,
\begin{equation}
\left\{ a^\dagger_p, a^\dagger_q \right\} = \left\{ a_p, a_q \right\} = 0 \quad \quad \quad \quad \left\{ a^\dagger_p, a_q \right\} = \delta_{pq}.
\label{eq:commutation}
\end{equation}
The oldest (and simplest) method which accomplishes this is the Jordan-Wigner transformation \cite{Jordan1928}. A significantly more complicated method is known as the Bravyi-Kitaev transformation \cite{Bravyi2002,Seeley2012,Tranter2015}. The Bravyi-Kitaev transform yields operators that are $\log N$ local as opposed to the Jordan-Wigner transformation, which is $N$ local, in general. More recently, there has been work on generalizing these transformations \cite{Whitfield2016,Havlicek2017,Bravyi2017}. Understanding the structure of these transformations is important for compiling circuits efficiently. However, for our purposes, the locality overhead is not necessarily detrimental in terms of gate depth (although it does effect gate count on a fully connected architecture) and so we analyze the Jordan-Wigner transformation for the sake of simplicity. The Jordan-Wigner transformation consists of the following mapping,
\begin{equation}
a^\dagger_p \mapsto \frac{1}{2} \left(X_p - i \, Y_p\right) \bigotimes_{\ell = 0}^{p - 1} Z_{p - \ell}
\quad \quad \quad \quad
a_p \mapsto \frac{1}{2} \left(X_p + i \, Y_p\right) \bigotimes_{\ell = 0}^{p - 1} Z_{p - \ell}
\label{eq:jw}
\end{equation}
where $X_p$, $Y_p$ and $Z_p$ represent Pauli operators acting on tensor factor $p$. By inspection, one can confirm that the mapping of \eq{jw} reproduces the algebra of \eq{commutation}.

To actually apply the Jordan-Wigner transformation, one must map the fermionic orbitals specified in \eq{pwd_ham} by the indices $(p, \sigma)$ to a single qubit index; e.g.,
\begin{equation}
\left(p, \sigma\right) \mapsto \frac{1 - \sigma}{2} + 2 \left(p_x + p_y\, N^{1/3} + p_z \, N^{2/3}\right)
\quad \quad \quad
\sigma \in \left\{-1, 1\right\}.
\end{equation}
The Jordan-Wigner transformation is particularly simple for the plane wave dual basis molecular Hamiltonian. Applying \eq{jw} to operators that appear in \eq{pwd_ham}, we find that
\begin{align}
\label{eq:jw_ops}
n_p & \mapsto \frac{1}{2} \left(I - Z_p\right)\\
n_{p} n_{q} & \mapsto  \frac{1}{4} \left(I + Z_{p} Z_{q} - 
Z_{p} - Z_{q} \right)\nonumber\\
a^\dagger_{p} a_{q} + a^\dagger_{q} a_{p} & \mapsto \frac{1}{2} \left(X_{p} Z_{p + 1} \cdots Z_{q - 1} X_{q} + Y_{p} Z_{p + 1} \cdots Z_{q - 1} Y_{q}\right) \nonumber.
\end{align}
We note that all of the qubit terms that come out of $n_p n_q$ are diagonal (and thus commute). From \eq{jw_ops} we can write the position space second quantized Jordan-Wigner transformed qubit Hamiltonian as
\begin{align}
H & = \frac{1}{4\, N} \sum_{\nu, p, q, \sigma} k_\nu^2 \cos \left[k_\nu \cdot r_{q - p} \right] \left(X_{p,\sigma} Z_{p + 1,\sigma} \cdots Z_{q - 1,\sigma} X_{q,\sigma} + Y_{p,\sigma} Z_{p + 1,\sigma} \cdots Z_{q - 1,\sigma} Y_{q,\sigma} \right)\\
& - \frac{2 \pi}{\Omega} \sum_{\substack{p,\sigma \\ j, \nu\neq 0}} \frac{\zeta_j \, \cos\left[ k_{\nu} \cdot \left(R_j - r_{p}\right)\right] }{k_\nu^2}\left(I - Z_{p,\sigma}\right) + \frac{\pi}{2\,\Omega } \sum_{\substack{(p, \sigma) \neq (q, \sigma') \\ \nu \neq 0}} \frac{\cos \left[k_\nu \cdot r_{p-q}\right]}{k_\nu^2} \left(I + Z_{p,\sigma} Z_{q,\sigma'} - 
Z_{p,\sigma} - Z_{q,\sigma'} \right).\nonumber
\end{align}
Expanding these terms and recollecting the qubit operator coefficients we find
\begin{align}
\label{eq:qubit_ham}
H & =  \sum_{\substack{p, \sigma \\ \nu \neq 0}}\left(\frac{\pi}{\Omega \, k_\nu^2}  - \frac{k_\nu^2}{4 \, N} + \frac{2\pi}{\Omega} \sum_{j}\zeta_j \frac{\cos\left[ k_{\nu} \cdot \left(R_j - r_{p}\right)\right] }{k_\nu^2}\right) Z_{p,\sigma}
+ \frac{\pi}{2\,\Omega } \sum_{\substack{(p, \sigma) \neq (q, \sigma') \\ \nu \neq 0}} \frac{\cos \left[k_\nu \cdot r_{p-q}\right]}{k_\nu^2} Z_{p,\sigma} Z_{q,\sigma'}\\
& + \frac{1}{4\, N} \sum_{\substack{p \neq q \\ \nu, \sigma}} k_\nu^2 \cos \left[k_\nu \cdot r_{q - p} \right] \left(X_{p,\sigma} Z_{p + 1,\sigma} \cdots Z_{q - 1,\sigma} X_{q,\sigma} + Y_{p,\sigma} Z_{p + 1,\sigma} \cdots Z_{q - 1,\sigma} Y_{q,\sigma} \right)
+ \sum_{\nu \neq 0} \left(\frac{k_\nu^2}{2}- \frac{\pi \, N}{\Omega \, k_\nu^2} \right) I\nonumber.
\end{align}

\section{Comparing Discretization Error in Fourier and Gaussian Bases}
\label{app:basis_errors}

In this section we discuss convergence of basis set discretization errors in both plane wave and Gaussian bases. The basis set discretization error is defined with respect to the ground state energy in the continuum basis $(N = \infty)$ as
\begin{equation}
\Delta E = \left | \min_{\psi} \bra{\psi_\infty} H \ket{\psi_\infty} - \min_{\psi} \bra{\psi_N} H \ket{\psi_N}\right|
\end{equation}
where $\ket{\psi_N}$ is a wavefunction limited to the support of Slater determinants with up to $N$ single-particle basis functions (in our context those functions are either plane waves or Gaussian orbitals). Throughout this work, but especially in \sec{sec_two} and \tab{scalings}, we directly compare the asymptotic scaling of algorithms using a plane wave basis and algorithms using a Gaussian orbital basis. We compare these scalings in terms of the same parameter, ``$N$'', which represents the number of plane waves for some algorithms and the number of Gaussian orbitals for others. In order for such comparisons to be valid, we need to establish that the number of plane waves required for a particular calculation is asymptotically equivalent to the number of Gaussian orbitals required for the same calculation.

In \app{intrinsic_error} we review results from the literature which establish that $\Delta E \in {\cal O} (1/N)$ regardless of the detailed form of the single-particle basis functions. This has been established by many numerical studies over the years and also proved up to second-order in perturbation theory for Gaussians in \cite{kutzelnigg1992rates} and for plane waves in \cite{shepherd2012convergence}. Although most of the results we describe are standard, we gather them here for completeness and also provide an intuitive explanation for this phenomenon based on simple arguments from approximation theory.

In \app{image_error}, we describe how a plane wave basis calculation is done in practice for systems with reduced periodicity, e.g.~for molecules or surfaces. Using the methodology of \cite{fusti2002accurate}, we show that one can exponentially suppress errors arising from the fictitious periodic image charges that occur when using plane waves to describe non-periodic systems. Taken together, these results allows us to directly compare the asymptotic scalings of algorithms using a plane wave basis with the asymptotic scalings of algorithms using a Gaussian orbital basis, even for non-periodic systems such as single-molecules. As the dual basis is a unitary rotation of the plane wave basis, all results presented here also hold equally for the dual basis.

\subsection{Scaling of Intrinsic Discretization Error}
\label{app:intrinsic_error}

We first present an intuitive argument for the basic result and then discuss several earlier works which establish the result more rigorously. As is well known from approximation theory and Fourier analysis, the rate of convergence of a basis expansion of a function is governed by its smoothness. For example, for an infinitely differentiable function (in any dimension), the asymptotic Fourier amplitudes from a Fourier transform decay exponentially in magnitude with respect to the number of Fourier modes, and thus approximating the function with a cutoff in the Fourier series (e.g.~a finite basis) incurs an exponentially small error with the size of the basis, i.e. ${\cal O} (e^{-\kappa N})$ for some finite positive $\kappa$. For non-analytic functions, if the basis functions themselves do not incorporate the non-analytic behavior, then the error of the basis expansion only converges algebraically like ${\cal O} (N^{-\alpha})$, where $\alpha$ depends on the particular expectation value we are interested in as well as the nature of the non-analyticity.

Kato proved that the electronic wavefunction we are interested is continuous but has a discontinuous (yet finite) first derivative at the nuclei (the electron-nuclear cusp) and at the electron-electron coincidences (the electron-electron cusp) \cite{Kato1957}. The asymptotic rate of convergence of both the plane wave expansion and Gaussian expansion is governed by their ability to capture these cusp-like behaviors. Around a cusp, the wavefunction may be expanded as
\begin{align}
\Psi\left(s\right) = \Psi\left(0\right)\left(1 + a_1 s + a_2 s^2 + \ldots\right)
\end{align}
where $s$ is a radial coordinate around the cusp (e.g. $|r_p -R_j|$ for the electron-nuclear cusp or $|r_p - r_q|$ for the electron-electron cusp) and where we have kept the spherical part of the wavefunction for simplicity. The linear coefficient $a_1$ is determined by the type of cusp (e.g. $a_1 = -\zeta$ for a nuclear cusp and $a_1 = 1/2$ for the electron-electron cusp). An expansion in an analytic function basis (e.g. plane waves or Gaussians) necessarily omits the linear $s$ (or it would have a discontinuous first derivative by assumption) and thus, asymptotically incurs error in some volume $S$ close to the cusp, where $S$ is the smallest spatial feature resolvable by the basis, which is ${\cal O}(1/N)$. While appropriately constructed Gaussian basis sets can resolve local features such as the electron-nuclear cusp at a rate faster the ${\cal O}(1/N)$ (see below for more detail), the same is not true of the electron-electron cusp, which occurs at all points in configuration space where coordinates of two or more electrons coincide. Evaluating the energy error in the ground state as
\begin{equation}
\Delta E \approx 4\pi \int_S s^2\, \Psi\left(s\right) H \, \Psi\left(s\right) \textrm{d}s,
\end{equation}
and using the leading terms in the kinetic energy and potential energy in the Hamiltonian, proportional to  $(1/s)(\textrm{d}/\textrm{d}s)$ and $1/s$ respectively, the linear term in the wavefunction gives an error, to leading order in $s$, as $\Delta E \in {\cal O}(S)$. By this intuitive argument, the error in the energy incurred by the cusp should scale asymptotically as ${\cal O}(1/N)$.

The ${\cal O}(1/N)$ scaling for the contribution of the electron-electron cusp to the energy has long been observed empirically using Gaussian basis sets, see e.g. \cite{helgaker1997basis,klopper1995ab,Halkier1998} and extrapolating the so-called electron correlation energy using this asymptotic form is a common practice in electron structure theory \cite{Helgaker2002}. The complicated form of molecular Gaussian basis sets prevents a more rigorous derivation of this form beyond arguments similar to the ones we presented above. However, for the case of two-electron atoms (the simplest electronic structure system demonstrating an electron-electron cusp), a rigorous partial wave analysis is possible at the level of a perturbative treatment of the electron-electron interaction~\cite{kutzelnigg1992rates}. This finds that at second order perturbation theory, the energy convergence of each partial wave goes like $(\ell+1/2)^{-4}$ where $\ell$ is the angular momentum of the partial wave. Adding up the contributions of each partial wave to a maximum cutoff $\ell=L$, gives a convergence like ${\cal O}(1/L^3)$, and the total number of angular functions up to the cutoff $L$ is also ${\cal O}(L^3)$, thus the convergence in this case is again ${\cal O}(1/N)$ \cite{kutzelnigg1992rates}. In the case of plane waves, the ${\cal O}(1/N)$ scaling for the contribution of the electron-electron cusp has been shown under both the random phase approximation~\cite{Harl2008} and second order perturbation theory~\cite{shepherd2012convergence}. In \cite{shepherd2012convergence}, there is also a comprehensive numerics study which demonstrates the ${\cal O}(1/N)$ plane wave convergence.

In practice, when using a Gaussian basis, one includes basis functions that are centered on the nuclei. Then, although the Gaussians are formally analytic around the nucleus, one can choose series of Gaussians with increasingly large exponents such that they effectively mimic the sharp features of the electron-nuclear cusp. For an optimally chosen set of coefficients, one can thus improve on the algebraic convergence for the electron-nuclear cusp, and
it has been shown that the convergence of the Gaussian basis for the electron-nuclear contribution scales as ${\cal O}(e^{-\kappa \sqrt{N}})$~\cite{klopper1986gaussian,kutzelnigg1994theory}. However, this improvement is not possible using a single-particle basis alone for the electron-electron cusp, as this is a cusp in the inter-electron coordinate. In the case of plane waves, an equivalent acceleration of convergence for the electron-nuclear cusp can be obtained if one uses pseudopotentials, which restores the analyticity of the wavefunction around the nucleus. In this case, as argued above using arguments from Fourier analysis, the smoothness of the wavefunction means that neglecting electron-electron interaction effects (e.g. as is done in density functional theory) the plane wave error scales as ${\cal O}(e^{-\kappa N})$. In real materials, pseudopotentials are a mainstay of plane wave calculations. It is also possible to introduce a second set of functions to augment the plane wave description of the wavefunction around the nuclear region~\cite{pahl2002plane,lippert1997hybrid,sun2017gaussian}, and such augmented basis sets allow for exponential convergence in the electron-nuclear cusp without pseudopotentials. Thus, the convergence of both Gaussian and plane wave calculations is limited by resolution of the electron-electron cusp, which scales as ${\cal O}(1/N)$, as discussed earlier.

Since the asymptotic convergence of the Gaussian basis and plane wave basis is the same, the asymptotic complexity of algorithms designed using either the plane wave basis or the Gaussian basis may be directly compared for real molecules and materials. However, it is also useful to have an idea of the relative prefactors in the convergence. The precise prefactor depends on the system and accuracy required. As a concrete  example, the cubic diamond and cubic silicon density functional energies using the Perdew-Burke-Ernzerhof exchange-correlation functional and the Goedecker-Teter-Hutter pseudopotential can be converged to an accuracy of 10 milli-eV per atom using approximately 150 plane waves per atom and 250 plane waves per atom respectively; the same accuracy in a Gaussian basis with the same pseudopotential requires a quadruple-zeta double-polarization basis or larger, which for these systems has 26 Gaussian basis functions per atom, a factor of 6-10. While this example is for a density functional calculation, it serves to  illustrate the relative spatial resolution of the two bases, which is the main factor in resolving the electron-electron cusp in correlated calculations. In \cite{booth2016plane} an analysis carried out at the correlated wavefunction level finds that the number of Gaussians needed to reproduce a plane wave calculation of fixed dimension (for a surface adsorption problem) to chemical accuracy is approximately less by a factor of 20-30, although this is a significant overestimate since the number of plane waves used is substantially more than is required for chemical accuracy. In summary, a rough estimate for the plane wave basis size versus Gaussians basis size for the same accuracy is approximately ten.

Within the context of performing quantum simulation experiments on the most advanced hardware platforms (specifically industrial transmon platfroms being designed at Google, IBM, Intel, Rigetti and elsewhere) in the next few years, gate count (not qubit count) is the primary concern. While most expect that more qubits can be manufactured in a scalable fashion, there is no clear path to substantially improving the gate fidelities already achieved by the most advanced transmon platforms. And the total fidelity of a circuit decreases exponentially in the number of gates. Less obvious is the fact that gate count (not logical qubit count) also determines the primary overhead in quantum error-correction. This is because a very large number of physical qubits (often hundreds of times more than the number of qubits required for a logical bit) are required to perform state distillation in order to implement non-transversal gates (e.g. T gates in the toric / surface code). Thus, we expect the scaling advantages of our approach to translate into practical gains for a variety of interesting quantum simulations, both in the near-term and in the distant future.

\subsection{Modeling Non-Periodic Systems with a Periodic Basis}
\label{app:image_error}

Plane waves are often used as a basis for systems with reduced periodicity, e.g. surfaces (periodic in two dimension), nanowires (periodic in one dimension), or even single-molecules (periodic in zero dimensions)~\cite{marx2009ab}. The main concern to address with plane waves in such simulations is that that they enforce a periodic charge density and thus produce fictitious image interactions between computational cells.
A simple way to avoid this is to make the computational cell volume $\Omega$ sufficiently large so that periodic images of the cells do not interact. This is typically what is done for surface calculations, where it is necessary only to extend the cell volume in one or two of the spatial directions. However, a more efficient and rigorous procedure, particularly for systems that are periodic in zero dimensions such as single-molecules, is to use a truncated Coulomb operator with a slightly larger cell size~\cite{rozzi2006exact,sundararaman2013regularization,ismail2006truncation}.

To see how this works, we consider  an isolated molecule. The total density
of a molecule decays exponentially quickly away from its center, and thus the molecule may be inscribed in a box of volume $\Omega = D^3$ with only exponentially small parts of the density (and contributions to the energy) outside of the box. By using a Coulomb operator truncated at distance $D$~\cite{fusti2002accurate}, such that
\begin{equation}
V\left(r, r'\right) = \begin{cases} \frac{1}{\left | r - r' \right |} & \left | r - r' \right | \leq D \\
0  & \left | r - r' \right | > D,
\end{cases}
\end{equation}
and by carrying out the simulation in a box of size $8\, \Omega = (2\,D)^3$, we ensure that there is no Coulomb interaction at all between the repeated images of the molecule, up to exponentially small terms in $\Omega$ arising from the density of the molecule outside of the box. While the Fourier amplitudes of the normal Coulomb operator are $4 \pi / k^2$, the Fourier amplitudes of the truncated Coulomb interaction become $4 \pi (1 - \cos[|k|D]) / k^2$. The exact analytical form of this correction gives the following Coulomb operators in the plane wave basis:
\begin{equation}
V = \frac{2 \pi}{\Omega} \!\!\!\!\!\!\!\!  \sum_{\substack{\nu \neq 0\\ (p, \sigma) \neq (q, \sigma')}} \!\!\!\!\!\!\!\! \left(1 - \cos\left[\left | k_\nu \right| D\right] \right)\frac{c^\dagger_{p,\sigma} c_{q,\sigma'}^\dagger c_{q + \nu,\sigma'} c_{p - \nu,\sigma}}{k_\nu^2}%
\qquad
U = \frac{4 \pi}{\Omega} \sum_{\substack{p \neq q \\ j,\sigma}} \left(\cos\left[\left | k_\nu \right| D\right] - 1\right) \left(\zeta_j \frac{e^{i \, k_{q-p} \cdot R_j}}{k_{p-q}^2}\right) c^\dagger_{p, \sigma} c_{q, \sigma}.
\end{equation}
These operators follow straightforwardly from the derivation in \app{plane_waves} if the Fourier transformed potentials of \eq{periodic_coulomb_r} are convolved with the $(1 - \cos[|k_\nu|D])$ correction inside of the sum over $\nu$.

In the dual basis, the truncated Coulomb operator can be implemented even more straightforwardly: one simply drops all $n_p n_q$ terms for which $|r_p - r_q| > D$. As with the plane waves, to maintain resolution, we increase the number of basis functions by exactly a factor of eight. Taken together with the prior arguments in this appendix, this concludes our argument that electronic structure simulations of systems of reduced periodicity can be carried out using plane wave (and dual) orbitals with the same asymptotic scaling as Gaussian orbitals.

\section{Operator Norm Bounds}
\label{app:operators}

In this appendix we bound the norms of the Hamiltonian components $H = T + U + V$ in the plane wave dual basis. These bounds are used extensively in determining the asymptotic scalings discussed in \sec{sec_two}. However, we note that these bounds are likely loose and that one should compute these bounds numerically in order to determine practical scaling. Recall that we restrict the support of all operators to $N$ plane waves with momenta in each dimension not exceeding absolute value proportional to $N^{1/3}/\Omega^{1/3}$.

We begin with  the two-body potential operator $V$, as given in \eq{pwd_v}. For any state $\ket{\psi}$ inside the $\eta$-electron manifold of the Hilbert space we wish to estimate
\begin{align}
\max_{\psi} | \bra{\psi} V\ket{\psi}| & =  \max_\psi \left | \bra{\psi} \frac{2\pi}{\Omega} \sum_{\substack{(p,\sigma) \ne (q,\sigma') \\ \nu \neq 0}} \frac{\cos\left[k_\nu\cdot \left(r_p-r_q\right)\right]}{k_\nu^2} \, n_{p,\sigma}n_{q,\sigma'} \ket{\psi} \right| \\
& \le \frac{2\pi \eta^2}{\Omega} \sum_{\nu \neq 0} \frac{1}{k_\nu^2} = \frac{\eta^2}{2 \pi \, \Omega^{1/3}} \sum_{(\nu_x, \nu_y, \nu_z) \neq (0,0,0)} \frac{1}{\nu_x^2 + \nu_y^2 + \nu_z^2}.\nonumber
\end{align}
As the sum above does not have a closed form in three dimensions, we will upper bound it using integrals. In particular, we use the fact that for monotonically decreasing $f$,
\begin{equation}
\sum_{x=a}^b f(x) \le f(a)+\int_{a}^b f(x) \, dx.
\end{equation}
We will break the sum into three cases corresponding to one, two and three dimensional sums for the potential operator. First let us consider the case of the one-dimensional sum encountered when $\nu_y = \nu_z = 0$,
\begin{equation}
\sum_{\nu_x\ne 0}\frac{1}{\nu_x^2} \le 1+\int_{1}^\infty \frac{dx}{x^2} \in {\cal O}(1).\label{eq:1Dsum}
\end{equation}
Now consider the two-dimensional case encountered when $\nu_z = 0$,
\begin{equation}
 \sum_{(\nu_x, \nu_y) \neq (0,0)} \frac{1}{\nu_x^2 + \nu_y^2} 
= \sum_{\nu_x \neq 0} \frac{2}{\nu_x^2} + \sum_{\substack{\nu_x \neq 0 \\ \nu_y \neq 0}} \frac{1}{\nu_x^2 + \nu_y^2}.
\end{equation}
The one-dimensional sum above occurs when $\nu_x > 0,\nu_y=\nu_z = 0$.  This sum is ${\cal O}(1)$ from~\eq{1Dsum}. The second case term can be bounded using the fact that $1/(\nu_x^2 + \nu_y^2)$ is a monotonically decreasing function of both variables
\begin{align}
\label{eq:2Dsum}
\sum_{\substack{\nu_x \neq 0 \\ \nu_y \neq 0}} \frac{1}{\nu_x^2 + \nu_y^2}
& \le \sum_{\nu_x \neq 0}\frac{1}{\nu_x^2}\sum_{\nu_y\neq 0}\frac{1}{1+\nu_y^2/\nu_x^2}
\le \sum_{\nu_x\neq 0}\frac{1}{\nu_x^2}\left[1+\int_{1}^{N^{1/3}} \frac{dy}{1+y^2/\nu_x^2}\right]\\
& \le \sum_{\nu_x \neq 0} \frac{1}{\nu_x^2} + \int_{1}^{N^{1/3}} \!\!\! \int_{1}^{N^{1/3}}  \frac{dx \, dy}{x^2+y^2}
\le \int_1^{N^{1/3}} \frac{2 \, dx}{x^2}+\int_{1}^{N^{1/3}} \! \! \! \int_{1}^{N^{1/3}}  \frac{dx \, dy}{x^2+y^2} \nonumber\\
& \le \int_1^{N^{1/3}} \frac{2 \, dx}{x^2}+ 2\pi \int_1^{\sqrt{2} N^{1/3}} \frac{dr}{r}\in {\cal O}\left(\log N\right).\nonumber
\end{align} 
Finally consider the three-dimensional case. Using the exact same reasoning spelled out before, but using a spherical integral rather than a polar integral, we find from~\eq{2Dsum} that in three dimensions
\begin{equation}
\sum_{(\nu_x, \nu_y, \nu_z) \neq (0,0,0)} \frac{1}{\nu_x^2 + \nu_y^2 + \nu_z^2} 
 \leq 4\pi\left(\sqrt{3} \, \frac{N^{1/3}}{2} -1\right)+\int_{1}^{N^{1/3}} \frac{3 \, dz}{z^2} + \int_{1}^{N^{1/3}} \!\!\! \int_{1}^{N^{1/3}} \frac{3 \, dx \, dy}{x^2 +y^2} \in {\cal O} \left(N^{1/3}\right).
 \label{eq:3dVbd}
\end{equation}
Thus, from~\eq{1Dsum},~\eq{2Dsum} and~\eq{3dVbd} we find that,
\begin{equation}
\label{eq:v_bound}
\max_{\psi} | \bra{\psi} V\ket{\psi}| \in {\cal O}\left(\frac{\eta^2 \, N^{1/3}}{\Omega^{1/3}}\right)
\quad \quad \quad \left\| V \right\| \in {\cal O}\left(\frac{N^{7/3}}{\Omega^{1/3}}\right).
\end{equation}
Note that the dimensions of the potential are in units of inverse length and the energy scales as $\eta^2$ as expected.

We can now bound the norm of the external potential operator $U$, as given in \eq{pwd_u}. For any state $\ket{\psi}$ inside the $\eta$-electron manifold of the Hilbert space we wish to estimate
\begin{align}
\max_{\psi} | \bra{\psi} U\ket{\psi}| & = \max_\psi \left | \bra{\psi} \frac{4 \pi}{\Omega} \sum_{\substack{p,\sigma \\ j, \nu\neq 0}} \frac{\zeta_j \, \cos\left[ k_{\nu} \cdot \left(R_j - r_{p}\right)\right] }{k_\nu^2} n_{p, \sigma} \ket{\psi} \right| \\
& \le \frac{4\pi \eta}{\Omega} \left(\sum_j \zeta_j\right) \sum_{\nu \neq 0} \frac{1}{k_\nu^2} = \frac{\eta^2}{\pi \, \Omega^{1/3}} \sum_{(\nu_x, \nu_y, \nu_z) \neq (0,0,0)} \frac{1}{\nu_x^2 + \nu_y^2 + \nu_z^2}\nonumber
\end{align}
where we have assumed that $\sum_j \zeta_j = \eta$ as this must be true when treating periodic systems which, in general, must be charge neutral. Thus, we can see that the external potential has the same bound as the two-body potential,
\begin{equation}
\label{eq:u_bound}
\max_{\psi} | \bra{\psi} U\ket{\psi}| \in {\cal O}\left(\frac{\eta^2 \, N^{1/3}}{\Omega^{1/3}}\right)
\quad \quad \quad \left\| U \right\| \in {\cal O}\left(\frac{N^{7/3}}{\Omega^{1/3}}\right).
\end{equation}
We use the equality of these bounds when estimating the variance of measuring $U + V$ in \sec{measurement}.

Finally, we bound the norm of the kinetic energy operator $T$. It turns out that the kinetic energy operator is much easier to tightly bound in momentum space and so we derive the bound from \eq{pw_t} rather than from \eq{pwd_t}. The bound holds for both cases as a consequence of Parseval's theorem and the unitarity of the discrete Fourier transform. The bound is computed as
 \begin{align}
 \label{eq:t_bound}
 \max_{\psi} | \bra{\psi} T\ket{\psi}| & \le \left| \sum_{\nu,\sigma} \frac{k_\nu^2}{2} \, \bra{\psi}n_{\nu, \sigma}\ket{\psi} \right|
 \leq  \frac{2\pi^2 \, \eta}{\Omega^{2/3}}  \nu_{\max}^2 \in {\cal O}\left(\frac{\eta \, N^{2/3}}{\Omega^{2/3}}\right).
 \end{align}
However, in some cases (e.g. for the value of $\Lambda$ in the Taylor series method in \sec{taylor_alg}) we are interested in the triangle inequality upper bound on the operator norm, which is not invariant under a Fourier transform. Thus, it may also be useful to bound $T$ in the plane wave dual basis with a triangle inequality as
\begin{align}
\frac{1}{2\, N} \sum_{p, q, \sigma} \left | \left(\sum_\nu k_\nu^2 \cos \left[k_\nu \cdot r_{q - p} \right]\right) a^\dagger_{p, \sigma} a_{q,\sigma} \right |
=  \frac{1}{2} \sum_{p} \left | \sum_\nu k_\nu^2 \cos \left[k_\nu \cdot r_p\right] \right |
=  \frac{1}{2} \sum_{p} \left |\nabla^2  \sum_\nu \exp\left[\frac{2\pi \, i}{\Omega^{1/3}} \nu \cdot r_p\right] \right |
\end{align}
where $\nabla^2 = \partial^2_x + \partial^2_y + \partial^2_z$ is the Laplacian, which acts on $r$. We do this so that we can expand the inner sum using a geometric series in $\nu$:
\begin{align}
\sum_\nu \exp\left[\frac{2\pi \, i}{\Omega^{1/3}} \nu \cdot r_p\right] & =
\left(\sum_{\nu_x} \exp\left[\frac{2\pi \, i}{\Omega^{1/3}} \nu_x r_{p_x}\right]\right)
\left(\sum_{\nu_y} \exp\left[\frac{2\pi \, i}{\Omega^{1/3}} \nu_y  r_{p_y}\right]\right)
\left(\sum_{\nu_z} \exp\left[\frac{2\pi \, i}{\Omega^{1/3}} \nu_z r_{p_z}\right]\right)\\
& = \left(\frac{1- \exp\left[\frac{2\pi \, i \, N^{1/3}}{\Omega^{1/3}} r_{p_x}\right]}{1 - \exp\left[\frac{2\pi \, i}{\Omega^{1/3}} r_{p_x}\right]}\right)
\left(\frac{1- \exp\left[\frac{2\pi \, i \, N^{1/3}}{\Omega^{1/3}} r_{p_y}\right]}{1 - \exp\left[\frac{2\pi \, i}{\Omega^{1/3}} r_{p_y}\right]}\right)
\left(\frac{1- \exp\left[\frac{2\pi \, i \, N^{1/3}}{\Omega^{1/3}} r_{p_z}\right]}{1 - \exp\left[\frac{2\pi \, i}{\Omega^{1/3}} r_{p_z}\right]}\right)\nonumber.
\end{align}
We can now see that
\begin{align}
\left(\partial^2_x + \partial^2_y + \partial^2_z\right)\left(\frac{1- \exp\left[\frac{2\pi \, i \, N^{1/3}}{\Omega^{1/3}} r_{p_x}\right]}{1 - \exp\left[\frac{2\pi \, i}{\Omega^{1/3}} r_{p_x}\right]}\right)
\left(\frac{1- \exp\left[\frac{2\pi \, i \, N^{1/3}}{\Omega^{1/3}} r_{p_y}\right]}{1 - \exp\left[\frac{2\pi \, i}{\Omega^{1/3}} r_{p_y}\right]}\right)
\left(\frac{1- \exp\left[\frac{2\pi \, i \, N^{1/3}}{\Omega^{1/3}} r_{p_z}\right]}{1 - \exp\left[\frac{2\pi \, i}{\Omega^{1/3}} r_{p_z}\right]}\right)
\in {\cal O}\left(\frac{N^{2/3}}{\Omega^{2/3}}\right)
\end{align}
and this leads us to our final bound by completing the sum over $p$,
\begin{equation}
\left \| T \right \| 
\leq \frac{1}{2} \sum_{p} \left |\nabla^2  \sum_\nu \exp\left[\frac{2\pi \, i}{\Omega^{1/3}} \nu \cdot r_p\right] \right |
\in {\cal O}\left(\frac{N^{5/3}}{\Omega^{2/3}}\right).
\end{equation}
This turns out to be exactly consistent with the bound we obtained from the momentum space operator but it was necessary to show the triangle inequality norm remained the same. Finally, the norm of the Hamiltonian $H = T + U + V $ and the upper bound on its expectation value is thus
\begin{equation}
\label{eq:h_bound}
\max_{\psi} | \bra{\psi} H\ket{\psi}| \in {\cal O}\left(\frac{\eta^2 \, N^{1/3}}{\Omega^{1/3}} + \frac{\eta \, N^{2/3}}{\Omega^{2/3}}\right)
\quad \quad \quad
\left \| H \right \| \in {\cal O}\left(\frac{N^{7/3}}{\Omega^{1/3}} + \frac{N^{5/3}}{\Omega^{2/3}}\right).
\end{equation}

\section{Error Bounds for Trotter Suzuki Formulas}
\label{app:TSerror}
Now that we are equipped with the operator bounds in~\app{operators} we can prove bounds on the Trotter error.  For simplicity we state our result below as a lemma to allow the result to be easily reused in subsequent work.

\begin{lemma}
Let $H$ be the Hamiltonian of~\eq{pwd_ham}, let $\sum_j \zeta_j = \eta$, $\sum_j |\zeta_j| \in {\cal O}(\eta)$ and $\mathcal{K}$ be the set of states such that $\ket{\phi} \in \mathcal{K}$ if and only if $\sum_{\nu,\sigma} n_{\nu,\sigma} \ket{\phi}=\eta\ket{\phi}$.  Under these assumptions we have that $$\left|\max_{\psi\in \mathcal{K}} \bra{\psi} \left(e^{-i(U+V)t/2r}{\rm FFFT}^\dagger e^{-i\frac{1}{2} \sum_{\nu, \sigma} a_{\nu,\sigma}^\dagger a_{\nu,\sigma} t/r}{\rm FFFT}e^{-i(U+V)t/2r}\right)^r  - e^{-iHt}\ket{\psi}\right|\le \epsilon$$ for a value of $r$ that obeys
$$
r\in{\Theta}\left(\frac{\eta^2 N^{5/6}t^{3/2}}{\Omega^{5/6}\sqrt{\epsilon}}\sqrt{1+\frac{\eta\Omega^{1/3}}{N^{1/3}}} \right).
$$
\end{lemma}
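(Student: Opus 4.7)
The plan is to recognize the product in the lemma as the symmetric second-order Suzuki formula applied to the two-group splitting $\{A, B\}$ with $A = U + V$ and $B = T$. Since the FFFT is unitary, conjugation by it gives $e^{-iTt/r} = \mathrm{FFFT}^{\dagger} \, e^{-i (t/r) \sum_{\nu,\sigma} (k_\nu^2/2) a^\dagger_{\nu,\sigma} a_{\nu,\sigma}} \, \mathrm{FFFT}$, so the stated product is exactly $(e^{-iAt/2r} e^{-iBt/r} e^{-iAt/2r})^r$. The error analysis therefore reduces to the standard second-order Trotter bound \eq{commbound}, applied to a Hamiltonian with only $L = 2$ groups. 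Because $T$, $U$, $V$ are all particle-number preserving, the space $\mathcal{K}$ is invariant under each group and under every nested commutator, so it suffices to bound everything by operator norms restricted to $\mathcal{K}$ (denoted $\|\cdot\|_\eta$).

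For $L=2$, the constraints $\alpha \le \beta$, $\gamma < \beta$ in \eq{commbound} pick out exactly two nested commutators: $[A,[B,A]]$ and $[B,[B,A]]$, i.e.\ $[U+V,[T,U+V]]$ and $[T,[T,U+V]]$. Using the crude triangle-inequality estimate $\|[X,[Y,Z]]\|_\eta \le 4 \|X\|_\eta \|Y\|_\eta \|Z\|_\eta$, I will substitute the bounds proved in \app{operators}: on $\mathcal{K}$ one has $\|T\|_\eta \in \mathcal{O}(\eta N^{2/3}/\Omega^{2/3})$ and, using $\sum_j |\zeta_j| \in \mathcal{O}(\eta)$, $\|U+V\|_\eta \in \mathcal{O}(\eta^2 N^{1/3}/\Omega^{1/3})$. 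This immediately yields
\begin{equation*}
\|[T,[T,U+V]]\|_\eta \in \mathcal{O}\!\left(\frac{\eta^4 N^{5/3}}{\Omega^{5/3}}\right),\qquad \|[U+V,[T,U+V]]\|_\eta \in \mathcal{O}\!\left(\frac{\eta^5 N^{4/3}}{\Omega^{4/3}}\right).
\end{equation*}
Summing and substituting into \eq{commbound} gives total error in $\mathcal{O}\!\bigl((\eta^4 N^{5/3}/\Omega^{5/3} + \eta^5 N^{4/3}/\Omega^{4/3})\, t^3 / r^2\bigr)$. Setting this equal to $\epsilon$, solving for $r$, and factoring an $\eta^4 N^{5/3}/\Omega^{5/3}$ out of the square root yields the claimed bound
\begin{equation*}
r \in \Theta\!\left(\frac{\eta^2 N^{5/6} t^{3/2}}{\Omega^{5/6} \sqrt{\epsilon}} \sqrt{1 + \frac{\eta \, \Omega^{1/3}}{N^{1/3}}}\right).
\end{equation*}

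The main subtlety, and the only place requiring care, is justifying the use of $\eta$-subspace norms rather than full operator norms. The expectation-value bounds in \app{operators} apply to states in $\mathcal{K}$, and since $T$ and $U+V$ are Hermitian and particle-number conserving they restrict to self-adjoint operators on $\mathcal{K}$ whose operator norm equals their maximum expectation value there, up to constants. The nested commutators then also restrict to $\mathcal{K}$, so submultiplicativity of the restricted norm gives the factor of $4$ estimate cleanly; this is the step where one must verify that no ``leakage'' to the orthogonal complement of $\mathcal{K}$ introduces larger terms, but particle-number conservation rules this out. With this justified, everything else is routine bookkeeping.
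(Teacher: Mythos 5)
Your proposal is correct and follows essentially the same route as the paper's proof: identify the product as a symmetric second-order Trotter splitting with the FFFT-conjugated kinetic term, apply \eq{commbound} to the two nested commutators $[T,[T,U+V]]$ and $[U+V,[T,U+V]]$, bound each factor by its maximum expectation value on the number-conserving subspace $\mathcal{K}$ (the paper does this via insertions of the projector $P_{\mathcal{K}}$, which is exactly your ``no leakage'' argument), and substitute the \app{operators} bounds to solve for $r$. Your only cosmetic deviation is the remark ``up to constants'' when equating the restricted operator norm with the maximum expectation value; for a Hermitian operator on an invariant subspace this equality is exact, as the paper's identity $\|TP_{\mathcal{K}}\| = \max_{\psi\in\mathcal{K}}|\bra{\psi}T\ket{\psi}|$ makes explicit.
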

\begin{proof}
For methods that simulate the Trotter steps using the FFFT~\eq{commbound} gives us that the error is dominated by two commutators: $[T,[T,U+V]]$ and $[U+V,[T,U+V]]$.  We need to bound both of these terms.
Because both $T$ and $U+V$ conserve particle number we know that total particle number commutes with the Hamiltonian, the total particle number is a constant of motion for the evolution.  As such, let $\mathcal{K}$ be the manifold of states that contains $\eta$ electrons.  Then for $\ket{\psi}\in\mathcal{K}$, $H\ket{\psi} = \sum_{j=1}^N \proj{j}H\ket{\psi} = \sum_{\ket{\phi}\in \mathcal{K}} \proj{\phi}H\ket{\psi}:=P_{\mathcal{K}} H\ket{\psi}$.  This implies for the induced $2$--norm that

\begin{align}
|\bra{\psi} T^2(U+V) \ket{\psi}|&=| \bra{\psi} TP_{\mathcal{K}} TP_{\mathcal{K}}([U+V]P_{\mathcal{K}}) \ket{\psi}|\le \|TP_{\mathcal{K}}\|^2 \|[U+V]P_{\mathcal{K}}\|\nonumber\\
& = \max_{\psi \in \mathcal{K}}|\bra{\psi}T\ket{\psi}|^2\max_{\psi \in \mathcal{K}}|\bra{\psi}[U+V]\ket{\psi}|^2.
\end{align}
By repeating the same argument for each term that appears in the nested commutators and using the triangle inequality we then have that the error in the Trotter--Suzuki decomposition is in
\begin{equation}
\mathcal{O}\left(\max_{\psi \in \mathcal{K}}|\bra{\psi}T\ket{\psi}|^2\max_{\psi \in \mathcal{K}}|\bra{\psi}[U+V]\ket{\psi}|+\max_{\psi \in \mathcal{K}}|\bra{\psi}[U+V]\ket{\psi}|^2\max_{\psi \in \mathcal{K}}|\bra{\psi}T\ket{\psi}| \right).
\end{equation}
Then using the bounds on the kinetic and potential magnitudes we find that the Trotter error scales as
\begin{equation}
\mathcal{O}\left(\left[\frac{\eta^4N^{5/3}}{\Omega^{5/3}}+\frac{\eta^5N^{4/3}}{\Omega^{4/3}}\right]\frac{t^3}{r^2} \right).
\end{equation}
If we wish the error to be at most $\epsilon$ it therefore suffices to take a value of $r$ in
\begin{equation}
{\Theta}\left(\frac{\eta^2 N^{5/6}t^{3/2}}{\Omega^{5/6}\sqrt{\epsilon}}\sqrt{1+\frac{\eta\Omega^{1/3}}{N^{1/3}}} \right),
\end{equation}
as claimed.
\end{proof}

\section{Linear Depth Circuit to Place all Qubits Adjacent on Planar Lattice}
\label{app:qubit_cycle}

In this section we will describe a circuit which swaps qubits on a planar lattice so as to place them all adjacent at least once with circuit depth ${\cal O}(N)$. This circuit is useful in many contexts, including for the implementation of the potential operator which consists of terms having the form $Z_i Z_j$. We describe the process informally below for the case of square lattices before providing a formal proof that the method works for a wide class of rectangular lattices. The motivation for restricting qubit connectivity to planar lattice comes from existing superconducting qubit platforms which have this restriction. For the purpose of explanation, we will illustrate the scheme for a 4 by 4 grid of qubits. Our circuit is implemented in four steps.\\

{\bf Step 1}. Define a closed-loop 1D path through the qubits. This will always be possible on any rectangular  arrangement of qubits on a planar lattice. For instance, for the 4 by 4 grid, one possible closed-loop path is shown in \fig{closed_loop}. We then decompose this path into two different, disconnected graphs which we will call the ``left stagger'' and ``right stagger''. We show an example of this decomposition in \fig{paths}.\\

\begin{figure}[h]
\centering
\subfloat[1D closed-loop path]{\label{fig:closed_loop}
\begin{tikzpicture}
\foreach \x in {0,2}
\foreach \y in {0,2} 
{\pgfmathtruncatemacro{\label}{12 + \x - 4 * \y}
\node [circle,draw,inner sep=0pt, text width=6mm, align=center,text opacity=0] (\x\y) at (\x,\y) {\label};}
\foreach \x in {0,2}
\foreach \y in {1,3} 
{\pgfmathtruncatemacro{\label}{12 + \x - 4 * \y}
\node [circle,draw,inner sep=0pt, text width=6mm, align=center,text opacity=0] (\x\y) at (\x,\y) {\label};}
\foreach \x in {1,3}
\foreach \y in {0,2} 
{\pgfmathtruncatemacro{\label}{12 + \x - 4 * \y}
\node [circle,draw,inner sep=0pt, text width=6mm, align=center,text opacity=0] (\x\y) at (\x,\y) {\label};}
\foreach \x in {1,3}
\foreach \y in {1,3} 
{\pgfmathtruncatemacro{\label}{12 + \x - 4 * \y}
\node [circle,draw,inner sep=0pt, text width=6mm, align=center,text opacity=0] (\x\y) at (\x,\y) {\label};}
\draw (13) -- (23);
\draw (23) -- (33);
\draw (33) -- (32);
\draw (32) -- (22);
\draw (22) -- (12);
\draw (12) -- (11);
\draw (11) -- (21);
\draw (21) -- (31);
\draw (31) -- (30);
\draw (30) -- (20);
\draw (20) -- (10);
\draw (10) -- (00);
\draw (00) -- (01);
\draw (01) -- (02);
\draw (02) -- (03);
\draw (03) -- (13);
\end{tikzpicture}}
\hfill
\subfloat[``Left stagger'' $(U_L)$]{\label{fig:left_stagger}
\begin{tikzpicture}
\foreach \x in {0,2}
\foreach \y in {0,2} 
{\pgfmathtruncatemacro{\label}{12 + \x - 4 * \y}
\node [circle,draw,inner sep=0pt, text width=6mm, align=center,text opacity=0] (\x\y) at (\x,\y) {\label};}
\foreach \x in {0,2}
\foreach \y in {1,3} 
{\pgfmathtruncatemacro{\label}{12 + \x - 4 * \y}
\node [circle,draw,inner sep=0pt, text width=6mm, align=center,text opacity=0] (\x\y) at (\x,\y) {\label};}
\foreach \x in {1,3}
\foreach \y in {0,2} 
{\pgfmathtruncatemacro{\label}{12 + \x - 4 * \y}
\node [circle,draw,inner sep=0pt, text width=6mm, align=center,text opacity=0] (\x\y) at (\x,\y) {\label};}
\foreach \x in {1,3}
\foreach \y in {1,3} 
{\pgfmathtruncatemacro{\label}{12 + \x - 4 * \y}
\node [circle,draw,inner sep=0pt, text width=6mm, align=center,text opacity=0] (\x\y) at (\x,\y) {\label};}
\draw (23) -- (33);
\draw (32) -- (22);
\draw (12) -- (11);
\draw (21) -- (31);
\draw (30) -- (20);
\draw (10) -- (00);
\draw (01) -- (02);
\draw (03) -- (13);
\end{tikzpicture}}
\hfill
\subfloat[``Right stagger'' $(U_R)$]{\label{fig:right_stagger}
\begin{tikzpicture}
\foreach \x in {0,2}
\foreach \y in {0,2} 
{\pgfmathtruncatemacro{\label}{12 + \x - 4 * \y}
\node [circle,draw,inner sep=0pt, text width=6mm, align=center,text opacity=0] (\x\y) at (\x,\y) {\label};}
\foreach \x in {0,2}
\foreach \y in {1,3} 
{\pgfmathtruncatemacro{\label}{12 + \x - 4 * \y}
\node [circle,draw,inner sep=0pt, text width=6mm, align=center,text opacity=0] (\x\y) at (\x,\y) {\label};}
\foreach \x in {1,3}
\foreach \y in {0,2} 
{\pgfmathtruncatemacro{\label}{12 + \x - 4 * \y}
\node [circle,draw,inner sep=0pt, text width=6mm, align=center,text opacity=0] (\x\y) at (\x,\y) {\label};}
\foreach \x in {1,3}
\foreach \y in {1,3} 
{\pgfmathtruncatemacro{\label}{12 + \x - 4 * \y}
\node [circle,draw,inner sep=0pt, text width=6mm, align=center,text opacity=0] (\x\y) at (\x,\y) {\label};}
\draw (13) -- (23);
\draw (33) -- (32);
\draw (22) -- (12);
\draw (11) -- (21);
\draw (31) -- (30);
\draw (20) -- (10);
\draw (00) -- (01);
\draw (02) -- (03);
\end{tikzpicture}}
\caption{In the first step we draw a closed-loop 1D path through the qubits, e.g. \fig{closed_loop}. We then decompose the 1D path into a ``left stagger'' (\fig{left_stagger}) and a ``right stagger'' (\fig{right_stagger}).}
\label{fig:paths}
\end{figure}

{\bf Step 2.} Alternate layers of SWAP gates on the ``left stagger'' and ``right stagger'' conformations of the graph. If $U_L$ is a layer of SWAP gates associated with the ``left stagger'' and $U_R$ is a layer of SWAP gates associated with the ``right stagger'' then one should implement $(U_R U_L)^{N / 2}$ where $N$ is the number of qubits. This circuit has depth of exactly $N$ cycles and returns all of the qubits to their original positions.\\

A key insight is that half of the qubits will circulate along the 1D path in a clockwise fashion and half of the qubits will circulate around the circuit in a counter-clockwise fashion. To see this, it is helpful to imagine the qubits as being colored in a checkerboard fashion. We demonstrate the first four layers of this pattern for the 4 by 4 lattice in \fig{checkers}. If we imagine the qubits colored as in \fig{checkers} then we can clearly see that the blue qubits will circulate clockwise and the red qubits will circulate counterclockwise. Because the qubits will return to their original locations after $(U_R U_L)^{N / 2}$, all of the blue qubits must have ``moved through'' all of the red qubits and thus, all of the blue qubits have been adjacent to all of the red qubits. What remains is to make all of the blue qubits adjacent to all of the blue qubits and all of the red qubits adjacent to all of the red qubits.\\

\begin{figure}[h!]
\centering
\subfloat[Cycle 1]{
\begin{tikzpicture}[>=stealth, >-<]
\foreach \x in {0,2}
\foreach \y in {0,2} 
{\pgfmathtruncatemacro{\label}{12 + \x - 4 * \y}
\node [circle,draw,fill=red!50,inner sep=0pt, text width=6mm, align=center] (\x\y) at (\x,\y) {\label};}
\foreach \x in {0,2}
\foreach \y in {1,3} 
{\pgfmathtruncatemacro{\label}{12 + \x - 4 * \y}
\node [circle,draw,fill=blue!50,inner sep=0pt, text width=6mm, align=center] (\x\y) at (\x,\y) {\label};}
\foreach \x in {1,3}
\foreach \y in {0,2} 
{\pgfmathtruncatemacro{\label}{12 + \x - 4 * \y}
\node [circle,draw,fill=blue!50,inner sep=0pt, text width=6mm, align=center] (\x\y) at (\x,\y) {\label};}
\foreach \x in {1,3}
\foreach \y in {1,3} 
{\pgfmathtruncatemacro{\label}{12 + \x - 4 * \y}
\node [circle,draw,fill=red!50,inner sep=0pt, text width=6mm, align=center] (\x\y) at (\x,\y) {\label};}
\draw [thick, >-<] (23) -- (33);
\draw [thick, >-<] (32) -- (22);
\draw [thick, >-<] (12) -- (11);
\draw [thick, >-<] (21) -- (31);
\draw [thick, >-<] (30) -- (20);
\draw [thick, >-<] (10) -- (00);
\draw [thick, >-<] (01) -- (02);
\draw [thick, >-<] (03) -- (13);
\end{tikzpicture}}
\hfill
\subfloat[Cycle 2]{
\begin{tikzpicture}[>=stealth, >-<]
\node [circle,draw,fill=red!50,inner sep=0pt, text width=6mm, align=center] (03) at (0,3) {1};
\node [circle,draw,fill=blue!50,inner sep=0pt, text width=6mm, align=center] (13) at (1,3) {0};
\node [circle,draw,fill=red!50,inner sep=0pt, text width=6mm, align=center] (23) at (2,3) {3};
\node [circle,draw,fill=blue!50,inner sep=0pt, text width=6mm, align=center] (33) at (3,3) {2};
\node [circle,draw,fill=blue!50,inner sep=0pt, text width=6mm, align=center] (02) at (0,2) {8};
\node [circle,draw,fill=red!50,inner sep=0pt, text width=6mm, align=center] (12) at (1,2) {9};
\node [circle,draw,fill=blue!50,inner sep=0pt, text width=6mm, align=center] (22) at (2,2) {7};
\node [circle,draw,fill=red!50,inner sep=0pt, text width=6mm, align=center] (32) at (3,2) {6};
\node [circle,draw,fill=red!50,inner sep=0pt, text width=6mm, align=center] (01) at (0,1) {4};
\node [circle,draw,fill=blue!50,inner sep=0pt, text width=6mm, align=center] (11) at (1,1) {5};
\node [circle,draw,fill=red!50,inner sep=0pt, text width=6mm, align=center] (21) at (2,1) {11};
\node [circle,draw,fill=blue!50,inner sep=0pt, text width=6mm, align=center] (31) at (3,1) {10};
\node [circle,draw,fill=blue!50,inner sep=0pt, text width=6mm, align=center] (00) at (0,0) {13};
\node [circle,draw,fill=red!50,inner sep=0pt, text width=6mm, align=center] (10) at (1,0) {12};
\node [circle,draw,fill=blue!50,inner sep=0pt, text width=6mm, align=center] (20) at (2,0) {15};
\node [circle,draw,fill=red!50,inner sep=0pt, text width=6mm, align=center] (30) at (3,0) {14};
\draw [thick, >-<] (13) -- (23);
\draw [thick, >-<] (33) -- (32);
\draw [thick, >-<] (22) -- (12);
\draw [thick, >-<] (11) -- (21);
\draw [thick, >-<] (31) -- (30);
\draw [thick, >-<] (20) -- (10);
\draw [thick, >-<] (00) -- (01);
\draw [thick, >-<] (02) -- (03);
\end{tikzpicture}}
\hfill
\subfloat[Cycle 3]{
\begin{tikzpicture}[>=stealth, >-<]
\node [circle,draw,fill=blue!50,inner sep=0pt, text width=6mm, align=center] (03) at (0,3) {8};
\node [circle,draw,fill=red!50,inner sep=0pt, text width=6mm, align=center] (13) at (1,3) {3};
\node [circle,draw,fill=blue!50,inner sep=0pt, text width=6mm, align=center] (23) at (2,3) {0};
\node [circle,draw,fill=red!50,inner sep=0pt, text width=6mm, align=center] (33) at (3,3) {6};
\node [circle,draw,fill=red!50,inner sep=0pt, text width=6mm, align=center] (02) at (0,2) {1};
\node [circle,draw,fill=blue!50,inner sep=0pt, text width=6mm, align=center] (12) at (1,2) {7};
\node [circle,draw,fill=red!50,inner sep=0pt, text width=6mm, align=center] (22) at (2,2) {9};
\node [circle,draw,fill=blue!50,inner sep=0pt, text width=6mm, align=center] (32) at (3,2) {2};
\node [circle,draw,fill=blue!50,inner sep=0pt, text width=6mm, align=center] (01) at (0,1) {13};
\node [circle,draw,fill=red!50,inner sep=0pt, text width=6mm, align=center] (11) at (1,1) {11};
\node [circle,draw,fill=blue!50,inner sep=0pt, text width=6mm, align=center] (21) at (2,1) {5};
\node [circle,draw,fill=red!50,inner sep=0pt, text width=6mm, align=center] (31) at (3,1) {14};
\node [circle,draw,fill=red!50,inner sep=0pt, text width=6mm, align=center] (00) at (0,0) {4};
\node [circle,draw,fill=blue!50,inner sep=0pt, text width=6mm, align=center] (10) at (1,0) {15};
\node [circle,draw,fill=red!50,inner sep=0pt, text width=6mm, align=center] (20) at (2,0) {12};
\node [circle,draw,fill=blue!50,inner sep=0pt, text width=6mm, align=center] (30) at (3,0) {10};
\draw [thick, >-<] (23) -- (33);
\draw [thick, >-<] (32) -- (22);
\draw [thick, >-<] (12) -- (11);
\draw [thick, >-<] (21) -- (31);
\draw [thick, >-<] (30) -- (20);
\draw [thick, >-<] (10) -- (00);
\draw [thick, >-<] (01) -- (02);
\draw [thick, >-<] (03) -- (13);
\end{tikzpicture}}
\hfill
\subfloat[Cycle 4]{
\begin{tikzpicture}[>=stealth, >-<]
\node [circle,draw,fill=red!50,inner sep=0pt, text width=6mm, align=center] (03) at (0,3) {3};
\node [circle,draw,fill=blue!50,inner sep=0pt, text width=6mm, align=center] (13) at (1,3) {8};
\node [circle,draw,fill=red!50,inner sep=0pt, text width=6mm, align=center] (23) at (2,3) {6};
\node [circle,draw,fill=blue!50,inner sep=0pt, text width=6mm, align=center] (33) at (3,3) {0};
\node [circle,draw,fill=blue!50,inner sep=0pt, text width=6mm, align=center] (02) at (0,2) {13};
\node [circle,draw,fill=red!50,inner sep=0pt, text width=6mm, align=center] (12) at (1,2) {11};
\node [circle,draw,fill=blue!50,inner sep=0pt, text width=6mm, align=center] (22) at (2,2) {2};
\node [circle,draw,fill=red!50,inner sep=0pt, text width=6mm, align=center] (32) at (3,2) {9};
\node [circle,draw,fill=red!50,inner sep=0pt, text width=6mm, align=center] (01) at (0,1) {1};
\node [circle,draw,fill=blue!50,inner sep=0pt, text width=6mm, align=center] (11) at (1,1) {7};
\node [circle,draw,fill=red!50,inner sep=0pt, text width=6mm, align=center] (21) at (2,1) {14};
\node [circle,draw,fill=blue!50,inner sep=0pt, text width=6mm, align=center] (31) at (3,1) {5};
\node [circle,draw,fill=blue!50,inner sep=0pt, text width=6mm, align=center] (00) at (0,0) {15};
\node [circle,draw,fill=red!50,inner sep=0pt, text width=6mm, align=center] (10) at (1,0) {4};
\node [circle,draw,fill=blue!50,inner sep=0pt, text width=6mm, align=center] (20) at (2,0) {10};
\node [circle,draw,fill=red!50,inner sep=0pt, text width=6mm, align=center] (30) at (3,0) {12};
\draw [thick, >-<] (13) -- (23);
\draw [thick, >-<] (33) -- (32);
\draw [thick, >-<] (22) -- (12);
\draw [thick, >-<] (11) -- (21);
\draw [thick, >-<] (31) -- (30);
\draw [thick, >-<] (20) -- (10);
\draw [thick, >-<] (00) -- (01);
\draw [thick, >-<] (02) -- (03);
\end{tikzpicture}}
\caption{In the second step we alternate between applying $U_L$ (\fig{left_stagger}) and $U_R$ (\fig{right_stagger}). If we color the qubits in a checkboard fashion then we can see that all of the qubits of one color (in this case, blue) will move along the 1D path in a clockwise fashion whereas all of the qubits of the other color (in this case, red) will move along the 1D path in a counterclockwise fashion. We show the first four, of sixteen layers, required to circulate these qubits all the way through the 1D path.}
\label{fig:checkers}
\end{figure}
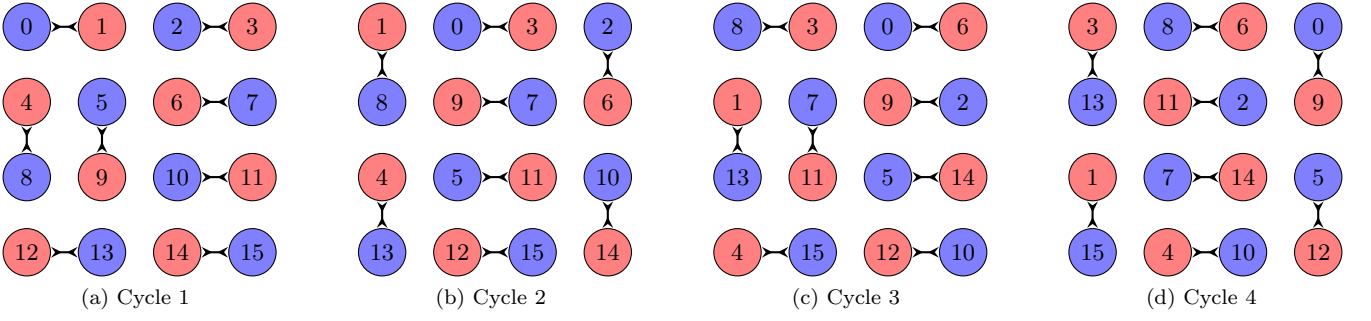

{\bf Step 3.} Alternate between two staggered layers of parallel SWAP gates to move all the ``colors'' of the checkerboard pattern to seperated sides of the qubit array. In the worst case, this will require $\sqrt{N} / 2$ cycles. We demonstrate this in \fig{color_1} and \fig{color_2}.\\

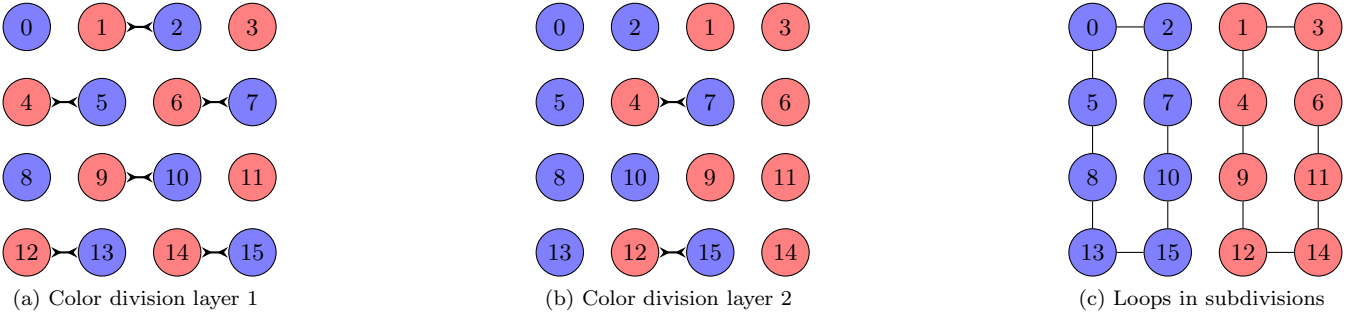
\begin{figure}[h]
\centering
\subfloat[Color division layer 1]{\label{fig:color_1}
\begin{tikzpicture}[>=stealth, >-<]
\foreach \x in {0,2}
\foreach \y in {0,2} 
{\pgfmathtruncatemacro{\label}{12 + \x - 4 * \y}
\node [circle,draw,fill=red!50,inner sep=0pt, text width=6mm, align=center] (\x\y) at (\x,\y) {\label};}
\foreach \x in {0,2}
\foreach \y in {1,3} 
{\pgfmathtruncatemacro{\label}{12 + \x - 4 * \y}
\node [circle,draw,fill=blue!50,inner sep=0pt, text width=6mm, align=center] (\x\y) at (\x,\y) {\label};}
\foreach \x in {1,3}
\foreach \y in {0,2} 
{\pgfmathtruncatemacro{\label}{12 + \x - 4 * \y}
\node [circle,draw,fill=blue!50,inner sep=0pt, text width=6mm, align=center] (\x\y) at (\x,\y) {\label};}
\foreach \x in {1,3}
\foreach \y in {1,3} 
{\pgfmathtruncatemacro{\label}{12 + \x - 4 * \y}
\node [circle,draw,fill=red!50,inner sep=0pt, text width=6mm, align=center] (\x\y) at (\x,\y) {\label};}
\draw [thick, >-<] (02) -- (12);
\draw [thick, >-<] (22) -- (32);
\draw [thick, >-<] (00) -- (10);
\draw [thick, >-<] (20) -- (30);
\draw [thick, >-<] (13) -- (23);
\draw [thick, >-<] (11) -- (21);
\end{tikzpicture}}
\hfill
\subfloat[Color division layer 2]{\label{fig:color_2}
\begin{tikzpicture}[>=stealth, >-<]
\node [circle,draw,fill=blue!50,inner sep=0pt, text width=6mm, align=center] (03) at (0,3) {0};
\node [circle,draw,fill=blue!50,inner sep=0pt, text width=6mm, align=center] (13) at (1,3) {2};
\node [circle,draw,fill=red!50,inner sep=0pt, text width=6mm, align=center] (23) at (2,3) {1};
\node [circle,draw,fill=red!50,inner sep=0pt, text width=6mm, align=center] (33) at (3,3) {3};
\node [circle,draw,fill=blue!50,inner sep=0pt, text width=6mm, align=center] (02) at (0,2) {5};
\node [circle,draw,fill=red!50,inner sep=0pt, text width=6mm, align=center] (12) at (1,2) {4};
\node [circle,draw,fill=blue!50,inner sep=0pt, text width=6mm, align=center] (22) at (2,2) {7};
\node [circle,draw,fill=red!50,inner sep=0pt, text width=6mm, align=center] (32) at (3,2) {6};
\node [circle,draw,fill=blue!50,inner sep=0pt, text width=6mm, align=center] (01) at (0,1) {8};
\node [circle,draw,fill=blue!50,inner sep=0pt, text width=6mm, align=center] (11) at (1,1) {10};
\node [circle,draw,fill=red!50,inner sep=0pt, text width=6mm, align=center] (21) at (2,1) {9};
\node [circle,draw,fill=red!50,inner sep=0pt, text width=6mm, align=center] (31) at (3,1) {11};
\node [circle,draw,fill=blue!50,inner sep=0pt, text width=6mm, align=center] (00) at (0,0) {13};
\node [circle,draw,fill=red!50,inner sep=0pt, text width=6mm, align=center] (10) at (1,0) {12};
\node [circle,draw,fill=blue!50,inner sep=0pt, text width=6mm, align=center] (20) at (2,0) {15};
\node [circle,draw,fill=red!50,inner sep=0pt, text width=6mm, align=center] (30) at (3,0) {14};
\draw [thick, >-<] (12) -- (22);
\draw [thick, >-<] (10) -- (20);
\end{tikzpicture}}
\hfill
\subfloat[Loops in subdivisions]{\label{fig:loops}
\begin{tikzpicture}
\node [circle,draw,fill=blue!50,inner sep=0pt, text width=6mm, align=center] (03) at (0,3) {0};
\node [circle,draw,fill=blue!50,inner sep=0pt, text width=6mm, align=center] (13) at (1,3) {2};
\node [circle,draw,fill=red!50,inner sep=0pt, text width=6mm, align=center] (23) at (2,3) {1};
\node [circle,draw,fill=red!50,inner sep=0pt, text width=6mm, align=center] (33) at (3,3) {3};
\node [circle,draw,fill=blue!50,inner sep=0pt, text width=6mm, align=center] (02) at (0,2) {5};
\node [circle,draw,fill=blue!50,inner sep=0pt, text width=6mm, align=center] (12) at (1,2) {7};
\node [circle,draw,fill=red!50,inner sep=0pt, text width=6mm, align=center] (22) at (2,2) {4};
\node [circle,draw,fill=red!50,inner sep=0pt, text width=6mm, align=center] (32) at (3,2) {6};
\node [circle,draw,fill=blue!50,inner sep=0pt, text width=6mm, align=center] (01) at (0,1) {8};
\node [circle,draw,fill=blue!50,inner sep=0pt, text width=6mm, align=center] (11) at (1,1) {10};
\node [circle,draw,fill=red!50,inner sep=0pt, text width=6mm, align=center] (21) at (2,1) {9};
\node [circle,draw,fill=red!50,inner sep=0pt, text width=6mm, align=center] (31) at (3,1) {11};
\node [circle,draw,fill=blue!50,inner sep=0pt, text width=6mm, align=center] (00) at (0,0) {13};
\node [circle,draw,fill=blue!50,inner sep=0pt, text width=6mm, align=center] (10) at (1,0) {15};
\node [circle,draw,fill=red!50,inner sep=0pt, text width=6mm, align=center] (20) at (2,0) {12};
\node [circle,draw,fill=red!50,inner sep=0pt, text width=6mm, align=center] (30) at (3,0) {14};
\draw (03) -- (13);
\draw (13) -- (12);
\draw (12) -- (11);
\draw (11) -- (10);
\draw (10) -- (00);
\draw (00) -- (01);
\draw (01) -- (02);
\draw (02) -- (03);
\draw (23) -- (33);
\draw (33) -- (32);
\draw (32) -- (31);
\draw (31) -- (30);
\draw (30) -- (20);
\draw (20) -- (21);
\draw (21) -- (22);
\draw (22) -- (23);
\end{tikzpicture}}
\hfill
\caption{In Step 3 we alternate between staggered layers of parallel SWAP gates in order to divide the colors of the checkerboard into two disjoint sectors of the array. In Step 4, we repeat Steps 1-3 within each of these divisions, in parallel.}
\label{fig:division}
\end{figure}

{\bf Step 4.} Repeat Steps 1 through 3, in parallel, for the divided sectors of the array. One should alternate between horizontal and vertical color divisions for Step 3. Once the divided sector size has reached four, a single layer of SWAPs is all that remains to ensure every qubit has neighbored at least once.\\

Steps 1-3 require exactly $N + \sqrt{N}/2$ layers of gates in the worst case. After every repetition of Steps 1-3, the circuit is divided into sectors of half the number of qubits as in the prior iteration. Accordingly, one will need to repeat Steps 1-3 a total of $\log N$ times. Thus, the total gate depth required is as follows,
\begin{equation}
\sum_{k=0}^{\log N} \left(\frac{N}{2^k} + \frac{1}{2}\sqrt{\frac{N}{2^k}}\right) \in \Theta\left(N\right).
\end{equation}

\subsection{Formal Proof}
\begin{lemma}\label{lem:graphcycle}
Let $C_{M}$ be a cycle graph on $M=2mn$ entries for integer $m$ and $n$.  Also  let $P_M$ be a transformation that cyclically permutes the odd vertices in the graph in a counter-clockwise fashion and the even vertices in a clockwise fashion within the cycle graph. Then $(x,y)$ is in the edge set of $C_{M}\bigcup P_{M}(C_{M})\bigcup\cdots \bigcup P_{M}^{{M/2}}(C_{M})$ if and only if $x-y = 1 \mod 2$. 
\end{lemma}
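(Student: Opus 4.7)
The plan is to prove the two directions of the biconditional separately. The forward direction (``edge in the union implies opposite parity'') is essentially immediate: every edge of $C_M$ joins consecutive labels $x$ and $x+1 \bmod M$, which have opposite parity, and by construction $P_M$ preserves parity (it permutes odd labels among themselves and even labels among themselves). Hence every edge of every iterate $P_M^j(C_M)$ also joins an even label to an odd label, and the same is true of the union.

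For the converse, I would parametrize $P_M$ explicitly on labels. Writing $P_M(2k) \equiv 2k + 2 \pmod{M}$ and $P_M(2k+1) \equiv 2k - 1 \pmod{M}$ (the opposite sign conventions just flip which iterate is which), the $j$-th iterate acts by $P_M^j(2k) \equiv 2k + 2j$ and $P_M^j(2k+1) \equiv 2k + 1 - 2j$. The edge set of $C_M$ naturally splits into two orientation classes: the ``even-odd'' edges $\{2k,\,2k+1\}$ and the ``odd-even'' edges $\{2k+1,\,2k+2\}$. Applying $P_M^j$ to each class separately and computing the difference ``(even label) $-$ (odd label)'' mod $M$ yields $4j - 1$ for the first class and $4j + 1$ for the second. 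Since the parameter $k$ ranges freely over $\{0, 1, \ldots, M/2 - 1\}$, the even endpoint of the edge takes every even residue mod $M$; thus $P_M^j(C_M)$ consists of \emph{all} pairs $\{a, b\}$ with $a$ even, $b$ odd, and $a - b \equiv 4j \pm 1 \pmod{M}$.

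What remains is a short divisibility check: as $j$ ranges over $\{0, 1, \ldots, M/2\}$, the set $\{4j + 1,\, 4j - 1\} \bmod M$ must exhaust all $M/2$ odd residues. Here I would split on $\gcd(4, M)$, which is well-defined since $M = 2mn$. If $4 \mid M$, then $\{4j \bmod M\}$ ranges over the multiples of $4$, so $\{4j+1\}$ covers the odd residues $\equiv 1 \pmod{4}$ and $\{4j-1\}$ covers those $\equiv 3 \pmod{4}$; together these exhaust all odd residues. If instead $M \equiv 2 \pmod{4}$, then $\gcd(4, M) = 2$, so $\{4j \bmod M\}$ already hits every even residue and $\{4j+1\}$ alone cycles through all odd residues. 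Either way, coverage is complete.

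The main obstacle, modest as it is, will be fixing a sign convention for ``clockwise'' versus ``counter-clockwise'' in $P_M$ that is consistent throughout the calculation, and then tracking the two edge orientations without accidentally double-counting; once that is set up, the argument reduces to the orbit computation and the short $\gcd$ case split above. No deeper combinatorial idea seems to be needed beyond the observation that the two orientation classes in $C_M$ conspire to furnish both the $+1$ and $-1$ shifts needed to hit every odd residue class mod $M$.
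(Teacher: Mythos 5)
Your proof is correct, and its core computation coincides with the paper's: both track how $P_M^j$ shifts the two orientation classes of edges and arrive at the signed differences $4j-1$ and $4j+1$ between even and odd endpoints. Where you diverge is the finishing step. The paper concludes by an induction that grows a window of covered odd neighbors: since $4q+1$ and $4(q+1)-1$ are \emph{consecutive} odd integers, the union over $q=0,1,\ldots,r$ yields every odd difference in $\{-1,1,3,\ldots,4r+1\}$, so all odd residues are exhausted once $r$ is roughly $M/4$, with no case analysis on $M \bmod 4$. You instead fix $j$ and ask which odd residues the two arithmetic progressions $4j\pm 1$ hit mod $M$, which forces your $\gcd(4,M)$ case split ($4\mid M$ versus $M\equiv 2 \pmod 4$); the split is handled correctly, but it is avoidable via the interleaving observation above. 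What your route buys is precision: your closed-form parametrization of $P_M^j$ and your identification of $P_M^j(C_M)$ as exactly the set of even--odd pairs with difference $4j\pm 1$ (one edge of each class per even vertex, since $k$ ranges freely) is cleaner and more self-contained than the paper's somewhat informally stated inductive hypothesis, and you also make the forward direction ($P_M$ preserves parity, so the union stays bipartite) explicit where the paper leaves it implicit. Both arguments establish the lemma; yours trades a small amount of extra number-theoretic bookkeeping for a fully explicit orbit description.
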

\begin{proof}
First let us formalize what we mean by a cyclic permutation of the vertex labels.  Let 
$$P_{M}: x\mapsto \begin{cases} x-2 \mod M,& \mbox{ if }x \!\!\mod 2 = 0\\ x+2 \mod M,& \mbox{ if } x\!\! \mod 2 =1\end{cases}.$$
For simplicity, let us consider all arithmetic in the following to be modulo $M$.  We have for the cyclic graph that $(x-1,x)$ and $(x,x+1)$ are edges in $C_M$ for every $x\in \mathbb{Z}_{M}$.  Let $x$ be even then $(x+1,x-2)$ and $(x+3,x-2)$ are in $P_M(C_M)$ for all $x\in \mathbb{Z}_{M}$.  Therefore $(x,x+3)$ and $(x,x+5)$ are in $P_M(C_M)$.  By iterating this $q$ times we have that $(x,x+4q-1)$ and $(x,x+4q+1)$ are in $P^q_M(C_M)$.  Also it follows directly from the definition of $P_M$ that $P_M^{M/2}$ is the identity transformation because $x-M = (x+M) \mod M$.

Finally, we need to show that for each odd $y$ that there exists an edge $(x,y)$ in some $P_M^n(C_M)$.  To see this assume that for all $0\le p\le r$ that $(x,y)$ is in $C_M \bigcup \cdots\bigcup P_M^{r}(C_M)$ for all odd $y$ is $[x-1,\ldots,x+4r+1]$.   We can therefore apply $P_M$ to the graph $P_M^r (C_M)$ which includes the edges $(x,x+4(r+1)-1)$ and $(x,x+4(r+1)+1)$.   Thus $(x,x+4(r+1)-1)$ and $(x,x+4(r+1)+1)$ is  in $C_M \bigcup \cdots\bigcup P_M^{r+1}(C_M)$.  Thus $(x,y)$ is in the union for all odd $y$ greater than $x-1$ and less than $(x+4(r+1)-1)$.  Since this trivially holds  for $r=0$ and because the function is periodic with period $M/2$ we then have that our claim holds for all even $x$.

Assume $x$ is odd and there exists even $y$ such that $(x,y)$ is not in $C_M\bigcup P_M(C_M)\bigcup\cdots \bigcup P_k^{M/2}(C_M)$.  Since edges are symmetric this implies that $(y,x)$ is not in the union of graphs as well.  We have shown above that each even vertex has every even vertex as a neighbor and hence this is impossible.  Therefore the claim holds for all $x$.
\end{proof}

\begin{theorem}\label{thm:cycle}
Let $Q_M:C_M \mapsto (C_{M/2},C_{M/2})$ be a function that maps vertices with odd and even labels to two disjoint graphs via an invertible transformation for $M$ a power of $2$.  Further, let $Q_M((C_M,C_M,\ldots,C_M)) = (Q_M(C_M),Q_M(C_M),\ldots,Q_M(C_M))$.  Let $F_M$ be the method of~\lem{graphcycle} defined to act similarly on tuples of graphs.  There exists an algorithm that requires ${\cal O}(\log(M))$ applications of $Q_M$ and $F_M$ such that the union of the edges output by the algorithm is the complete graph on $M$ elements.
\end{theorem}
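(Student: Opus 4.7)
The plan is to use a divide-and-conquer algorithm that alternates $F_M$ (to saturate all ``cross-parity'' edges, by \lem{graphcycle}) with $Q_M$ (to split the problem into two independent subproblems of half the size), then recurse. Because $F$ and $Q$ act componentwise on tuples, each level of recursion contributes only a single application of $F$ and a single application of $Q$, and the recursion depth is exactly $\log_2 M$, giving ${\cal O}(\log M)$ total applications.

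Concretely, first I would apply $F_M$ to the singleton tuple $(C_M)$. By \lem{graphcycle} this outputs exactly the edge set $\{(x,y) : x-y \equiv 1 \pmod 2\}$, namely every pair of vertices of opposite parity in $C_M$. Next I would apply $Q_M$ to split $(C_M)$ into the pair of cycles $(C_{M/2}, C_{M/2})$, one on the even-labeled vertices of $C_M$ and one on the odd. Recursing on this tuple, I apply $F_{M/2}$ componentwise and then $Q_{M/2}$ componentwise, producing a $4$-tuple of cycles on $M/4$ vertices each, and so on, terminating at the base case $M=2$ where $F_2$ emits the unique edge of $K_2$.

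For correctness I would prove by induction on $M$ (a power of $2$) that the union of emitted edges is the complete graph $K_M$ on the original vertex set. The base case $M=2$ is immediate. For the inductive step, fix distinct $x,y \in C_M$. If $x$ and $y$ have opposite parity, the top-level $F_M$ emits $(x,y)$ by \lem{graphcycle}. Otherwise $x$ and $y$ are placed in the same child cycle by $Q_M$; invertibility of $Q_M$ supplies a bijection from that child cycle onto the appropriate half of the original vertex set, so the pair $(x,y)$ corresponds to a genuine vertex pair of the child $C_{M/2}$, and by the inductive hypothesis the algorithm run on that child emits the corresponding edge.

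The main obstacle is purely bookkeeping: at each recursive level, $F$ and $Q$ act on the \emph{relabeled} vertices inside each sub-cycle, while the edge set we ultimately want is $K_M$ on the original labels. The way to handle this cleanly is to use the invertibility of $Q_M$ to fix, at each level, a bijection between the abstract vertex set of each sub-cycle and the corresponding subset of original $C_M$ vertices; edges emitted by $F$ at any level can then be pulled back unambiguously to original labels, and the induction goes through. A small additional check is that at the base case we do not need a final $Q$, so the gate count is $\log_2 M$ applications of $F$ and $\log_2 M - 1$ applications of $Q$, still in ${\cal O}(\log M)$.
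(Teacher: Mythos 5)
Your proposal is correct and follows essentially the same route as the paper's proof: apply $F$ to exhaust all opposite-parity edges via \lem{graphcycle}, apply $Q$ to split into two independent half-size subproblems, recurse for $\log_2 M$ levels, and use invertibility of $Q$ to decode emitted edges back to the original vertex labels. Your explicit induction on $M$ and the precise count of $\log_2 M$ applications of $F$ versus $\log_2 M - 1$ of $Q$ merely tighten the paper's informal presentation of the identical argument.
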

\begin{proof}
The proof is constructive.  It consists of the following steps.  For $p=M,M/2,\ldots,1$ do the following a)
Apply $F_p$ b) Save all edges that are found in the prior step c) Apply $Q_p$ d) Decode all edges saved in the previous steps to their equivalent edges on the vertex set $\mathbb{Z}_{M}$.   

To understand how this works, let us first consider $F_M({C_M})$.  As argued in~\lem{graphcycle} each vertex in $\mathbb{Z}_{M}$ appears in an edge with every other vertex in~$\mathbb{Z}_{M}$ that has the opposite parity after an appropriate number of applications of the $P_M$ operation. Thus, the union of the resulting edges forms a complete bipartite graph on $Z_{M}$ elements. The results are then saved to ensure that every edge that we have found can be decoded as an edge later.

Next we apply $Q_M$ to the graph.  This mapping is equivalent to splitting both layers of the complete bipartite graph into separate sub-graphs and drawing edges between the vertices to form a cycle graph isomophic to $C_{M/2}$.  If $M\ge 4$ then neither of these graphs consists of elements that have not shared an edge with each other.  Thus we reduce the original problem to two instances of the initial problem.  By recursing we again reduce the sub-graph to a complete bipartite graph, which reduces the number of edges in the complete graph that have not been observed by a factor of $2$.  After recursing this process ${\cal O}(\log(M))$ times it is then clear that every possible combination of edges is observed and saved.  Since the map is by construction invertible, these saved edges can be decoded to edges in the original vertex set which completes our proof.
\end{proof}

\thm{cycle} is notably restricted to cases where $M$ is a power of two.  This is an important restriction for the simple scheme outlined here because if we do not make this assumption then the approach that we take to recursively building the edge sets will not work.  We can also make this work in cases where $M = 2^q X$ for $X\in {\cal O}(1)$ using ${\cal O}(q)$ operations from the above set by recursing until the problem is reduced to building edges between sets of size $X$, which can be handled brute force using bubble sort in ${\cal O}(1)$ steps.  However, in general, if $M=2P$ for prime $P$ then such a construction will not lead to a low depth circuit and idiosyncratic approaches may be needed to make the strategy work. For this reason we focus our attention in the following on graphs with $M=2^k$ vertices. In the following lemma we will use these techniques to show how to simulate the potential term in low-depth on a nearest-neighbor quantum computer that consists of an an integer number of qubits laid out in a rectangular lattice.

\begin{lemma}\label{lem:gates}
Let $S$ be a set of $2^k$ qubits on a nearest neighbor rectangular lattice of dimension $2^{d}\times 2^{k-d}$ such that swap gates and $e^{-iZZ\phi}$ gates can only be performed between neighboring qubits in $S$.  Then $\prod_{(x,y)\in S} e^{-i\phi_{xy} a_x^\dagger a_x a_y^\dagger a_y}$ can be performed on a quantum computer in depth ${\cal O}(2^k)$.
\end{lemma}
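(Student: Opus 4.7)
The plan is to reduce the lemma to implementing $\binom{2^k}{2}$ two-qubit $ZZ$ rotations on the lattice and then apply the construction established in \thm{cycle}. First, I would rewrite $n_x n_y = a_x^\dagger a_x a_y^\dagger a_y$ via the Jordan-Wigner image $n_x = (I - Z_x)/2$, giving $n_x n_y = (I - Z_x - Z_y + Z_x Z_y)/4$. Because all the $n_x n_y$ terms pairwise commute and are diagonal, $e^{-i \phi_{xy} n_x n_y}$ factors into a global phase, two single-qubit $Z$ rotations, and a single $Z_x Z_y$ rotation. Summing contributions over all pairs, the single-qubit pieces aggregate into one rotation $e^{i \alpha_x Z_x}$ per qubit (with $\alpha_x = \frac{1}{4}\sum_{y \neq x} \phi_{xy}$), which can be executed in parallel in depth $1$. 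The remaining task is to implement $\prod_{x<y} e^{-i \phi_{xy} Z_x Z_y / 4}$ in depth ${\cal O}(2^k)$ using only nearest-neighbor operations on the $2^d \times 2^{k-d}$ lattice.

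Next I would invoke \thm{cycle} with $M = 2^k$. Two primitives are needed. The first is a Hamiltonian cycle through every qubit (the mapping $C_M$), which always exists on a rectangular grid with even side lengths via a boustrophedon (snake) path as in Step 1 of the informal construction preceding the theorem. The second is the parity-splitting map $Q_M$ that geometrically separates odd- and even-indexed qubits into two disjoint rectangular sub-lattices of shape $2^{d-1}\times 2^{k-d}$ or $2^d \times 2^{k-d-1}$, implementable by ${\cal O}(\sqrt{M})$ layers of staggered SWAP gates as in Step 3. With these primitives in place, I would run $F_M$: execute the $M/2$ rounds of left- and right-stagger SWAP layers of \lem{graphcycle}. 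Each round pairs disjoint adjacent qubits, and I would fuse an $e^{-i\phi_{xy}Z_xZ_y/4}$ rotation with the SWAP whenever the two qubits being swapped correspond to a pair $(x,y)$ whose rotation has not yet been discharged. By \lem{graphcycle}, every odd-even pair is made adjacent at some round during $F_M$, so every odd-even $ZZ$ rotation is completed during this pass, at total depth ${\cal O}(M)$.

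The rotations between two odd-indexed qubits, and those between two even-indexed qubits, remain. I would then apply $Q_M$ at cost ${\cal O}(\sqrt{M})$ to physically split the array into the two parity sub-lattices and recurse in parallel on each. Alternating between horizontal and vertical partitions in successive recursions ensures each sub-lattice retains an even side length, so both primitives remain available at every level. The total depth obeys the recurrence $T(M) \le M + c\sqrt{M} + T(M/2)$ for some constant $c$, which solves to $T(M) \in {\cal O}(M) = {\cal O}(2^k)$.

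The main obstacle will be the geometric bookkeeping: tracking where each logical qubit sits after each stagger round and after each parity-split (so that the correct angles $\phi_{xy}$ are applied to the physically adjacent qubits at the correct round), and verifying that after every $Q_M$ step the two resulting sub-lattices really are rectangular grids of the form $2^{d'} \times 2^{k'-d'}$ to which \thm{cycle} can be reapplied. Because $M = 2^k$ is a power of two, the recursion terminates cleanly at a constant-sized base case (e.g.\ a $2\times 2$ block of four qubits), on which the six remaining $ZZ$ rotations can be completed in constant depth by brute force.
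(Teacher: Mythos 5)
Your proposal is correct and follows essentially the same route as the paper's proof: embed a Hamiltonian cycle in the $2^{d}\times 2^{k-d}$ lattice, run the staggered-swap rounds of \lem{graphcycle} while fusing the $ZZ$ rotations onto newly adjacent odd--even pairs, parity-split with $Q_M$, and recurse per \thm{cycle}, giving the geometric depth sum ${\cal O}(2^k)$. The only cosmetic differences are that you make the $n_x n_y \to Z_x Z_y$ reduction explicit (the paper leaves it implicit in its gate-set hypothesis) and you quote ${\cal O}(\sqrt{M})$ depth for $Q_M$ from the informal Step~3, whereas the paper's formal proof uses parallel bubble sort in depth ${\cal O}(M)$ --- a distinction that could matter on highly elongated rectangles but is harmless here, since even an ${\cal O}(M)$ cost per split leaves your recurrence solving to ${\cal O}(2^k)$.
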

\begin{proof}
We prove this result by leveraging~\thm{cycle} but to do so we need to embed the cycle graph described in the theorem within the square lattice.  To see that such an embedding is possible, first note that every cycle has a Hamiltonian path.  Any rectangular grid of size $2^{d}-1 \times 2^{k-d}$ also contains the disjoint union of $2^{k-d-1}$ cycles and edges that connect these cycles to their neighbors.  In particular, if we start a path at $(0,0)$ then by following the Hamiltonian path we can arrive at $(1,0)$.  This qubit is adjacent to vertex $(2,0)$ which is also part of a disjoint cycle and hence there exists a Hamiltonian path for the union of both cycles that links $(0,0)$ to $(0,3)$.  Repeating this argument we see that there is a Hamiltonian path connecting each vertex in the union of these cycles that terminates at $(0,2^{k-d} -1)$.  Now if we introduce another row of vertices beneath this cycle with labels $(-1,0),\ldots (-1,2^{k}-1)$ that have edges between horizontally adjacent qubits as well as edge s between vertex $(-1,0)$ and $(0,0)$ as well as $(-1,2^{k-d}-1)$ and $(0,2^{k-d}-1)$.  Thus there exists a Hamiltonian cycle that can be embedded in every rectangular lattice of dimension $2^{d} \times 2^{k-d}$.  This cycle can be viewed as the cycle graph $C_{2^{k}}$.

Now that we have shown we can implement qubits on a cycle graph in a square lattice we next need to show that we can manipulate the qubits in the manner described in~\thm{cycle}.  To do so we need to first discuss implementing $F_{2^q}$ for $q=1,\ldots,k$.
The operation $F_{2^q}$ can be implemented by swapping qubits every even qubit and its odd neighbor with higher index, and then swapping each even qubit with its odd neighbor with lower index.  This shifts the value of every even qubit two sites in the opposite direction from the data in the corresponding odd qubits.  Ergo it performs the transformation $f$ on the labels ascribed to each qubit site.  Each transformation can be done in depth ${\cal O}(1)$ swaps and in turn the whole series of swaps requires ${\cal O}(2^{q})$ depth.  Furthermore, for each unique edge that is found in this process we can easily apply $e^{-i\phi ZZ}$ to each edge in depth ${\cal O}(1)$.  Thus, we can apply $F_{2^q}$ and perform the necessary phase rotations in depth ${\cal O}(2^{q})$.

The operation $Q_{2^q}$ can be implemented in the following way.  Apply bubble sort using local swap operations to the qubits.  Since there are $2^{q}$ vertices within each set that $Q_{2^q}$ acts on this can be done using a serial bubble sort algorithm using $2^{2q}$ swap operations, however by using parallel bubble sort one can perform ${\cal O}(2^{q})$ comparisons at the same time allowing the algorithm to execute in depth $2^{q}$.  This allows us to sort the qubits such that the vertices $0,\ldots, 2^{q-1}-1$ are assigned even labels and the remaining vertices are assigned odd labels.  Thus an application of $F_{2^q}$, $Q_{2^q}$ requires depth ${\cal O}(2^{q})$.
If the graph has already been partitioned into a disjoint union of Hamiltonian cycles then it is clear that applying $C_{2^q}$ to each of these cycles can be done in depth ${\cal O}(2^{q})$ because these graphs do not interact  gate operations can be applied on them simultaneously. Following the steps outlined in~\thm{cycle} we can produce every edge in the complete graph on $2^{k}$ entries in depth
$$
\sum_{j=0}^{k-1} 2^{k-j} \in {\cal O}(2^{k}),
$$
which completes our proof.
\end{proof}

Now if we define the total number of vertices on the graph to be $N=2^{k}$ then the depth required by the simulation is ${\cal O}(N)$.  This means that in architectures that allow nearest neighbor interactions that act on disjoint qubits to be applied at unit cost requires at most linear time.  This is significant for Trotter based simulations as well as variational algorithms where exponentials of such terms have to be employed in state preparation.

\section{Fermionic Fast Fourier Transform Scaling}
\label{app:fqft}

In this section we will outline a method for applying the ``fermionic fast Fourier transform'' (FFFT) in three dimensions on a planar lattice of quantum bits with ${\cal O}(N)$ depth. In this section we will assume that $M$ frequencies are kept in each direction so that in three dimensions, $N = M^3$. Our derivation of the FFFT will begin by first showing that a one-dimensional FFFT can be performed expediently on a quantum computer and then we will focus on how exactly to perform the three-dimensional analogue on a quantum computer. The first part of the proof of the validity requires us to work out some commutation relations between operators.

One of the basic primitives that we need to construct the FFFT is the fermionic swap operation. The purpose of the fermionic swap operation is to permute the ordering of the spin-orbitals. Under mappings such as the Jordan-Wigner transformation, the ordering of the qubits determines how the operators are anti-symmetrized.  While the ordering of the spin-orbitals is irrelevant to their quantum dynamics, a poor ordering of spin-orbitals can have a major impact on the performance of a quantum simulation algorithms. The fermionic swap operator allows the canonical ordering of these operators to be swapped on the fly. Some important properties of the fermionic swap operator are given below.
\begin{lemma}\label{lem:prop}
Let $a_p^\dagger$ and $a_q^\dagger$ be fermionic creation operators acting on two disjoint spin orbitals and let $f_{\rm swap}$ be the fermionic swap operator between those two spin orbitals given in~\eq{fswap}.  Then the following properties hold,
\begin{enumerate}
\item $[a_p^{\dagger},f_{\rm swap}]=a_p^\dagger-a_q^\dagger$ and $[a_q^{\dagger},f_{\rm swap}]=a_q^\dagger-a_p^\dagger$.
\item $f_{swap}$ is Hermitian and unitary.
\item $f_{\rm swap} a_p^\dagger f_{\rm swap}=a_q^\dagger$ and $f_{\rm swap} a_q^\dagger f_{\rm swap}=a_p^\dagger$.
\item $e^{if_{\rm swap}\theta}a_p^\dagger e^{-if_{\rm swap} \theta}= \frac{1}{2}\left(e^{-2i\theta}[a_p^\dagger -a_q^\dagger] + [a_p^\dagger +a_q^\dagger] \right)$ and $e^{if_{\rm swap}\theta}c_q^\dagger e^{-if_{\rm swap} \theta}= \frac{1}{2}\left(e^{-2i\theta}[a_q^\dagger -a_p^\dagger] + [a_p^\dagger +a_q^\dagger] \right)$ for any $\theta \in \mathbb{R}$.
\end{enumerate}
\end{lemma}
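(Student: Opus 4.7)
The plan is to verify the four properties in an order that leverages the earlier ones, working throughout from the explicit definition in~\eq{fswap} and the canonical anticommutation relations. I would prove property 2 first. Hermiticity is immediate: $(a_p^\dagger a_q)^\dagger = a_q^\dagger a_p$, so the mixed terms exchange, while the number operators $n_p, n_q$ and the identity are self-adjoint. For unitarity, I would show $f_{\rm swap}^2 = I$. The cleanest route is to observe that $f_{\rm swap}$ acts only on modes $p,q$ and therefore reduces to an operator on the four-dimensional two-mode Fock subspace $\{|00\rangle,|01\rangle,|10\rangle,|11\rangle\}$. A direct check gives $f_{\rm swap}|00\rangle = |00\rangle$, $f_{\rm swap}|01\rangle = |10\rangle$, $f_{\rm swap}|10\rangle = |01\rangle$, and $f_{\rm swap}|11\rangle = -|11\rangle$, whence $f_{\rm swap}^2 = I$ on the two-mode sector and hence on the full Hilbert space by tensoring with the identity on all other modes.

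Next I would derive property 1 by expanding the commutator $[a_p^\dagger, f_{\rm swap}]$ term by term. Using $\{a_p^\dagger, a_p^\dagger\} = 0$ and $\{a_q, a_p^\dagger\} = 0$ one finds $[a_p^\dagger, a_p^\dagger a_q] = 0$ and $[a_p^\dagger, a_q^\dagger a_q] = 0$; the only nontrivial pieces are $[a_p^\dagger, a_q^\dagger a_p] = -a_q^\dagger$ (via $a_p a_p^\dagger = 1 - n_p$) and $[a_p^\dagger, n_p] = -a_p^\dagger$. Summing according to the signs in \eq{fswap} yields $[a_p^\dagger, f_{\rm swap}] = a_p^\dagger - a_q^\dagger$. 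The $q$ case is identical after swapping the labels.

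I would then obtain property 3 from property 1 together with the involution $f_{\rm swap}^2 = I$, or, if it is simpler, by direct verification on the four-mode basis states above (noting that $a_q^\dagger$ indeed produces the same result on each basis vector, with due attention to the sign arising from the choice of ordering used to define $|11\rangle$). This case analysis is short and avoids any involved algebraic manipulation. Globally, property 3 then extends by tensoring as before.

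Property 4 is then a direct corollary. Because $f_{\rm swap}^2 = I$, one has $e^{\pm i f_{\rm swap}\theta} = \cos\theta \pm i\sin\theta\, f_{\rm swap}$. Expanding
\begin{equation}
e^{i f_{\rm swap}\theta}\, a_p^\dagger\, e^{-i f_{\rm swap}\theta} = \cos^2\theta\, a_p^\dagger + i\sin\theta\cos\theta\,[f_{\rm swap}, a_p^\dagger] + \sin^2\theta\, f_{\rm swap}\, a_p^\dagger\, f_{\rm swap},
\end{equation}
and substituting property 1 for the middle commutator and property 3 for the last term, one collects coefficients of $a_p^\dagger$ and $a_q^\dagger$ and uses $\cos\theta \pm i\sin\theta = e^{\pm i\theta}$ to recognize the stated combination $\tfrac{1}{2}(e^{-2i\theta}(a_p^\dagger - a_q^\dagger) + (a_p^\dagger + a_q^\dagger))$. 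The $a_q^\dagger$ identity follows from the $p \leftrightarrow q$ symmetry of the argument. The main subtlety is only in property 3, where a fully algebraic reduction from properties 1 and 2 is a little cumbersome; I would therefore recommend the direct Fock-basis verification there to keep the presentation clean.
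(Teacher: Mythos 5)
Your proof is correct, and it follows a genuinely (if mildly) different route from the paper's. The paper proves property 1 first by direct commutator algebra (just as you do), then obtains unitarity in property 2 by using those commutators to show that $f_{\rm swap}$ maps the orthonormal basis $\{\ket{0},\, a_p^\dagger\ket{0},\, a_q^\dagger\ket{0},\, a_p^\dagger a_q^\dagger \ket{0}\}$ of the two-mode sector to another orthonormal basis; property 3 is derived purely algebraically from properties 1 and 2 together with $a_p^2 = 0 = a_q^2$; and property 4 is obtained from Hadamard's lemma by evaluating the nested commutators ${\rm ad}_{f_{\rm swap}}^k\, a_p^\dagger$ and resumming the resulting series. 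You instead establish the involution $f_{\rm swap}^2 = I$ up front by explicit action on the four two-mode Fock states, and then get property 4 in closed form from $e^{\pm i f_{\rm swap}\theta} = \cos\theta \pm i \sin\theta\, f_{\rm swap}$, with property 1 supplying the cross term and property 3 the $\sin^2\theta$ term. Your route avoids the infinite-series bookkeeping entirely and is arguably cleaner; the paper's is representation-free in properties 1, 3, 4, never leaving the CAR algebra. One sentence worth adding to your write-up: the reduction to the two-mode sector is legitimate because $f_{\rm swap}$ is an \emph{even} element of the algebra generated by modes $p$ and $q$, so it commutes with operators on all other modes and no Jordan--Wigner-string subtleties arise when tensoring back up — the paper makes the same implicit step.

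A noteworthy byproduct of your direct computation: $f_{\rm swap}\ket{11} = -\ket{11}$ is indeed the correct action (it is the standard fermionic-swap phase, consistent with the main text's remark that the operator swaps the orbitals ``up to a global phase''). The paper's own proof of property 2 asserts $f_{\rm swap}\, a_p^\dagger a_q^\dagger \ket{0} = a_p^\dagger a_q^\dagger \ket{0}$, which results from commuting $f_{\rm swap}$ past $a_p^\dagger$ with the wrong sign on the commutator: the identity is $f_{\rm swap}\, a_p^\dagger = a_p^\dagger f_{\rm swap} - [a_p^\dagger, f_{\rm swap}]$, and with the correct sign one recovers your $-\ket{11}$. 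The discrepancy is harmless for the lemma — a phase does not affect orthonormality or unitarity, and the doubly-occupied state is annihilated by $a_p^\dagger$ and $a_q^\dagger$ in properties 3 and 4 — but your check is the more reliable one at this step, and the sign does matter for downstream circuit identities such as the FFFT construction.
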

\begin{proof}
For property (1) we have that
\begin{align}
[a_p^\dagger,f_{\rm swap}]&=[a_p^\dagger,a_q^\dagger a_p -a_p^\dagger a_p]=a_p^\dagger a_q^\dagger a_p -a_q^\dagger a_pa_p^\dagger +a_p^\dagger a_p a_p^\dagger =-a_q^\dagger+a_p^\dagger.
\end{align}
The operation $f_{\rm swap}$ is symmetric under exchange of labels of $p$ and $q$; therefore,
\begin{equation}
[a_q^\dagger,f_{\rm swap}]=a_q^\dagger -a_p^\dagger.
\end{equation}
Property (2) can be shown in two steps.  First $f_{\rm swap}$ is manifestly Hermitian.  To show it is unitary we demonstrate that it maps a complete orthonormal basis of unit vectors to another complete orthonormal basis of unit vectors.
First, the fermionic swap operator has a trivial action on the vacuum which is easy to see from its definition,
\begin{equation}
f_{\rm swap}\ket{0}= \ket{0}.
\end{equation}
It is then easy to show from the above relation and commutation relations of $f_{\rm swap}$ that
\begin{align}
f_{\rm swap}a_p^\dagger \ket{0} = a_p^{\dagger}f_{\rm swap}\ket{0}-[a_p^\dagger,f_{\rm swap}]\ket{0} = a_q^\dagger \ket{0}.
\end{align}
It then follows from symmetry arguments that $f_{\rm swap} a_q^\dagger \ket{0}=a_p^\dagger \ket{0}$.  The final case follows from
\begin{align}
f_{\rm swap}a_p^\dagger a_q^\dagger \ket{0} &= a_p^\dagger f_{\rm swap}a_q^\dagger\ket{0}+[a_p^\dagger,f_{\rm swap}]a_q^\dagger\ket{0}=a_p^\dagger a_q^\dagger \ket{0}.
\end{align}
The result then follow from noting that all four of these states are orthonormal and unit vectors. Since the subspace is four-dimensional this demonstrates the claim.

Property (3) follows from properties (1) and (2) and the fact that $a_p^2 =0 =a_q^2$
\begin{align}
f_{\rm swap}a_p^\dagger f_{\rm swap} &= a_p^\dagger +f_{\rm swap}[a_p^\dagger,f_{\rm swap}]=a_p^\dagger +f_{\rm swap}(a_p^\dagger -a_q^\dagger) \\
&=2a_p^\dagger -a_q^\dagger +a_q^\dagger a_p a_p^\dagger -a_p^\dagger a_p a_p^\dagger -a_q^\dagger a_q a_p^\dagger -a_p^\dagger a_q a_q^\dagger + a_p^\dagger a_p a_q^\dagger +a_q^\dagger a_q a_q^\dagger =a_q^\dagger.\nonumber
\end{align}
Again because $f_{\rm swap}$ is invariant under exchange of labels $p$ and $q$, property (3) also holds when $p$ and $q$ are exchanges.

Finally, property (4) follows directly from Hadamard's lemma and the previous properties.  Specifically note that
\begin{equation}
[f_{\rm swap},[f_{\rm swap},a_p^\dagger]]= 2(a_p^\dagger - a_q^\dagger),
\end{equation}
by nesting this $k$ times we see that
\begin{equation}
{\rm ad}_{f_{\rm swap}}^k a_p^\dagger = (-1)^k2^{k-1}(a_p^\dagger -a_q^\dagger),
\end{equation}
where ${\rm ad}_{f_{\rm swap}}$ is adjoint endomorphism (meaning the nested commutator operator).  It then follows that
\begin{align}
e^{if_{\rm swap} \theta} a_p^\dagger e^{-if_{\rm swap} \theta}&= a_p^\dagger +i\theta [f_{\rm swap},a_p^\dagger]-\frac{\theta^2}{2!}[f_{\rm swap},[f_{\rm swap},a_p^\dagger]]\\
&=a_p^\dagger -i\theta (a_p^\dagger -a_q^\dagger) -\frac{\theta^2}{2!}(a_p^\dagger -a_q^\dagger) +\cdots,\nonumber\\
&=\frac{1}{2}\left(e^{-2i\theta}[a_p^\dagger - a_q^\dagger] + [a_p^\dagger +a_q^\dagger] \right).\nonumber
\end{align}
The analogous claim for $a_q^\dagger$ follows again by symmetry.
\end{proof}

Next given this result, we need to examine the two-level fermionic Fourier transform.  This is important because it is the primitive upon which the FFFT is built.  The circuit in \fig{linear_fqft} illustrates how the eight-mode FFFT leverages, $F_0^\dagger$ and the related operators $F_k^\dagger = e^{-i2\pi k/M a_q^\dagger a_q}F_0^{\dagger}$, to perform a fermionic Fourier transform \cite{Verstraete2009}. The following corollary illustrates that $F_0^\dagger$ performs the necessary two-mode transformation and then the subsequent theorem will use this fact to demonstrate the general construction for the FFFT for more than eight modes and for representations other than the Jordan-Wigner transform.
\begin{figure}[h]
\[\Qcircuit @R 1em @C .75em {
&& \qw & \multigate{1}{F_0} & \qw & \multigate{1}{F_0} & \qw & \qw & \qw & \multigate{1}{F_0} & \qw & \qw & \qw & \qw \\
&& \multigate{1}{f_{\rm swap}} & \ghost{F_0} & \multigate{1}{f_{\rm swap}} &  \ghost{F_0} & \multigate{1}{f_{\rm swap}} & \qw & \multigate{1}{f_{\rm swap}} & \ghost{F_0} & \multigate{1}{f_{\rm swap}} & \qw & \qw & \qw \\
&& \ghost{f_{\rm swap}} & \multigate{1}{F_0} & \ghost{f_{\rm swap}} & \multigate{1}{F_2} & \ghost{f_{\rm swap}} & \multigate{1}{f_{\rm swap}} & \ghost{f_{\rm swap}} & \multigate{1}{F_1} & \ghost{f_{\rm swap}} & \multigate{1}{f_{\rm swap}} & \qw & \qw \\
&& \qw & \ghost{F_0} & \qw & \ghost{F_2} & \multigate{1}{f_{\rm swap}} & \ghost{f_{\rm swap}} & \multigate{1}{f_{\rm swap}} & \ghost{F_1} & \multigate{1}{f_{\rm swap}} & \ghost{f_{\rm swap}} & \multigate{1}{f_{\rm swap}} & \qw \\
&& \qw & \multigate{1}{F_0} & \qw & \multigate{1}{F_0} & \ghost{f_{\rm swap}} & \multigate{1}{f_{\rm swap}} & \ghost{f_{\rm swap}} & \multigate{1}{F_2} & \ghost{f_{\rm swap}} & \multigate{1}{f_{\rm swap}} & \ghost{f_{\rm swap}} & \qw \\
&& \multigate{1}{f_{\rm swap}} & \ghost{F_0} & \multigate{1}{f_{\rm swap}} &  \ghost{F_0} & \multigate{1}{f_{\rm swap}} & \ghost{f_{\rm swap}} & \multigate{1}{f_{\rm swap}} & \ghost{F_2} & \multigate{1}{f_{\rm swap}} & \ghost{f_{\rm swap}} & \qw & \qw \\
&& \ghost{f_{\rm swap}} & \multigate{1}{F_0} & \ghost{f_{\rm swap}} & \multigate{1}{F_2} & \ghost{f_{\rm swap}} & \qw & \ghost{f_{\rm swap}} & \multigate{1}{F_3} & \ghost{f_{\rm swap}} & \qw & \qw & \qw \\
&& \qw & \ghost{F_0} & \qw & \ghost{F_2} & \qw & \qw & \qw & \ghost{F_3} & \qw & \qw & \qw & \qw 
}\]
\caption{\label{fig:linear_fqft} Circuit to implement one-dimensional FFFT on $M = 8$ sites, as described in \cite{Verstraete2009}. Circuit is composed of $f_{\textrm{swap}}$ gates and $F_k$ gates, defined in \eq{fermion_gates} and \eq{fswap}, respectively. The circuit size is ${\cal O}(M^2\log(M))$ and its depth is ${\cal O}(M \log M)$.}
\end{figure}
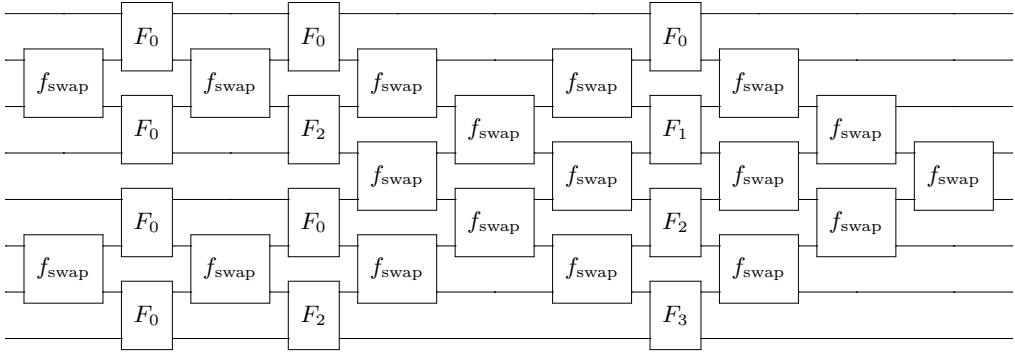

\begin{corollary}
Let $a^\dagger_p$ and $a^\dagger_q$ be creation operators acting on disjoint spin orbitals and let $F_0^\dagger$ be defined as per~\eq{f2def} then $F_0^\dagger a_p^\dagger F_0 = (a_p+a_q)/\sqrt{2}$ and $F_0^\dagger a_q^\dagger F_0 = (a_p-a_q)/\sqrt{2}$.\label{cor:2dfour}
\end{corollary}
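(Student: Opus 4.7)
The plan is to prove the corollary by direct computation, conjugating $a_p^\dagger$ (and separately $a_q^\dagger$) by each of the four factors in $F_0^\dagger$ in sequence, working from the innermost factor outward. The key tools are already supplied: property (4) of \lem{prop} handles the nontrivial conjugation by $e^{i(\pi/4)f_{\rm swap}}$, while the standard identity $e^{i\theta\, a_r^\dagger a_r}\, a_s^\dagger\, e^{-i\theta\, a_r^\dagger a_r} = e^{i\theta \delta_{rs}}\, a_s^\dagger$ handles the three number-operator factors $e^{\pm i(\pi/4)a_p^\dagger a_p}$, $e^{-i(\pi/4) a_q^\dagger a_q}$, and $e^{-i(\pi/2) a_q^\dagger a_q}$.

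First I would treat $a_p^\dagger$. The rightmost factor $e^{-i(\pi/2) a_q^\dagger a_q}$ commutes with $a_p^\dagger$ since the two spin orbitals are disjoint, so it contributes a trivial conjugation. Next, applying \lem{prop} (4) with $\theta = \pi/4$ rotates $a_p^\dagger$ into the combination $\tfrac{1}{2}\bigl((1-i)a_p^\dagger + (1+i)a_q^\dagger\bigr)$. Then the factor $e^{i(\pi/4) a_p^\dagger a_p}$ multiplies only the $a_p^\dagger$ component by $e^{i\pi/4} = (1+i)/\sqrt{2}$; since $(1-i)(1+i)/\sqrt{2} = \sqrt{2}$, the $a_p^\dagger$ coefficient collapses to $\sqrt{2}/2$. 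Finally the outermost factor $e^{-i(\pi/4) a_q^\dagger a_q}$ multiplies only the $a_q^\dagger$ component by $e^{-i\pi/4} = (1-i)/\sqrt{2}$, and the same algebra gives an $a_q^\dagger$ coefficient of $\sqrt{2}/2$. The net result is $(a_p^\dagger + a_q^\dagger)/\sqrt{2}$, as claimed.

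The computation for $a_q^\dagger$ follows the same template, with two modifications: the rightmost factor $e^{-i(\pi/2) a_q^\dagger a_q}$ now produces a phase $-i$ on $a_q^\dagger$, and the $f_{\rm swap}$ rotation (by the analogous half of \lem{prop} (4)) gives a sign-flipped mixture of $a_p^\dagger$ and $a_q^\dagger$. Propagating these phases through the remaining two number-operator conjugations via the same cancellation pattern $(1\pm i)(1\mp i)/\sqrt{2} = \sqrt{2}$ yields $(a_p^\dagger - a_q^\dagger)/\sqrt{2}$.

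No step is a genuine obstacle, since \lem{prop} has already isolated the one nontrivial adjoint action (that of $f_{\rm swap}$); the only thing to be careful about is bookkeeping of phases. The critical ingredient is the choice $\theta = \pi/4$ in the $f_{\rm swap}$ exponential together with the $-i(\pi/2)$ pre-rotation on mode $q$: these ensure that after the subsequent $\pm \pi/4$ number-operator rotations, all four complex prefactors synchronize to the single real value $1/\sqrt{2}$. I would therefore present the proof as two parallel four-line calculations (one for $a_p^\dagger$, one for $a_q^\dagger$), emphasizing at each step which factor of $F_0^\dagger$ is being commuted through.
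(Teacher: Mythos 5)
Your proposal is correct and takes essentially the same route as the paper's proof: both hinge on property (4) of \lem{prop} for the $e^{i(\pi/4)f_{\rm swap}}$ conjugation and on the number-operator phase identity $e^{i\theta n_r} a_s^\dagger e^{-i\theta n_r} = e^{i\theta\delta_{rs}} a_s^\dagger$ for the other three factors, the paper merely narrating the same calculation as phase corrections added to the bare $f_{\rm swap}$ rotation rather than as a sweep through the four factors of $F_0^\dagger$ from innermost to outermost. All of your intermediate coefficients check out, including the $-i$ phase from the $e^{-i(\pi/2)a_q^\dagger a_q}$ pre-rotation, which is exactly what the paper uses to cancel the residual factor of $i$ appearing in the $a_q^\dagger$ case.
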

\begin{proof}
From~\lem{prop} we have that
\begin{equation}
e^{if_{\rm swap}\pi/4}a_p^\dagger e^{-if_{\rm swap}\pi/4}=\frac{1}{\sqrt{2}}\left(a_p^\dagger e^{-i\pi/4}+a_q^\dagger e^{i\pi/4} \right),
\end{equation}
and
\begin{equation}
e^{if_{\rm swap}\pi/4}a_q^\dagger e^{-if_{\rm swap}\pi/4}=\frac{1}{\sqrt{2}}\left(a_q^\dagger e^{-i\pi/4}+a_p^\dagger e^{i\pi/4} \right),
\end{equation}
Although the magnitudes of the creation operator matches what is needed by the two-dimensional FFFT, the phases are not correct.  The phases for the transformation of $a_p^\dagger$ can be corrected by introducing two phase shift operators:
\begin{equation}
e^{i\pi/4 a_p^\dagger a_p}e^{-i\pi/4 a_q^\dagger a_q} e^{if_{\rm swap}\pi/4}a_p^\dagger e^{-if_{\rm swap}\pi/4} e^{-i\pi/4 a_p^\dagger a_p}e^{i\pi/4 a_q^\dagger a_q} = \frac{1}{\sqrt{2}}\left(a_{p}^\dagger + a_{q}^\dagger \right).
\end{equation}
However, if we apply the same transformation to $a_q^\dagger$ then we find

\begin{equation}
e^{i\pi/4 a_p^\dagger a_p}e^{-i\pi/4 a_q^\dagger a_q} e^{if_{\rm swap}\pi/4}a_q^\dagger e^{-if_{\rm swap}\pi/4} e^{-i\pi/4 a_p^\dagger a_p}e^{i\pi/4 a_q^\dagger a_q} = \frac{i}{\sqrt{2}}\left(a_{p}^\dagger - a_{q}^\dagger \right).
\end{equation}
This unwanted phase of $i$ can be corrected by applying a $e^{-i(\pi/2)a_q^\dagger a_q}$ gate prior to the application of the partial fermionic swap $e^{if_{\rm swap}\pi/4}$ and gives us the claimed unitary gate.
\end{proof}

\begin{theorem}\label{thm:ffft}
The {\rm FFFT} on $M$ spin orbitals, where $M$ is a positive integer power of two can be implemented using $\widetilde{\mathcal{O}}(M^2)$ quantum gates taken from a library that includes $F_0$ gates on nearest neighbor gates, fermionic swap gates and phase gates.  It also requires requires depth $\widetilde{\mathcal{O}}(M)$.
\end{theorem}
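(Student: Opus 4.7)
The proof will proceed by induction on $k$, where $M = 2^k$, following a fermionic version of the Cooley-Tukey radix-2 FFT as proposed in \cite{Verstraete2009}. The base case, $M=2$, is handled by the preceding corollary: a single $F_0^\dagger$ gate on adjacent qubits, composed with the constant number of local phase rotations already absorbed into its definition, implements the required two-mode discrete Fourier transform on creation operators at $\mathcal{O}(1)$ depth and gate count. For the inductive step, the plan is to realize the Cooley-Tukey recursion
\begin{equation}
\text{FFFT}_M^\dagger\, a_\nu^\dagger\, \text{FFFT}_M = \frac{1}{\sqrt{2}}\left(\text{FFFT}_{M/2,\mathrm{even}}^\dagger\, a_{\nu}^{\dagger,\mathrm{even}}\, \text{FFFT}_{M/2,\mathrm{even}} + e^{-i 2\pi \nu / M}\,\text{FFFT}_{M/2,\mathrm{odd}}^\dagger\, a_{\nu}^{\dagger,\mathrm{odd}}\, \text{FFFT}_{M/2,\mathrm{odd}}\right)
\end{equation}
at the circuit level, with the twiddle phases $e^{-i 2\pi k/M}$ absorbed into the generalized butterfly gates $F_k^\dagger = e^{-i 2\pi k/M\, a_q^\dagger a_q} F_0^\dagger$.

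The key obstruction is that under the Jordan-Wigner mapping the even- and odd-indexed modes are interleaved in the ambient qubit ordering, so the two half-transforms do not act on contiguous blocks. Before recursing, I would therefore insert a permutation network built from $f_{\rm swap}$ gates that moves all even-indexed modes into the first $M/2$ positions and all odd-indexed modes into the last $M/2$. Property (3) of the $f_{\rm swap}$ lemma, namely $f_{\rm swap}\,a_p^\dagger\,f_{\rm swap} = a_q^\dagger$, guarantees that this relabeling is an \emph{exact} mode permutation with no residual Jordan-Wigner sign factors, and it uses only nearest-neighbor gates on the linear chain. Once the modes are sorted, the two $(M/2)$-mode FFFTs act on disjoint contiguous halves and can be executed in parallel by the inductive hypothesis; the shuffle is then undone by reversing the swap network, and the butterfly layer is applied by placing the $M/2$ gates $F_k^\dagger$ in parallel on the newly adjacent even/odd pairs, each costing $\mathcal{O}(1)$ by the corollary.

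The cost analysis then follows from standard recurrences. An odd-even sorting network on a linear chain realizes the required interleaving permutation in depth $\mathcal{O}(M)$ using $\mathcal{O}(M^2)$ local swaps in the worst case, and the butterfly layer contributes additional $\mathcal{O}(1)$ depth and $\mathcal{O}(M)$ gates. Since both half-problems run in parallel on disjoint qubits, the depth obeys $D(M) \le D(M/2) + c M$, whose solution lies in $\mathcal{O}(M\log M) \subseteq \widetilde{\mathcal{O}}(M)$ once one accounts for the $\log M$ nested shuffles, while the gate count obeys $G(M) \le 2 G(M/2) + c M^2$, yielding $G(M) \in \mathcal{O}(M^2\log M) \subseteq \widetilde{\mathcal{O}}(M^2)$, matching the claimed bounds.

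The main obstacle will be verifying that the composed circuit implements the claimed Fourier relation on the creation operators rather than on the bare qubits, which would be insufficient due to fermionic antisymmetry. To discharge this I would push creation operators through the circuit stage by stage, using properties (3) and (4) of the $f_{\rm swap}$ lemma to ensure that conjugation by $f_{\rm swap}$ and by $e^{i\theta f_{\rm swap}}$ produces only the expected linear combinations of $a_p^\dagger$ and $a_q^\dagger$. This reduces the global transformation on $a_\nu^\dagger$ to two $(M/2)$-mode sub-transforms followed by one $F_k^\dagger$ butterfly, which by the inductive hypothesis and the corollary reassembles into the Cooley-Tukey identity above, completing the induction.
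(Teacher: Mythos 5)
Your proposal is correct and follows essentially the same route as the paper's proof: a Cooley--Tukey radix-2 recursion with butterflies implemented by the gates $F_k^\dagger = e^{-i2\pi k/M\, a_q^\dagger a_q}F_0^\dagger$, correctness verified by conjugating creation operators through the circuit via the $f_{\rm swap}$ lemma, and nearest-neighbor locality restored at each of the $\mathcal{O}(\log M)$ recursion levels by a parallel bubble-sort (odd--even transposition) network of fermionic swaps costing $\mathcal{O}(M^2)$ gates and $\mathcal{O}(M)$ depth per level. Your recurrence solutions are slightly loose (in fact $D(M)\le D(M/2)+cM$ gives $\mathcal{O}(M)$ and $G(M)\le 2G(M/2)+cM^2$ gives $\mathcal{O}(M^2)$), but since these are upper bounds the claimed $\widetilde{\mathcal{O}}(M)$ depth and $\widetilde{\mathcal{O}}(M^2)$ gate count follow just as in the paper.
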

\begin{proof}
Our construction for the FFFT consists of two types of gates.  Specifically, we use $F_0$ gates between two adjacent spin orbitals, $f_{\rm swap}$ gates and finally phase shifting gates $e^{-i n_s \phi}$ where $n_s$ is the number operator acting on an arbitrary spin orbital $s$.  For every two level subsystem in the problem we can represent the corresponding creation operators as a vector.  For example, let $c^\dagger_p = [1,0]^\top$ and $c_q^\dagger = [0,1]^\top$.  Thus, applying $F_0$ on this subspace is equivalent to applying the two-dimensional Fourier transform on the vectors that correspond to the elements.  Similarly the phase shifters can be used to set the phases arbitrarily for the creation operators, which allows us to shift the phases of the corresponding vector components arbitrarily.  Thus, these components allow the Hadamard gate and an arbitrary diagonal unitary to be performed on the corresponding set of vectors.

The FFFT of a vector of length $M=2^k$ for positive integer $k$ requires ${\cal O}(M\log(M))$ operations from our gate library. The result is such that, for the $p^{\rm th}$ computational basis vector that this process maps $e_p \mapsto \frac{1}{\sqrt{M}} \sum_j e^{-2\pi i jp/M}e_j$. The algorithm does this by applying a divide and conquer approach to the Fourier transform wherein the discrete Fourier transform on dimension $M$ is broken up into two Fourier transforms on dimension $M/2$.  The elements of these two Fourier transforms are combined by first applying phases to the components of the vector of the form
\begin{equation}
[1,0]^\top\mapsto \frac{[1,e^{-i2\pi k/M}]^\top}{\sqrt{2}}\qquad\qquad [0,1]^\top\mapsto \frac{[1,-e^{-i2\pi k/M}]^\top}{\sqrt{2}},\label{eq:vectrans}
\end{equation}
on two dimensional subspaces corresponding to different mixtures of even and odd Fourier components.

In order to estimate the gate complexity of the algorithm we first need to convert these two-level transformations into operators on the fermionic modes.  Again encoding $c_p^\dagger$ as $[1,0]^\top$ and $c_q^\dagger$ as $[0,1]^\top$ we have that the equivalent fermionic transformation is carried out by a unitary $F_k$ such that
\begin{equation}
F_k^\dagger a_p^\dagger F_k = \frac{a_p^\dagger+e^{-i2\pi k/M}a_q^\dagger}{\sqrt{2}}
\qquad \qquad
F_k^\dagger a_q^\dagger F_k = \frac{a_p^\dagger-e^{-i2\pi k/M}a_q^\dagger }{\sqrt{2}}.
\end{equation}
Since $e^{i\phi a_q^\dagger a_q} a_q^\dagger e^{-i\phi a_q^\dagger a_q}=e^{i\phi} a_q^\dagger$, the gate $F_k$ can be expressed using~\cor{2dfour} as
\begin{equation}
F_k^\dagger = e^{-i2\pi k/M a_q^\dagger a_q}F_0^{\dagger}.\label{eq:fermion_gates}
\end{equation}
$F_k$ is also unitary, as required, since $F_0$ is unitary from~\lem{prop}.
This requires $\mathcal{O}(1)$ gates from our gate set.

By translating the gate operations between the two sets it is clear that if we were not restricted to two-level $F_k$ gates then the process could be executed in $\mathcal{O}(M\log M)$ gates from this gate library. However, owing to this restriction we have to perform fermion swap gates in order to move each $q$ to be adjacent to its corresponding $p$. To do this, $\mathcal{O}(\log(M))$ such fermionic swaps are required. We choose to implement the sort using parallel bubble sort along the lexicographical ordering of the fermion modes, which on $M$ elements requires ${\cal O}(M^2)$ nearest neighbor fermionic swaps to re-arrange the elements.  Since this process needs to be repeated $\mathcal{O}(\log(M))$ times, the number of fermionic swaps required in the overall algorithm is at most $\mathcal{O}(M^2 \log(M))$.  However, the depth is a factor of $M$ lower than this if parallel bubble sort is employed.
\end{proof}

We can now use the previous result to explain how the three-dimensional FFFT can be performed with low depth. The result follows similar reasoning as the previous theorem but with the complication that the FFFT is not easily expressible as a low depth circuit using nearest-neighbor gates when applied to two out of the three dimensions.  The strategy that we employ to avoid this problem is to reorder the spin-orbitals using fermionic swaps.
\begin{corollary}
The three-dimensional {\rm FFFT} on $N=M^3$ spin orbitals, where $M$ is a positive integer power of two can be implemented using ${\mathcal{O}}(N^2)$ quantum gates taken from a library that includes $F_0$ gates on nearest neighbor gates, fermionic swap gates and phase gates.  It also requires requires depth ${\mathcal{O}}(N)$.
\end{corollary}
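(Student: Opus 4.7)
The plan is to factor the three-dimensional FFFT into three rounds of one-dimensional FFFTs, one along each spatial axis, and invoke \thm{ffft} on each round. Because the exponent $k_\nu \cdot r_p$ separates as $k_{\nu_x} r_{p_x} + k_{\nu_y} r_{p_y} + k_{\nu_z} r_{p_z}$, the three-dimensional transform factorizes as $\mathrm{FFFT}_{3D} = \mathrm{FFFT}_z \cdot \mathrm{FFFT}_y \cdot \mathrm{FFFT}_x$, where each $\mathrm{FFFT}_\alpha$ is a tensor product of $M^2$ independent one-dimensional FFFTs of size $M$ acting on the ``pencils'' of modes sharing the two transverse indices.

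First I would fix a planar embedding of the $N = M^3$ qubits on an $M \times M^2$ grid by sending the mode $(x,y,z)$ to the site $(x,\, y + Mz)$. Under this embedding the $x$-pencils are precisely the columns of the grid, and the $y$-pencils (for each fixed $z$) are consecutive row segments of length $M$; both families are pairwise disjoint 1D chains on which \thm{ffft} applies directly and in parallel, yielding $\widetilde{\mathcal{O}}(M^4)$ gates and $\widetilde{\mathcal{O}}(M)$ depth per round.

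The main obstacle is the $z$-axis round, because in this embedding each $z$-pencil sits at stride $M$ along a row and so is not a nearest-neighbor chain. The remedy is a fermionic routing step that uses only nearest-neighbor $f_{\rm swap}$ gates to perform a perfect shuffle on each length-$M^2$ row, converting each $z$-pencil into a contiguous segment. By property (3) of \lem{prop}, $f_{\rm swap}$ implements the desired relabeling of fermionic modes independently of the underlying qubit encoding, so the construction generalizes cleanly beyond Jordan--Wigner (in fact, it is precisely the encoding-agnostic feature already used inside \thm{ffft}). A single row's shuffle can be realized with $\mathcal{O}(M^2)$ nearest-neighbor swaps in $\mathcal{O}(M^2)$ depth via parallel bubble sort, and the $M$ rows run simultaneously, so the rearrangement contributes $\mathcal{O}(N)$ gates and $\mathcal{O}(N^{2/3})$ depth, well inside the stated budget. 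After the shuffle the $z$-pencils are contiguous row segments and we re-invoke \thm{ffft} just as in the $y$-pass.

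Summing the three one-dimensional FFFT passes and the intervening routing step gives $\widetilde{\mathcal{O}}(N^{4/3}) \subseteq \mathcal{O}(N^2)$ gates and $\mathcal{O}(N)$ depth, matching the stated bounds (which are therefore quite loose). The one point to check carefully is that interleaving $f_{\rm swap}$-based relabeling with the gate sequence produced by \thm{ffft} preserves the fermionic algebra for an arbitrary fermion-to-qubit mapping; this follows from the mapping-independent definition \eq{fswap} together with the commutation identities of \lem{prop}, which were established precisely so that this generalization is immediate.
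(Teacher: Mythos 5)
Your skeleton---factor the three-dimensional transform into three axis-rounds of $M^2$ parallel one-dimensional FFFTs, invoke \thm{ffft} on each pencil, and route with fermionic swaps in between---is the same as the paper's, but there is a genuine gap: you conflate geometric adjacency on the $M\times M^2$ grid with adjacency in the fermionic \emph{canonical ordering}. \thm{ffft} is only applicable to a set of $M$ modes that form a contiguous block of the lexicographic ordering, because its $F_0$ and $f_{\rm swap}$ gates are two-local under Jordan--Wigner only between ordering-adjacent modes (the paper is explicit about this in \app{fqft}). No single linear ordering can make both your $x$-pencils (columns) and your $y$-pencils (row segments) simultaneously contiguous: with a column-major ordering, geometric row-neighbors sit at ordering distance $M$, so every $F_0$ or $f_{\rm swap}$ in your $y$-round drags a Jordan--Wigner string across $M-1$ modes belonging to \emph{other} rows---these are not gates in the corollary's library, and they destroy the claimed row-parallelism; with a row-major ordering, the $x$-round breaks instead. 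Hence a fermionic-swap re-sort is needed before the $y$-round as well, not only before the $z$-round. This is precisely what the paper does: it fixes the ordering $n(\nu_x,\nu_y,\nu_z)=\nu_x+\nu_y M+\nu_z M^2$ along a Hamiltonian path, applies \thm{ffft} to the contiguous $x$-pencils, then bubble-sorts with ordering-adjacent fermionic swaps to $n(\nu_y,\nu_x,\nu_z)$ and later to $n(\nu_z,\nu_y,\nu_x)$, each re-sort costing $\mathcal{O}(N^2)$ swaps at depth $\mathcal{O}(N)$.

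The same confusion infects your routing step itself: the per-row ``perfect shuffle'' applies fermionic swaps between modes that are geometric neighbors along a row but lie $M$ apart in the canonical ordering, so they too are string-bearing many-qubit operators rather than nearest-neighbor gates. All cheap swaps live along the Hamiltonian path that defines the ordering, so routing must be organized along that path; the mapping-independence of \eq{fswap} via property (3) of \lem{prop} concerns the fermionic algebra, not the locality cost of the gates, which is encoding-dependent. Your cost accounting is also off even granting the routing: adjacent-transposition routing is lower-bounded by the inversion count of the permutation, the per-row shuffle/transpose on $M^2$ elements has $\Theta(M^4)$ inversions (not $\mathcal{O}(M^2)$), and the stride-$M^2$ gather needed for the $z$-round has $\Theta(N^2)$ inversions globally---so your headline figure of $\widetilde{\mathcal{O}}(N^{4/3})$ total gates is unattainable by this method, and the paper's $\mathcal{O}(N^2)$ gates with $\mathcal{O}(N)$ depth is what the construction actually delivers. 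Repairing your argument amounts to inserting path-respecting re-sorts before both the $y$- and $z$-rounds, which reproduces the paper's proof.
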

\begin{proof}
Let us begin by assuming the following canonical ordering: $n(\nu_x,\nu_y,\nu_z) = \nu_x +\nu_y M+\nu_z M^2$.  The three dimensional {\rm FFFT} by definition is composed of independent FFFTs in the $x$-direction, $y$-direction and $z$-direction.  Let each node correspond to a vertex label of a Hamiltonian path embedded in the lattice. Such a path exists because the number of lattice sites is even since $M$ is even. For fixed $\nu_y$ and $\nu_z$, all the fermionic modes which participate in the Fourier transform are contiguous by the definition of a Hamiltonian path. Therefore, each can be simulated using the result of~\thm{ffft}. There are $M^2$ groups of qubits with fixed $\nu_y$ and $\nu_z$ and $\widetilde{\mathcal{O}}(M^{2})$ gates are required to apply the $x$-Fourier transform with each group. Thus, the entire process requires $\widetilde{\mathcal{O}}(M^4)\subset \mathcal{O}(N^2)$ gates from~\thm{ffft}. Each of the $M^2$ FFFTs are independent and can be parallelized. Therefore, we can perform the $x$ component of the fermionic Fourier transform with depth $\widetilde{\mathcal{O}}(M)\subset {\mathcal{O}}(N)$.

Next, let us consider the $y$-Fourier transform.  We apply this Fourier transform by using fermionic swap operations to transform the basis to one where the effective ordering is now changed to $n(\nu_y,\nu_x,\nu_z)$. We achieve this by again performing a bubble sort along the lexicographical ordering of the fermion modes, using fermionic swap operations for the exchange. Bubble sort on $N$ elements requires, in the worst case scenario $\mathcal{O}(N^2)$ swap operations (the evaluation of $n$ is performed in classical preprocessing and thus does not require any quantum operations). Thus, we can sort the qubits into the ordering $n(\nu_y,\nu_x,\nu_z)$ using $O(N^2)$ fermionic swap gates. These swaps are carried out between adjacent vertices on the Hamiltonian path inscribed in the two-dimensional lattice and thus commute and can be directly simulated using nearest neighbor interactions. By parallelizing swaps in bubble sort we see that depth $O(N)$ can be attained. Once sorted, we can again apply the result of~\thm{ffft} to the resulting $M^2$ $y$-Fourier transforms within groups of qubits for which $\nu_x = \nu_z$. Thus, the $y$-component of the {\rm FFFT} can be performed in $\mathcal{O}(N^2 + M^4\log(M)) = \mathcal{O}(N^2)$ gates and depth $\mathcal{O}(N)$.

The $z$-component of the FFFT can be performed using the exact same protocol as the $y$-component, this time sorting the bits so that the ordering is $n(\nu_z, \nu_y, \nu_x)$ and then (if necessary) using fermionic swaps to sort back to the original ordering of spin-orbitals. Thus, by summing the complexities of the Fourier transforms along each of the three components we obtain the claimed complexities for a nearest neighbor architecture on a planar lattice where $M$ is a positive integer power of two. Although the fermionic swap gate between two lexicographically adjacent fermionic modes is not necessarily a two-local qubit gate, this is the case under the Jordan-Wigner transformation. Thus, we have demonstrated that ${\cal O}(N)$ layers of gates suffice to implement the FFFT on a planar qubit architecture.
\end{proof}

Note that the fermionic swap operation has many other potential uses in quantum simulation. As an example, one application would be in the implementation of operator nesting \cite{Poulin2014}. While this procedure typically requires ancilla to evaluate Jordan-Wigner strings when parallelizing commuting operations, one could perform nesting in-place by using fermionic swap operations to move qubits acted upon by Hamiltonian terms that act on disjoint sets of qubits next to each other in lexicographical ordering.

\section{Alternative Trotter-Suzuki Algorithm}
\label{app:alt_trotter}
While we have examined simulation using the fermionic fast Fourier transform within the plane wave dual basis, it is important to note that this approach is not necessary.  For purposes of comparison, we outline here the method by which one would simulate chemical dynamics within the basis using the Jordan-Wigner representation of the spin orbitals. 
The Hamiltonian is well suited for such simulations because it can be conveniently expressed as a sum of Pauli operators as shown in \eq{qubit_ham}.
The simplest term that appears in such a Trotter decomposition is of the form $e^{-i 2\phi n_p}$.  Such terms are easy to implement.  It is easy to see from \eq{jw_ops} that this is equal to $e^{i\phi Z}$, up to an irrelevant global phase. This is a single qubit rotation which can either be directly implemented in non-fault tolerant architectures or performed using a sequence of ${\cal O}(\log(1/\epsilon))$ gates in a fault-tolerant architecture.

\begin{figure}[h]
\[\Qcircuit @R 1em @C 1em {
&\ctrl{1}	&\ctrl{2}	&\qw								&\ctrl{2}			&\ctrl{1}				&\qw\\
&\targ		&\qw 		&\gate{e^{-i\phi_{12}Z}}			&\qw				&\targ					&\qw\\
&\ctrl{1}	&\targ		&\gate{e^{-i\phi_{13}Z}}			&\targ				&\ctrl{1}				&\qw\\
&\targ 		&\qw		&\gate{e^{-i\phi_{34}Z}}			&\qw				&\targ  					&\qw\\
}\]
\caption{\label{fig:diag}Simulation circuit for $e^{-i(Z_1Z_2\phi_{12} +Z_3Z_4\phi_{34} + Z_{1}Z_3 \phi_{13})}$.  This strategy allows $N-1$ such terms to be simulated in parallel using a CNOT chain of depth $\lceil \log(N)\rceil$.}
\end{figure}
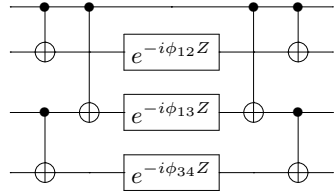

\begin{figure}[h]
\[\Qcircuit @R 1em @C 1em {
&\ctrl{3}	&\qw		&\gate{Z}	&\gate{H}		&\ctrl{3}			&\gate{e^{-i\phi Z}}	&\ctrl{3}&\gate{H}	&\gate{Z}	&\qw	&\ctrl{3}	&\qw\\
&\qw		&\ctrl{1}	&\qw		&\qw			&\qw				&\qw					&\qw&\qw		&\qw		&\ctrl{1}	&\qw		&\qw\\
&\qw		&\targ		&\ctrl{-2}	&\qw			&\qw				&\qw					&\qw&\qw		&\ctrl{-2}	&\targ		&\qw		&\qw\\
&\targ 		&\qw		&\qw		&\qw			&\targ				&\gate{e^{i\phi Z}}		&\targ&\qw		&\qw		&\qw		&\targ		&\qw\\
}\]
\caption{Simulation circuit for $e^{-i2\phi (a_p^\dagger a_q +a_q^\dagger a_p)}$ for use within the Trotter-Suzuki framework illustrated for $q=p+3$.  The analogous networks traditionally used contain $12$ CNOT, $8$ single qubit Clifford operations and $2$ single qubit rotations and the rotations cannot be parallelized.\label{fig:kineticcirc}}
\end{figure}
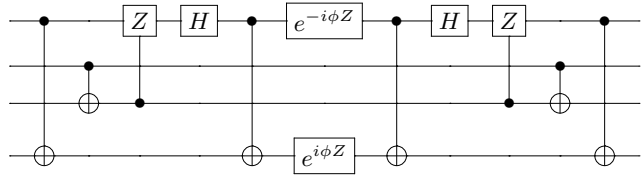

The next simplest such terms are of the form $e^{4i\phi_{pq} n_p n_q}$.  Such terms are slightly more sophisticated and good networks are known for these exponentials as given in~\cite{Whitfield2010}.  While such terms are seldom dominant for second-quantized quantum simulation, for molecules represented in the plane wave dual basis they are among the most numerous terms.  Therefore, it warrants taking some time to devise optimal networks for these circuits.  First, while the approach of \cite{Whitfield2010} groups all three non-identity terms in the expansion of \eq{jw_ops} for $n_p n_q$ into a single circuit, this is not necessarily optimal.  This is because the single qubit terms can be grouped together. Instead, by decomposing the Hamiltonian as per \eq{qubit_ham} directly into Pauli operators we can execute the single qubit terms that come from both the $n_p$ and $n_p n_q$ terms simultaneously.  This allows them to be executed with ${\cal O}(N)$ gates and depth ${\cal O}(1)$.

The $Z_pZ_q$ term is slightly more challenging.  The strategy that we employ, as seen in \fig{diag}, is to break up the sum into sets of $N-1$ terms all of which can be computed by CNOTs acting on disjoint qubits in a logarithmic number of layers.  The simplest such group is $$\{Z_1Z_2,Z_3Z_4,\ldots, Z_{N-1}Z_N,Z_1Z_3,\ldots, Z_{N-2}Z_N,\ldots, Z_1Z_{N-1}\}.$$ There are ${\cal O}(N)$ such sets and so we can perform all $N(N-1)/2$ exponentials using at most $N(N-1)/2$ rotations, ${\cal O}(N)$ of which need to be executed sequentially.  This is a factor of $3$ reduction from the networks of~\cite{Whitfield2010} and in addition this approach requires no ancilla to be parallelized. Next let us focus on the kinetic term.  We employ a new strategy for simulating the kinetic term that is based on ideas from~\cite{Reiher2017}.  The circuit works by diagonalizing the Hamiltonian $a^\dagger_p a_q + a^\dagger_q a_p$ by transforming qubits $p$ and $q$ into the Bell basis.  This is done because $X_pX_q$ and $Y_pY_q$ are simultaneously diagonal in that basis, which can easily be seen from \eq{jw_ops} and the fact that
\begin{equation}
X\otimes X = \begin{pmatrix} 0 & 0&0&1\\0&0&1&0\\0&1&0&0\\1&0&0&0 \end{pmatrix},\qquad Y\otimes Y = \begin{pmatrix} 0 & 0&0&-1\\0&0&1&0\\0&1&0&0\\-1&0&0&0 \end{pmatrix}.
\end{equation}
More specifically, if we let $\ket{B_{ij}}=(CNOT_{1,2})(H\otimes 1)(X^i\otimes X^j)\ket{00}$ then the values of $i$ and $j$ uniquely give the eigenvalues for the $X$ and $Y$ terms in the Hamiltonian.  

The circuit in \fig{kineticcirc} shows such a transformation.  The outer controlled NOT gates in the circuit (as well as a Hadamard that is absorbed into the controlled-$Z$) give the basis change into the Bell basis.  The next controlled NOT computes the Jordan-Wigner string and the controlled $Z$ copies the value onto the qubit that performs the $X$ part of the rotation.  The remaining qubit flips the sign of the $Y$ part of the rotation as needed.  In general, these networks require $2(N+2)$ gates for $N$ spin-orbitals.  Also, when these terms are ordered lexicographically the majority of the Jordan-Wigner strings between adjacent Trotter steps will cancel as discussed in ~\cite{Wecker2015a}.

Note that our work provides a further optimization that was not appreciated in~\cite{Wecker2015a}. The presence of Jordan-Wigner strings requires the introduction of ancillary qubits to parallelize the rotations that appear in the simulation. Similar depth reductions can be achieved by using fermionic swap operations to move each relevant pair of spin orbitals adjacent to each other within the lexicographic ordering implicit in the Jordan-Wigner representation. There are $\mathcal{O}(N^2)$ such terms, however, the proof of~\lem{gates} shows that we can perform a fermionic-swap network in depth $\mathcal{O}(N)$ that will allow us to simulate every hopping term.  Additionally the construction requires no ancillary space, but requires more Clifford gate operations to perform the fermionic swap (which in this case can be performed with 3 CNOT gates and a CZ gate) than would be needed in the nesting approach of~\cite{Wecker2015a}.  Thus, each step in this alternative Trotter-Suzuki approach can also be simulated in linear depth.

The main place where the two approaches differ is in the bounds that fall out of the Trotter-Suzuki decomposition.  Following the same reasoning as was used to find \eq{trotscale}, we obtain that the gate depth needed for simulation is
\begin{equation}
\widetilde{\mathcal{O}}\left(Nr\right) \in {\cal O}\left(\frac{N^{5/2}t^{3/2}}{\Omega\sqrt{\epsilon}}\sqrt{1 +\left(\frac{N^{-1/3}\eta^2}{\Omega^{-1/3}} \right)^2 } \right).
\end{equation}
Note that this bound is likely less tight than the bound that was found for the Fourier-based approach because more terms are present in the Hamiltonian, which necessitates more liberal use of the triangle inequality and also creates more terms that do not commute with each other in the expansion.  For this reason, if we constrain ourselves to simulations with constant electron density then we obtain a worst case scaling of ${\cal O}(N^{9/2})$ scaling.

\section{Alternative Taylor Series Algorithm}
\label{app:onthefly}

In this section, we explain an alternative way to perform the Taylor series algorithm. In particular, we implement the circuit $\textsc{prepare}(W)$ in a different and more complex fashion than in \sec{taylor_alg}. While the asymptotic gate complexity of the two approaches are almost the same (perhaps due to loose bounds), the method described here has significantly lower depth. Whereas \sec{taylor_alg} implemented $\textsc{prepare}(W)$ in a similar fashion to ``database'' algorithm of \cite{BabbushSparse1}, in this section we implement $\textsc{prepare}(W)$ in a similar fashion to the ``on-the-fly'' algorithm of \cite{BabbushSparse1}.

Our approach will be to compute the coefficients of the Hamiltonian ``on-the-fly'' and apply them as phases in order to execute $\textsc{prepare}(W)$ as specified in \eq{prepw}. To accomplish this we will think of each term in the sum over $\nu$ as an individual term in the Hamiltonian and then compress the sum. That is,
\begin{equation}
\label{eq:Wpqbk}
W_{p,q,b} = \sum_{\nu \neq 0} W_{p,q,b,\nu} \quad \quad \quad
W_{p,q,b,\nu} = \begin{cases}
\frac{\pi}{2\, \Omega \, k_\nu^2} - \frac{k_\nu^2}{8 \, N} + \frac{\pi}{\Omega} \sum_{j}\zeta_j \frac{\cos\left[ k_\nu \cdot \left(R_j-r_p\right)\right]}{k_\nu^2}  & p = q \\
\frac{\pi \, \cos \left[k_\nu \cdot \left(r_p - r_q\right)\right]} {4 \, \Omega\,k_\nu^2} & b = 0 \wedge p \neq q \\
\frac{k_\nu^2 \cos \left[k_\nu \cdot \left(r_p - r_q \right) \right]}{4 \, N} & b = 1 \wedge (p + q) \, \!\!\mod\!\! \, 2 = 0\\
\frac{1}{2 \, N} & b = 1 \wedge (p + q) \, \!\!\mod\!\! \, 2 = 1.\\
\end{cases}
\end{equation}
While we can efficiently apply phases to quantum states by controlling on the entire state, one cannot efficiently change the amplitude of a quantum state by controlling on the entire state. Thus, we must take the additional step of further subdividing each term with one more index so that each $W_{p,q,b,\nu}$ is a sum of $\mu$ phases with the same magnitude,
\begin{equation}
W_{p,q,b,\nu} \approx \zeta \sum_{m=0}^{\mu-1} W_{p,q,b,\nu,m} \quad \quad \quad
W_{p,q,b,\nu,m} \in \left\{-1, +1\right\}
\quad \quad \quad
\zeta \in \Theta\left(\frac{\epsilon}{L \, t}\right) \quad \quad \quad
\mu \in \Theta\left(\frac{\textrm{max}_{p,q,b,\nu} \left | W_{p,q,b,\nu} \right |}{\zeta}\right).
\end{equation}
To accomplish this one-the-fly, we perform logic on the output of $\textsc{sample}(W)$ which acts as,
\begin{equation}
\textsc{sample}(W) \ket{p}\ket{q} \ket{b} \ket{\nu} \ket{0}^{\otimes \log \mu}
\mapsto \ket{p}\ket{q}\ket{b} \ket{\nu} \ket{\widetilde{W}_{p,q,b,\nu}}
\end{equation}
where $\widetilde{W}_{p,q,b,\nu}$ is a digital approximation with $\log \mu$ bits to the real-valued $W_{p,q,b,\nu}$. Since the values of $W_{p,q,b,\nu}$ shown in \eq{Wpqbk} are straightforward arithmetic functions of $p$, $q$, $b$ and $\nu$, together with simple logic, we see that $\textsc{sample}(W)$ can be implemented at gate complexity $\widetilde{\cal O}(1)$ with respect to $N$ and $\epsilon$. Note that some of this arithmetic (such as reversible computation of the reciprocal) can be costly to compute to high precision in practice. Furthermore, if we were concerned about scaling with number of nuclear charges, we could also break up the $Z_p$ coefficients in terms of the nuclei $j$ using a number of ancilla scaling logarithmically in the number of nuclei. 

Given the $\textsc{sample}(W)$ circuit, we can construct the $\textsc{prepare}(W)$ circuit by performing logic followed by phase-kickback on the output of the $\textsc{sample}(W)$ register. The values of $W_{p,q,b,\nu,m}$ are always either +1 or -1 but we actually need the square root of these values for the $\textsc{prepare}(W)$ superposition (see \eq{prepw}). Thus, we need the circuit
\begin{equation}
\textsc{kickback}(W) \ket{m} \ket{\widetilde{W}_{p,q,b,\nu}}
\mapsto \begin{cases}
 \ket{m} \ket{\widetilde{W}_{p,q,b,\nu}} &  \widetilde{W}_{p,q,b,\nu} > \left(2m - \mu\right) \zeta \\
i  \ket{m} \ket{\widetilde{W}_{p,q,b,\nu}} &  \widetilde{W}_{p,q,b,\nu}  \leq \left(2m - \mu\right) \zeta
\end{cases}.
\end{equation}
$\textsc{kickback}(W)$ can also be implemented with gate complexity $\widetilde{\cal O}(1)$ with respect to $N$ and $\epsilon$. We put these circuit together with some Hadamard gates to form the complete $\textsc{prepare}(W)$ circuit as shown in \fig{preparew}. We see that
\begin{equation}
H = \zeta \sum_{p,q,b,\nu,m} W_{p,q,b,\nu,m} H_{p,q,b}.
\end{equation}

\begin{figure}[h]
\vspace{-.5cm}
\[\Qcircuit @R 1em @C .75em {
&&\lstick{\ket{0}_p^{\otimes \log N}} & \gate{H^{\otimes \log N}} & \multimeasure{3}{} & \qw & \multimeasure{3}{} & \qw \\
&&\lstick{\ket{0}_q^{\otimes \log N}} & \gate{H^{\otimes \log N}} & \ghost{} & \qw & \ghost{} & \qw \\
&&\lstick{\ket{0}_b} & \gate{H} &\ghost{} & \qw & \ghost{} & \qw \\
&&\lstick{\ket{0}_{\nu}^{\otimes \log N}} & \gate{H^{\otimes \log N}} &\ghost{} & \qw & \ghost{} & \qw \\
&&\lstick{\ket{0}^{\otimes \log\mu}} & \qw & \gate{\textsc{sample}(W)} \qwx & \multigate{1}{\textsc{kickback}(W)} &\gate{\textsc{sample}(W)} \qwx & \qw \\
&&\lstick{\ket{0}_m^{\otimes \log \mu}} & \gate{H^{\otimes \log \mu}} & \qw & \ghost{\textsc{kickback}(W)} & \qw & \qw
}\]
\caption{\label{fig:preparew} The $\textsc{prepare}(W)$ circuit. Note that the $H$ gate is a Hadamard. $\textsc{sample}(W)$ is called twice to uncompute the ancilla register. As there are only ${\cal O}(\log N)$ ancilla and the gate complexities of both $\textsc{sample}(W)$ and $\textsc{kickback}(W)$ are bounded by $\widetilde{\cal O}(1)$, the overall gate complexity required to implement $\textsc{prepare}(W)$ is $\widetilde{\cal O}(1)$ with respect to $\epsilon$ and $N$.}
\end{figure}
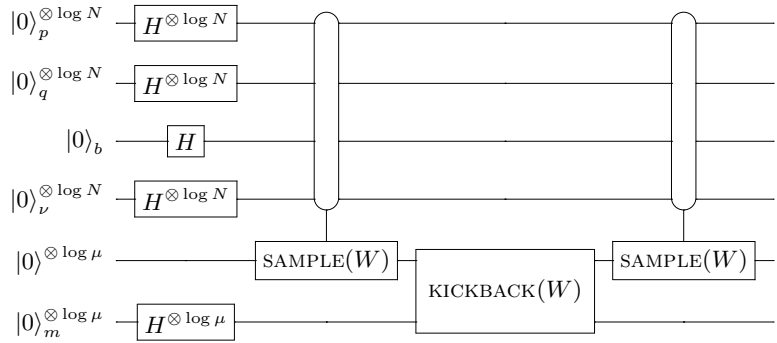

While our implementation of $\textsc{prepare}(W)$ is significantly more efficient than the method outline in \sec{taylor_alg}, by breaking up the Hamiltonian into these different terms the normalization $\Lambda$ becomes,
\begin{align}
\label{eq:fly_lambda}
\Lambda & \in \widetilde{\cal O}\left( \zeta \sum_{p,q,b,\nu,m} \left | W_{p,q,b,\nu,m} \right | \right) =  \widetilde{\cal O}\left(\sum_{p,q,b,\nu} \left | \max_{p,q,b,\nu} \left |W_{p,q,b,\nu} \right | \right | \right)\\
& = \widetilde{\cal O}\left(N^3 \max_{p,q,\nu} \left[ \left |\frac{k_\nu^2 \cos \left[k_\nu \cdot \left(r_p - r_q\right)\right]}{N}  \right |, \left| \frac{\cos \left[k_\nu \cdot \left(r_p - r_q\right)\right]}{\Omega \, k_\nu^2} \right |  \right]\right)\nonumber\\
& = \widetilde{\cal O}\left( N^2 k_{\textrm{max}}^2 + \frac{N^3}{\Omega \, k_{\textrm{min}}^2} \right)
= \widetilde{\cal O}\left( \frac{N^{8/3}}{\Omega^{2 /3}} + \frac{N^3}{ \Omega^{1/3}} \right) \nonumber
\end{align}
which is significantly higher than the value of $\Lambda$ which applies to the method of \sec{taylor_alg}. Since the gate complexity of implementing $\textsc{select}(H)$ is $\widetilde{\cal O}(N)$ and the gate complexity of implementing $\textsc{prepare}(W)$ is $\widetilde{\cal O}(1)$, from \eq{result} we find that the total gate complexity of our Taylor series approach is no more than
\begin{equation}
\widetilde{\cal O}\left( \frac{N^{11/3}}{\Omega^{2 /3}} + \frac{N^4}{\Omega^{1/3}}\right)
\end{equation}
with only polylogarithimic dependence on precision. We see that the gate complexity at fixed density becomes $\widetilde{\cal O}(N^{11/3})$, which is better than the $\widetilde{\cal O}(N^4)$ scaling of the method in \sec{taylor_alg}. Furthermore, the oracle for $\textsc{select}(H)$ can be parallelized to $\widetilde{\cal O}(1)$ depth using arbitrary two-qubits gates. This can be taken advantage of by our ``on-the-fly'' algorithm but not by our database algorithm due to the difference in scaling of $\textsc{prepare}(W)$. 

To see this consider the following.  The $\textsc{select}(H)$ oracle consists of five cases depending on the values of $p$, $q$ and $b$.  These cases can be executed sequentially without sacrificing more than a constant factor in depth. The cases corresponding to the kinetic energy terms are the only ones that require $\widetilde{\mathcal{O}}(N)$ sized circuits.  However, they can be performed in depth $\widetilde{\mathcal{O}}(1)$ using the following protocol.  First, fanout a qubit string that replicates $N$ copies of $p$, $q$ and $b$.  This can be achieved in ${\cal O}(\log N)$ depth.  Next for each qubit compute the value of the control bit that decides whether the conditions for that term to be activated are met.  This requires $\mathcal{O}(\log N)$ operations.  Next compute for qubit $j$ whether $j=q$, $j=p$ or $j\in (p,q)$ and using Toffoli gates conditioned on these qubits as well as the flag that determines whether the term is activated to begin with, apply $X$, $Y$ or $Z$ on the qubit in question as dictated by $\textsc{select}(H)$.  By construction, the depth needed for this process is $\widetilde{\mathcal{O}}(1)$. After this has been performed, uncompute all ancillae, which can be done in $\widetilde{\mathcal{O}}(1)$ depth.  The entire process requires then clearly requires $\widetilde{\mathcal{O}}(1)$ depth. Since $r=\widetilde{\mathcal{O}}(N^{8/3})$ segments are required for the simulation from \eq{fly_lambda} and each segment can be performed in $\widetilde{\mathcal{O}}(1)$ depth, we find that the overall gate depth of our algorithm is $\widetilde{\cal O}(N^{8/3})$. This depth is substantially lower than any previously described algorithm for electronic structure simulation in the literature.

\end{document}